\documentclass[11pt]{article}
\usepackage[utf8]{inputenc}
\usepackage{fullpage}
\usepackage{geometry}
\usepackage{mathtools}
\usepackage{blkarray}
\usepackage{amsmath}
\usepackage{tikz-cd}
\usepackage{graphicx}
\usetikzlibrary{decorations.pathmorphing}
\usepackage{amssymb}
\usepackage{enumitem}
\usepackage{mathrsfs}
\usepackage[colorlinks=true,linkcolor=blue,citecolor=blue,urlcolor=blue]{hyperref}
\usepackage{comment}
\usepackage{authblk}
\usepackage{float}
\usepackage{bigints}
\usepackage{relsize}
\usepackage{physics}
\usepackage{booktabs}
\usepackage{bbm}
\usetikzlibrary{matrix,positioning}
\usepackage[math]{cellspace}

\usepackage{xcolor}
\definecolor{toclink}{rgb}{0,0,0}

\usetikzlibrary{decorations.markings,decorations.pathreplacing}
\usepackage{amsthm}
\usepackage{bm}

\DeclarePairedDelimiterX{\infdivx}[2]{(}{)}{%
  #1\;\delimsize\|\;#2%
}

\newcommand{\rem}[1]{\hyperref[rem:#1]{Remark~\ref*{rem:#1}}}
\newcommand{\thm}[1]{\hyperref[thm:#1]{Theorem~\ref*{thm:#1}}}
\newcommand{\asm}[1]{\hyperref[asm:#1]{Assumption~\ref*{asm:#1}}}
\newcommand{\cor}[1]{\hyperref[cor:#1]{Corollary~\ref*{cor:#1}}}
\newcommand{\defn}[1]{\hyperref[def:#1]{Definition~\ref*{def:#1}}}
\newcommand{\lem}[1]{\hyperref[lem:#1]{Lemma~\ref*{lem:#1}}}
\newcommand{\prop}[1]{\hyperref[prop:#1]{Proposition~\ref*{prop:#1}}}
\newcommand{\fig}[1]{\hyperref[fig:#1]{Figure~\ref*{fig:#1}}}
\newcommand{\tab}[1]{\hyperref[tab:#1]{Table~\ref*{tab:#1}}}
\newcommand{\algo}[1]{\hyperref[algo:#1]{Algorithm~\ref*{algo:#1}}}
\renewcommand{\sec}[1]{\hyperref[sec:#1]{Section~\ref*{sec:#1}}}
\newcommand{\append}[1]{\hyperref[append:#1]{Appendix~\ref*{append:#1}}}

\newcommand{\mrm}[1]{\mathrm{#1}}
\DeclareMathOperator{\res}{res}

\newtheorem{theorem}{Theorem}[section]

\newtheorem{corollary}[theorem]{Corollary}
\newtheorem{lemma}[theorem]{Lemma}
\newtheorem{definition}[theorem]{Definition}

\newtheorem{remark}[theorem]{Remark}
\newtheorem{claim}[theorem]{Claim}

\theoremstyle{definition}
\newtheorem{example}[theorem]{Example}

\usepackage{cleveref}
\usepackage{url}

\DeclareMathOperator{\poly}{poly}

\newcommand{\R}{\mathbb{R}}

\newcommand{\cC}{\mathcal{C}}
\newcommand{\cD}{\mathcal{D}}
\newcommand{\cE}{\mathcal{E}}
\newcommand{\cF}{\mathcal{F}}

\newcommand{\cH}{\mathcal{H}}

\newcommand{\cL}{\mathcal{L}}

\newcommand{\cO}{\mathcal{O}}
\newcommand{\cR}{\mathcal{R}}

\newcommand{\cQ}{\mathcal{Q}}

\newcommand{\ff}{\mathbb{F}}
\newcommand{\zz}{\mathbb{Z}}
\newcommand{\RR}{\mathbb{R}}

\def\>{\rangle}
\def\<{\langle}

\def\id{\mathrm{id}}
\DeclareMathOperator{\Tot}{Tot}
\newcommand{\loc}{\mathrm{loc}}
\renewcommand{\mid}{\mathrm{mid}}
\renewcommand{\int}{\mathrm{int}}

\newcommand{\dec}{f}

\newcommand{\comB}{Q}

\def\be#1\ee{\begin{equation}#1\end{equation}}
\def\ba#1\ea{\begin{align}#1\end{align}}
\def\bas#1\eas{\begin{align*}#1\end{align*}}

\newcommand{\wt}{\widetilde}

\newcommand{\red}[1]{{\color{red} #1}}

\newcommand{\blue}[1]{{\color{blue} #1}}

\usepackage{etoolbox}
\makeatletter
\patchcmd{\enddefinition}{\@endpefalse}{}{}{}
\patchcmd{\endtheorem}{\@endpefalse}{}{}{}
\patchcmd{\endremark}{\@endpefalse}{}{}{}
\patchcmd{\endproof}{\@endpefalse}{}{}{}
\patchcmd{\endlemma}{\@endpefalse}{}{}{}
\patchcmd{\endproposition}{\@endpefalse}{}{}{}
\patchcmd{\endcorollary}{\@endpefalse}{}{}{}
\makeatother

\title{A passive self-correcting quantum memory in three dimensions}
\author[1]{Shankar Balasubramanian}

\author[1]{Margarita Davydova}

\author[2,3]{Ting-Chun Lin}

\affil[1]{\small Walter Burke Institute for Theoretical Physics and Institute for Quantum Information and Matter,
  California Institute of Technology, Pasadena, CA 91125, USA}

\affil[2]{\small Department of Physics, University of California San Diego, La Jolla, CA 92093, USA}

\affil[3]{\small Hon Hai Research Institute, Taipei, Taiwan}

\date{\vspace{-3em}}

\begin{document}

\maketitle

\begin{abstract}
We construct a 3D Pauli stabilizer Hamiltonian whose ground state space can encode a qubit for exponential time when coupled to a bath at non-zero temperature.  Our construction recursively applies a sequence of transformations to a seed Hamiltonian that increases the memory lifetime of the encoded qubit while maintaining geometric locality in $\mathbb{R}^3$.
\end{abstract}

\noindent\rule{\textwidth}{0.2pt}
\vspace{-25 pt}
\setcounter{tocdepth}{2}
{
  \hypersetup{linkcolor=toclink}
  \tableofcontents
}

\noindent\rule{\textwidth}{0.2pt}

\vspace{5 pt}

\section{Introduction}
A self-correcting quantum memory is a quantum many-body system that can store quantum information in its ground state subspace when interacting with a bath at non-zero temperature~\cite{dennis2002topological, brown2016quantum}.  In this system, quantum information is stored passively and in equilibrium, without the need for active error correction or other external sources of power.   Technologically, this could have significant implications for building energy-efficient quantum hard drives.  Theoretically, such a system provides an example of topological order at non-zero temperature and contributes to the ongoing program of classifying gapped phases of quantum matter.  However, all known examples of self-correcting quantum memories require four or more spatial dimensions, and whether self-correction is possible in three dimensions has remained a long-standing open question.  In this paper, we settle this question by providing a construction in three dimensions.

For a family of Hamiltonians $\{H_k, k \in \mathbb{N}\}$ with system size $n_k$, a standard way to diagnose self-correction is as follows.  We encode a qubit in $H_k$ and evolve it under local thermalization dynamics (such as generalized Glauber dynamics) at inverse temperature $\beta$. Define the memory lifetime $t_{\mrm{mem}}$ as the maximum time for which the encoded information can be reliably recovered.  If there exists a finite $\beta_c$ such that $t_{\mrm{mem}} \gtrsim \exp(n^{\eta}_k)$ for some $\eta > 0$ and for all $\beta > \beta_c$, we say that $\{H_k\}$ is self-correcting\footnote{A weaker but reasonable definition is $t_{\mrm{mem}} \to \infty$ as $n \to \infty$ for $\beta > \beta_c$.  We use the stronger definition because it captures the idea that a small system can keep a qubit safe for an incredibly long time, easily longer than the age of the universe.}.  In 4D, the (2,2) toric code satisfies $t_{\mrm{mem}} \sim \exp(n^{1/4})$ below a critical temperature~\cite{dennis2002topological, alicki2010thermal}.  In contrast, 2D stabilizer codes have $t_{\mrm{mem}} = O(1)$ for any finite $\beta$~\cite{alicki2009thermalization, bravyi2009no}. This is due to the existence of string-like operators creating point-like excitations that thermally proliferate.  This intuition extends to any 2D or 3D code described by a topological quantum field theory (TQFT)~\cite{yoshida2011feasibility}.

Thus, we must look beyond TQFT fixed point states for a 3D self-correcting quantum memory.  In 2011, Haah proposed the cubic code~\cite{haah2011local}, whose excitations are created by operators supported on a fractal subset of the lattice.  The energy barrier of this code scales like $\Theta(\log n)$, implying the best-case possibility of $t_{\mrm{mem}} \sim \exp(\Theta(\log n))$.  Unfortunately, subsequent numerical studies showed that $t_{\mrm{mem}} = O(1)$~\cite{bravyi2013quantum, siva2017topological}.  This rules out self-correction in the cubic code.  More generally,~\cite{haah2013commuting} proved that all 3D translation-invariant stabilizer codes have an $O(\log n)$ energy barrier and likely have $t_{\mrm{mem}} = O(1)$ due to the existence of string or fractal generators.  This suggests that breaking translation invariance is essential for achieving self-correction.
The first code that achieved a polynomial energy barrier is the welded toric code~\cite{michnicki20143d}, which is not translation-invariant. Despite an energy barrier of $O(n^{2/9})$, it was argued that $t_{\mrm{mem}} = O(1)$~\cite{siva2017topological}.  This is because the welded code is built from 3D toric codes of size $O(n^{2/3})$ and inherits the undesirable thermal properties of these codes. Recent developments in codes with optimal parameters in three dimensions~\cite{portnoy2023local, lin2023geometrically, williamson2023layer} achieve an energy barrier $O(n^{1/3})$, but these examples are not self-correcting for similar reasons~\cite{gu2025layer, williamson2025partial, baspin2025free}.

One of the most interesting ideas was proposed by Brell~\cite{brell2016proposal}.  Observing that the classical Ising model on a Sierpi\'nski carpet has an ordered phase~\cite{vezzani2003spontaneous}, he suggested taking a homological product of 2D surface codes supported on a Sierpi\'nski carpet and its dual.  This results in a 4D surface code on a manifold with many punctures, which one might hope could retain the self-correction properties of the 4D surface code.  Furthermore, the fractal dimension of the manifold can be made as small as $2 + \epsilon < 3$, suggesting the possibility of a local embedding in $\mathbb{R}^3$. No such procedure was provided in Brell's paper, but we believe that the ideas in the current paper can resolve this question.  Unfortunately, Brell's proposal has other issues.  First, it is likely that $t_{\mrm{mem}} \sim O(1)$ due to the existence of point-like excitations introduced by the punctures, as discussed in Ref.~\cite{lin2024proposals}. It is also likely that there is \emph{no} phase transition at non-zero temperature\footnote{The statistical mechanics duality argument in Ref.~\cite{brell2016proposal} does not correctly account for the extensive number of metachecks whose weight scales with the system size.  As a result, the dual model is ordered at all temperatures, corresponding to a zero-temperature transition in the original model.}.
Thus, the construction is \emph{not} self-correcting, contrary to the claim made in the paper. Additionally, the large number of small-distance logical operators makes the perturbative stability of the proposal unclear.

In this paper, we provide a three-dimensional construction that we prove is self-correcting.   Our code is CSS stabilizer and is obtained by alternating between applying two procedures. One procedure increases the energy cost of excitations associated with $X$-type errors, while the other does the same for excitations associated with $Z$-type errors.
Our main result combines these procedures with a random embedding (inspired by Refs.~\cite{portnoy2023local,gromov2012generalizations}) which allows us to establish (1) locality in $\mathbb{R}^3$, (2) preservation of a single logical qubit, and (3) an exponential lower bound on the memory lifetime at sufficiently low temperature.  We also present a version with an explicit embedding later in the paper. Three additional results include a more intuitive version of random embedding construction that we provide in the introduction, a construction based on a hypergraph product that we provide in Appendix~\ref{appB:product-construction}, and an analogous classical code that we construct in Sec.~\ref{sec:explicit-embedding}.

\subsection{Main results}

We now state our main results. Our construction produces a family of CSS codes from a family of cochain complexes.  We begin with a finite ``seed'' code $C_{(0)}$ encoding a single logical qubit, and define the code $C_{(k)}$ iteratively through $C_{(k-1)}$.
Let $H_{k}$ be the stabilizer Hamiltonian associated with $C_{(k)}$, acting on $n_{k}$ qubits.  For every $k$, the code $C_{(k)}$ encodes a single logical qubit; equivalently, the ground-state degeneracy of $H_{k}$ is two.

To model thermalization, we consider a quantum many-body system with local Hamiltonian $H$ coupled weakly to a bath, so that its time evolution is described by the Markovian master equation~\cite{alicki2009thermalization,alicki2010thermal,chesi2010thermodynamic}
\begin{equation}
    \dot{\rho}(t) = -i[H,\rho(t)] + \cL(\rho(t)).
\end{equation}
The stationary state is the Gibbs state
$\rho_{\beta} = e^{-\beta H}/\Tr(e^{-\beta H})$
where $\beta = 1/T$ is the inverse temperature.  We assume that the Lindbladian $\cL$ is generated by local jump operators of $O(1)$ support, although this condition can be relaxed.

We consider a family of Hamiltonians $H_{k}$ associated to codes $C_{(k)}$, and prove the following.

\begin{theorem}[Quantum memory lifetime; informal]
There exists a family of three-dimensional local Hamiltonians $\{H_k: k \in \mathbb{Z}_{\ge 0}\}$ associated with CSS stabilizer codes $\{C_{(k)}\}$ on $n_k =  \exp(\Theta(k))$ qubits
along with encoding and decoding maps $\mrm{Enc}_{k}$ and $\mrm{Dec}_{k}$
such that for every single-qubit state $\rho_L$
\begin{equation}
  \mrm{Dec}_{k} \circ \mrm{Enc}_{k}(\rho_L) = \rho_L.
\end{equation}
Moreover, there exist constants $T_c > 0$ and $\eta > 0$ independent of $k$, such that for all temperatures $T < T_c$:
\begin{enumerate}
    \item[(1)] After time $t$, the state of the system satisfies
    \begin{equation}
        \norm{\Phi_{1/T, t} \circ \mrm{Enc}_k(\rho_L) - \mrm{Enc}_k(\rho_L)}_1 \leq t  \exp(- \Theta(n_k^{\eta}))
    \end{equation}
    where $\Phi_{1/T, t}$ is the evolution generated by $\dot{\rho}(t) = -i[H, \rho] + \cL(\rho)$, and
    \item[(2)] The memory lifetime obeys
    \begin{equation}
        t_{\mathrm{mem}} \geq \exp( \Theta( n_{k}^\eta))
    \end{equation}
\end{enumerate}
\end{theorem}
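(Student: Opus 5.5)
The plan has three layers: a recursive construction with a locality guarantee, a combinatorial confinement property proved by induction on $k$, and a dynamical reduction from that property to the lifetime bound.

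\textbf{Construction.} Starting from the seed code $C_{(0)}$, alternate two transformations on cochain complexes. The first transformation raises the energy cost of $X$-type error clusters. It can be obtained by taking a product or mapping-cone type construction with a classical repetition-like complex. Such a product multiplies the number of checks violated by any nontrivial $X$-error configuration, while a chain homotopy equivalence keeps the cohomology at $\mathbb{Z}_2$, so $C_{(k)}$ still encodes exactly one logical qubit. The second transformation does the same on the dual (cochain) side for $Z$-type errors.

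Each step multiplies the size by a constant factor, which gives $n_k=\exp(\Theta(k))$. After each step the complex is re-embedded into $\mathbb{R}^3$ by a random embedding in the spirit of \cite{portnoy2023local,gromov2012generalizations}. Long-range checks are replaced by chains of local ancilla checks, through a subdivision that preserves cohomology. A probabilistic argument then shows that, with positive probability, the embedded complex has bounded local density and $O(1)$-range checks at the current length scale. The encoding $\mrm{Enc}_k$ maps onto the two-dimensional ground space, and $\mrm{Dec}_k$ is ideal syndrome decoding followed by logical readout, so $\mrm{Dec}_k\circ\mrm{Enc}_k=\id$ holds by construction.

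\textbf{Confinement.} The key combinatorial statement to prove by induction on $k$ is a confinement property in the style of a Peierls bound. For every Pauli error $E$ (say of $X$ type), after a suitable local cleaning, the syndrome weight satisfies $|\sigma(E)|\ge c\,\min(|E|_{\mathrm{red}},\, n_k^{\eta})$. Here $|E|_{\mathrm{red}}$ is the reduced weight, meaning $|E|$ minimized over stabilizer-equivalent representatives.

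The inductive step has to show that each transformation multiplies the effective energy of large clusters by a constant $\lambda>1$. It must also show that the re-embedding costs only a constant factor $\mu$ with $\lambda/\mu>1$. Iterating this gives an energy cost of $(\lambda/\mu)^k=n_k^{\eta}$ for large clusters.

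This property must hold uniformly over \emph{all} configurations, not merely along minimal paths. An energy barrier alone does not suffice, as the cubic and welded codes show. What is needed is a counting bound: the number of connected error clusters of size $m$ is at most $e^{O(m)}$, because the factor graph has bounded degree after embedding.

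\textbf{Dynamics.} Combining the counting bound with confinement, a Peierls argument at $T<T_c$ shows that the Gibbs weight of configurations containing any cluster of reduced size at least $n_k^{\eta}$ is $\exp(-\Theta(n_k^\eta))$. To convert this into item (1), restrict to the sector in which every syndrome cluster is small. In that sector the decoder succeeds, and $\mrm{Enc}_k(\rho_L)$ is invariant under the dynamics up to leakage. The leakage out of this sector is bounded by the rate at which a local Lindbladian can move probability into high-energy "bad" configurations. By the detailed-balance bound of \cite{alicki2010thermal,chesi2010thermodynamic}, this rate is at most $\mathrm{poly}(n_k)\,e^{-\Theta(n_k^\eta)}$, and integrating over time gives the bound $t\,e^{-\Theta(n_k^\eta)}$. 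Item (2) then follows by taking $t$ up to $e^{\Theta(n_k^\eta)}$.

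\textbf{Main obstacle.} I expect the inductive confinement step to be the hardest part, because the bound must hold for arbitrary errors rather than only along low-energy paths. Errors can spread across many copies of the previous-level code at subthreshold density and still assemble a logical operator. The approach I would try first is a renormalization (level-by-level coarse-graining) argument. One declares a block of level $k-1$ "bad" if its error restricted to it has large reduced weight. Confinement at level $k-1$ then bounds the energy of bad blocks, and the structure of the classical repetition factor shows that any good configuration of blocks which is logically nontrivial at level $k$ must violate many connecting checks.
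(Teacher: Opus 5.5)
Your construction layer (alternating repetition-code doublings, subdivision, random re-embedding, and chain-homotopy preservation of $H^1$) matches the paper's intuitive four-step picture. There are, however, two genuine gaps.

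\textbf{First gap: the encoding.} Your choice of $\mathrm{Enc}_k$ onto the ground space makes item (1) false. For such a state $[H,\mathrm{Enc}_k(\rho_L)]=0$, but $\|\cL(\mathrm{Enc}_k(\rho_L))\|_1=\Theta(n_k e^{-O(\beta)})$ for any bath that actually thermalizes, because every site can be excited. The trace distance in (1) therefore grows at that rate from $t=0$ and is $\Theta(1)$ already at $t=O(1)$. The state is not ``invariant up to leakage''; only its logical content is. The paper instead uses the thermal encoding $\mathrm{Enc}_k(\rho_L)=\rho_L\otimes e^{-\beta H'}/\Tr(e^{-\beta H'})$, with $\mathrm{Dec}_k$ the partial trace over the syndrome factor. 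It then applies Theorem 1 of Chesi et al., which bounds the drift by $8\|\cL\|_1\,t\,\Tr[(1-P_{\mathrm{stable}})\rho_\beta]$ with no detailed-balance assumption. Note what $P_{\mathrm{stable}}$ has to be: the syndromes on which no single-qubit flip changes the logical output of one \emph{fixed} decoder, the one used to dress the logical operators. ``All clusters are small, so the decoder succeeds'' is not that property. With an ideal (maximum-likelihood or minimum-weight) decoder you also have no locality with which to verify it.

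\textbf{Second gap: the confinement step.} The combinatorial core does not go through as proposed. The linear bound $|\sigma(E)|\ge c\min(|E|_{\mathrm{red}},n_k^\eta)$ is stronger than such a hierarchy provides. Lifting a short level-$j$ error to level $k$ multiplies its $X$-syndrome by at most $2$ per $\cR_X$ step and by $1$ per $\cR_Z$ step, while its spatial extent grows by the scale factor at every step. Below the cap you can therefore only hope for a polynomial bound $|\delta e|\ge\alpha\min_b|e+b|^\gamma$, which the paper states only as a heuristic and never uses.

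More importantly, even a correct isoperimetric inequality, combined with counting connected clusters in the physical Tanner graph, does not control unstable syndromes. The syndrome of a large-scale error breaks into far-apart clumps that are disconnected in that graph; this is exactly the obstacle you flag at the end.

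The missing idea is the paper's explicit RG decoder. Its coarse-graining steps $\sigma_i\mapsto(\sigma_{i-1},f_i)$, with $\sigma_i=\delta^1_i f_i+\cF_{i-1}(\sigma_{i-1})$, are locally computable and syndrome-reducing: $|\sigma_{i-1}|\le|\sigma_i|$ at $\cR_Z$ levels and $|\sigma_{i-1}|\le\frac{1}{2}|\sigma_i|$ at $\cR_X$ levels, proved by the local cleaning lemmas in $C_x$ and $C_q$. Connectivity is then measured in the multi-level decoding graph, where $v\in G_{i-1}$ is joined to $v'\in G_i$ whenever $v'\in Y_{(i-1),v}(0)$. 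The Peierls argument then runs as follows:
\begin{enumerate}
\item An unstable syndrome forces $\sigma_2\neq 0$, hence a connected witness with at least $2^{k-1}$ nodes.
\item At least a $1/(1+\Delta_{\mathrm{level}}^4)$ fraction of its nodes carry syndrome, and its total multi-level syndrome is at most $4|\tau_{2k}|$.
\item There are at most $|V_2|(\Delta+1)^{2m}$ witnesses with $m$ nodes, so the Peierls sum closes for $\beta\gtrsim\Delta_{\mathrm{level}}^4\log\ell$.
\end{enumerate}
Your bad-block renormalization is phrased through the error $E$ restricted to blocks. But the Peierls sum runs over syndromes weighted by $e^{-\beta|\sigma|}$, and $E$ is not a function of $\sigma$. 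Recasting that renormalization through a local, syndrome-only decoder is precisely what the paper does.
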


\noindent In this Theorem, we used the definition of the memory lifetime
\begin{equation}
t_{\mrm{mem}} = \min\{t: \norm{\Phi_{1/T, t} \circ \mrm{Enc}_k(\rho_L) - \mrm{Enc}_k(\rho_L)}_1 \leq 1/10\}
\end{equation}
which characterizes the timescale when the encoded state is forgotten; thus, statement (2) follows trivially from statement (1).  The subscript `$L$' denotes the logical degree of freedom. The encoding that we choose is a thermal encoding defined in Refs.~\cite {alicki2010thermal,chesi2010thermodynamic}. Namely, the Hilbert space of $H_{2k}$ can be decomposed as $\cH = \cH_{L} \otimes \cH_{\mrm{else}}$ with the Hamiltonian $H_{2k} = I_L \otimes H_{2k, \mrm{else}}'$.  Under this decomposition, the thermal encoding map is defined to be $$\mrm{Enc}_k(\rho_L) = \rho_L \otimes \frac{e^{-\beta H_{2k, \mrm{else}}'}}{\Tr(e^{-\beta H_{2k, \mrm{else}}'})}.$$
We also separately prove that these Hamiltonians are local in $\mathbb{R}^3$.

\begin{theorem}[Embedding in $\mathbb{R}^3$; informal]
For every $k \geq 0$, the Hamiltonian $H_{k}$
  admits an embedding of its qubits into $\mathbb{R}^3$ such that:
\begin{enumerate}
\item (Local) The support of each Hamiltonian term is contained in a ball of radius $O(1)$.
\item (Bounded density) Every unit ball in $\mathbb{R}^3$ contains $O(1)$ qubits.
\end{enumerate}
\end{theorem}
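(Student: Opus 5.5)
The embedding will be built by induction on $k$, following the recursive definition of $C_{(k)}$ from $C_{(k-1)}$. The key is to carry a stronger invariant than plain locality. We will maintain that $C_{(k)}$ comes with an embedding into a cube of side $L_k$ (with $n_k \sim L_k^{3}$ up to constant factors) such that four properties hold: (i) every stabilizer generator is supported in a ball of radius $r_0$; (ii) every unit ball contains at most $\rho_0$ qubits; (iii) every check has bounded weight; (iv) every qubit lies in boundedly many checks. Here $r_0,\rho_0$ are independent of $k$. The base case $C_{(0)}$ is a fixed finite code, so we simply place its qubits at distinct lattice points and fix $r_0,\rho_0$ large enough.

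For the inductive step, we view one step of the recursion, whether the $X$-type or the $Z$-type procedure, as a map of cochain complexes. Each cell of $C_{(k-1)}$ is replaced by a bounded-degree gadget of fixed shape, for example a thickened copy of a constant-size repetition or surface-code block. Gadgets attached to incident cells are glued along a constant number of coupling checks. Geometrically, we rescale the embedding of $C_{(k-1)}$ by a constant factor $\lambda$, giving $L_k=\lambda L_{k-1}$. Each cell is then placed at the image point, and its gadget is laid out in an $O(1)$-size neighborhood of that point.

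\emph{Locality.} Gadget-internal checks are local because the gadget has constant size. A coupling check between incident cells was supported within radius $r_0$ before rescaling, so afterwards it has radius about $\lambda r_0$. This is exactly where naive iteration breaks, since the radius would grow like $\lambda^k$. The fix, inspired by Refs.~\cite{portnoy2023local,gromov2012generalizations}, is to route each long coupling through a chain of auxiliary qubits and checks, which subdivides the edge. The result is a new check structure with radius again $r_0$, and it is chain-homotopy equivalent to the original. Homotopy equivalence preserves the single logical qubit. We must also check that it preserves the energy-barrier and thermal properties, which cost only constant-factor changes per step.

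\emph{Density.} The routing paths are where density could be lost. We would take $\lambda$ a sufficiently large constant and route the paths through the third dimension. For each edge we choose a random lift height, as in the random embeddings of Ref.~\cite{portnoy2023local}, so that in expectation every unit ball meets $O(1)$ paths. A union bound together with Lovász local lemma style arguments then gives a deterministic choice with density $\rho_0$. This requires that the underlying incidence structure have bounded degree (property (iv)) and geometric growth compatible with three dimensions, which is the point of keeping the number of gadget copies per step bounded so that $n_k=\exp(\Theta(k))$.

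The main obstacle is closing the induction with \emph{the same} constants $r_0,\rho_0$ at every level, rather than constants growing with $k$. Per-step constant losses compound exponentially, so the rescaling factor $\lambda$ and the routing must exactly absorb them. I would first try to show that after routing, the new embedding, viewed at scale $\lambda$, is a constant-density, constant-radius configuration depending only on the gadget and not on $k$. This would make the invariant self-reproducing.
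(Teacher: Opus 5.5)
Your proposal identifies the real difficulty: the locality and density constants must reproduce themselves at every level. But you leave exactly that step open (``I would first try to show\ldots''), so as written it does not prove the statement. The missing idea is the paper's perturb-then-replace cycle, which decouples the two constants.

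\emph{Perturbation.} The triangulated Tanner square complex $T_{(i)}$, assumed $(t,1)$-embedded, is scaled by $\lambda$. Each vertex is then moved by a uniform random vector in one of $A$ well-separated caps of a sphere of radius $\alpha\lambda$, with labels chosen so that no simplex has two vertices in the same cap. This gives two deterministic guarantees: edge lengths lie in $[(\alpha_2-1)\lambda,(2\alpha+1)\lambda]$, and triangle widths are at least $\gamma\lambda$. It also gives one probabilistic guarantee: a given 2-simplex meets a fixed unit ball with probability at most $\Gamma/\lambda$. Vertex-disjoint simplices move independently, so the Lov\'asz local lemma applies to $m$-tuples of same-label simplices, via $\alpha^3(2\alpha^3\Gamma t)^m\lambda^{3-m}\le 1/4$ with $m>3$. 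It yields a perturbation of density $t'=mA'$, where $A'$ is the number of simplex labels. Crucially, $t'$ does not depend on $t$; $t$ only dictates how large $\lambda$ must be.

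\emph{Replacement.} Each square becomes a copy of $U_X$ or $U_Z$, mapped piecewise-linearly onto its perturbed image. Because the big triangles are non-degenerate, the new unit-scale cells have area bounded below. So density grows only by the factor $C=64\pi(2\alpha+1)^2/(\gamma(\alpha_2-1))$, which is independent of $\lambda$. Setting $t=Ct'$ and then choosing $\lambda=\lambda(t)$ closes the induction.

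Without the width bound, subdividing a nearly degenerate triangle would pack $\Theta(\ell^2)$ cells into a thin sliver. Without the $t$-independence of $t'$, the constants compound exactly as you fear. The induction also needs the degree bounds $z_v\le 7$, $z_e\le 4$ preserved with the same constants, since they fix $A$ and $A'$. This is a specific check on $U_X,U_Z$ (corner vertices keep their old degree, boundary vertices get at most $3+z_e$), not a consequence of ``bounded-degree gadgets.''

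Your geometric picture also does not match what is being embedded. The level-$(i+1)$ complex is not a set of constant-size gadgets placed at points, with long couplings routed along paths. Instead, $T_{(i+1)}$ replaces each square of $T_{(i)}$ by two $\ell\times\ell$ layers plus a transverse strip covering the whole rescaled square, and this subdivision is part of the definition of $C_{(i+1)}$. Indeed, rerouting an edge $x$--$q$ of a CSS Tanner graph while preserving the complex up to homotopy forces you to subdivide the squares $(x,q,z,q')$ containing it, since these encode $H_zH_x^T=0$.

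This has two consequences for your argument:
\begin{itemize}
\item Your invariant must count 2-cells meeting each unit ball, not qubits as in your (ii), because the next level's qubits are spread over those cells.
\item 2-cells in $\mathbb{R}^3$ have codimension one, so crossings generally cannot be removed and are stable under perturbation. Random ``lift heights'' are a codimension-two device suited to paths. For sheets you must bound multiplicities, which is exactly what the $\Gamma/\lambda$ estimate plus the local lemma does.
\end{itemize}

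If your routing is instead an extra modification of $C_{(k)}$, you are embedding a different Hamiltonian than $H_k$. Per-level constant-factor losses in its thermal properties would then compound over the $k$ levels.

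Finally, $n_k\sim L_k^3$ is neither true here nor needed. Each step multiplies the number of squares by about $2\ell^2\approx 2(2\alpha+1)^2\lambda^2$, while volume grows by $\lambda^3$, and $\lambda\gg(2\alpha+1)^2$.
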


Additionally, we believe that the spectral gap of our construction is stable under arbitrary local Hamiltonian perturbations.  This requires proving that our construction obeys TQO-1 and TQO-2, which are defined in Ref.~\cite{Bravyi_2010}.  While TQO-1 follows immediately from our code having distance scaling as a power of $n_k$, TQO-2 requires a separate proof, which will be provided in an updated version of this paper.

\subsection{Overview of the construction}

We now discuss our construction.  We begin with a construction that is not treated in the main body of the paper, but that is conceptually intuitive.  We then discuss the main construction for which we rigorously prove self-correction, which can be viewed as a simplification of the intuitive construction.   Finally, we describe a non-random variant of the construction.

A useful heuristic for self-correction is to find a construction where the size of the error syndrome grows with the size of the error.  For $X$-type errors, which correspond to 1-cochains in the associated cochain complex, this implies the following coboundary expansion property:
 \begin{equation} \label{eq:coboundary-expansion}
    \forall\, e \in C^1_{(k)}: \ \ \mrm{syn}(e) =
    |\delta e| \geq \alpha\min_{b \in B^1} |e + b|^\gamma
  \end{equation}
for constants $\alpha, \gamma > 0$.  For $Z$-type errors, which correspond to 1-chains in the dual chain complex, this implies an analogous boundary expansion property.  All of our constructions will satisfy these bounds for suitable choices of $\alpha$ and $\gamma$.

\subsubsection{Intuitive 4-step construction with random embedding}
\label{subsec:4-step}

We provide a variant of our main construction that has similar features but is more intuitive.

As a seed, we take a constant-sized surface code encoding a single logical qubit, specified by a cochain complex $C_{(0)}: C_{(0)}^0 \rightarrow C_{(0)}^1  \rightarrow C_{(0)}^2$. We then construct the code iteratively by alternating  between two procedures, denoted $\cR_X$ and $\cR_Z$:
\begin{equation}
    C_{(2k)} = (\cR_Z \circ \cR_X)^{\circ k} (C_{(0)})
\end{equation}
Operationally, $\cR_X$ and $\cR_Z$ are very similar. The main difference is that $\cR_X$ increases the syndrome cost of $X$-type errors, while $\cR_Z$ does the same for $Z$-type errors. Each procedure increases the code size by a constant factor, so that after $2k$ steps the number of physical qubits is $n_{2k} = \exp(\Theta(k))$.

There is a natural map from the Tanner graph of $C_{(i)}$ to a square complex, which we call the Tanner square complex $T_{(i)}$. Thus, at level $i$ we can specify the construction by the triple
$$(C_{(i)},T_{(i)},I_{(i)})$$
where
$I_{(i)}:T_{(i)}\to \mathbb{R}^3$
is a local bounded-density embedding. More precisely, if $t$ is the maximum number of squares intersecting any unit ball in $\mathbb{R}^3$, and $\ell$ is an upper bound on the embedded edge lengths, then we say that $I_{(i)}$ is a $(t,\ell)$-embedding.

The idea of alternating between increasing the syndrome cost of $Z$-operators and increasing the syndrome cost of $X$-operators is a common theme for all the constructions we provide in this paper. For the construction in this subsection, an iteration consists of 4 steps: perturbation, subdivision, thickening, and degree reduction.

Suppose that at level $i$ we are given a triple $(C_{(i)},T_{(i)},I_{(i)})$. One iteration produces the next triple
$\left ( C_{(i+1)},T_{(i+1)},I_{(i+1)} \right)$.
We make an induction hypothesis that the square complex $T_{(i)}$ has maximum edge and vertex degrees bounded by sufficiently large universal constants $z_e$ and $z_v$, and that $I_{(i)}$ is a $(t,1)$-embedding for some sufficiently large constant $t$ independent of $i$.  We then perform the following 4 steps:
\begin{enumerate}

\item \textbf{Perturbation:} Scale the current embedding map $I_{(i)}$ by a constant $\ell > 1$, then randomly perturb the images of vertices in $T_{(i)}$  by distance $\asymp \ell$ according to a procedure similar to that in Refs.~\cite{portnoy2023local} and \cite{gromov2012generalizations}. Such a procedure relies on the Lov\'asz local lemma to show the existence of a perturbation that reduces the density to a sufficiently small $t' < t$.
This results in a  $(t',\ell)$-embedding map without altering the structure of $T_{(i)}$, see Fig.~\ref{fig:perturbation} for an illustration.

\begin{figure}[h]
    \centering
\includegraphics[width=0.95\textwidth]{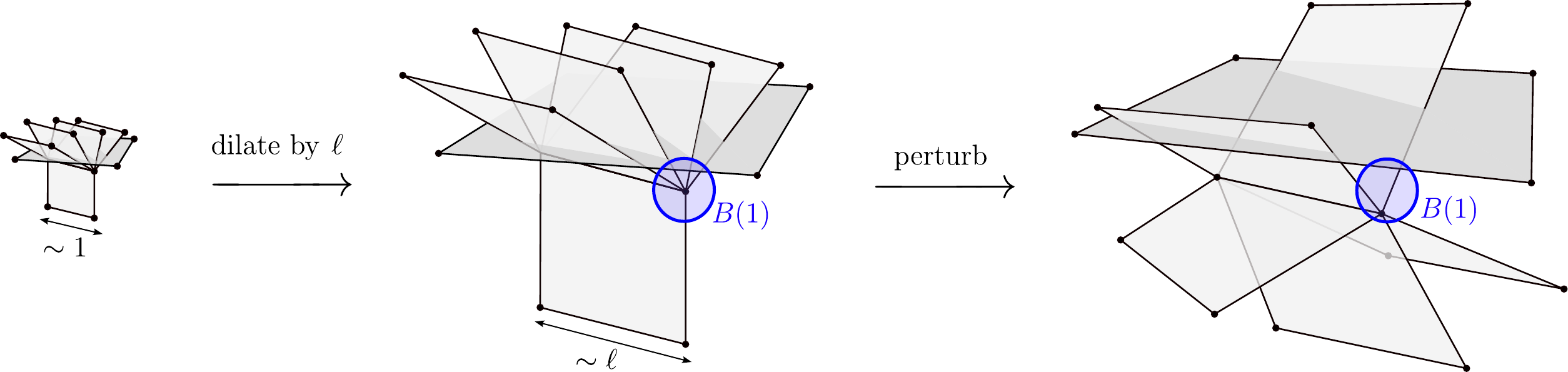}
    \caption{An illustration of the perturbation step and its effect on the embedding density.}
    \label{fig:perturbation}
\end{figure}

\item \textbf{Subdivision:} Subdivide each square of $T_{(i)}$ into $\asymp \ell \times \ell$ squares, obtaining the subdivided code $C_{(i)}'$ and its associated Tanner square complex $T_{(i)}'$.  This restores the unit-scale geometry, so that the resulting complex is $(t'', 1)$-embedded in $\mathbb{R}^3$ for some $t' <t''< t$. One can think of the code at this stage as a defect network of $\asymp \ell \times \ell$-sized surface codes.

\item \textbf{Thickening:} We now ``thicken'' the entire code; locally, this means that patches of the surface code turn into thickness-1 slabs of 3D surface code, and the newly introduced boundary surfaces are membrane-condensing. In the $\cR_X$ iteration, the membranes are $X$-type, and in the $\cR_Z$ iteration, the membranes are $Z$-type. This increases the energy cost of either $X$-type or $Z$-type excitations, depending on the iteration.

More formally, in the $\cR_X$ iteration this step corresponds to taking a tensor product $C'_{(i)} \otimes \comB$ where $\comB: \comB^0 \rightarrow \comB^1$ is the cochain complex of a 2-bit repetition code.
We then only keep the terms in the product complex of degree 2 and below.
In the $\cR_Z$ iteration this step corresponds to the tensor product $C'_{(i)} \otimes \wt \comB$ where $ \wt \comB: \wt  \comB^{-1} \rightarrow \wt  \comB^0$ is the dual cochain complex of a 2-bit repetition code.
We then only keep the terms in the product complex of degree 0 and above.

After thickening, the code is $(t, 1)$-embedded in $\mathbb{R}^3$ where $t > t''$. However, both the maximum edge and maximum vertex degrees in the Tanner graph have increased.

\item \textbf{Degree reduction}:   If we were to repeat the first three steps many times, the edge and vertex degrees would eventually become unbounded. We claim that there exists a degree reduction procedure that restores the maximum edge and vertex degrees to the universal constants $z_e$ and $z_v$ without affecting properties of the code. More precisely, for the cochain complex $C_{(i)}''$ obtained after thickening, there exists a cochain-homotopic complex $C_{(i+1)}$ whose degrees are reduced\footnote{A similar effect can be achieved by starting from the code $C_{(i)}''$ and constructing the new complex code $C_{(i+1)}$ by removing some of the redundant checks and some of the bits. One can check that there exists such a procedure that reduces the degrees while preserving the number of logical qubits. The resulting complex is not chain-homotopic to the original one, but all essential properties remain unchanged. }; an example is illustrated below:
\begin{figure}[h]
    \centering
    \includegraphics[width=1\textwidth]{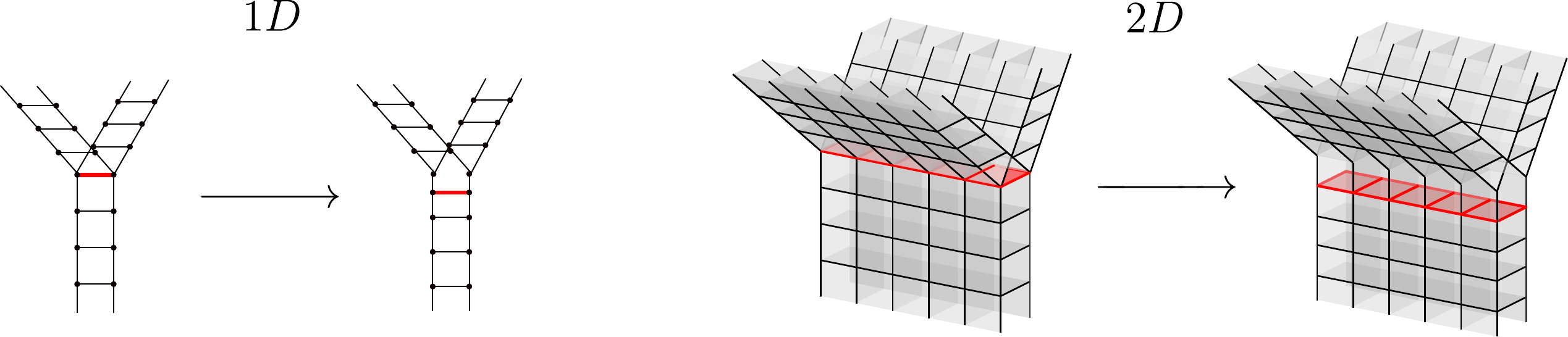}
    \caption{
    Local effect of the degree reduction step. On the left, we show a toy example of degree reduction for a local neighborhood in a $1$D complex. On the right, we show an analogous procedure for a $2$D square complex. In both cases, the thickening step increases the degrees near the junction where several subcomplexes meet. The cells responsible for this increase are shown in red.}
    \label{fig:degree-reduction}
\end{figure}
\end{enumerate}

We claim that the embedding map $I_{(i+1)}$ at the end is a $(t,1)$-embedding and that maximum edge and vertex degrees are $z_e$ and $z_v$, which completes the induction step. We also claim that each iteration preserves the total number of logical qubits. The self-correcting property arises due to the thickening step: each application of $\cR_X$ or $\cR_Z$ multiplies the syndrome size of one error type by 2.
Since the two operations are alternated, this occurs on all scales and for both Pauli error types.

\subsubsection{Main construction with random embedding} \label{sec:intro-main-construction}

The idea behind the main construction in this paper is similar to that of the 4-step one, but is structurally simpler. The iterations $\cR_X$ and $\cR_Z$ only consist of 2 steps: perturbation and replacement.
Suppose that at level $i$ we are given a triple $(C_{(i)},T_{(i)},I_{(i)})$.
We make the same induction hypothesis as before.

\begin{enumerate}
\item \textbf{Perturbation:} Performed in the same way as above. This produces a new $(t',\ell)$-embedding with sufficiently small $t' < t$.
\item \textbf{Replacement:}
Replace each square $s \in T_{(i)}(2)$ with a copy of a square complex $U_X$ in the $X$-iteration and with a copy of $U_Z$ in the $Z$-iteration, with the boundary subcomplexes of the adjacent copies identified:
$$T_{(i+1)} = \left ( \bigsqcup_{s \in T_{(i)}(2)} U_{X,s} \right )/ \sim,$$
The complexes $U_X$ ($U_Z$) combine and simplify the net effect of subdivision, thickening, and degree reduction. Both consist of two parallel  $\asymp \ell\times \ell$ layers joined by a single transverse strip of thickness $2$ parallel to the thickened direction, as shown in \eqref{fig:replacement}. This already suffices to double the syndrome of either $Z$-type or $X$-type Pauli errors, depending on the iteration.
For our choice of $U_X$ and $U_Z$, the vertex and edge degrees in $T_{(i+1)}$ remain uniformly bounded. By an argument similar to that described in the 4-step construction, one can obtain a $(t, 1)$-embedding map $I_{(i+1)}$.
\end{enumerate}

\subsubsection{Memory lifetime proof}

To prove an exponential memory lifetime at low temperatures, we apply a more intricate version of a Peierls argument. For this, we need a decoder $\cD$ that, given a Pauli error $e$, applies a correction that depends only on the syndrome $\sigma = \delta e$. We say that $\sigma$ is \emph{unstable} if there exists a single-qubit error $e_1$ such that $\sigma$ and $\sigma' = \sigma + \delta e_1$ decode to different logical outcomes under $\cD$.  Let $\Sigma(\beta)$ be the Gibbs distribution on syndrome configurations at inverse temperature $\beta$ with the energy function $E(\sigma) = |\sigma|$.  By  Theorem 1 from Ref.~\cite{chesi2010thermodynamic}, if the decoder $\cD$ has the property that $$\mathbb{P}_{\sigma \sim \Sigma(\beta)}(\sigma \text{ unstable under }\cD) \leq \exp(-n^{\eta}),$$ then the memory lifetime is $t_{\mrm{mem}} \gtrsim \exp(n^{\eta})$.

We choose $\cD$ to be a renormalization group (RG)-like decoder that acts by performing a correction at each scale, from the smallest scale to the largest, while applying a ``coarse-graining'' operation between each correction.  Because of the simple structure of the local subcomplexes $U_{X}$ ($U_{Z}$), the decoder can be defined explicitly.  Under this decoder, one can show that syndrome $\sigma$ is unstable if any syndrome remains at the largest scale.

Any unstable configuration in the $2k$-level construction must contain a connected error cluster with syndrome weight $\asymp 2^{k}$. This follows because, roughly speaking, the $X$-syndrome weight doubles during the $\cR_X$ iterations, while the $Z$-syndrome weight doubles during the $\cR_Z$ iterations.  On the other hand, the number of connected clusters with syndrome weight $m$ is at most $\asymp B^m$ for some constant $B >1$, by reduction to counting connected size-$m$ subgraphs in a bounded-degree ``decoding graph''.
Combining these two properties with a standard Peierls argument shows that, setting $k \asymp \log n$, the probability of an unstable syndrome configuration is $\leq \exp(-n^{\eta})$ when $T < T_c$.  The current bound (which we believe is tight up to constants) is $T_c \gtrsim 1/\log \ell$, where $\ell$ is the scale parameter of the perturbation step.  This results in a relatively large value for $T_c$ even with poor bounds on $\ell$.

\subsubsection{Construction with explicit embedding}

We now describe the third construction,
  which comes with an explicit embedding.
The motivation for this construction is to make the perturbation step deterministic, thus providing exact locations of all the local configurations in the code.  By choosing a deterministic perturbation, we can also optimize how the code is packed in $\mathbb{R}^3$,
  providing much smaller bounds on the scale factor $\ell$ compared to random construction.  This results in larger bounds on the temperature for thermal stability and a parametrically longer memory lifetime.

As before,
  the construction consists of two types of iterations,
  $\cR_X$ and $\cR_Z$.
Each iteration consists of two steps,
  which we call \emph{refinement} and \emph{doubling}.
These steps play roles analogous to perturbation and replacement
  in the previous construction:
  the refinement step modifies the geometry without changing the underlying
    topological structure,
  while the doubling step implements a 2-bit repetition-code structure to increase the syndrome cost of errors.

In the explicit construction, we will manipulate a geometric complex $X$,
  which plays the same role as the Tanner square complex $T$
  in the previous constructions.  We arrange the embedding so that all faces in $X$ are parallel to coordinate planes.
\begin{enumerate}
  \item \textbf{Refinement:} $X \mapsto X'$.
        We scale and subdivide the complex,
          and then apply a deterministic local perturbation.
        This step preserves the underlying homotopy:
          the refined complex is homotopy equivalent to the original one,
          and the refined chain complex is chain homotopy equivalent to the original one.
        In particular, the resulting code stores the same number of logical qubits.
  \item \textbf{Replacement:} $X' \mapsto Y = (X' \sqcup X'_\mrm{copy} \sqcup X_{\mrm{cyl}})/{\sim}$.
        We form two copies of $X'$.
        One copy remains in place,
          while the other is translated by the vector $(1,1,1)$.
        The remaining complex $X_{\mrm{cyl}}$ connects the two copies.
        Conceptually,
          the two copies duplicate the encoded information,
          while $X_{\mrm{cyl}}$ mediates between them and enforces their agreement.
\end{enumerate}

An example of this process is illustrated below.
\begin{figure}[H]
  \centering
  \vspace*{-3em}
  \includegraphics[width=0.9\textwidth]{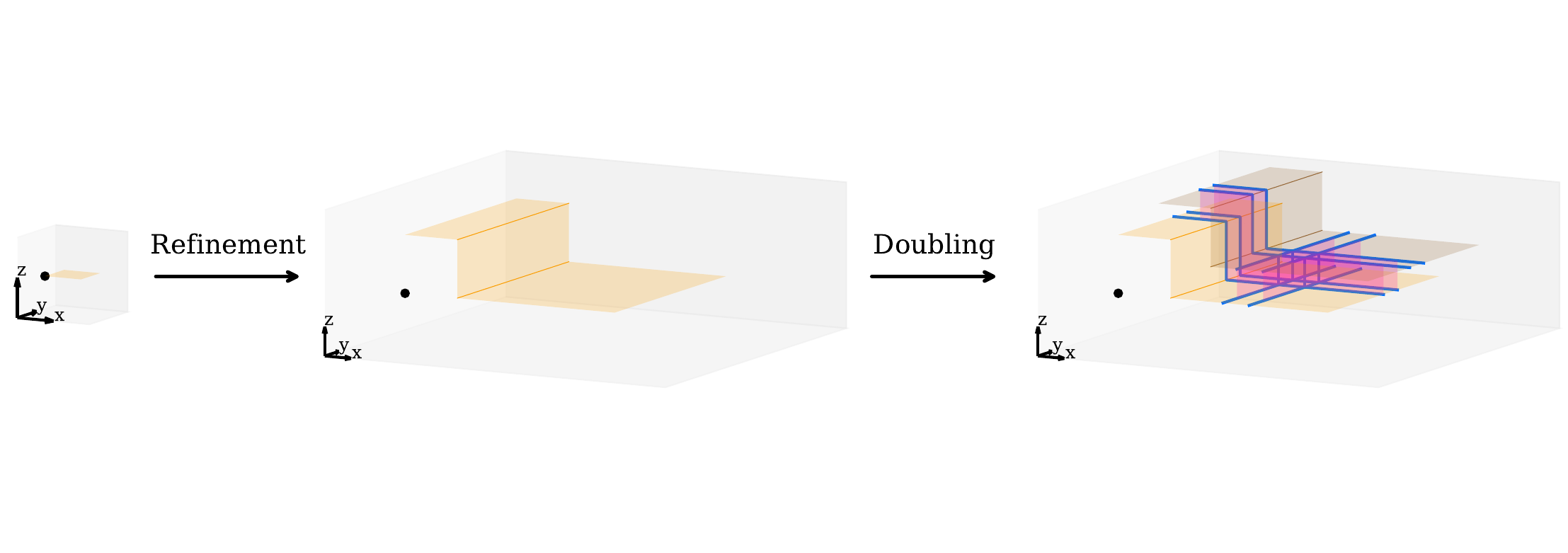}
  \vspace*{-3em}
\end{figure}

The construction is specified by listing all possible local structures
  that can appear at each stage,
  together with local rules describing how these structures transform
    under refinement and doubling.
In total,
  there are four refinement gadgets and three doubling gadgets.
The proof of locality follows by showing that these rules form a closed system:
  every allowed local structure can only be transformed into a configuration of other allowed local structures, with no new unaccounted configurations produced.  We leave the complete proof of the exponential memory lifetime for this version of the construction to future work.  However, because the explicit construction behaves similarly to the random one, we believe that the proof should not deviate much from the proof for the random construction.

\subsection{Outlook}

\begin{enumerate}[itemsep=-0.1em, topsep=0.05em]
\item \textbf{Other constructions}:  We hope that the existence of a 3D self-correcting quantum memory will motivate alternative constructions.    Since there likely exists no Pauli stabilizer code in 3D that is self-correcting and translation invariant~\cite{haah2013commuting}, one could explore translation-invariant codes that are not stabilizer codes or that violate the TQO axioms.  Other directions may involve utilizing exotic kinds of domain walls such as the magic domain walls from Ref.~\cite{li2024domain}.
\item \textbf{Initialization}: Our proof uses a thermal encoding $\mrm{Enc}(\rho_L) = \rho_L \otimes \rho_\beta$, but we do not address efficient passive initialization of this state.  A natural strategy is to start with an easily preparable state, such as  $\ketbra{0}{0}_L \otimes I_X \otimes \ketbra{\varnothing}{\varnothing}_Z $ (which can be prepared from an all-0 product state by measuring $X$ stabilizers), and rely on thermalization to mix the state within the logical sector\footnote{A simpler question is whether rapid initialization is possible from the state $\ketbra{0}{0} \otimes \ketbra{\varnothing}{\varnothing}_X \otimes \ketbra{\varnothing}{\varnothing}_Z $ as in Ref.~\cite{bergamaschi2026rapid}.  However, for their results to hold, the metacheck matrix must be LDPC, which is not the case in our code.  Thus some generalization of their results will be needed.}. The $X$-syndrome sector must heat up, while the $Z$-syndrome sector will cool down from infinite temperature.
This is believed to be efficient in the 4D toric code, although not proven rigorously. For our construction, energetic bottlenecks might be an obstruction for rapid mixing to $\ketbra{0}{0} \otimes \rho_\beta$.
\item \textbf{Thermally stable quantum computer}: A natural question one can ask is whether a fault tolerant and passive quantum computer exists in three dimensions.  The main bottleneck in this approach is an efficient implementation of a fault-tolerant non-Clifford gate via code-switching while coupled to a thermal bath.
\item \textbf{``Circuit to Gibbs state'' mapping}:
The Feynman-Kitaev method maps polynomial-depth quantum circuits to Hamiltonians whose ground state encodes the computation history.  However, states at low energies fail to encode the computation history.  Finding circuit-to-Hamiltonian mappings where states over a larger energy window can encode the computation history may provide insight into the quantum PCP conjecture (see Ref.~\cite{Anshu_2024} for recent ideas).  A simpler but still interesting question is whether nontrivial \emph{distributions} over eigenstates of the Hamiltonian can still encode the computation history.  One natural distribution would come from the Gibbs state of a local Hamiltonian at non-zero temperature.  Ideas from self-correction may help answer this question.

\item \textbf{Relation to phases of matter}:  Since our result produces a scaling family of Hamiltonians with a well-defined thermodynamic limit, one could ask how it fits within the current landscape of topological phases of matter.  In particular, the excitations in our code are deformable, and we expect that braiding and fusion rules can be appropriately defined. Can one prove that our Hamiltonian is distinct from any known translation-invariant phase of matter?
\item \textbf{Optimal parameters}:
It would be interesting to understand the optimal parameters achievable by a 3D self-correcting quantum memory.
The key exponent governing the memory lifetime is $\eta$,
  which is closely related to the exponent $\gamma$ in the coboundary-expansion bound, \cref{eq:coboundary-expansion}.
Our work establishes nontrivial lower bounds on the optimal value of $\eta$,
  but finding matching upper and lower bounds is open.
\end{enumerate}

\section{Preliminaries}

\subsection{Quantum CSS  and cochain complexes}
\label{prelim:codes}

We briefly review the connection between quantum CSS codes and cochain complexes.
A CSS code on $n$ qubits is specified by two classical codes, $C_x= \ker(H_x)$ and $C_z = \ker(H_z)$.  Here, $H_x$ and $H_z$ denote the parity-check matrices
\begin{equation}
H_x:\mathbb{F}_2^n \to \mathbb{F}_2^{m_x},
\qquad
H_z:\mathbb{F}_2^n \to \mathbb{F}_2^{m_z},
\end{equation}
which satisfy $H_z H_x^T = 0$. The $X$ and $Z$-type Pauli operators that commute with all stabilizers correspond to $C_z$ and $C_x$, respectively. The $X$-type stabilizers correspond to $C_x^\perp = \operatorname{Im}(H_x^T)$, and the $Z$-type stabilizers correspond to $C_z^\perp = \operatorname{Im}(H_z^T)$.
The number of encoded qubits is $k = \dim C_x - \dim C_z^\perp = \dim C_z - \dim C_x^\perp$.
The $X$- and $Z$-distances are
\begin{equation}
d_x = \min_{c \in C_z \setminus C_x^\perp} |c|,
\qquad
d_z = \min_{c' \in C_x \setminus C_z^\perp} |c'|,
\end{equation}
and the code distance is $d = \min(d_x,d_z)$.
We call a quantum code a low-density parity-check (LDPC) if each check acts on $O(1)$ qubits, and each qubit participates in $O(1)$ checks.

A CSS code naturally gives rise to the cochain complex $C^\bullet$:
\begin{equation}
  \begin{tikzcd}
    \mathbb{F}_2^{m_x}
      \arrow[r,"\delta^0 = H_x^T"]
    & \mathbb{F}_2^n
      \arrow[r,"\delta^1 = H_z"]
    & \mathbb{F}_2^{m_z}.
  \end{tikzcd}
\end{equation}
with $\delta^1 \delta^0 = 0$.
We will frequently write this as
\begin{equation}
    C^\bullet: C^0 \overset{\delta^0}{\longrightarrow}C^1 \overset{\delta^1}{\longrightarrow} C^2.
\end{equation}
The $X$-type operators commuting with all $Z$-checks are $1$-cocycles, $Z^1(C^\bullet) := \ker(\delta^1)$,
while the $X$-type stabilizers are $1$-coboundaries, $B^1(C^\bullet) := \operatorname{Im}(\delta^0)$. For an $X$-type error $e_X \in C^1$, its syndrome is $\delta^1 e_X \in B^2(C^\bullet) := \operatorname{Im}(\delta^1)$.
Equivalence classes of distinct $X$-type logical operators are labeled by the first cohomology group
\begin{equation}
H^1(C^\bullet)=Z^1/B^1=\ker(H_z)/\operatorname{Im}(H_x^T)=C_z/C_x^\perp.
\end{equation}
whose dimension is precisely the number of encoded qubits.

Let $C_\bullet := \operatorname{Hom}(C^\bullet, \ff_2)$ be the dual chain complex, which we can also write as
\begin{equation}
    C_\bullet: \ C_2 \overset{(\delta^1)^T}{\longrightarrow} C_1  \overset{(\delta^0)^T}{\longrightarrow} C_0
\end{equation}
with $C_i \cong C^i$. The $Z$-type operators that commute with all $X$-checks are 1-cycles, i.e. $Z_1(C_\bullet) := \ker(\delta^0)^T$, while $Z$-type stabilizers are 1-boundaries, $B_1(C_\bullet) := \operatorname{Im}(\delta^1)^T$. For a $Z$-type error $e_Z \in C_1$, its syndrome is ${\delta^0}^T e_Z \in B_0(C_\bullet) := \operatorname{Im}(\delta^0)^T$.
Equivalence classes of distinct $Z$-type logical operators are labeled by the first homology group
\begin{equation}
H_1(C_\bullet)=Z_1/B_1=\ker(H_x)/\operatorname{Im}(H_z^T)=C_x/C_z^\perp.
\end{equation}

\subsection{Tanner graphs and Tanner square complexes} \label{sec:Tanner-prelims}

A convenient way to represent classical and quantum codes is via their Tanner graphs.
The Tanner graph  $G = (V,E)$ of a classical code is bipartite, with $V = V_b \cup V_c$, where $V_b$ contains one vertex per bit and $V_c$ one vertex per check. An edge is placed between $v_b \in V_b$ and $v_c \in V_c$ if the bit corresponding to $v_b$ participates in the check corresponding to $v_c$.

The Tanner graph  $G = (V,E)$ of a quantum CSS code is tripartite, with $V = V_X \cup V_Q \cup V_Z$, where $V_Q$ contains one vertex per qubit, $V_X$ contains one vertex per $X$-check, and $V_Z$ contains one vertex per $Z$-check. A vertex in $V_Q$ is connected to a vertex in $V_X$ or $V_Z$ if the corresponding qubit participates in the corresponding $X$- or $Z$-check.

It was observed in Ref.~\cite{li2024transform} that the commutativity condition of a quantum CSS code, namely $H_z H_x^T = 0$, can be naturally captured by associating a $2$-dimensional square complex to the Tanner graph of the code, with the Tanner graph as its $1$-skeleton.
While a Tanner square complex determines a unique quantum code, the converse does not hold: a given code may correspond to multiple square complexes.  However in our construction, there is a natural choice of square complex that we will use.

We review how to construct such a square complex.
Given a Tanner graph $G = (V,E)$ of a quantum CSS code,
a square complex $T$ associated with this Tanner graph shares the same vertex set $V = V_X \cup V_Q \cup V_Z$ and edge set $E$, but comes with a set of faces $F$ that we will now describe. Fix $x \in V_X$ and $z \in V_Z$, and let $Q_{xz} \subseteq V_Q$ be the set of qubits shared by these checks. Note that $|Q_{xz}|$ must be even, since the corresponding $X$- and $Z$-checks commute. Thus, we may choose a collection of pairs $\{q_i,q_j\}$ with $q_i,q_j \in Q_{xz}$ such that every shared qubit participates in at least one pair. For each such pair $\{q_i,q_j\}$,  we assign a face $f \in F$ whose boundary is the cycle $(x,q_i,z,q_j)$. Repeating this for all $x \in V_X$ and $z \in V_Z$ produces a square complex $T$ whose $1$-skeleton is the Tanner graph of the CSS code. The parity-check matrices $H_x$ and $H_z$ are recovered from the incidence relations between $V_X$ and $V_Q$, and between $V_Z$ and $V_Q$, respectively. The square faces encode the commutativity condition $H_z H_x^T = 0$.

\subsection{Quantum code embedding framework}
\label{sec:quantum-code-embedding-prelims}

In our iterative construction, the relation between the code properties at levels $i$ and $i+1$ is most transparent in the formalism of quantum code embedding\footnote{Here ``embedding'' means that one code is realized inside the algebraic structure associated with a different code. This terminology should not be confused with geometric embedding, which is a different concept that we use to realize our construction locally in $\mathbb{R}^3$.  For certain codes like layer codes~\cite{williamson2023layer}, the code embedding formalism also provides a natural embedding in Euclidean space, but this is not the case for our construction.}, introduced in Ref.~\cite{yuan2026unified}. In Ref.~\cite{yuan2026unified}, this framework was formulated in terms of iterated mapping cones. Here, we reformulate it using spectral sequences, which more readily allows us to argue chain-homotopy equivalence between different steps of our construction.
We also make notational deviations from Ref.~\cite{yuan2026unified} to match the other conventions in this paper.

The setting of quantum code embedding is that one has a global code $C$ composed of smaller local codes $C_v$ indexed by $v \in V$, which are connected via some maps $g$ and $s$.  The local codes are typically simple and have well-understood cohomological properties, which allows one to compute the cohomology of the larger code $C$ in terms of the cohomology of the local codes and the connecting maps. The quantum code embedding framework provides a systematic way to do this analysis.

Given a cochain complex
$C\!:  \ C^0 \xrightarrow{\delta^0} C^1 \xrightarrow{\delta^1} C^2$
constructed from a direct sum of local cochain complexes
\begin{equation}
    C_x: \  C_x^0 \to C_x^1 \to C_x^2,
    \quad C_q: \  C_q^0 \to C_q^1 \to C_q^2,
    \quad C_z: \  C_z^0 \to C_z^1 \to C_z^2
\end{equation}
indexed by $x \in V_X$, $q \in V_Q$, $z \in V_Z$.
We will use uppercase letters $X, Q, Z$ to denote the direct sums of the local complexes, and lowercase letters $x, q, z$ to denote the individual local complexes and maps:
\begin{equation}
  C^\bullet_X = \bigoplus_{x \in V_X} C^\bullet_x,\quad
  C^\bullet_Q = \bigoplus_{q \in V_Q} C^\bullet_q,\quad
  C^\bullet_Z = \bigoplus_{z \in V_Z} C^\bullet_z
\end{equation}
In this notation, the complex $C$ has the form
\begin{equation} \label{eq:embedded-complex-1}
    C: \ \ C_X^0 \oplus  C_Q^0 \oplus  C_Z^0
     \overset{\delta^0}{\longrightarrow} C_X^1 \oplus  C_Q^1 \oplus  C_Z^1
      \overset{\delta^1}{\longrightarrow}  C_X^2 \oplus  C_Q^2 \oplus  C_Z^2.
\end{equation}
The coboundary maps $\delta^0$ and $\delta^1$ of the global code are constructed from coboundary maps of the local complexes as well as the connecting maps.
For $m=0,1$, the local coboundary maps are
\begin{equation} \label{eq:local-boundary-maps}
  \delta_x^m\colon C_x^m \to C_x^{m+1},\quad
  \delta_q^m\colon C_q^m \to C_q^{m+1},\quad
  \delta_z^m\colon C_z^m \to C_z^{m+1},
\end{equation}
and the connecting maps are
\begin{equation}
\label     {eq:connecting-maps}
  g_{xq}^m\colon C_x^m \to C_q^{m+1},\quad
  g_{qz}^m\colon C_q^m \to C_z^{m+1},\quad
  g_{xz}^m\colon C_x^m \to C_z^{m+1}.
\end{equation}
Equivalently, the full structure of the cochain complex $C$ can be visualized as the following diagram:
\begin{equation}
\label{eq:height-2-diagram}
\begin{tikzpicture}[baseline]
\matrix(a)[matrix of math nodes, nodes in empty cells, nodes={minimum size=25pt},
row sep=2em, column sep=2em,
text height=1.25ex, text depth=0.25ex]
{&& \blue{C_X^0}  & \blue{C_X^1} & \blue{C_X^2}\\
& {C_Q^0}  & {C_Q^1}  & {C_Q^2} &\\
\red{C_Z^0} & \red{C_Z^1} & \red{C_Z^2} &&\\};
\path[->,blue,font=\scriptsize]
(a-1-3) edge node[above]{$\delta_X^0$} (a-1-4)
(a-1-4) edge node[above]{$\delta_X^1$} (a-1-5);
\path[->,font=\scriptsize]
(a-2-2) edge node[above]{$\delta_Q^0$} (a-2-3)
(a-2-3) edge node[above]{$\delta_Q^1$} (a-2-4);
\path[->,red,font=\scriptsize]
(a-3-1) edge node[above]{$\delta_Z^0$} (a-3-2)
(a-3-2) edge node[above]{$\delta_Z^1$} (a-3-3);
\path[->,font=\scriptsize]
(a-1-3) edge node[right]{$g_{XQ}^0$} (a-2-3)
(a-1-4) edge node[right]{$g_{XQ}^1$} (a-2-4)
(a-2-2) edge node[right]{$g_{QZ}^0$} (a-3-2)
(a-2-3) edge node[right]{$g_{QZ}^1$} (a-3-3);
\path[->,font=\scriptsize]
(a-1-3) edge[bend right=70, dashed] node[left]{$g_{XZ}^0$} (a-3-2)
(a-1-4) edge[bend left=70, dashed] node[right]{$g_{XZ}^1$} (a-3-3);
\end{tikzpicture}
\end{equation}
Where $C$ is the total complex of the structure shown in the diagram. We also used the uppercase notation $X, Q, Z$ to denote the direct sums of the connecting maps, namely:
\begin{equation} \label{eq:gluing-local-global}
  g^{\bullet}_{XQ} := \bigoplus_{x \in V_X} \bigoplus_{q \in V_Q} g^{\bullet}_{xq},\quad
  g^{\bullet}_{QZ} := \bigoplus_{q \in V_Q} \bigoplus_{z \in V_Z} g^{\bullet}_{qz},\quad
  g^{\bullet}_{XZ} := \bigoplus_{x \in V_X} \bigoplus_{z \in V_Z} g^{\bullet}_{xz}.
\end{equation}
Note that we assume there are no direct maps from $C_Z$ to $C_Q$ or $C_X$, nor from $C_Q$ to $C_X$.

Because $\delta^1 \circ \delta^0 = 0$, the maps $g_{xq}, g_{qz}, g_{xz}$ must satisfy compatibility conditions, for example $\delta_q^1 \circ g_{xq}^0 = g_{xq}^1 \circ \delta_x^0$ for any $x \in V_X$, $q \in V_Q$.  In our construction, the structure is even simpler because there is no direct map from $C_x$ to $C_z$, i.e. $g^{\bullet}_{xz} = 0$.

The distinction between horizontal and vertical maps in the structure shown in Eqn.~\ref{eq:height-2-diagram} will play an important role throughout the paper. Therefore, we split the coboundary map of the full cochain complex $C$ as $\delta = \delta_{\mrm{loc}} + g$, where the local part $\delta_{\mrm{loc}}$ consists of maps $\delta_X$, $\delta_Q$, and $\delta_Z$ acting within individual local complexes, and $g$ are the connecting maps. In other words, the full coboundary map
\begin{equation}
  \delta^m =
  \begin{pmatrix}
    \delta_X^m & 0 & 0 \\
    g_{XQ}^m & \delta_Q^m & 0 \\
    g_{XZ}^m & g_{QZ}^m & \delta_Z^m
  \end{pmatrix},
  \qquad \text{where} \ m=0,1,
\end{equation}
can decomposed as $\delta^m = \delta_{\mrm{loc}}^m + g^m$, where
\begin{equation}
  \delta^m_\mrm{loc} =
  \begin{pmatrix}
    \delta_X^m & 0 & 0 \\
    0 & \delta_Q^m & 0 \\
    0 & 0 & \delta_Z^m
  \end{pmatrix},
  \   \hspace{0.5cm} \  g^m =
  \begin{pmatrix}
    0 & 0 & 0 \\
    g_{XQ}^m & 0 & 0 \\
    g_{XZ}^m & g_{QZ}^m & 0
  \end{pmatrix}.
\end{equation}
From a direct computation, it follows that $\delta_\mrm{loc}^2 = 0$, $g^2 = 0$ and $\delta_\mrm{loc} g + g \delta_\mrm{loc} = 0$.

\subsubsection{Spectral sequence and homological perturbation lemma}
\label{subsec:spectral-sequence}

Now, we depart from the discussion in \cite{yuan2026unified} and introduce the spectral sequence perspective.  To our knowledge, this perspective is new and it naturally captures the code embedding framework. We first describe the general setup before relating it to the cochain complex $C$ in Eqn.~\ref{eq:height-2-diagram}.
We will use the notation $(C, \delta)$ to specify the cochain complex, where $C$ denotes the vector spaces and $\delta$ denotes the differential operator.

Let us review some of the facts we will need. Let $E_r^{p,q}$ denote the vector space on the $r$-th page of the spectral sequence ($r \ge 0$), where $p$ is the horizontal grading and $q$ is the vertical grading. Define the total complex
\begin{equation}
  \Tot(E_r)^n := \bigoplus_{p+q=n} E_r^{p,q}
\end{equation}
which is the complex obtained by summing along the diagonals of the $r$-th page.
On the $r$-th page, we also define the differential operator
\begin{equation}
  D_r: \Tot(E_r)^n \to \Tot(E_r)^{n+1},
\end{equation}
which satisfies $D_r \circ D_r = 0$. Furthermore, $D_r$ decomposes into
\begin{equation}
  D_r = d_{r,0} + d_{r,1} +...,
\end{equation}
where $d_{r,i}: E_r^{p,q} \to E_r^{p-r-i+1, q+r+i}$ ($i \ge 0$).
Here is an example showing the $0$-th page.
\begin{equation}
\label{eq:E0-page-with-differentials}
\begin{tikzpicture}[baseline]
\matrix(a)[matrix of math nodes, nodes in empty cells,
  nodes={minimum size=22pt},
  row sep=2.2em, column sep=2.8em,
  text height=1.5ex, text depth=0.25ex]
{
E_0^{p-1,q-1} & E_0^{p,q-1} & E_0^{p+1,q-1} & E_0^{p+2,q-1} \\
E_0^{p-1,q}   & E_0^{p,q}   & E_0^{p+1,q}   & E_0^{p+2,q}   \\
E_0^{p-1,q+1} & E_0^{p,q+1} & E_0^{p+1,q+1} & E_0^{p+2,q+1} \\
E_0^{p-1,q+2} & E_0^{p,q+2} & E_0^{p+1,q+2} & E_0^{p+2,q+2} \\
};

\path[->,font=\scriptsize]
(a-2-2) edge node[above]{$d_{0,0}$} (a-2-3)
(a-2-2) edge node[right]{$d_{0,1}$} (a-3-2)
(a-2-2) edge node[below right]{$d_{0,2}$} (a-4-1);

\draw[->]
([xshift=-1.8em,yshift=1.6em]a-1-1.north west) --
node[above] {$p$}
([xshift=1.8em,yshift=1.6em]a-1-4.north east);

\draw[->]
([xshift=-1.8em,yshift=1.6em]a-1-1.north west) --
node[left] {$q$}
([xshift=-1.8em,yshift=-1.6em]a-4-1.south west);
\end{tikzpicture}
\end{equation}
We now describe how to define the vector spaces on the next page of the spectral sequence. Because $D_r \circ D_r = 0$, we have $\sum_{i+j=k} d_{r,i} \circ d_{r,j} = 0$ for any $k \ge 0$. In particular, when $k=0$, we have
\begin{equation}
  d_{r,0} \circ d_{r,0} = 0.
\end{equation}
The differential $d_{r,0}$ induces the next page of the spectral sequence, namely
\begin{equation}
    E_{r+1}^{p,q} = H(E_r^{p,q}, d_{r,0}).
\end{equation}
We now describe how the total differential $D_{r+1}$ on the next page is obtained. This relies on the homological perturbation lemma \cite{nlab:homological_perturbation_theory}, which we will briefly review here.
More discussions can be found in \Cref{app:homological-pert-lemma}.
Suppose that we have:
\begin{itemize}
  \item[(1)] A cochain complex $(C, d)$,
  \item[(2)] A smaller cochain complex $(H, d_H)$,
  \item[(3)] A map $h: C \to C[-1]$ and chain maps $i: H \to C$, $p: C \to H$ satisfying
        \begin{equation}
        p i = \id_H
        \end{equation}
        and
        \begin{equation}
        i p = \id_C + h d + d h.
        \end{equation}
\end{itemize}
Then $H$ is a deformation retract of $C$.
We note an abuse of notation in that $p$ is simultaneously used both for the grading index, e.g., in $E^{p,q}_r$, as well as the map $p$.  Now we perturb the differential on $C$ by adding $d_\mrm{pert}$, so the new differential is
\begin{equation}
  d' = d+d_\mrm{pert},
\end{equation}
with $(d')^2=0$. Under a suitable smallness condition, we can transfer this perturbation to a new $d_H'$ on $H$.
We then define $(C,d')$ and $(H,d_H')$ and we can compute the new maps $  d_H',\ i',\ p',\ h'$,
such that $(H,d_H')$ is again a deformation retract of $(C,d')$.

Under the smallness condition, if there exists a finite $N$ for which $(hd_\mrm{pert})^N = 0$, we can compute the transferred differential $d_H'$ as
\begin{equation}
  d_H'
  = d_H + p\,d_\mrm{pert}\,(1-hd_\mrm{pert})^{-1} i
  = d_H + p\,d_\mrm{pert}\,(1 + hd_\mrm{pert} + (hd_\mrm{pert})^2 +...) i,
\end{equation}
where the geometric series stops after finitely many terms.

Returning to the spectral sequence, we will apply the homological perturbation lemma with:
\begin{equation}
  \begin{gathered}
    (C, d) = (\Tot(E_r), d_{r,0}), \\
    (H, d_H) = (\Tot(E_{r+1}),0).
  \end{gathered}
\end{equation}
Notice that because we work over a field (not just a ring), the complex $(E_r^{p,q}, d_{r,0})$ contracts onto its cohomology, thus giving $(E_{r+1}^{p,q}, 0) = (H(E_r^{p,q}, d_{r,0}),0)$. Thus,  $(\Tot(E_{r+1}),0)$ is a deformation retract of $(\Tot(E_r), d_{r,0})$. We then set
\begin{equation}
     d_{\mathrm{pert}} = D_r - d_{r,0}
\end{equation}
which produces $(C, d + d_\mrm{pert}) = (\Tot(E_r), D_r)$, and define $(\Tot(E_{r+1}),D_{r+1}) = (H,d_H')$, which allows us to find $D_{r+1}$ using the homological perturbation lemma, assuming that the perturbation satisfies the smallness condition.
This gives us a $D_{r+1}$ such that $(\Tot(E_{r+1}), D_{r+1})$ is a deformation retract of $(\Tot(E_r), D_r)$:
\begin{equation}
  (\Tot(E_r), D_r) \simeq (\Tot(E_{r+1}), D_{r+1}).
\end{equation}
The reason the smallness condition is satisfied in our case is that our complexes are bounded, so the geometric series terminates after finitely many terms.

\subsubsection{The spectral sequence description for code embedding framework}

We now use spectral sequences to reformulate the code embedding framework.
We focus primarily on the ``height-$2$ cone'' described in \cite{yuan2026unified},
  although the same idea applies to the ``height-$n$ cone'' as well.

Let $C$ be the cochain complex in \eqref{eq:height-2-diagram},
  and let $C^g$ be the embedded complex defined in \cite{yuan2026unified},
  whose definition we recall below.
The original framework relates $C^g$ and $C$ at the level of cohomology groups.
Here we give a chain-level refinement of this relation.
In particular,
  we obtain a natural cochain map from $C^g$ to $C$.

To do so, we first relate $C$ to the spectral sequence from the previous subsubsection.
We have
\begin{equation}
  (\Tot(E_0), D_0) = (C,\delta).
\end{equation}
Here the nonzero entries on the $0$-th page are
\begin{equation}
    E^{p,0}_0 = C^p_X, \quad E^{p,1}_0 = C^{p+1}_Q, \quad  E^{p,2}_0 = C^{p+2}_Z,
\end{equation}
and all other vector spaces $E^{p,q}_0$ are set to zero.
The horizontal map $d_{0,0}$ is identified with $\delta_\mrm{loc}$.
The vertical map $d_{0,1}$ is identified with the maps
$g_{XQ}$ and $g_{QZ}$,
while $d_{0,2}$ is identified with $g_{XZ}$.
All other maps are zero.

Thus, the complex $C$ fits into the $0$-th page of the spectral sequence
as follows:
\begin{equation}
\label{eq:page-0-ours}
\begin{tikzpicture}[baseline]
\matrix(a)[matrix of math nodes, nodes in empty cells,
  nodes={minimum size=22pt},
  row sep=2.2em, column sep=2.8em,
  text height=1.5ex, text depth=0.25ex]
{
\color{gray}E_0^{-2,-1} & \color{gray} E_0^{-1,-1} & \color{gray} E_0^{0,-1} & \color{gray} E_0^{1,-1} & \color{gray} E_0^{2,-1} \\
\color{gray} E_0^{-2,0}   & \color{gray} E_0^{-1,0}   & \color{blue} E_0^{0,0} = C^0_X  & \color{blue} E_0^{1,0}  = C^1_X  & \color{blue} E_0^{2,0}  = C^2_X \\
\color{gray} E_0^{-2,1} & E_0^{-1,1}  = C^0_Q & E_0^{0,1}  = C^1_Q & E_0^{1,1}  = C^2_Q & \color{gray} E_0^{2,q-1} \\
\color{red} E_0^{-2,2}  = C^0_Z & \color{red} E_0^{-1,2}  = C^1_Z & \color{red} E_0^{0,2}  = C^2_Z & \color{gray} E_0^{1,2} & \color{gray} E_0^{2,2} \\
};

\path[->,blue,font=\scriptsize]
(a-2-3) edge node[above]{$d_{0,0} = \delta_\mrm{loc}$} (a-2-4);

\path[->,font=\scriptsize]
(a-2-3) edge node[right]{$d_{0,1}=g_{XQ}$} (a-3-3);

\path[->,font=\scriptsize]
(a-2-3) edge[dashed] node[left,pos=0.2]{$d_{0,2}=g_{XZ}$} (a-4-2);

\draw[->]
([xshift=-2.6em,yshift=1.6em]a-1-1.north west) --
node[above] {$p$}
([xshift=1.8em,yshift=1.6em]a-1-5.north east);

\draw[->]
([xshift=-2.6em,yshift=1.6em]a-1-1.north west) --
node[left] {$q$}
([xshift=-1.5em,yshift=-1.6em]a-4-1.south west);
\end{tikzpicture}
\end{equation}
By the homological perturbation lemma,
  the cochain complex is chain-homotopy equivalent to the complex on the 1-st page.
\begin{equation} \label{eq:chain-homotopy-eqv}
  (C, \delta) \simeq  (H(C, \delta_{\mrm{loc}}), H(g)).
\end{equation}
Here $(H(C,\delta_{\mrm{loc}}),H(g))$
  denotes the cochain complex on the 1-st page,
  namely $(\Tot(E_1), D_1)$.
We later refer to this as the \emph{induced complex}.
When there is no ambiguity,
  we abbreviate this equivalence as $C \simeq H(C, \delta_\mrm{loc})$.

We now recall the definition of $C^g$.
It is the subcomplex of $\Tot(E_1) = H(C, \delta_\mrm{loc})$ given by the column
\begin{equation*}
  C^g: E_1^{0,0} \to E_1^{0,1} \to E_1^{0,2}.
\end{equation*}

From this perspective,
  we can upgrade the original code embedding formalism
  from a statement about cohomology groups to a chain-level statement.
Indeed,
  the embedded code $C^g$ appears naturally as a subcomplex of $H(C,\delta_{\mrm{loc}})$,
  and the homological perturbation lemma provides a chain-homotopy equivalence
  from $H(C,\delta_{\mrm{loc}})$ to $C$:
\begin{equation*}
  C^g \hookrightarrow H(C, \delta_\mrm{loc}) \xrightarrow{\simeq} C.
\end{equation*}

As an application of this perspective,
  we recover the isomorphism from \cite[Theorem I.1]{yuan2026unified}.
\begin{lemma}
  Assume that $H^1(C^X) = H^1(C^Z) = 0$,
  then
  \begin{equation*}
    H^1(C^g) \cong H^1(C).
  \end{equation*}
\end{lemma}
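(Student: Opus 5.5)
The plan is to compute $H^1(C)$ on the first page of the spectral sequence, where the hypotheses kill everything in total degree $1$ except the single entry belonging to $C^g$. By \eqref{eq:chain-homotopy-eqv}, the homological perturbation lemma gives a chain-homotopy equivalence $(C,\delta)\simeq(\Tot(E_1),D_1)$. Hence $H^1(C)\cong H^1(\Tot(E_1),D_1)$, and it suffices to show that the latter equals $H^1(C^g)$. I read the hypothesis $H^1(C^X)=H^1(C^Z)=0$ as vanishing of the first cohomology of the local complexes $(C_X^\bullet,\delta_X)$ and $(C_Z^\bullet,\delta_Z)$.

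First I will write out the entries of $E_1$. Since $d_{0,0}=\delta_{\mrm{loc}}$, the placement in \eqref{eq:page-0-ours} gives
\[ E_1^{p,0}=H^p(C_X),\qquad E_1^{p,1}=H^{p+1}(C_Q),\qquad E_1^{p,2}=H^{p+2}(C_Z), \]
and all other entries vanish. In total degree $1$ the only possibly nonzero entries are $E_1^{1,0}=H^1(C_X)$, $E_1^{0,1}=H^1(C_Q)$ and $E_1^{-1,2}=H^1(C_Z)$. By hypothesis the first and third vanish, so $\Tot(E_1)^1=E_1^{0,1}$, which is exactly the middle term of the column $C^g$.

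Next I will track which components of $D_1=d_{1,0}+d_{1,1}+\cdots$ enter or leave $E_1^{0,1}$, using $d_{1,i}\colon E_1^{p,q}\to E_1^{p-i,q+1+i}$.

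\emph{Outgoing maps.} From $E_1^{0,1}$ the map $d_{1,0}$ lands in $E_1^{0,2}$, while $d_{1,i}$ for $i\ge1$ lands in $E_1^{-i,2+i}$, which is zero since $q\ge 3$. So the cocycles are $\ker(d_{1,0}\colon E_1^{0,1}\to E_1^{0,2})$.

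\emph{Incoming maps.} A map $d_{1,i}$ from $E_1^{p,q}$ into $E_1^{0,1}$ requires $p-i=0$ and $q+1+i=1$. This forces $i=0$ and $(p,q)=(0,0)$. So the coboundaries are $d_{1,0}(E_1^{0,0})$. The other degree-$0$ entries $E_1^{-1,1}$ and $E_1^{-2,2}$ map only into degree-$1$ entries other than $E_1^{0,1}$, and those are zero. (Even without using the hypothesis, their images avoid $E_1^{0,1}$.)

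It follows that
\[ H^1(\Tot(E_1),D_1)=\ker\bigl(d_{1,0}|_{E_1^{0,1}}\bigr)\big/ d_{1,0}(E_1^{0,0}). \]

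Finally I will identify this with $H^1(C^g)$. Here $C^g$ is the column $E_1^{0,0}\to E_1^{0,1}\to E_1^{0,2}$, and its differential is the induced map $H(g)$, which is precisely the component $d_{1,0}$ of $D_1$ (the map induced on $\delta_{\mrm{loc}}$-cohomology by $d_{0,1}=g$). Therefore $H^1(C^g)\cong H^1(\Tot(E_1))\cong H^1(C)$.

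The main point needing care is this last identification: $C^g$ must genuinely be a subcomplex of $\Tot(E_1)$ whose differential is the restriction of $D_1$, with no higher $d_{1,i}$ corrections. This holds because the column is closed under $D_1$: any $d_{1,i}$ with $i\ge1$ leaving the column either exits the nonzero rows or lands in $E_1^{-1,2}=0$.
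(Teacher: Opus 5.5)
Your proposal is correct and follows essentially the same route as the paper. Both arguments pass to $(\Tot(E_1),D_1)$ via the perturbation-lemma equivalence, use the hypotheses to kill $E_1^{1,0}$ and $E_1^{-1,2}$, and use the bidegrees of the $d_{1,i}$ to see that only the column $E_1^{0,0}\to E_1^{0,1}\to E_1^{0,2}$ with differential $d_{1,0}=H(g)$ contributes to degree-one cohomology; your explicit tracking of incoming and outgoing components spells out what the paper compresses into a single sentence.
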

\begin{proof}
  First,
  \begin{equation*}
    H^1(H(C, \delta_\mrm{loc})) \cong H^1(C),
  \end{equation*}
  since $H(C,\delta_{\mathrm{loc}})$ is chain-homotopy equivalent to $C$.

  It remains to show $H^1(C^g) \cong H^1(H(C, \delta_\mrm{loc}))$.
  The assumptions $H^1(C^X)=H^1(C^Z)=0$ imply that
    $E_1^{-1,2} = 0$ and $E_1^{1,0} = 0$.
  Consequently, the induced complex reduces to
  \begin{equation*}
    H(C, \delta_\mrm{loc}):
    E_1^{0,0} \oplus (E_1^{-1,1} \oplus E_1^{-2,2})
    \to
    E_1^{0,1}
    \to
    E_1^{0,2} \oplus (E_1^{2,0} \oplus E_1^{1,1}).
  \end{equation*}
  Since the differential $D_1 = H(g)$ only has components of the form
    $d_{1,i}: E_1^{p,q} \to E_1^{p-i,q+1+i}$ for $i \ge 0$,
    the source and target degrees
      force all components in $H(g)$ to vanish except for
    $E_1^{0,0} \to E_1^{0,1} \to E_1^{0,2}$.
  This is precisely the subcomplex $C^g$.
  Therefore, the inclusion $C^g \hookrightarrow H(C,\delta_{\mrm{loc}})$
    induces an isomorphism
  \begin{equation*}
    H^1(C^g)
    \cong
    H^1(H(C, \delta_\mrm{loc})).
  \end{equation*}
\end{proof}

\subsection{Topological defect networks and sheaf codes}\label{prelims:sheaf}

Typically, a code is specified combinatorially,
  by giving a set of qubits, a set of checks, and the incidence relations between them.
There is, however, a more geometric way to encode the same information,
  through the language of topological defect networks~\cite{Aasen_2020,williamson2023layer}
  or sheaf codes~\cite{first2024good2querylocallytestable,panteleev2024maximallyextendablesheafcodes,lin2024transversalnoncliffordgatesquantum,li2025poincar},
  which are closely related frameworks for quantum CSS codes.
In what follows,
  we use the sheaf-code formalism as developed in
  \cite{lin2024transversalnoncliffordgatesquantum,li2025poincar}.

We will use this formalism in \Cref{sec:explicit-embedding}
  to construct codes with an explicit embedding in $\mathbb{R}^3$.
The advantage is that the operations involved in these explicit constructions have natural geometric descriptions.

Readers who prefer to avoid the formalism may skip ahead to
  \Cref{sec:explicit-embedding}.
Our use of sheaf codes is fairly minimal there:
  the relevant data are summarized explicitly
  and rephrased in terms of assigning colors to elements of a cell complex
  (see the discussion at the beginning of each subsection in \Cref{sec:explicit-embedding}).

Let $X$ be a $t$-dimensional cell complex,
  and let $X(i)$ denote the set of $i$-cells in $X$, for $i=0,1,...,t$.
For cells $\sigma,\tau \in X$,
  we write $\sigma \le \tau$
  to mean that $\sigma$ is a face of $\tau$,
  and we write $\sigma \lessdot \tau$ if, in addition,
  their dimensions differ by $1$.
Let
\begin{equation*}
  X_{\ge \sigma}(i) = \{\tau \in X(i): \tau \ge \sigma\}
\end{equation*}
  denote the set of $i$-cells that contain $\sigma$ as a face.

A sheaf $\cF$ assigns a finite-dimensional vector space $\cF_\sigma$ over $\ff_2$
  to each cell $\sigma\in X$,
  together with restriction maps
\begin{equation*}
  \res_{\sigma,\tau}: \cF_\sigma\to \cF_\tau \quad \text{ for each } \sigma\le\tau,
\end{equation*}
satisfying the condition
\begin{equation*}
  \res_{\pi,\tau} \circ \res_{\sigma,\pi} = \res_{\sigma,\tau} \quad \text{ for all } \sigma \le \pi \le \tau.
\end{equation*}

The cell complex $X$ together with its sheaf data $\cF$ defines a sequence of maps
\begin{equation} \label{eq:sheaf-cochain-complex}
  C^\bullet(X, \cF): \quad
  C^0(X, \cF)
  \overset{\delta^0}{\longrightarrow}
  C^1(X, \cF)
  \overset{\delta^1}{\longrightarrow}
  \cdots
  \overset{\delta^{t-1}}{\longrightarrow}
  C^t(X, \cF),
\end{equation}
where
\begin{equation*}
  C^i(X,\cF)=\bigoplus_{\sigma\in X(i)}\cF_\sigma,
\end{equation*}
and
\begin{equation*}
  (\delta \alpha)(\tau)
  =
  \sum_{\sigma \in X(i): \ \sigma\lessdot \tau} \res_{\sigma,\tau}(\alpha(\sigma)),
  \quad \text{ for every }
  \tau\in X(i+1).
\end{equation*}

We want these maps to form a cochain complex,
  i.e., to satisfy $\delta \circ \delta = 0$.
This is guaranteed whenever the cell complex
  with the constant sheaf forms a cochain complex:
  \begin{equation}
    C^\bullet(X,\ff_2): \quad
    C^0(X,\ff_2)
    \longrightarrow
    C^1(X,\ff_2)
    \longrightarrow
    \cdots
    \longrightarrow
    C^t(X,\ff_2).
  \end{equation}
Here, the constant sheaf assigns $\ff_2$ to every cell,
  with all restriction maps equal to the identity.

We specialize to the case where $\cF_\tau = \ff_2$ for every $t$-cell $\tau$.
Thus, for any cell $\sigma$,
  the restrictions of an element of $\cF_\sigma$
  to the top-dimensional cells containing $\sigma$ define a map
\begin{equation*}
  \iota_\sigma:
  \cF_\sigma
  \to
  \prod_{\tau \in X_{\ge \sigma}(t)} \cF_\tau
  \cong \ff_2^{X_{\ge \sigma}(t)}
\end{equation*}
given by
\begin{equation*}
  \iota_\sigma(\alpha)
  =
  \prod_{\tau \in X_{\ge \sigma}(t)} \res_{\sigma,\tau}(\alpha).
\end{equation*}

The sheaves appearing in our construction satisfy the strong sheaf axiom.
Under this condition,
  the map $\iota_\sigma$ is injective,
  so we may regard $\cF_\sigma$ as a subspace of
  $\ff_2^{X_{\ge \sigma}(t)}$.
Moreover, the sheaf is determined by its values on the $(t-1)$-cells \cite[Lemma 3.5]{li2025poincar}.
More explicitly, for any cell $\sigma$,
  one has the characterization \cite[Theorem 3.3]{li2025poincar}
\begin{equation}\label{eq:strong-sheaf-characterization}
  \iota_\sigma \cF_{\sigma}
  =
  \{\alpha \in \prod_{\tau \in X_{\ge \sigma}(t)} \cF_{\tau}: \forall \tau' \in X_{\ge \sigma}(t-1), \alpha|_{X_{\ge \tau'}(t)} \in \iota_{\tau'} \cF_{\tau'}\}.
\end{equation}
Consequently, in \Cref{sec:explicit-embedding},
  to specify the sheaf, it suffices to assign
  a subspace $\cF_{\sigma} \subseteq \ff_2^{X_{\ge \sigma}(t)}$
  for each $(t-1)$-cell $\sigma$.
In the following, we drop $\iota$ from the notation
  and simply regard $\cF_{\sigma}$ as a subspace of $\ff_2^{X_{\ge \sigma}(t)}$.

The sheaves we consider also satisfy local acyclicity.
We say that the cell complex $X$ is locally acyclic if,
for every $i$-cell $\sigma \in X$,
\begin{equation*}
  H_j(X_{\le \sigma},\mathbb{F}_2)=0
  \quad
  \text{ for } 0<j\le i.
\end{equation*}
We say that the sheaf $\cF$ is locally acyclic if,
for every $i$-cell $\sigma \in X$,
\begin{equation*}
  H^j(X_{\ge \sigma},\cF)=0
  \quad
  \text{ for } i\le j<t.
\end{equation*}

These local acyclicity conditions imply a Poincar\'e duality statement for sheaf codes.
This duality lets us study the dual chain complex,
  obtained by transposing the coboundary maps,
  in terms of the cochain complex associated with a dual sheaf.
To define this dual sheaf,
  for each $(t-1)$-cell $\sigma$
  let $\cF^{\perp}_{\sigma} \subseteq \ff_2^{X_{\ge \sigma}(t)}$
  be the orthogonal subspace of $\cF_{\sigma}$.
The sheaf $\cF^{\perp}$ is then generated from these values on the $(t-1)$-cells
  using the characterization in \Cref{eq:strong-sheaf-characterization}.

We are ready to state the Poincar\'e duality.
\begin{theorem}[{\cite[Theorem 3.17]{li2025poincar}}]
  \label{thm:poincare-duality}
  Let $X$ be a $t$-dimensional sparse cell complex,
    and let $\cF$ be a locally acyclic sheaf on $X$.
  Then there is a duality between $\cF$ and $\cF^{\perp}$
    in terms of logical qubits, code distances,
    (co)boundary expansion, and decoders.
  In particular, for every $0 \le i \le t$, there is an isomorphism:
  \begin{equation}
    H^i(X,\cF^\perp) \cong H_{t-i}(X,\cF).
  \end{equation}
\end{theorem}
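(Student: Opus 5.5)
The plan is to realize both sides as the two collapses of a single double complex, using the spectral-sequence and homological-perturbation machinery of \Cref{subsec:spectral-sequence}. For each cell $\sigma \in X$, consider the local cochain complex $C^\bullet(X_{\ge\sigma},\cF)$, whose degree-$j$ term is $\bigoplus_{\tau\in X_{\ge\sigma}(j)}\cF_\tau$. These local complexes assemble into a double complex
\begin{equation*}
  K^{p,q} \;=\; \bigoplus_{\sigma\in X(p)} C^{q}(X_{\ge\sigma},\cF)^{*},
\end{equation*}
with the following two differentials. The horizontal differential is the transpose of the local coboundary, and lowers $q$. The vertical differential is induced by the inclusions $X_{\ge\sigma'}\subseteq X_{\ge\sigma}$ for $\sigma\lessdot\sigma'$, with incidence signs over $\ff_2$, and raises $p$. (Equivalently, the entries are indexed by pairs $\sigma\le\tau$ with coefficient $\cF_\tau^*$.) I would first check that the two differentials anticommute. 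This should follow from the restriction-map identity $\res_{\pi,\tau}\circ\res_{\sigma,\pi}=\res_{\sigma,\tau}$ and from $\delta^2=0$ for the constant sheaf.

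\textbf{First collapse, using local acyclicity of $\cF$.} For fixed $\sigma\in X(p)$, the local complex $C^\bullet(X_{\ge\sigma},\cF)$ has cohomology only in degrees $p$ and $t$, because $H^j(X_{\ge\sigma},\cF)=0$ for $p\le j<t$.
\begin{itemize}
  \item In degree $p$, $H^p=\ker\delta$ on $\cF_\sigma$ is $0$, since $\iota_\sigma$ is injective (strong sheaf axiom).
  \item In degree $t$, the key local identification is
  \begin{equation*}
    H^t(X_{\ge\sigma},\cF)^* \;\cong\; \cF^\perp_\sigma .
  \end{equation*}
  The dual $H^t(X_{\ge\sigma},\cF)^*$ consists of the $\alpha\in\ff_2^{X_{\ge\sigma}(t)}$ orthogonal to the image of $\cF_{\tau'}$ for every $(t-1)$-cell $\tau'\ge\sigma$. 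This is exactly the condition $\alpha|_{X_{\ge\tau'}(t)}\in\cF^\perp_{\tau'}$ for all such $\tau'$, and by \Cref{eq:strong-sheaf-characterization} applied to $\cF^\perp$, that set is $\cF^\perp_\sigma$.
\end{itemize}
Taking horizontal cohomology first therefore leaves one row, $E_1^{p,t}\cong\bigoplus_{\sigma\in X(p)}\cF^\perp_\sigma=C^p(X,\cF^\perp)$. I would then check that the induced vertical differential is the sheaf coboundary of $\cF^\perp$. Since only one row survives, all higher differentials vanish, and homological perturbation gives $\Tot(K)\simeq C^\bullet(X,\cF^\perp)$ with degree shifted by $t$.

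\textbf{Second collapse, using local acyclicity of $X$.} Regroup $\Tot(K)$ by the top index $\tau$. For fixed $\tau\in X(j)$, the $\sigma$-direction is the complex of faces $\sigma\le\tau$ with constant coefficients $\cF_\tau^*$. This is the cellular (co)chain complex of $X_{\le\tau}$, and its cohomology is concentrated at $\sigma=\tau$ because $H_i(X_{\le\tau},\ff_2)=0$ for $0<i\le j$. The surviving term at $\sigma=\tau$ is $\cF_\tau^*\cong\cF_\tau$. This collapses $\Tot(K)$ onto $\bigoplus_\tau\cF_\tau$ with the transposed coboundary, which is $C_\bullet(X,\cF)$. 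Comparing total degrees in the two collapses yields $H^i(X,\cF^\perp)\cong H_{t-i}(X,\cF)$. The remaining dualities the theorem lists (distance, expansion, decoders) would follow by tracking supports through these explicit, cellwise-local chain homotopies.

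\textbf{Expected obstacle.} The hardest step should be getting the gradings and the reduced-versus-unreduced conventions exactly right. The local complex $X_{\le\tau}$ has $H_0\neq0$, so the surviving piece sits at one end of the complex. I must make sure it appears at $\sigma=\tau$ and not at the vertices, possibly by using the dual, star-based ordering. A second point to check is that the transferred differentials $d_{1,i}$ for $i\ge1$ really vanish or reduce to the expected maps. I would handle both by writing $K$ explicitly on pairs $(\sigma\le\tau)$ and verifying the first page in low dimension ($t=2$) before generalizing.
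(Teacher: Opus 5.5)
The paper gives no proof of this theorem. It imports it from \cite{li2025poincar}, and its footnote only hints at a double-complex and homological-perturbation route like yours, so I judge the proposal on its own terms.

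\textbf{What is correct.} Several parts of your setup work as stated:
\begin{itemize}
  \item The double complex on pairs $\sigma\le\tau$ with coefficients $\cF_\tau^*$ is well defined, and the commutation check and $d_v^2=0$ (via $\delta^2=0$ for the constant sheaf) are right.
  \item In the first collapse, local acyclicity of $\cF$ already kills $H^j(X_{\ge\sigma},\cF)$ for every $p\le j<t$, including $j=p$. So the separate injectivity argument in degree $p$ is redundant, though harmless.
  \item The identification $H^t(X_{\ge\sigma},\cF)^*\cong\cF^\perp_\sigma$ is exactly \eqref{eq:strong-sheaf-characterization} applied to $\cF^\perp$.
  \item The induced vertical map on the surviving row $q=t$ is coordinate restriction, i.e.\ the coboundary of $\cF^\perp$. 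The homotopy raises $q$ past $t$, so the perturbation series stops at once. With total degree $n=p-q$, this places $C^i(X,\cF^\perp)$ in total degree $i-t$.
\end{itemize}

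\textbf{The error in the second collapse.} As written it is wrong, and the fix you propose goes in the wrong direction.
\begin{itemize}
  \item For fixed $\tau\in X(j)$, the vertical differential raises $\dim\sigma$. So the $\tau$-column is the cellular cochain complex $C^\bullet(X_{\le\tau};\ff_2)\otimes\cF_\tau^*$.
  \item Local acyclicity of $X$ (together with connectedness of the closed cell) makes its cohomology $H^0(X_{\le\tau})\otimes\cF_\tau^*\cong\cF_\tau^*$. This class sits at $p=0$ and is represented by $\sum_{v\in X_{\le\tau}(0)}(v,\tau)\otimes\phi$.
  \item At $\sigma=\tau$ the cohomology is zero whenever $j\ge 1$. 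Your own stated reason, vanishing of $H_i$ for $0<i\le j$, means concentration in degree $0$, not degree $j$.
\end{itemize}

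The vertex location is not a defect to be repaired. It is what places $\cF_\tau^*$ in total degree $-\dim\tau$, so the $E_1$ page is $C_\bullet(X,\cF)$ with $C_j$ in total degree $-j$. Comparing with the first collapse then gives
\[
H^i(X,\cF^\perp)\cong H^{i-t}(\Tot K)\cong H_{t-i}(X,\cF).
\]
If the survivor sat at $\sigma=\tau$ as you wrote, every $\cF_\tau^*$ would land in total degree $p-q=0$, and no shift by $t$ could be read off. So do not try to move it there with a star-based ordering.

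With this correction the rest is routine:
\begin{itemize}
  \item The $E_1$ page is the single column $p=0$, and the vertical homotopy lowers $p$. Hence the perturbation series terminates after the first term.
  \item Applying $d_h$ to the all-ones vertex representative of column $\tau$ gives the all-ones representatives of the columns $\tau''\lessdot\tau$, twisted by $\res^*_{\tau'',\tau}$. This is precisely the transposed sheaf coboundary.
\end{itemize}

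\textbf{Distances, expansion and decoders.} Your one-line treatment of these claims is still only a sketch. This is where the ``sparse'' hypothesis must enter, to bound the support blow-up of the explicit local maps produced by the perturbation lemma, and you never use it.
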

We expect this duality to extend to memory lifetime.\footnote{
A possible unifying proof strategy is to apply the homological perturbation lemma
to the double complex arising from \v{C}ech cohomology.
Because the cell complex is sparse and finite-dimensional,
the resulting perturbation expansion terminates
after finitely many terms.}

We summarize the data that must be specified:
\begin{itemize}
  \item the cell complex $X$;
  \item the sheaf data on $(t-1)$-cells, namely $\cF_{\sigma} \subseteq \ff_2^{X_{\ge \sigma}(t)}$ for $\sigma \in X(t-1)$.
\end{itemize}
The data on the $(t-1)$-cells determine the sheaf values on all other cells
  through the characterization in \Cref{eq:strong-sheaf-characterization}.

We must then check the following conditions:
\begin{itemize}
  \item the constant sheaf $C^\bullet(X, \ff_2)$ satisfies $\delta \circ \delta = 0$,
  \item the cell complex $X$ is locally acyclic,
  \item the sheaf $\cF$ is locally acyclic.
\end{itemize}
Once these conditions hold,
  we can apply Poincar\'e duality to study the dual code
  by analyzing the cochain complex associated with the dual sheaf,
  which is often easier to analyze.

We also briefly comment on the relation to topological defect networks.
One way to interpret the quantum CSS code associated with
\begin{equation*}
  C^{i-1}(X, \cF) \overset{\delta^{i-1}}{\longrightarrow} C^i(X, \cF) \overset{\delta^i}{\longrightarrow} C^{i+1}(X, \cF)
\end{equation*}
is to regard each $t$-cell of $X$ as a copy of the $(i,t-i)$-toric code in $t$ dimensions.
The sheaf data on the $(i-1)$-cells then specifies the corresponding
  gapped boundary conditions between adjacent copies of these toric codes.  In the topological defect network language, boundary conditions can be described in terms of condensation data
  for the $e$- and $m$-type excitations,
  which in the sheaf language is encoded by the vector spaces $\cF_\sigma$.

For example, when $i=1$ and $t=2$,
the choice
\begin{equation*}
  \cF_\sigma = \{000,111\},
  \qquad
  \cF_\sigma^\perp = \{000,011,101,110\},
\end{equation*}
describes a gapped boundary between three copies of the two-dimensional surface code.
In this case, the condensed excitations are generated by
\begin{equation*}
  \langle e_1 e_2 e_3,\; m_1 m_2,\; m_2 m_3 \rangle.
\end{equation*}
See \cite{lin2024transversalnoncliffordgatesquantum} for further discussion.

\subsection{Memory lifetime}\label{sec:review-memorylifetime}

We now discuss how to prove that the memory lifetime of our code is exponentially long in a power of the system size.  Normally, the model studied in this setting is a quantum system coupled to a bath in a weak-coupling Markovian limit.  This means that the density matrix of a quantum system  $\rho$ evolves according to the Master equation
\begin{equation}
\frac{d}{dt} \rho = -i [H, \rho] + \sum_{i} (S_i \rho S_i^\dagger - \frac{1}{2}\{\rho, S_i^\dagger S_i\})
\end{equation}
with $H$ the system Hamiltonian and the last term is often called the Lindbladian and written as $\cL(\rho)$.  The operators $\{S_i\}$ are referred to as jump operators.  The solution to this equation of motion is written as $\rho(t) = \exp(-it[H,\cdot] + t\cL) [\rho(0)]$. We will be interested in a choice of jump operators where $\cL(\rho_\beta) = 0$ with $\rho_{\beta} = e^{-\beta H}/\Tr(e^{-\beta H})$ an equilibrium Gibbs state.  We may additionally like to impose a detailed balance condition on the jump operators, but the main theorem we use does not require detailed balance.  More details can be found in Refs.~\cite{alicki2009thermalization, alicki2010thermal, chesi2010thermodynamic}.

Since our construction has a single encoded qubit, we restrict the analysis that follows to this case.  Call $X_L$ and $Z_L$ the logical operators associated to the encoded qubit.

We now construct so-called dressed operators $\widetilde{X}_L$ and $\widetilde{Z}_L$,
which will be useful for proving self-correction.
We assume that our code is a CSS quantum code, where the stabilizers are only $X$- or $Z$-type, and the two Pauli sectors effectively decouple.
In our setting, $\widetilde{X}_L$ and $\widetilde{Z}_L$ take the form
\begin{align} \label{eq:dressed-logical}
\widetilde{X}_L &= \sum_{ \sigma_Z}\Pi_{\sigma_Z} \cO_Z(\sigma_Z) X_L \cO_Z(\sigma_Z) \Pi_{\sigma_Z}  \nonumber \\
\widetilde{Z}_L &= \sum_{ \sigma_X}\Pi_{\sigma_X} \cO_X(\sigma_X) Z_L \cO_X(\sigma_X) \Pi_{\sigma_X},
\end{align}
where the syndrome $\sigma = (\sigma_X, \sigma_Z)$, where $\sigma_X \in \mathbb{F}_2^{r_Z}$ and $\sigma_Z \in \mathbb{F}_2^{r_X}$, is split into the violations of $Z$-type and $X$-type stabilizers, respectively.
Here $\Pi_{\sigma_X}$ and $\Pi_{\sigma_Z}$ project onto the subspace of states with syndrome $\sigma_X$ and $\sigma_Z$, respectively, and $\cO_X(\sigma_X)$ and $\cO_Z(\sigma_Z)$ are Pauli correction operators chosen to remove the syndrome of a particular kind.  Since the choice of Pauli correction is not unique, the definition above is not yet well defined.  We will require a decoder to determine $\cO_X(\sigma_X)$ and $\cO_Z(\sigma_Z)$. It will be a composite map relying on maps $\dec_X(\sigma_X)$ and $\dec_Z(\sigma_Z)$, which determine the supports of $Z$-type correction and $X$-type correction to be applied based on the given syndrome. Then:

\begin{definition} [Dressed observable]
Suppose we are given logical operators $(X_L, Z_L)$ and Pauli corrections output by the decoder $\cO_Z(\sigma_Z) = Z^{\dec_Z(\sigma_Z)} $ and $\cO_X(\sigma_X) = X^{\dec_X(\sigma_X)}$.  The dressed observables with respect to this decoder are given by Eq.~\eqref{eq:dressed-logical}.
\end{definition}
We then define the notion of an observable being protected from a set of errors:
\begin{definition}
    Define $P_{\mrm{stable}}$ to be a projector onto the subspace satisfying $P_{\mrm{stable}} \widetilde{X}_L = \widetilde{X}_L P_{\mrm{stable}}$ for dressed observable $\widetilde{X}_L$. We say that $\widetilde{X}_L$ is protected from a set of operators $\cE$ on the subspace $P_{\mrm{stable}}$ if and only if
    \begin{equation}
    [E, \widetilde{X}_L] P_{\mrm{stable}} =0 \,\,\, \forall E \in \cE.
    \end{equation}
\end{definition}

To illustrate this definition, suppose $\cE$ is a set of Pauli operators (the main theorem will hold for non-Pauli jump operators as well).  Define $\ket{\sigma, \pm_L} = \cO(\sigma)   \ket{\pm_L}$ as a state with syndrome $\sigma = (\sigma_X, \sigma_Z)$, where $\cO(\sigma) = \cO_X(\sigma_X) \cO_Z(\sigma_Z)$ is the Pauli correction operator and $\ket{\pm_L}$ is a codestate that is an eigenstate of $X_L$ with eigenvalue $\pm 1$.  It follows from a simple computation that $\widetilde{X}_L \ket{\sigma, \pm_L} = \pm \ket{\sigma, \pm_L}$.  Define $P_{\mrm{stable}}$ as a subspace of states protected from an error $E$.  This means that $[E, \widetilde{X}_L] \ket{\sigma, \pm_L} = 0$ if $\ket{\sigma, \pm_L} \in P_{\mrm{stable}}$.  By another simple computation, the only states for which this is not true is those where $\cO(\sigma) \ket{\sigma, \pm_L} \neq \cO(\sigma + \sigma_E) E\ket{\sigma, \pm_L}$, where $\sigma_E$ is the syndrome associated to $E$.  One may say that these states are those which decode differently upon applying a small error.    A similar result holds for $Z_L$ and $\widetilde{Z}_L$.
\begin{claim}\label{claim:stableconfig}
Assuming $\cE$ is a set of Pauli operators, if $\exists E \in \cE$ and $\exists \ket{\sigma, \cdot} \in \mrm{supp}(P_{\mrm{stable}})$  such that $\cO(\sigma)\cO(\sigma + \sigma_E) E$ is a logical operator, then $\widetilde{X}_L$ ($\widetilde{Z}_L$) is not protected from $\cE$ on subspace $P_{\mrm{stable}}$.
\end{claim}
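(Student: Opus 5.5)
The plan is to exhibit an explicit state in $\mathrm{supp}(P_{\mrm{stable}})$ on which the commutator $[E,\widetilde{X}_L]$ does not vanish. I treat $\widetilde{X}_L$; the case of $\widetilde{Z}_L$ is identical with the roles of the Pauli types exchanged. I work in the basis $\ket{\sigma,\pm_L}=\cO(\sigma)\ket{\pm_L}$. These states span the full Hilbert space, since every state decomposes into syndrome sectors and each sector is the image of the code space under the Pauli correction $\cO(\sigma)$.

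\textbf{Step 1: eigenvalues of $\widetilde{X}_L$.} I would verify that $\widetilde{X}_L\ket{\sigma,\pm_L}=\pm\ket{\sigma,\pm_L}$. Only the projector $\Pi_{\sigma_Z}$ survives in the sum. The $X$-type factor $\cO_X(\sigma_X)$ commutes with $\cO_Z(\sigma_Z)X_L\cO_Z(\sigma_Z)$ up to a sign, and that sign cancels because the factor appears twice. Using $\cO_Z^2=1$, we get $\cO_Z X_L\cO_Z\,\cO_Z\cO_X\ket{\pm_L}=\cO_Z X_L\cO_X\ket{\pm_L}$. The operators $\cO_X$ and $X_L$ commute since both are $X$-type, so this equals $\pm\ket{\sigma,\pm_L}$.

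\textbf{Step 2: action of $E$.} Take the Pauli $E$ and the state $\ket{\sigma,\cdot}$ given by the hypothesis. Then $E\ket{\sigma,\pm_L}=E\cO(\sigma)\ket{\pm_L}$ has syndrome $\sigma+\sigma_E$. Write $L:=\cO(\sigma)\cO(\sigma+\sigma_E)E$. Up to a phase,
\[
E\cO(\sigma)=\cO(\sigma+\sigma_E)\,\bigl(\cO(\sigma+\sigma_E)E\cO(\sigma)\bigr).
\]
The operator $\cO(\sigma+\sigma_E)E\cO(\sigma)$ is conjugate to $L$ and has trivial syndrome. By hypothesis $L$ is a nontrivial logical operator, and so is this conjugate. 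I would reduce to the case that it anticommutes with $X_L$, meaning it contains a $Z_L$ component; the phrase ``logical operator'' should be read this way, since a logical $X$ component does not affect $\widetilde{X}_L$. Then $E\ket{\sigma,\pm_L}\propto\ket{\sigma+\sigma_E,\mp_L}$, possibly composed with a logical $X_L$ that only changes the phase.

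\textbf{Step 3: the commutator.} Applying Step 1 to both terms,
\[
[E,\widetilde{X}_L]\ket{\sigma,\pm_L}=\pm E\ket{\sigma,\pm_L}-(\mp)E\ket{\sigma,\pm_L}=\pm 2E\ket{\sigma,\pm_L}\neq 0.
\]
The state $\ket{\sigma,\cdot}$ lies in $\mathrm{supp}(P_{\mrm{stable}})$, and $P_{\mrm{stable}}$ commutes with $\widetilde{X}_L$. So $P_{\mrm{stable}}$ can be diagonalized together with $\widetilde{X}_L$, and I may take the state to be an $\widetilde{X}_L$ eigenstate $\ket{\sigma,\pm_L}$ with $P_{\mrm{stable}}\ket{\sigma,\pm_L}=\ket{\sigma,\pm_L}$. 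Hence $[E,\widetilde{X}_L]P_{\mrm{stable}}\ket{\sigma,\pm_L}\neq 0$, and $\widetilde{X}_L$ is not protected.

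\textbf{Main obstacle.} The delicate point is the interpretation of the hypothesis. One must ensure that $L$ acts nontrivially on the $X_L$ eigenbasis, i.e.\ anticommutes with $X_L$. Otherwise $E$ maps the state to $\ket{\sigma+\sigma_E,\pm_L}$ with the same eigenvalue, and the commutator vanishes. I would handle this by stating explicitly that for $\widetilde{X}_L$ the relevant logical operator is of $Z_L$ type, and for $\widetilde{Z}_L$ it is of $X_L$ type.

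A secondary issue is whether the given state in $\mathrm{supp}(P_{\mrm{stable}})$ can be chosen with a definite logical label. This follows because $P_{\mrm{stable}}$ and $\widetilde{X}_L$ commute and the $\ket{\sigma,\pm_L}$ form a joint eigenbasis. If $P_{\mrm{stable}}$ mixes sectors, I would instead apply the argument to the component of the state in the sector $\sigma$.
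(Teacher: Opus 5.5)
Your proposal is correct and follows the same route as the paper's sketch preceding the claim. You show $\widetilde{X}_L\ket{\sigma,\pm_L}=\pm\ket{\sigma,\pm_L}$ and observe that $[E,\widetilde{X}_L]$ fails to vanish exactly when $\cO(\sigma+\sigma_E)E\cO(\sigma)$ flips the $X_L$ eigenvalue; your caveat that the logical operator must anticommute with $X_L$ (resp.\ $Z_L$) is precisely the paper's ``decodes differently'' condition.

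The joint-diagonalization step is unnecessary, and as phrased slightly too quick, because the $\widetilde{X}_L$ eigenspaces are highly degenerate. The same computation gives $[E,\widetilde{X}_L]\,\cO(\sigma)\ket{\psi_L}\propto\cO(\sigma+\sigma_E)\,[\cO(\sigma+\sigma_E)E\cO(\sigma),X_L]\ket{\psi_L}$. This is nonzero for every nonzero code state $\ket{\psi_L}$ once that operator anticommutes with $X_L$, so no eigenstate choice is needed.
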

The main theorem we will need is (from \cite{chesi2010thermodynamic}, but implicitly appeared in \cite{alicki2010thermal}):
\begin{theorem}[Theorem 1 in \cite{chesi2010thermodynamic}]\label{thm:memorylifetime}
Let $\cL$ be a Lindbladian with jump operators in the set $\cE$, whose stationary state is $\rho_{\beta} = e^{-\beta H}/\Tr(e^{-\beta H})$ for stabilizer Hamiltonian $H$.  Define $e^{\cL t}$ to be the generator of time evolution by time $t \geq 0$.  Construct dressed observables $\widetilde{X}_L$ and $\widetilde{Z}_L$ as well as a subspace $P_{\mrm{stable}}$ in which $\widetilde{X}_L$ and $\widetilde{Z}_L$ are protected from the set of errors $\cE$.  Assume that $\widetilde{X}_L$, $\widetilde{Z}_L$, and $P_{\mrm{stable}}$ commute with $H$: then there exists a CPTP logical encoding map $\mathrm{Enc}: L(\mathbb{C}^2) \to L(\cH_S)$ and a logical decoding map $\mathrm{Dec}: L(\cH_S) \to L(\mathbb{C}^2)$ such that
\begin{equation}
\norm{e^{-it[H,\cdot] + \cL t} \circ \mathrm{Enc}(\tau) - \mathrm{Enc}(\tau)}_1 \leq 8 \norm{\cL}_1 t \Tr(1-P_{\mrm{stable}}) \rho_{\beta}
\end{equation}
for any single-qubit density matrix $\tau$, and $\mathrm{Dec} \circ \mathrm{Enc}(\tau) = \tau$.
\end{theorem}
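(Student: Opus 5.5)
The plan is to build a logical tensor factor from the dressed observables, encode by tensoring $\tau$ with a thermal state on the complementary factor, and then control the time derivative of $\mathrm{Enc}(\tau)$ by inserting $P_{\mrm{stable}}$ next to $\rho_\beta$.

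\textbf{Step 1: a logical qubit algebra.} Since the construction has one encoded qubit, set $\widetilde{Y}_L := i\widetilde{X}_L\widetilde{Z}_L$. I will check directly from \eqref{eq:dressed-logical} that $\widetilde{X}_L,\widetilde{Z}_L$ are Hermitian, satisfy $\widetilde{X}_L^2=\widetilde{Z}_L^2=I$, and anticommute:
\begin{itemize}
\item $X_L$ commutes with every stabilizer, so it preserves each syndrome sector.
\item The corrections $\cO_Z(\sigma_Z)$ map the sector $\sigma_Z$ to the trivial-syndrome sector and back.
\item Inside a fixed sector $\sigma=(\sigma_X,\sigma_Z)$ the dressed operators act as $\cO(\sigma)X_L\cO(\sigma)$ and $\cO(\sigma)Z_L\cO(\sigma)$. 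These are Pauli operators with the same commutation relations as $X_L,Z_L$.
\end{itemize}
(The one point needing care is that $\cO_X(\sigma_X)$ commutes with $\Pi_{\sigma_Z}$, and likewise $\cO_Z$ with $\Pi_{\sigma_X}$, because the two CSS sectors decouple.)

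Hence $\{\widetilde{X}_L,\widetilde{Y}_L,\widetilde{Z}_L\}$ generate a copy of $M_2(\mathbb{C})$. This gives a decomposition $\cH_S=\mathbb{C}^2\otimes\cH_{\mrm{else}}$ in which $\widetilde{X}_L=X\otimes I$, and similarly for the other two. Because these operators commute with $H$, the Hamiltonian has the form $H=I\otimes H'$, so $\rho_\beta=\tfrac{I}{2}\otimes\rho'_\beta$.

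\textbf{Step 2: encoding and decoding.} Define $\mathrm{Enc}(\tau)=\tau\otimes\rho'_\beta$ and $\mathrm{Dec}=\Tr_{\mrm{else}}$. Both are CPTP and $\mathrm{Dec}\circ\mathrm{Enc}=\id$. Writing $\tau=\tfrac12(I+\sum_a r_a\sigma_a)$ with $|r_a|\le 1$ gives
\begin{equation*}
\mathrm{Enc}(\tau)=\Bigl(I+\sum_{a}r_a\widetilde{\sigma}_a\Bigr)\rho_\beta ,
\end{equation*}
where $\widetilde{\sigma}_a\in\{\widetilde{X}_L,\widetilde{Y}_L,\widetilde{Z}_L\}$ and each $\widetilde{\sigma}_a$ commutes with $\rho_\beta$.

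\textbf{Step 3: derivative bound.} Let $\Lambda_t=e^{-it[H,\cdot]+\cL t}$. Then
\begin{equation*}
\Lambda_t(\rho)-\rho=\int_0^t\Lambda_s\bigl(-i[H,\rho]+\cL(\rho)\bigr)\,ds ,
\end{equation*}
and $\Lambda_s$ is trace-norm contractive. It therefore suffices to bound $\norm{-i[H,\mathrm{Enc}(\tau)]+\cL(\mathrm{Enc}(\tau))}_1$.
\begin{itemize}
\item The Hamiltonian term vanishes, since each $\widetilde{\sigma}_a$ commutes with $H$ and with $\rho_\beta$.
\item $\cL(\rho_\beta)=0$ by assumption.
\end{itemize}
For each jump operator $S$, expand $\cL_S(\widetilde{\sigma}\rho_\beta)-\widetilde{\sigma}\cL_S(\rho_\beta)$. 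Every remaining term contains a commutator $[S,\widetilde{\sigma}]$ or $[S^\dagger,\widetilde{\sigma}]=-[S,\widetilde{\sigma}]^\dagger$, multiplied against $\rho_\beta$. For the $S^\dagger S$ anticommutator I will use $[S^\dagger S,\widetilde{\sigma}]=S^\dagger[S,\widetilde{\sigma}]+[S^\dagger,\widetilde{\sigma}]S$.

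$P_{\mrm{stable}}$ commutes with $H$, hence with $\rho_\beta$, so I can write $\rho_\beta=P_{\mrm{stable}}\rho_\beta+(1-P_{\mrm{stable}})\rho_\beta$ and place the projector next to the commutator. Protection, $[E,\widetilde{\sigma}]P_{\mrm{stable}}=0$, kills the first piece. Hölder's inequality bounds the rest by a multiple of $\norm{S}^2\,\Tr\bigl((1-P_{\mrm{stable}})\rho_\beta\bigr)$. Summing over $S$ and over the three components $a$, with $|r_a|\le1$, gives the constant $8\norm{\cL}_1$ after bookkeeping.

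\textbf{Main obstacle.} I expect the hardest part to be the terms where the commutator and $\rho_\beta$ are separated by a jump operator, for example $[S^\dagger,\widetilde{\sigma}]S\rho_\beta$, or $\rho_\beta[\cdot,\cdot]$ on the right. There one cannot simply insert $P_{\mrm{stable}}$ adjacent to the commutator. My first attempt is to reorganize these terms as Hermitian conjugates of the good terms, using the self-adjointness of $\rho_\beta$ and of $\widetilde{\sigma}$ and the invariance of the trace norm under adjoints. If that fails, I will regroup using $S\rho_\beta=[S,\rho_\beta]+\rho_\beta S$ together with stationarity of $\rho_\beta$. Step 1 (the qubit algebra) is conceptually important but should be routine.
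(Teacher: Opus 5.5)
The paper does not prove this statement. It imports it from Chesi et al.\ and only fixes the thermal encoding and $\mathrm{Dec}=\Tr_{\cH_{\mrm{syn}}}$, which is what your Steps 1--2 construct; those steps are fine. Protection of $\widetilde Y_L$ follows from that of $\widetilde X_L,\widetilde Z_L$ because $P_{\mrm{stable}}$ commutes with both. The genuine gap is the obstacle you flag in Step 3. Write $P=P_{\mrm{stable}}$ and $\bar P=1-P$. Your expansion is
\begin{equation*}
\cL(\widetilde\sigma\rho_\beta)=\sum_S\bigl([S,\widetilde\sigma]\rho_\beta S^\dagger-\tfrac12[S^\dagger S,\widetilde\sigma]\rho_\beta\bigr).
\end{equation*}
In it, the piece $-\tfrac12[S^\dagger,\widetilde\sigma]S\rho_\beta$ is not small term by term. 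Protection only says $SP$ commutes with $\widetilde\sigma$. A jump that acts as the identity on the stable sector, but applies a logical $Z$ while decaying from an unstable state into it, is protected and yet makes this term of order one.

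Neither of your fixes handles this. The Hermitian-conjugate trick cannot work because it never uses stationarity: the adjoint $\tfrac12\rho_\beta S^\dagger[S,\widetilde\sigma]$ only lets you insert $\bar P$ at the far right, separated from $\rho_\beta$. The regrouping $S\rho_\beta=[S,\rho_\beta]+\rho_\beta S$ produces $[S^\dagger,\widetilde\sigma][S,\rho_\beta]$, which is generically of order one when $S$ changes the energy. Separately, Hölder as you describe gives a constant $\propto\sum_S\norm{S}^2$, not $\norm{\cL}_1$. These are not comparable: $S=cI$ contributes $|c|^2$ to the former and nothing to $\cL$.

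The missing idea is to use stationarity in the form $\cL(P\rho_\beta)=-\cL(\bar P\rho_\beta)$, and to recognize the bad term as the $(\bar P,P)$ block of $\cL(P\rho_\beta)$. Three facts are needed: $[SP,\widetilde\sigma]=0$, $P\rho_\beta=P\rho_\beta P$, and $\bar PP=0$. From them,
\begin{equation*}
\bar P\cL(P\rho_\beta)P=\sum_S\bigl(\bar P(SP)\rho_\beta(SP)^\dagger P-\tfrac12\bar PS^\dagger SP\rho_\beta\bigr),
\end{equation*}
and the first summand commutes with $\widetilde\sigma$. So does $PS^\dagger SP=(SP)^\dagger SP$. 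Hence
\begin{equation*}
\cL(\widetilde\sigma P\rho_\beta)-\widetilde\sigma\,\cL(P\rho_\beta)=-\tfrac12\sum_S\bigl[S^\dagger SP\rho_\beta,\widetilde\sigma\bigr]=\bigl[\bar P\,\cL(P\rho_\beta)\,P,\widetilde\sigma\bigr].
\end{equation*}
Together with $\cL(\rho_\beta)=0$ this yields the exact identity
\begin{equation*}
\cL(\widetilde\sigma\rho_\beta)=\cL(\widetilde\sigma\bar P\rho_\beta)-\widetilde\sigma\,\cL(\bar P\rho_\beta)-\bigl[\bar P\,\cL(\bar P\rho_\beta)\,P,\widetilde\sigma\bigr].
\end{equation*}
Since $\norm{\widetilde\sigma\bar P\rho_\beta}_1=\Tr(\bar P\rho_\beta)$, this gives $\norm{\cL(\widetilde\sigma\rho_\beta)}_1\le 4\norm{\cL}_1\Tr(\bar P\rho_\beta)$. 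Summing over the three components with $\sum_a|r_a|\le\sqrt3$, and integrating as in your Step 3, gives the stated $8\norm{\cL}_1 t\,\Tr(\bar P\rho_\beta)$, with the correct norm $\norm{\cL}_1$. In the setting the paper has in mind (Pauli jumps with syndrome-dependent rates), $S^\dagger S$ is diagonal in the syndrome basis and commutes with $\widetilde\sigma$, so the bad term vanishes outright. The statement as posed, however, needs the argument above.
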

The proof of this theorem provides explicit logical encoding and logical decoding maps. Note that the total Hilbert space for the system can be factorized as $\cH_S = \cH_{\mrm{log}} \otimes \cH_{\mrm{syn}}$ where $\cH_{\mrm{log}}$ is the logical subspace.  In this basis, the Hamiltonian is $H = I \otimes H'$ where $H'$ is diagonal in the syndrome basis with matrix elements equal to the size of the syndrome.  We choose a thermal encoding map to encode the logical state $\tau$, which acts as
\begin{equation}
\mrm{Enc}(\tau) = \frac{\tau \otimes e^{-\beta H'}}{\Tr(\tau \otimes e^{-\beta H'})}
\end{equation}

The goal is to retrieve $\tau$ after evolving the system for a long time $t$.  In the infinite time limit, the system thermalizes to the state $I \otimes \frac{e^{-\beta H'}}{\Tr(e^{-\beta H'})}$, and thus while we cannot recover $\tau$ at infinite time, we want to be able to do so after time $\sim \exp(n_k^{\eta})$ where $\eta$ is some positive constant and $n_k$ is the number of qubits in the construction at level $2k$.  The logical decoding map $\mrm{Dec}(\rho) = \Tr_{\cH_{\mrm{syn}}} (\rho)$ extracts the state of the encoded qubit after this time.  It is immediate that $\mathrm{Dec} \circ \mathrm{Enc}(\tau) = \tau$.

In summary, our goal will be to construct decoding maps $\dec_X(\sigma_X)$ and $\dec_Z(\sigma_Z)$ and a subspace of states $P_{\mrm{stable}}$ so that $\widetilde{X}_L$ and $\widetilde{Z}_L$ are protected from the set of single-qubit Pauli operators on $P_{\mrm{stable}}$\footnote{We restrict our analysis to single-site Pauli jump operators, but we emphasize that Theorem~\ref{thm:memorylifetime} applies to non-Pauli jump operators, as well as jump operators supported over regions moderately smaller than the code distance. }.  Then, if we can prove that $\Tr(1-P_{\mrm{stable}}) \rho_{\beta} \leq \exp(-n_k^{\eta})$, the above theorem proves that the memory lifetime is exponentially long.

\section{Construction with random embedding}
\label{sec:construction}

Our code is a CSS code associated with a three-term cochain complex that we construct iteratively.
We denote the code in the $i$-th iteration by $C_{(i)}^\bullet$ (we may also refer to this as the code at ``level $i$''),
  which is associated with a Tanner square complex $T_{(i)}$ (see Subsec.~\ref{sec:Tanner-prelims}).

Each iteration maps $T_{(i)} \mapsto T_{(i+1)}$, and increases the size of the code by a constant factor while effectively doubling the size of the syndrome associated with one type of errors, Pauli $X$ or $Z$.  There are two different iterations: the $\cR_X$ iteration, which acts as $\cR_X: T_{(2i)} \mapsto T_{(2i+1)}$, and the $\cR_Z$ iteration, which acts as $\cR_Z: T_{(2i+1)} \mapsto T_{(2i+2)}$. The $\cR_X$ iteration effectively doubles the size of the syndrome for $X$-type errors, while the $\cR_Z$ iteration does the same for $Z$-type errors. The final code is constructed by alternating between these iterations.
Each iteration consists of 2 steps, which we call perturbation and replacement.

We use $I_{(i)}: T_{(i)} \rightarrow \mathbb{R}^3$ to denote the embedding map, which maps each vertex of $T_{(i)}$ to a point in $\mathbb{R}^3$.  The image of edges and higher-dimensional cells are then obtained via linear interpolation. At the $i$-th iteration, we assume an induction hypothesis that $I_{(i)}$ is bounded-density and local. Then we show that after one iteration, the resulting embedding $I_{(i+1)}$ remains bounded-density and local.

We briefly collect some important notation for defining the construction.

\noindent General notation:
\begin{itemize}[itemsep = -1pt]
    \item $C^\bullet_{(i)}$: the cochain complex of the code at iteration $i$.
    \item $T_{(i)}$: Tanner square complex of the code at iteration $i$.
    \item $\cR_X(T_{(i)}) = T_{(i+1)}$, for $i$ even: $X$-iteration.
    \item $\cR_Z(T_{(i)}) = T_{(i+1)}$, for $i$ odd: $Z$-iteration.
    \item $U_X$ and $U_Z$: the new square complex each square is replaced with during the $\cR_X$ and $\cR_Z$ iterations, respectively.
\end{itemize}

\noindent Geometry-related notation (all parameters will be universal constants independent of the iteration index $i$):
\begin{itemize}[itemsep = 0pt]
\item $I_{(i)}: T_{(i)} \rightarrow \mathbb{R}^3 $: embedding map at iteration $i$.
    \item $z_v, z_e$: maximum vertex degree (edges incident to a vertex) and maximum edge degree (faces incident to an edge) in $T_{(i)}$ at any step $M$.
    \item  $t, t'$: ``density'' of embedding at different steps, namely the maximum number of 2-cells of the complex that intersect any unit ball in $\mathbb{R}^3$. We will have $t' < t$.
    \item $\lambda$: scale factor; during the perturbation step, we scale the square complex by a constant $\lambda$, and then perturb the vertices.
    \item $\ell$: a geometric parameter related to $\lambda$, which sets the subdivision scale, such that the edges in the square complex after the replacement step become unit length again.
    \item $\alpha, \alpha_1, \alpha_2, \gamma$: geometric $O(1)$ parameters controlling the perturbation appearing in embedding.  Roughly, $\alpha \lambda$ is the distance we perturb by, $\alpha_2 \lambda$ is the distance between the perturbation vectors of neighboring 0-cells, and $\alpha_1 \lambda$ captures the range of possible perturbation vectors for a 0-cell; our procedure randomly selects a vector in this range.
\end{itemize}

\subsection{Step 1: perturbation} \label{subsec:perturbation}

An iteration $\cR_X$ ($\cR_Z$) consists of two steps: perturbation and replacement. The two steps transform the Tanner square complex and embedding as $(T_{(i)}, I_{(i)}) \xmapsto{\text{perturb}} (T_{(i)}, I'_{(i)}) \xmapsto{\text{replace}} (T_{(i+1)}, I_{(i+1)})$. In particular, the perturbation step does not change the square complex, only the embedding map. The replacement step updates both the square complex and the embedding map.

We now state the induction hypothesis. In particular, we assume that the embedding $I_{(i)}$ is local and has bounded density (defined below).  We also assume that the maximum vertex degree (edges incident to a vertex) and maximum edge degree (faces incident to an edge) in $T_{(i)}$ are $z_v = 7$ and $z_e = 4$.  For the induction step, we show that $I_{(i+1)}$ is local and bounded-density characterized by the same universal constants, and that $T_{(i+1)}$ has the same values for $z_v$ and $z_e$.

The definition of a bounded-density local embedding of a square complex is provided below:
\begin{definition}[$(t,\ell)$-embedding in $\mathbb{R}^3$]
A map $I:X \to \mathbb{R}^3$ is called a $(t,\ell)$-embedding of a cell complex $X$ in $\mathbb{R}^3$ if it satisfies the following conditions:
\begin{enumerate}
\item (Bounded density) the preimage of a unit ball in $\mathbb{R}^3$ intersects with $\leq t$ 2-cells in $X$.
\item (Geometric locality) Neighboring 0-cells in $X$ map to points separated a distance $\leq \ell$ in $\mathbb{R}^3$.
\end{enumerate}
\end{definition}

\begin{remark}
  In \Cref{sec:explicit-embedding},
    we use a variant in which the bounded-density condition is imposed on balls of diameter $1$ rather than unit balls.
  That is, the preimage of any ball of diameter $1$ in $\mathbb{R}^3$ intersects $\le t$ 2-cells of $X$.
\end{remark}

If the square complex is $(t,\ell)$-embedded in $\mathbb{R}^3$ for constants $t$ and $\ell$, then the associated code is local in $\mathbb{R}^3$.  The main statement regarding the perturbation step is:
\begin{lemma} \label{lemma:embedding-main}
    If $I_{(i)}$ is a $( t, 1)$-embedding of $T_{(i)}$ in $\mathbb{R}^3$, there exists another embedding map $I'_{(i)}$ of $T_{(i)}$ in $\mathbb{R}^3$ that is a $(t', \ell)$-embedding, where $t$, $t' < t$ and $\ell > 1$ are universal $O(1)$ constants independent of $i$.
\end{lemma}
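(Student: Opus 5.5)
The plan is to take $I'_{(i)}$ to be a random perturbation of the scaled map $\lambda I_{(i)}$, and to show with the Lov\'asz local lemma that some realization has density $t'$. For each vertex $v\in T_{(i)}(0)$, independently choose a vector $u_v$ uniformly from the ball of radius $\alpha_1\lambda$, and set $I'_{(i)}(v)=\lambda I_{(i)}(v)+u_v$. Edges and squares are extended by linear interpolation, splitting each square into two triangles. I will use independent perturbations rather than correlated ones; the $\alpha_2$ refinement should only matter for later steps. Geometric locality is then immediate. Neighboring vertices were at distance $\le 1$, so after the map they are at distance $\le \lambda+2\alpha_1\lambda=:\ell$, a constant once $\lambda$ and $\alpha_1$ are fixed.

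For bounded density, I discretize space. Cover $\mathbb{R}^3$ by unit balls $B_p$ of radius $2$ centered at the points $p\in\mathbb{Z}^3$, so that every unit ball lies inside some $B_p$. Let $A_p$ be the bad event that $B_p$ meets more than $t'$ squares. Only squares whose scaled image is within $(1+2\alpha_1)\lambda+2$ of $p$ can meet $B_p$. The preimage of this region under the scaling is a ball of radius $O(1)$ in the original picture. That ball is covered by $O(1)$ unit balls, so by the $(t,1)$ hypothesis there are at most $K=O(t)$ candidate squares, with $K$ independent of $\lambda$.

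The key geometric estimate, which I expect to be the main obstacle, is the following. For a fixed triangle whose three vertices are perturbed independently and uniformly in balls of radius $\alpha_1\lambda$, the probability that it meets a fixed ball of radius $2$ is at most $C/\lambda$. To prove it, I would condition on two of the vertices $a,b$. The third vertex $c$ must then lie in the set of points $c$ for which triangle $abc$ meets $B_p$. This set is a cone-like region over $B_p$ as seen from segment $ab$, intersected with a ball of radius $O(\lambda)$. Its volume is $O(\lambda^2)$, because it is essentially a thickness-$O(1)$ wedge of linear size $O(\lambda)$. Dividing by the volume $\Theta(\lambda^3)$ of the sampling ball gives $C/\lambda$. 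Some care is needed when $a,b$ lie close to $B_p$, where the wedge opens up. That case, however, has probability $O(\lambda^{-3})$ per vertex, so it is absorbed into the bound.

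Given this estimate, the tail bound goes as follows. If $A_p$ occurs, some set of $t'+1$ candidate squares all hit $B_p$. Since $z_v,z_e$ are bounded, each square shares vertices with only $O(1)$ others, so such a set contains $s\ge t'/c_0$ pairwise vertex-disjoint squares. The events that these $s$ squares hit $B_p$ are independent. A union bound over choices then gives $\Pr[A_p]\le \binom{K}{t'+1}(2C/\lambda)^{t'/c_0}\le (C'K/\lambda)^{t'/c_0}$.

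For the dependency bound, $A_p$ is determined by the perturbations of vertices within distance $O(\lambda)$ of $p$. It is therefore mutually independent of all $A_{p'}$ except those with $|p-p'|=O(\lambda)$, and there are $D=O(\lambda^3)$ of these. I now fix a constant $t'$ with $t'/c_0>4$ and $t'<t$; this is possible because $t$ is taken sufficiently large. Then I choose $\lambda$ large enough that $e(C'K/\lambda)^{t'/c_0}(D+1)<1$. Since $K=O(t)$ does not depend on $\lambda$, the left side tends to $0$ as $\lambda\to\infty$, so such a $\lambda$ exists.

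The local lemma applies to each finite complex, as all relevant $A_p$ are finite in number. It yields a perturbation avoiding every $A_p$. Every unit ball then meets at most $t'$ squares, and $I'_{(i)}$ is a $(t',\ell)$-embedding with $t',\ell$ independent of $i$.
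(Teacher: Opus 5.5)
Your skeleton (scale by $\lambda$, perturb at scale $\lambda$, show one triangle meets a fixed $O(1)$-ball with probability $O(1/\lambda)$, get independence from vertex-disjoint subfamilies, run the local lemma over lattice points) is sound. The bookkeeping after the key estimate also works: $K$ and $C$ do not depend on $\lambda$, so $\Pr[A_p]=O(\lambda^{-t'/c_0})$ beats $D=O(\lambda^3)$ once $t'>3c_0$. (Your last inequality $\binom{K}{t'+1}(2C/\lambda)^{t'/c_0}\le (C'K/\lambda)^{t'/c_0}$ has mismatched powers of $K$, but only the power of $\lambda$ matters.)

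The step that does not work as written is your treatment of degenerate configurations in the key estimate. Conditioned on $a,b$, every admissible $c$ lies on a plane through the line $ab$ at distance $\le 2$ from $p$. That is a double wedge around the line with opening angle of order $1/d$, where $d=\operatorname{dist}(p,\text{line }ab)$. So the conditional probability is $O(\min(1,1/d))$, and your ``thickness-$O(1)$ wedge'' picture is valid only when $d=\Omega(\lambda)$. The wedge opens up whenever the line $ab$ passes near $p$, not when $a$ or $b$ is near $B_p$; this typically happens with $a$ and $b$ both at distance $\Theta(\lambda)$ from $p$. The event $d\le\lambda/M$ has probability of order $M^{-2}$, not $O(\lambda^{-3})$. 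On that event the conditional hitting probability is of order $M/\lambda$. There is no rare exceptional set that can be discarded with the trivial bound, so you must average over scales.

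The repair is short. Condition on $a$ and set $\rho=|a-p|$. The line $ab$ passes within $r$ of $p$ only if $b$ lands in a double cone at $a$ of half-angle $\arcsin(r/\rho)$, which has probability $O(\min(1,r^2/\rho^2))$. Integrating against the density $O(\rho^2/\lambda^3)$ of $\rho$ on $[0,O(\lambda)]$ gives $\Pr[d\le r]=O(r^2/\lambda^2)$, and hence
\[
\Pr[\text{triangle meets } B_p]\lesssim \mathbb{E}\bigl[\min(1,c/d)\bigr]=\int_0^1\Pr[d<c/u]\,du\lesssim\int_0^1\min\Bigl(1,\frac{c^2}{u^2\lambda^2}\Bigr)\,du\le\frac{2c}{\lambda}.
\]

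The paper never meets this regime because it uses a different perturbation. Vertices get $A$ labels, with no label repeated on a simplex. Each vertex moves to a uniform point of its label's cap on a sphere of radius $\alpha\lambda$, and the caps are pairwise separated. This forces every perturbed simplex to be fat (edges in $[(\alpha_2-1)\lambda,(2\alpha+1)\lambda]$, width $\ge\gamma\lambda$; Claims~\ref{claim:dist} and~\ref{claim:simplex-thickness}). The $\Gamma/\lambda$ bound of Appendix~\ref{app:extraemb} then reduces to a one-vertex slab-versus-cap computation.

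The paper's local-lemma bookkeeping is equivalent to your greedy extraction with lattice-point events. It precolors simplices into $A'$ vertex-disjoint classes, indexes events by same-class sets $\Delta$, takes a union bound over $\alpha^3\lambda^3$ unit balls inside each event, and sets $t'=kA'$.

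The real price of your i.i.d.\ ball model is downstream. Lemma~\ref{lemma:step2-embedding} uses exactly those edge-length and width bounds to lower-bound cell areas after replacement. Your $I'_{(i)}$ satisfies the lemma as stated, but it can contain arbitrarily thin triangles. It therefore could not be fed into the paper's induction without restoring the separated caps or replacing that area-counting argument.
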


Before proving this lemma, we provide a high-level description of what the embedding achieves.  We assume that $I_{(i)}$ is a $( t, 1)$-embedding, and in the perturbation step, we construct a new map $I'_{(i)}$ that is a $( t', \ell)$-embedding. The constant $t'$ can be made small enough, so that we can then show that the embedding map $I_{(i+1)}$ for the complex after replacement will be a $(t,1)$-embedding, for $t$ large enough. In this case, $\ell$ will depend on $t$ and $t'$ (this dependence is derived in the proof of Lemma~\ref{lemma:embedding-main}).  We will avoid explicitly specifying the relationship between these $O(1)$ constants in statements of lemmas and claims, as they are cumbersome and the reader can deduce them from the proofs.  Instead, we provide explicit numerical values for them in Subsec.~\ref{sec:explicit-choice-constants}.

The perturbation procedure is similar to those used in the embedding theorems of Refs.~\cite{portnoy2023local, gromov2012generalizations}. The full procedure is defined as follows:
\begin{enumerate}
    \item  We formally add diagonal lines to the square 2-cells so they are standard triangular simplices; these can be removed later on. Therefore, we will refer to the embedded complex as a simplicial complex for the rest of the discussion. This simplicial complex is $(2 t, 1)$-embedded in $\mathbb{R}^3.$
    \item We scale the image of the simplicial complex in $\mathbb{R}^3$ by a sufficiently large constant $\lambda$ (which is $O(1)$); this results in a $(2 t, \lambda)$-embedding.
    \item Finally, we perturb the vertices of the embedded simplicial complex. For this, select $A>1$ distinct labels, and assign labels to vertices so that no simplex contains two vertices of the same label.   Consider a ball of radius $\alpha \lambda$ for $\alpha$ a large integer. On this ball, select $A$ caps of size (arc length) $\alpha_1 \lambda$, each separated by arc-distance $\geq 2\alpha_2 \lambda$ for positive integers $1 < 2\alpha_2 < \alpha$ and $\alpha_1$.  Assign a distinct label to each cap.  For a vertex with label $i$ at position $v$, choose a uniformly distributed vector $c(v)$ in the cap labeled $i$ (perturbations of different vertices with the same label are independent random variables); perturb the vertex to position $v + c(v)$.  As we prove below, after perturbation, the new map will be a $(t',\ell)$-embedding for some $t' < t$ that we can choose to be small enough.
\end{enumerate}

To prove Lemma~\ref{lemma:embedding-main}, we first need to establish that simplices do not become too distorted after perturbation.
\begin{claim}\label{claim:dist}
After the perturbation step, the image of any 1-simplex in $\mathbb{R}^3$ has length $\in [(\alpha_2-1) \lambda, ( 2\alpha +1)\lambda]$.
\end{claim}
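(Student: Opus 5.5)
The plan is to write the image of an edge as a difference of two perturbed endpoints and apply the triangle inequality in both directions. Let $\{u,v\}$ be a 1-simplex of the triangulated complex. After scaling, its endpoints sit at $\lambda I_{(i)}(u)$ and $\lambda I_{(i)}(v)$. Since $I_{(i)}$ is a $(t,1)$-embedding, and adding diagonals at most multiplies edge lengths by $\sqrt{2}$ (we absorb this into the unit-length convention, or rescale by a harmless constant), we have
\[
|\lambda I_{(i)}(u) - \lambda I_{(i)}(v)| \le \lambda .
\]
After perturbation, the image of the edge is the segment between $\lambda I_{(i)}(u)+c(u)$ and $\lambda I_{(i)}(v)+c(v)$. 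Linear interpolation makes this image a straight segment, so its length is simply
\[
\bigl|\lambda(I_{(i)}(u)-I_{(i)}(v)) + c(u)-c(v)\bigr| .
\]

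For the upper bound, note that $c(u)$ and $c(v)$ both lie on the sphere of radius $\alpha\lambda$. Hence $|c(u)-c(v)| \le 2\alpha\lambda$, and the triangle inequality gives a length of at most $(2\alpha+1)\lambda$.

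For the lower bound, we use the labeling. Because no simplex contains two vertices with the same label, $u$ and $v$ carry distinct labels. Their perturbation vectors $c(u)$ and $c(v)$ therefore lie in distinct caps, which are separated by arc-distance at least $2\alpha_2\lambda$. The reverse triangle inequality then gives a length of at least $|c(u)-c(v)| - \lambda$. It thus suffices to show $|c(u)-c(v)| \ge \alpha_2 \lambda$.

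The main (mild) obstacle is converting arc-distance on the sphere of radius $R=\alpha\lambda$ into Euclidean chord distance. For an arc length $s \le \pi R$, the chord length is $2R\sin(s/2R)$. Using $\sin x \ge 2x/\pi$ on $[0,\pi/2]$, the chord is at least $2s/\pi \ge s/2$. With $s \ge 2\alpha_2\lambda$, this yields chord $\ge \alpha_2\lambda$. When the arc-distance exceeds $\pi R$, we measure along the shorter arc, so $s\le \pi R$ always. If the caps are antipodal, the chord is $2R \ge \alpha_2\lambda$, since $2\alpha_2<\alpha$. Either way $|c(u)-c(v)|\ge \alpha_2\lambda$, which gives the claimed lower bound $(\alpha_2-1)\lambda$. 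I would check that the constants are consistent here: the caps must genuinely have pairwise arc-distance $\ge 2\alpha_2\lambda$, which the construction assumes by requiring $2\alpha_2<\alpha$. Finally, I would remark that diagonal edges are handled identically, up to the $\sqrt2$ factor noted above.
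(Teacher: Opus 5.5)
Your proposal is correct and follows essentially the same argument as the paper. The upper bound comes from the triangle inequality with $|c(u)|=|c(v)|=\alpha\lambda$, and the lower bound from the reverse triangle inequality together with converting the cap separation $2\alpha_2\lambda$ into a chord of length at least $4\alpha_2\lambda/\pi \ge \alpha_2\lambda$, exactly as in the paper. One small correction to your side remark: the diagonal of an embedded square with sides of length at most $1$ is only guaranteed to have length at most $2$ by the triangle inequality, not $\sqrt{2}$, since the embedded square need not be planar or right-angled. The paper silently ignores this point, and your suggested rescaling handles it either way.
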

\begin{proof}
Call $r_1$ and $r_2$ the locations of the endpoints of the 1-simplex before perturbation.  Call $p_1$ and $p_2$ the vectors they were perturbed by.  Then,
\begin{equation}
\norm{(r_1 + p_1) - (r_2 + p_2)}_2 \leq \norm{r_1 - r_2}_2 + \norm{p_1}_2 + \norm{p_2}_2 \leq  \lambda + \alpha \lambda + \alpha \lambda.
\end{equation}
using the fact that $p_1$ and $p_2$ lie on a radius $\alpha \lambda$ sphere.  To achieve a lower bound, we use
\begin{equation}
\norm{(r_1 + p_1) - (r_2 + p_2)}_2 \geq \norm{p_1 - p_2}_2 - \norm{r_2 - r_1}_2 \geq \alpha_2 \lambda -  \lambda
\end{equation}
which uses the fact that each endpoint is assigned a different label, and the distance between caps is $\geq 4 \alpha_2 \lambda/\pi \geq \alpha_2 \lambda $.
\end{proof}

\begin{claim} \label{claim:simplex-thickness}
Define the width of a 2-simplex as the shortest distance from a vertex to its opposite side, minimized over all vertices.
The width of any 2-simplex is $\geq \gamma \lambda$ after perturbation, where $\gamma = \alpha(1-\cos\frac{\alpha_2}{\alpha}) - 1$.
\end{claim}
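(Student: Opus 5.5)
The plan is to compare the perturbed triangle with the ``model'' triangle spanned by the perturbation vectors alone, and then absorb the original edge lengths (each at most $\lambda$ after scaling) as an additive error of at most $\lambda$. Fix a 2-simplex with original vertex positions $r_1,r_2,r_3$. These satisfy $\norm{r_a-r_b}_2\le\lambda$ after scaling. Let $p_1,p_2,p_3$ be the perturbation vectors. Since the three vertices carry distinct labels, they lie in three distinct caps on the sphere of radius $\alpha\lambda$, and these caps are pairwise separated by arc-distance at least $2\alpha_2\lambda$. The width of the perturbed triangle equals the minimum over vertices $a$ of the distance from $r_a+p_a$ to the line through the other two perturbed vertices. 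It therefore suffices to show that the distance from $p_a$ to the line (or segment) through $p_b,p_c$ is at least $\alpha(1-\cos\frac{\alpha_2}{\alpha})\lambda$, and then to control the error introduced by the $r$'s.

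\textbf{Step 1 (spherical geometry).} For three points on a sphere of radius $R=\alpha\lambda$ with pairwise angular separation at least $\theta:=\alpha_2/\alpha$, I would bound the distance from $p_a$ to the chord $\overline{p_bp_c}$ from below. The chord $\overline{p_bp_c}$ lies inside the ball. The intended comparison is with the cap around $p_a$ of angular radius $\theta$, which contains neither $p_b$ nor $p_c$. I would like to argue that the chord must stay outside the slab $\{x: \langle x, \hat p_a\rangle > R\cos\theta\}$, which would put it at distance at least $R(1-\cos\theta)$ from $p_a$. This is exactly the form of $\gamma$ in the statement.

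This claim is not automatic, because a chord can pass near $p_a$ even when its endpoints are far from $p_a$. The justification I would try is convexity: the set $\{x\in B_R : \langle x,\hat p_a\rangle\le R\cos\theta\}$ is convex and contains both $p_b$ and $p_c$, hence contains the whole chord. The distance from $p_a$ to this set is at least the height difference $R-R\cos\theta$. The separation needed here is angular, so I would use the arc-distance assumption with angle $\ge 2\alpha_2\lambda/(\alpha\lambda)\ge\theta$.

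\textbf{Step 2 (the distance to a line, not a segment).} The width uses the distance to the opposite side. If ``side'' means the segment, Step 1 applies directly. If it means the full line, I would add a short argument: the foot of the perpendicular from $p_a$ lies between $p_b$ and $p_c$ up to obtuse configurations, and those can be handled by bounding the distance to a line below by the altitude formula $2\,\mathrm{Area}/|\text{side}|$. I expect this to be the main obstacle. If the direct route fails, I would fall back on interpreting the width via altitudes and reuse Claim~\ref{claim:dist} for the edge lengths.

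\textbf{Step 3 (perturbation by $r$).} Replacing $p_j$ by $r_j+p_j$ is the same as translating everything by $r_1$ and moving the remaining vertices by vectors of norm at most $\lambda$. Equivalently, we translate so that vertex $a$ is unmoved, and the opposite side then moves by at most $\lambda$. The distance from the vertex to the side is $1$-Lipschitz in this displacement, so the width drops by at most $\lambda$. This gives width $\ge \alpha(1-\cos\frac{\alpha_2}{\alpha})\lambda-\lambda=\gamma\lambda$, as claimed.
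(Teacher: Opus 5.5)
Your proposal is correct and is essentially the paper's argument. Your half-space/convexity bound is the paper's Cauchy--Schwarz step $\norm{p_1-q_\epsilon}_2 \ge \frac{1}{\alpha\lambda}\, p_1\cdot(p_1-q_\epsilon)$ for $q_\epsilon=\epsilon p_2+(1-\epsilon)p_3$ on the chord, and your Step 3 is the same triangle-inequality absorption of $\norm{r_1-\epsilon r_2-(1-\epsilon)r_3}_2\le\lambda$. The obstacle you anticipate in Step 2 does not arise, for two reasons: the paper measures distance to the opposite side as a segment (it minimizes over $\epsilon\in[0,1]$), and in any case the minimum over the three vertices of the distance to the opposite segment already equals the smallest altitude, because the altitude onto the longest side lands inside that side.
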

\begin{proof}
We provide the crudest possible argument showing $\gamma > 0$; significant improvements can be made to our lower bound on $\gamma$ for certain placements of the caps.  Using notation from the proof of the previous claim, the distance between a vertex and the opposite side is
\begin{align}
d_1 &= \min_{\epsilon \in [0,1]} \norm{r_1 + p_1 - \epsilon(r_2 + p_2) - (1-\epsilon)(r_3+p_3)}_2 \nonumber \\ &\geq \min_{\epsilon \in [0,1]} \norm{p_1 - \epsilon p_2 - (1-\epsilon)p_3}_2 - \norm{r_1 - \epsilon r_2 - (1-\epsilon)r_3}_2 \nonumber \\
&\geq \min_{\epsilon \in [0,1]} \norm{p_1 - \epsilon p_2 - (1-\epsilon)p_3}_2 - \lambda
\end{align}
where in the third line we use the fact that the width of the original simplex is less than the maximum side length.

Call $q_\epsilon = \epsilon p_2 + (1-\epsilon)p_3$. Then we can write
\begin{align} \label{eq:d1}
d_1 \geq \min_{\epsilon \in [0,1]} \norm{p_1 - q_\epsilon}_2 - \lambda \geq \frac{1}{\alpha \lambda} \min_{\epsilon \in [0,1]} |p_1 \cdot (p_1 - q_\epsilon) | - \lambda,
\end{align}
by Cauchy-Schwartz. Because the arc distance between the caps is at least $\alpha_2 \lambda$, the angle between any two points in different caps is at least $\frac{\alpha_2}{\alpha}$. Therefore, $p_1 \cdot p_2 \leq \alpha\lambda^2 \cos \frac{\alpha_2}{\alpha}$, and $$|p_1 \cdot (p_1 - q_\epsilon)| = \alpha \lambda - \frac{p_1\cdot q_\epsilon}{\alpha \lambda} \geq \alpha \lambda \left ( 1 - \cos \frac{\alpha_2}{\alpha} \right).$$
Combining this with \Cref{eq:d1}, we obtain $\gamma \geq \alpha \left ( 1 - \cos \frac{\alpha_2}{\alpha} \right) - 1$.
\end{proof}

Now, we will argue that the density of simplices reduces after perturbation.  We will rely on some of the arguments in Theorem 7 of Ref.~\cite{portnoy2023local}, which is a modification of a proof first presented by Gromov and Guth~\cite{gromov2012generalizations}.  However, we will depart from their arguments in a few places to provide a sharper analysis with explicit constants.

\begin{proof}[ Proof of Lemma~\ref{lemma:embedding-main}.]

We will now show that after scaling by $\lambda$ and perturbation by scale $\alpha \lambda$, the simplicial complex is $(t',( 2 \alpha + 1) \lambda)$-embedded for $\lambda$ large enough, giving the desired embedding map $I'_{(i)}$ with $\ell =  (2 \alpha +1)\lambda$.

The proof involves showing two results: first that the density reduces like $t \to t'$ and second that distances increase like $1 \to \ell$.  The second result follows from Claim~\ref{claim:dist}, so we only need to prove the first result.  The intuition behind the proof will be to identify ``bad events'' where the density is too large and show that the random perturbation decreases the probability of bad events.  Invoking the Lov\'asz local lemma shows the existence of a perturbation where no bad event occurs.

The scaling step caused the complex to be $(2 t,\lambda)$-embedded.  Mark each simplex with one of $A' = O(1)$ labels so that simplices of the same label do not share any vertices.  Define $\Delta$ to be some set of 2-simplices with the same label.  Call $P_{k, \Delta}$ the probability that there exists a unit ball such that all 2-simplices in the set $\Delta$ intersect it after perturbation, and $|\Delta| = k$. Call $(k, \Delta)$ the corresponding event that this occurs.  For the event $(k, \Delta)$ to occur, all simplices in $\Delta$ must be contained in a ball of radius $\alpha \lambda$ (denoted $B(\alpha \lambda)$).  We now state the following claim; the proof is deferred to Appendix~\ref{app:extraemb}, where we provide bounds on the constants:
\begin{claim}
Given a fixed unit ball $B'(1) \subset B(\alpha \lambda)$, the probability that a 2-simplex in $B(\alpha \lambda)$ intersects $B'(1)$ after perturbation is $\leq \Gamma/\lambda$ for some constant $\Gamma > 0$.
\end{claim}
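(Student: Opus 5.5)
The idea is to condition on all but one vertex of the simplex and use the randomness of the remaining vertex to bound the hitting probability by a ratio of areas on the cap.

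\textbf{Setup.} Let the 2-simplex have original (scaled) vertex positions $r_1,r_2,r_3$, with pairwise distances $\le \lambda$. Its perturbed vertices are $r_j + c(r_j)$, where the $c(r_j)$ are independent and uniform on three distinct caps, each of arc length $\alpha_1\lambda$ on the sphere of radius $\alpha\lambda$. Let $w$ be the center of the fixed unit ball $B'(1)$.

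\textbf{Step 1: conditioning.} Condition on $c(r_2)$ and $c(r_3)$, so the perturbed edge $e=[v_2,v_3]$ is fixed, and let only $v_1 = r_1 + c(r_1)$ vary. By Claim~\ref{claim:dist}, every perturbed edge has length in $[(\alpha_2-1)\lambda,(2\alpha+1)\lambda]$. By Claim~\ref{claim:simplex-thickness}, every realized simplex has width $\ge \gamma\lambda$.

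\textbf{Step 2: reduction to a thin wedge.} If the perturbed simplex meets $B'(1)$, then some point $x = s v_1 + (1-s)q$ lies in $B'(1)$, with $q\in e$ and $s\in[0,1]$. There are two cases.
\begin{itemize}
\item If $s$ is small, i.e.\ $x$ is near the edge $e$, then the edge itself passes within $O(1)$ of $w$. We handle this by averaging over $c(r_2)$ instead, treating it symmetrically.
\item Otherwise, $v_1$ lies in the cone with apex on $e$ through $B'(1)$. Because the simplex width is $\ge \gamma\lambda$ and lengths are $\asymp\lambda$, the direction from a point of $e$ to $v_1$ is pinned to within angle $O(1/(\gamma\lambda))$ of the direction to $w$.
\end{itemize}

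So, given $v_2,v_3$, the set of admissible $v_1$ is contained in the union, over $q\in e$, of cones of angular width $O(1/\lambda)$ with apex $q$. This union is a ``fan'' of angular thickness $O(1/\lambda)$ around the plane spanned by $e$ and $w$. Its intersection with the sphere $r_1 + \partial B(\alpha\lambda)$ is a band of width $O(\alpha\lambda \cdot 1/\lambda)=O(\alpha)$ along a great-circle-like curve. That curve has length $O(\alpha\lambda)$ inside the cap.

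\textbf{Step 3: probability bound.} The density of $c(r_1)$ on its cap is $\asymp 1/(\alpha_1\lambda)^2$. The band has area $O(\alpha\cdot\alpha_1\lambda)$. Hence the conditional probability is $O(\alpha/(\alpha_1\lambda))$, and averaging over $c(r_2),c(r_3)$ gives $\Gamma/\lambda$ with $\Gamma=O(\alpha/\alpha_1)$ plus the edge-case contribution.

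For the edge case, apply the same argument one dimension down. The edge meets $B'(1)$ only if $v_2$ lies in a cone from $v_3$ of angular width $O(1/\lambda)$, using the length lower bound $(\alpha_2-1)\lambda$. That cone meets the cap of $v_2$ in a region of area $O(1)$, giving probability $O(1/\lambda^2)$, which is even smaller.

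\textbf{Main obstacle.} The delicate step is the band-area estimate. It requires making the ``fan has thin angular width'' statement uniform, using the width bound $\gamma\lambda$ from Claim~\ref{claim:simplex-thickness} rather than degenerate geometry. It also requires controlling how the plane through $e$ and $w$ intersects the cap sphere when $w$ is near $e$. I would first establish this band bound as a separate geometric lemma, and only then average.
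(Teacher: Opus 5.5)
Your overall mechanism matches the paper's. You fix two perturbed vertices, show the third must lie in an $O(1)$-thick slab around the plane $\Pi$ through the opposite edge and the center $w$, and compare that slab's area inside the cap, $O(\lambda)$, with the cap area $\Theta(\lambda^2)$. The gap is in Step 2: your case split does not cover all values of $s$. Write $x = s v_1 + (1-s)q$. The signed distance to $\Pi$ is affine and vanishes on $e$, so $\mathrm{dist}(v_1,\Pi) = \mathrm{dist}(x,\Pi)/s \le 1/s$. The band is therefore $O(1)$ wide only when $s$ is bounded below by a constant. Meanwhile $\mathrm{dist}(x,e)$ can be as large as $s(2\alpha+1)\lambda$. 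So your small-$s$ branch (``the edge passes within $O(1)$ of $w$'') is valid only for $s = O(1/\lambda)$. For $1/\lambda \ll s \ll 1$ neither branch applies. The only bound you have in that range is $O(1/(s\lambda))$, which is not $O(1/\lambda)$. Rescuing it along your lines would require a multiscale decomposition in $s$, coupled with estimates on the edge passing within distance $\asymp s\lambda$ of $w$.

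The missing idea is to let the intersection point choose which vertex to randomize; then no degenerate case arises. The barycentric coordinates of any $x\in\sigma$ sum to $1$, so some $\theta_j(x)\ge 1/3$. Hence the hitting event is contained in $E_1\cup E_2\cup E_3$, where $E_j$ is the event that some $x\in B'(1)\cap\sigma$ has $\theta_j(x)\ge 1/3$.

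On $E_j$, the affine identity above gives $\mathrm{dist}(v_j,\Pi_j)\le 3$. Here $\Pi_j$ is a plane through the opposite edge and $w$ (any such plane if they are collinear), and it depends only on the other two vertices. Conditioning on those two vertices and using independence, $\mathbb{P}(E_j)$ is at most the area of a width-$6$ slab inside a sphere of radius $\alpha\lambda$, divided by the cap area. By Archimedes' zone formula that area is at most $12\pi\alpha\lambda$ for every orientation, which also removes your worry about $w$ near $e$. Summing over $j$ gives $\Gamma/\lambda$.

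This is exactly the paper's device: its vertex $a$ maximizing $d$ has $\theta_a\ge 1/3$. The paper combines this with the simplex-width bound to get slab half-width $(2\alpha+1)\lambda/d\le 3(2\alpha+1)/\gamma$. In the affine-distance form above, the width bound is not needed at all.
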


Since each 2-simplex in $\Delta$ is perturbed independently of the others, the probability that all simplices in $\Delta$ intersect $B'(1)$ is $P_{B'(1), \Delta} \leq \Gamma^k \lambda^{-k}$ by independence. Define  $\mathfrak{C}$  to be a  cover of $B(\alpha \lambda)$ with unit balls such that $|\mathfrak{C}| = \alpha^3 \lambda^3$. By a union bound
$$P_{k, \Delta} \leq \max_{\mathfrak{C} } \sum_{B'(1) \subset \mathfrak{C}} P_{B'(1), \Delta} \leq \alpha^3 \Gamma^k \lambda^{3-k}. $$

  Given that the preimage of any radius-$\lambda$ ball overlaps with $\leq 2 t$ simplices, the event $(k, \Delta)$ depends on $\leq (2 \alpha^3  t)^{k}$ events, and if
\begin{equation}
(2 \alpha^3  t)^{k} P_{k, \Delta} \leq \alpha^3 (2 \alpha^3 \Gamma t)^{k} \lambda^{3-k} \leq 1/4
\end{equation}
then there is a non-zero probability that $(k,\Delta)$ does not occur for any $\Delta$, by the Lov\'asz local lemma.  Thus by setting $k A' = t'$, we ensure that each unit ball overlaps with $\leq t'$ simplices;  this requires choosing $\lambda \geq (4\alpha^3)^{A'/(t'-3A')}( 2 \alpha^3 \Gamma t)^{t'/(t' - 3A')}$.  A numerical bound is provided in Sec.~\ref{sec:explicit-choice-constants}.  This implies that there exists an event where the complex is $(t',\ell)$-embedded in $\mathbb{R}^3$ where  $\ell =(2 \alpha +1)\lambda$.
\end{proof}
Even though the perturbation step only proves the existence of a good perturbation, there exists an algorithm running in expected polynomial time that finds a good perturbation.  This algorithm is simply the generalized Moser-Tardos algorithm (see Ref.~\cite{moser2010constructive,moser2009constructive}), which is generally applicable as long as some version of the Lovasz local lemma is satisfied.

\subsection{Step 2: replacement}

We present two equivalent formulations for studying replacement.
The first replaces each square in the Tanner square complex with a square complex,
  while the second replaces each vertex with a corresponding cochain complex.

The square-based formulation offers clearer intuition from the perspective of embeddings.
In contrast, the vertex-based formulation aligns with the code embedding framework in \Cref{sec:quantum-code-embedding-prelims}.
It is better suited for analyzing code properties,
  such as the number of encoded qubits,
  and for constructing chain maps between codes,
  which will play a central role in the construction of the decoder.

\subsubsection{Replacement in the Tanner square complex framework}

From here onward, we assume that $\ell$ is odd and will choose it to be the smallest odd integer larger than $(2\alpha + 1) \lambda$.

In the replacement step of iteration $\cR_X$ each square  $s \in T_{(i)}(2)$ is replaced by the square complex $U_X$ that is shown in Fig.~\ref{fig:replacement} (in the illustration, we set $\ell = 5$). Similarly, in the dual iteration $\cR_Z$, each square is replaced by the square complex $U_Z$ shown in Fig.~\ref{fig:replacement}.  Vertices of both $U_X$ and $U_Z$ are 3-colored as indicated in the caption.

\begin{figure} \label{fig:replacement}
    \includegraphics[width = 1\textwidth]{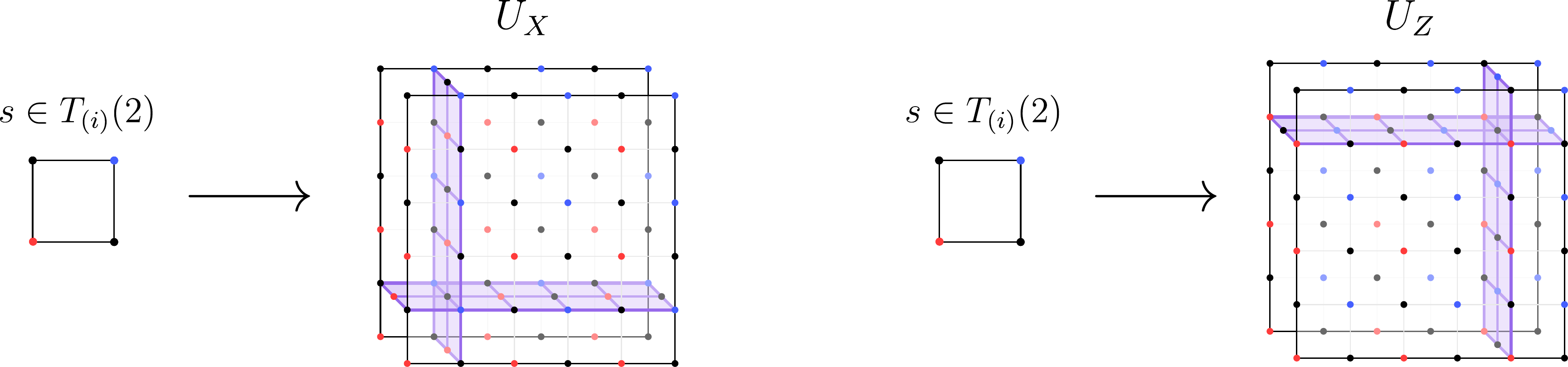}
    \caption{An illustration of the replacement process.  A square $s$ in the Tanner square complex $T_{(i)}$ is replaced with a local structure $U_X$ during the $\cR_X$ iteration, and with $U_Z$ during the $\cR_Z$ iteration.  The coloring convention is as follows: red dots denote $z \in V_{Z,(i)}$, black dots denote $q \in V_{Q,(i)}$, and blue dots denote $x \in V_{X,(i)}$.}
\end{figure}

Both complexes consist of two $\ell \times \ell$ layers along with a perpendicular square subcomplex of thickness 2 (shown in purple).  Furthermore, both complexes come with four 1-dimensional boundary subcomplexes, associated with four edges of the square being replaced. Adjacent boundary complexes meet at a pair of vertices, which form a corner subcomplex associated with corners of the square being replaced.

More formally, for each square $s \in T_{(i)}(2)$, let $U_{X,s}$ be the copy of $U_X$ that will replace $s$ in the $\cR_X$ iteration. The new Tanner square complex $T_{(i+1)}$ is obtained by gluing these copies according to incidence relations in $T_{(i)}$. In the $\cR_X$ iteration,
$$T_{(i+1)} = \left ( \bigsqcup_{s \in T_{(i)}(2)} U_{X,s} \right )/ \sim$$
where $\sim$ identifies the edges and vertices at common boundaries of neighboring structures $U_X$.  Namely, if two squares $s,s'\in T_{(i)}(2)$ are adjacent along an edge $e$, then we identify the boundary subcomplexes of $U_{X,s}$ and $U_{X,s'}$ labeled by $e$. Similarly, we identify all corner subcomplexes labeled by the same vertex $v \in T_{(i)}(0)$. The replacement step for $\cR_Z$ is defined identically, with $U_X$ instead replaced by $U_Z$.
By construction, $T_{(i+1)}$ is again a square complex. Each square of $T_{(i+1)}$ has two diagonally opposing black vertices in $V_{Q,(i+1)}$, one red vertex in $V_{Z,(i+1)}$, and one blue vertex in $V_{X,(i+1)}$.

\begin{remark}
    Note that the subcomplex $U_X$ resembles a piece of a thickness-1 3D surface code slab with Pauli-$X$ membrane operators and membrane-condensing boundary conditions in the front and back of the slab. Similarly, the subcomplex $U_Z$ resembles a piece of a thickness-1 3D surface code slab with Pauli-$Z$ membrane operators and membrane-condensing boundary conditions in the front and back of the slab. This illustrates the ``doubling'' effect of the $\cR_X$ and $\cR_Z$ iterations on the $X$-type and $Z$-type errors, respectively.
\end{remark}

For the new Tanner square complex $T_{(i+1)}$ after replacement, the embedding map $I_{(i+1)}:T_{(i+1)}\to\mathbb{R}^3$ can be obtained from the perturbed embedding $I'_{(i)}:T_{(i)}\to\mathbb{R}^3$. In the $\cR_X$ iteration, define a  ``flat'' realization of $U_X$, which we denote $ U_X'$. $U_X'$ is geometrically a projection of $U_X$ onto $\mathbb{R}^2$ along the doubling direction, and it has the same four boundary subcomplexes and the same four corner subcomplexes as $U_X$. As a result of this flattening, several cells can collapse onto one location.  This is illustrated below:

\begin{equation} \label{fig:replacement-flatten}
    \includegraphics[width = 1\textwidth]{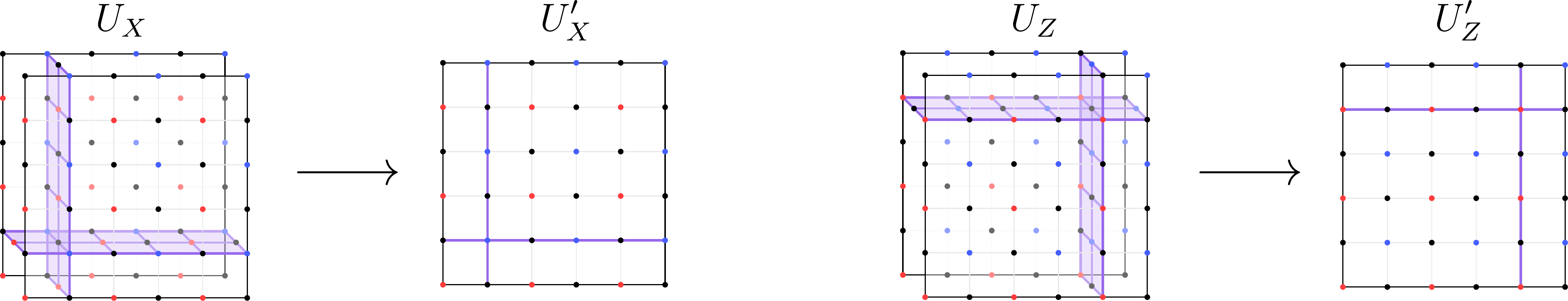}
\end{equation}
For example, the corner subcomplex of $U_{X}'$ is now entirely placed at the same point.

Now, we will describe how to obtain the map $I_{(i+1)}$. Recall that in the perturbation step, we formally added a diagonal to each square $s \in T_{(i)}(2)$, turning it into a pair of adjacent triangular simplices.
Thus, if an embedded square $I'_{(i)}(s)$ is not planar, it nevertheless consists of two embedded triangular simplices, where each simplex is trivially planar.
For each $s \in T_{(i)}(2)$, we add an associated diagonal to $U_{X,s}'$ that turns it into a pair of subdivided triangles. We can now define a piecewise-linear embedding map $$I'_{X,s}: U'_{X,s} \rightarrow \mathbb{R}^3$$ by sending the four corners of $U'_{X,s}$ to the locations of the the four vertices of $s$ under $I'_{(i)}(s)$. The images of the rest of the vertices, edges, and 2-cells in $U'_{X,s}$ are obtained by linear interpolation.  An example of the action of this local auxiliary map is shown below:

\begin{equation} \label{fig:replacement-fold}
    \includegraphics[width = 1\textwidth]{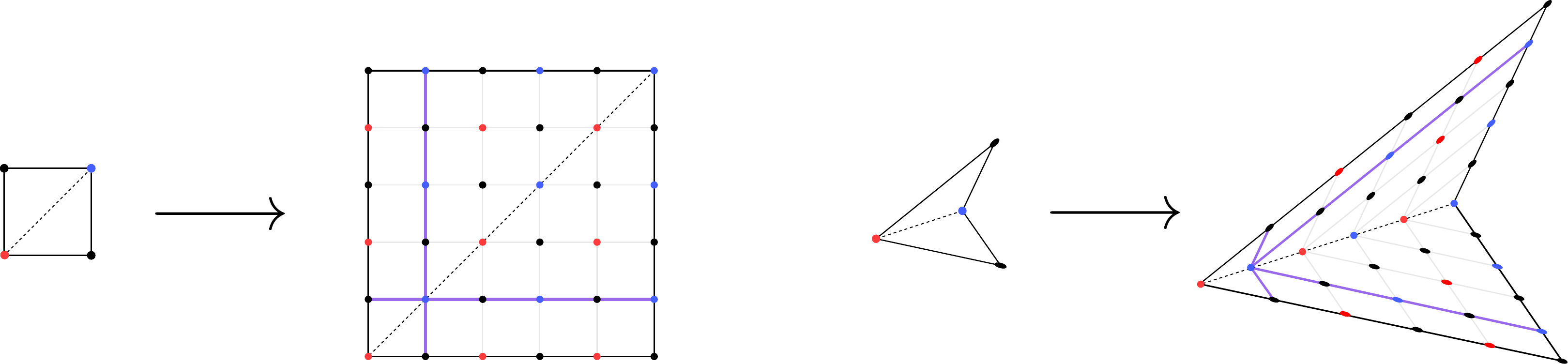}
\end{equation}
The maps $I'_{(i)}(s)$ are compatible on the boundary subcomplexes, since on every common edge and corner they are determined by the same embedded edge or vertex of $T_{(i)}$ under $I'_{(i)}$. Therefore, this defines a unique piecewise-linear map
$$I_{(i+1)}:T_{(i+1)}\to\mathbb{R}^3.$$
Finally, we remove the auxiliary diagonals. In subsec.~\ref{subsec:embedding-replacement}, we show that $I_{(i+1)}$ introduced here is a $(t,1)$-embedding.

\subsubsection{$\cR_X$ in the code embedding framework} \label{sec:embedding-tanner-graph}

We now give an alternative description of the replacement step via the code embedding formalism.
The main implication from this perspective is twofold:
  the number of logical qubits is preserved under the replacement,
\begin{equation}
  H^1(C_{(i)}) \cong H^1(C_{(i+1)})
\end{equation}
and there exists a chain map
\begin{equation}
  \cF_{\cR_X}: C_{(i)} \to C_{(i+1)}
\end{equation}
which will be proved in \Cref{sec:properties-Rx}.
The analogous statement for $\cR_Z$
  will be proved in \Cref{sec:properties-Rz}.

Let us first discuss replacement during the $\cR_X$ iteration. Recall that the vertices of the Tanner graph of the code $C_{(i)}: \ff_2^{V_X} \rightarrow \ff_2^{V_Q} \rightarrow \ff_2^{V_Z}$ can be decomposed as $T_{(i)}(0) = (V_X,V_Q,V_Z)$. We will associate a local cochain complex to each $v \in T_{(i)}(0)$. In the $\cR_X$ step, we define the local complexes $C_x: C_x^0 \to C_x^1 \to C_x^2$,
$C_q: 0 \to C_q^1 \to C_q^2$,
$C_z: 0 \to 0 \to C_z^2$,
which are indexed by $x \in V_X$, $q \in V_Q$, $z \in V_Z$. The code at the next level $C_{(i+1)} = \cR_X(C_{(i)})$ is an embedding complex as defined in Subsec.~\ref{sec:quantum-code-embedding-prelims} with the following structure:
\begin{equation} \label{eq:graded-complex-RX}
\begin{tikzpicture}[baseline]
\matrix(a)[matrix of math nodes, nodes in empty cells, nodes={minimum size=25pt},
row sep=2em, column sep=2em,
text height=1.25ex, text depth=0.25ex]
{&& \blue{C_X^0}  & \blue{C_X^1} & \blue{C_X^2}\\
& {0}  & {C_Q^1}  & {C_Q^2} &\\
\red{0} & \red{0} & \red{C_Z^2} &&\\};
\path[->,blue,font=\scriptsize]
(a-1-3) edge node[above]{$\delta_X^0$} (a-1-4)
(a-1-4) edge node[above]{$\delta_X^1$} (a-1-5);
\path[->,font=\scriptsize]
(a-2-2) edge node[above]{$\delta_Q^0$} (a-2-3)
(a-2-3) edge node[above]{$\delta_Q^1$} (a-2-4);
\path[->,red,font=\scriptsize]
(a-3-1) edge node[above]{$\delta_Z^0$} (a-3-2)
(a-3-2) edge node[above]{$\delta_Z^1$} (a-3-3);
\path[->,font=\scriptsize]
(a-1-3) edge node[right]{$g_{XQ}^0$} (a-2-3)
(a-1-4) edge node[right]{$g_{XQ}^1$} (a-2-4)
(a-2-2) edge node[right]{$g_{QZ}^0$} (a-3-2)
(a-2-3) edge node[right]{$g_{QZ}^1$} (a-3-3);
\end{tikzpicture}
\end{equation}
which differs from the discussion in Subsec.~\ref{sec:quantum-code-embedding-prelims} in that some vector spaces are zero and $g^{\bullet}_{XZ} = 0$. The full code $C_{(i+1)}: C^0_{(i+1)} \xrightarrow{\delta^0} C^1_{(i+1)} \xrightarrow{\delta^1} C^2_{(i+1)}$ is the total complex of the diagram above, with
\begin{equation}
    C^\bullet_{(i+1)} = C_X^\bullet \oplus C_Q^\bullet \oplus C_Z^\bullet
\end{equation}
Its coboundary maps are direct sums of local boundary maps and connecting maps, which we will  write as $\delta = \delta_{\mrm{loc}} + g$. The local part $\delta_{\mrm{loc}}$ consists of maps $\delta_x$, $\delta_q$, and $\delta_z$ acting within individual local complexes as in Eqn.~\eqref{eq:connecting-maps}. The connecting maps $g$ are defined between local complexes as in Eqn.~\eqref{eq:connecting-maps}.

The rest of this discussion formally defines the local complexes $C_x$, $C_q$, and $C_z$ as well as gluing maps in detail. On a first pass, the reader can skip these formal discussions and refer to Eqn.~\eqref{fig:replacement-full}.

For $x \in V_X$, the local complex $C_x: C_x^0 \rightarrow C_x^1 \rightarrow C_x^2$  is determined by the local 2D structure of the square complex $T_{(i)}$ in the vicinity of $x$.
Let $X_x$ be the 2-cell subcomplex of $T_{(i)}$ formed by 2-cells that contain $x$:
\begin{equation}
  X_x = \overline{\{f \in T_{(i)}(2): \  x \in \overline{f} \}}
\end{equation}
and let  $\partial X_x \subset X_x$ be the 1-cell complex that does not contain $x$:
\begin{equation}
  \partial X_x = \overline{\{ e \in X_x(1): \ x \not\in \overline{e}\}}
\end{equation}
where the $\overline {\{\ldots\}}$ notation stands for closure. In particular, if $X_x$ contains a square $f$, it also contains its boundary edges and vertices, and if $\partial X_x$ contains an edge $e$, it also contains its boundary vertices. Both $X_x$ and $\partial X_x$ are illustrated in \eqref{fig:replacement-Cx}.

We now construct a new local complex $Y_x$ from $X_x$ in two steps. First, subdivide each square in $X_x$ into $(\ell-1)/2 \times (\ell-1)/2$ squares (recall that $\ell$ is odd) to obtain square complex $X_x'$. Similarly, we subdivide each edge in $ \partial X_x$ into $(\ell-1)/2$ edges to obtain $\partial X'_x$.
Second, define $Y_x$ by gluing two copies of $X_x'$ along the cylinder on their common boundary:
$$Y_x = \left ( X'_x \times \{0\} \right )\sqcup \left ( X'_x \times \{ 1 \} \right) \sqcup \left ( \partial X'_x \times [0,1] \right )/ \sim,$$
where $\sim$ identifies the two copies of $\partial X'_x$ in $ \left ( X'_x \times \{0\} \right )\sqcup \left ( X'_x \times \{ 1 \} \right) $ with the two copies of $\partial X'_x$ in $\partial X'_x \times [0,1]$.
We now define $C_x$ to be the cellular cochain complex of $Y_x$, i.e. $C_x^\bullet := C^\bullet(Y_x)$. We illustrate an example below in \eqref{fig:replacement-Cx}, where we chose $\ell = 5$.

\begin{equation} \label{fig:replacement-Cx}
    \includegraphics[width = 0.9\textwidth]{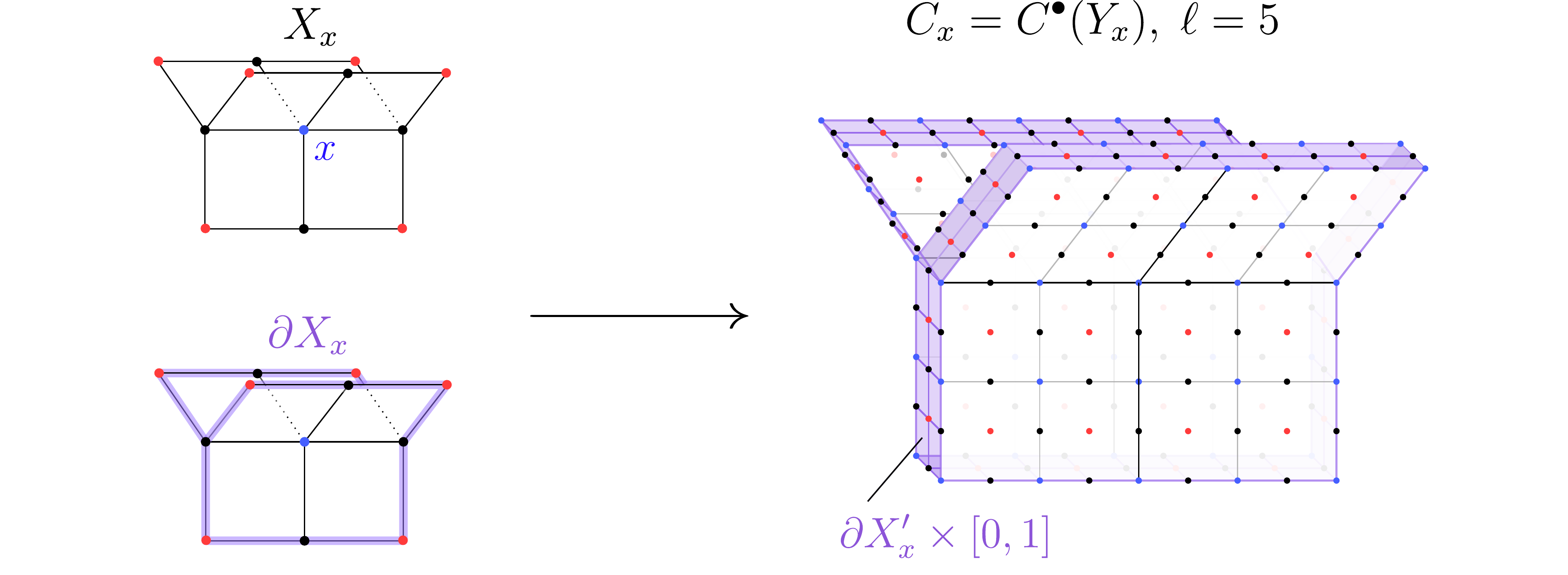}
\end{equation}

Similarly, $C_q: 0 \rightarrow C_q^1 \rightarrow C_q^2$ is induced by the local 1D structure of $T_{(i)}$ in the vicinity of $q$.  Let $X_q$ be the 1-cell subcomplex of $T_{(i)}$ formed by the 1-cells that connect the vertex $q \in V_Q$ with vertices in $V_Z$:

\begin{equation}
  X_q = \overline{\{e \in T_{(i)}(1): q \in \overline{e}, \ \exists z \in V_Z \text{ s.t. } z \in \overline{e} \}}.
\end{equation}
Let $\partial X_q$ be the 0-cell subcomplex of $X_q$ that does not contain $q$:
\begin{equation}
  \partial X_q = \{v \in X_q(0): q \neq v\}.
\end{equation}
Both $X_q$ and $\partial X_q$ are illustrated in \eqref{fig:replacement-Cq}.  We then similarly obtain $Y_q$ in two steps: first, we subdivide each edge of $X_q$ into $(\ell-1)/2$ edges, obtaining $X'_q$. Second, we set
$$Y_q = \left ( X'_q \times \{ 0 \} \right) \sqcup  \left (X'_q  \times \{1\} \right )\sqcup \left ( \partial X'_q \times [0,1] \right )/ \sim,$$
where $\sim$ identifies the two copies of $\partial X'_q$ in $\left ( X'_q \times \{ 0 \} \right) \sqcup  \left (X'_q  \times \{1\} \right )$ with the two copies of $\partial X'_q$ in $\partial X'_q \times [0,1]$.
Finally, we define $C_q$ to be the shifted cellular cochain complex of $Y_q$, i.e. $C_q^\bullet := C^\bullet(Y_q)[-1]$,
so that $C_q^m = C^{m-1}(Y_q)$.

\begin{equation} \label{fig:replacement-Cq}
    \includegraphics[width = 0.55\textwidth]{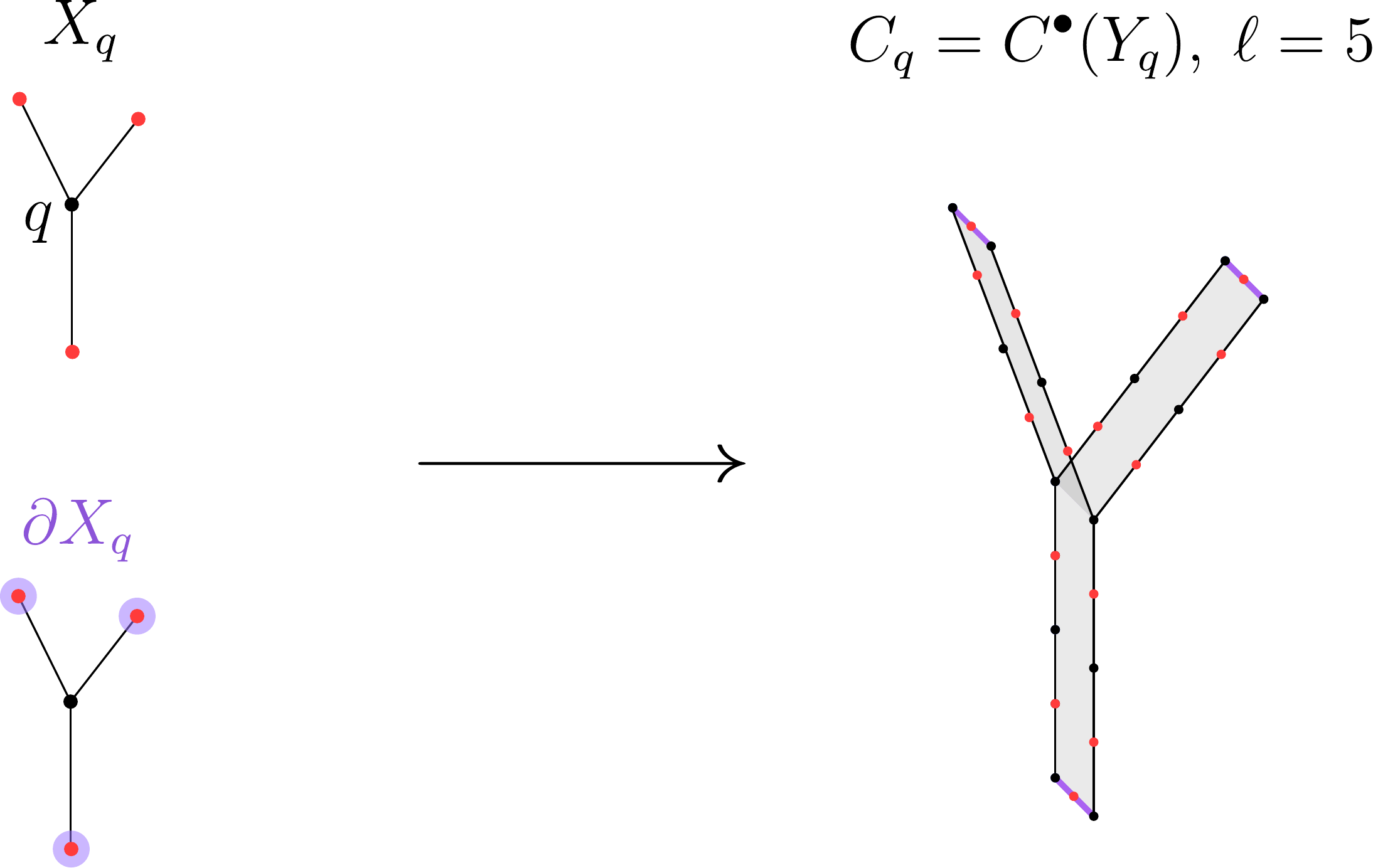}
\end{equation}

Finally, $C_z: 0 \rightarrow 0 \rightarrow C_z^2$ is defined to be two copies of the point $z$, which form a basis of the vector space $C_z^2$. More formally, we define a 0-dimensional square complex $Y_z$ consisting of two points and set $C_z^\bullet = C^\bullet(Y_z)[-2]$.

To fully define $C_{(i+1)}$, we also need to define the connecting maps $g_{XQ}$ and $g_{QZ}$.  Below, we define $g_{xq}$ and $g_{qz}$; these can then be used to define $g_{XQ}$ and $g_{QZ}$ via Eqn.~\eqref{eq:gluing-local-global}.

The map $g_{xq}^m: C^m_x \rightarrow C^{m+1}_q$ for $m = 0,1$ is only nonzero when $q \in V_Q$ and $x \in V_X$ are adjacent in $T_{(i)}$.  Consider an adjacent pair $(x,q)$, and let $X'_x$ and $X'_q$ be the associated cell complexes as defined above.  Note that $Y_x$ contains a one-dimensional subcomplex that is isomorphic to $Y_q$.
We therefore have the map
\begin{equation}
  Y_q \hookrightarrow Y_x,
\end{equation}
which induces a cochain map corresponding to restriction:
\begin{equation}
  C(Y_x) \rightarrow C(Y_q).
\end{equation}
Combined with the definitions $C_x \cong C(Y_x)$ and $C_q \cong C(Y_q)[-1]$, we obtain
\begin{equation}
  g_{xq}: C_x \cong C(Y_x) \rightarrow C(Y_q) \cong C_q[1].
\end{equation}

Next, we define $g_{qz}^m: C^m_q \rightarrow C^{m+1}_z$. We will only need this map for $m = 1$. It only acts nontrivially for adjacent $q \in V_Q$ and $z \in V_Z$ in $T_{(i)}$. Consider such an adjacent pair $(q,z)$. Note that $Y_q$ includes a 0-dimensional subcomplex that is isomorphic to $Y_z$.
The construction therefore gives a map
\begin{equation}
  Y_z \hookrightarrow Y_q,
\end{equation}
which induces a cochain map corresponding to restriction:
\begin{equation}
  C(Y_q) \rightarrow C(Y_z).
\end{equation}
Combined with the definitions $C_q \cong C(Y_q)[-1]$ and $C_z \cong C(Y_z)[-2]$, we obtain
\begin{equation}
  g_{qz}: C_q[1] \cong C(Y_q) \rightarrow C(Y_z) \cong C_z[2],
\end{equation}

The structure of the full complex $C_{(i+1)} = \cR_X(C_{(i)})$ obtained this way is shown below for one example of a local neighborhood. The specification of all local complexes and gluing maps is identical to the replacement procedure defined on the Tanner square complex in the previous subsubsection.

\begin{equation} \label{fig:replacement-full}
    \includegraphics[width = 0.6\textwidth]{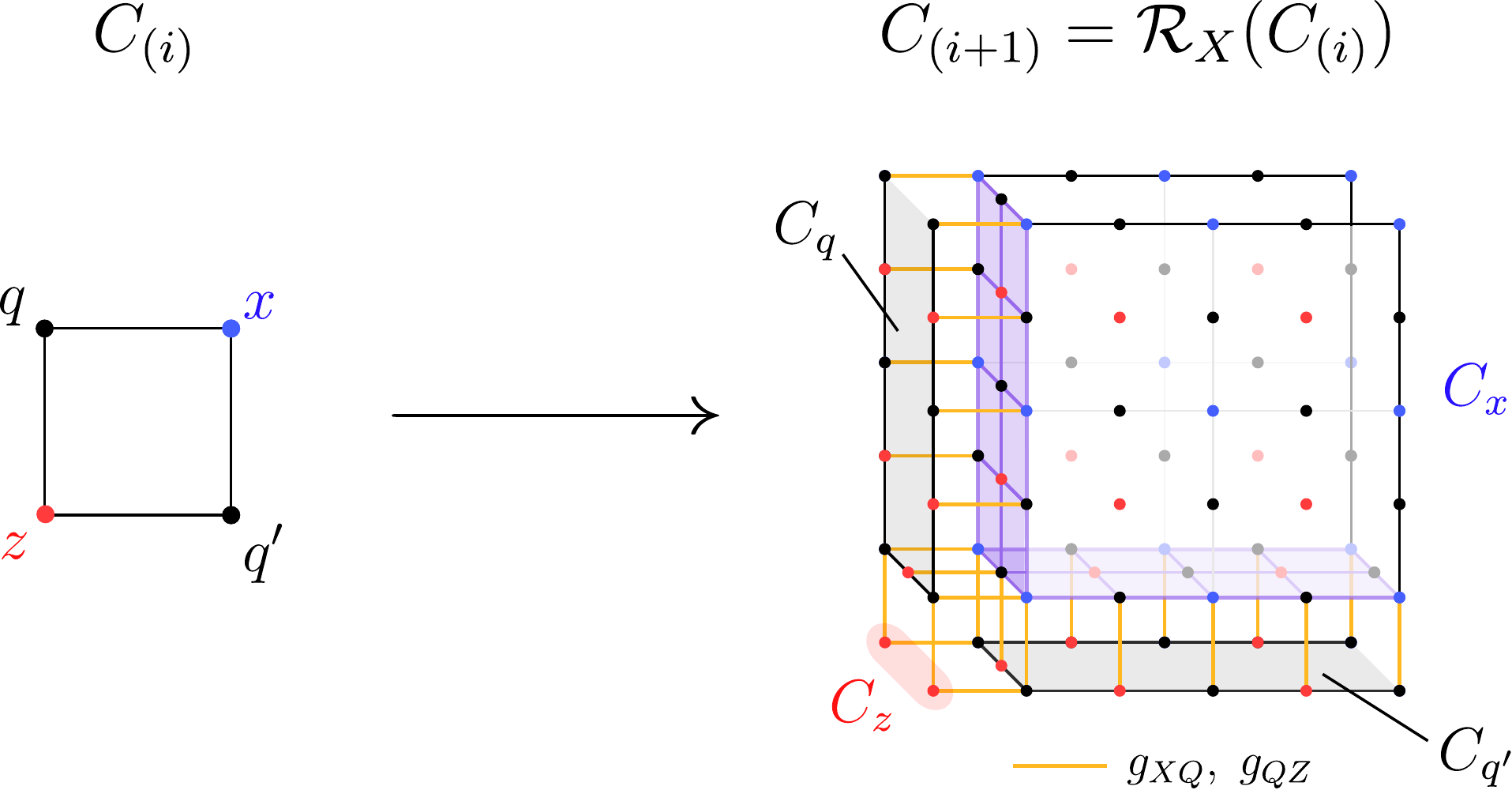}
\end{equation}

\subsubsection{The induced code $(H(C_{(i+1)},\delta_\mrm{loc}), H(g))$}

Given the construction of $C_{(i+1)}$ from local complexes,
  we apply the code embedding formalism from \Cref{subsec:spectral-sequence} to analyze its structure.
We begin by describing the induced complex $H(C_{(i+1)},\delta_{\mathrm{loc}})$
  and its induced coboundary maps $H(g)$,
  which will serve as the main tool for studying the properties of $C_{(i+1)}$
  in the next subsubsection.  The main lemma we will prove in this subsubsection is the following:

\begin{lemma} \label{lemma:HC-structure}
  The cochain complex $(H(C_{(i+1)}, \delta_\mrm{loc}), H(g))$ has the following structure:
  \begin{equation}\label{eq:RXcochainstructure}
    \begin{tikzcd}[column sep=3.5em, row sep=1em]
      {}
      &
      H^0(C_X,\delta_X)
        \arrow[r, "{H(g^0_{XQ})}"]
      &
      H^1(C_Q,\delta_Q)
        \arrow[r, "{H(g^1_{QZ})}"]
      &
      H^2(C_Z,\delta_Z) \oplus H^2(C_X,\delta_X) \oplus H^2(C_Q,\delta_Q)
      \\
      \hspace{-2em}\cong \hspace{-6em}
      &
      \ff_2^{V_{X,i}}
        \arrow[r, "{\delta^0_{(i)}}"]
      &
      \ff_2^{V_{Q,i}}
        \arrow[r, "{\Delta \circ \delta^1_{(i)} \;\oplus\; 0 \;\oplus\; 0}"]
      &
      (\ff_2^2)^{V_{Z,i}} \oplus H^2(C_X,\delta_X) \oplus H^2(C_Q,\delta_Q)
    \end{tikzcd}
  \end{equation}
  Here $\delta^0_{(i)}$ and $\delta^1_{(i)}$
    are the cochain maps in $C_{(i)}: \ff_2^{V_{X,i}} \xrightarrow{\delta^0_{(i)}} \ff_2^{V_{Q,i}} \xrightarrow{\delta^1_{(i)}} \ff_2^{V_{Z,i}}$
    and $\Delta : \ff_2^{V_{Z,i}} \to (\ff_2^2)^{V_{Z,i}} \cong
    \ff_2^{V_{Z,i}} \oplus \ff_2^{V_{Z,i}}$ is the diagonal map $b \mapsto (b,b)$.

  In the map $\Delta \circ \delta^1_{(i)} \;\oplus\; 0 \;\oplus\; 0$,
    the only nonzero component is $\Delta \circ \delta^1_{(i)}$ to $(\ff_2^2)^{V_{Z,i}}$.
  The components to $H^2(C_X,\delta_X)$ and $H^2(C_Q,\delta_Q)$ are zero maps.
\end{lemma}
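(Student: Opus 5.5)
The plan is to compute the first page $E_1 = H(C_{(i+1)},\delta_{\mrm{loc}})$ vertex by vertex, using that $\delta_{\mrm{loc}}$ is block diagonal over $x\in V_X$, $q\in V_Q$, $z\in V_Z$. Each local complex is the shifted cellular cochain complex of an explicit space, so its cohomology is the topological cohomology of $Y_x$, $Y_q$, $Y_z$.

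\textbf{Step 1 (the $X$ row).} For $Y_x$, note that $X_x$ is the star of $x$ in $T_{(i)}$, hence contractible (a cone on the link), and $Y_x$ is two copies of $X_x'$ glued along $\partial X_x'\times[0,1]$. I expect $Y_x$ to be connected with $H^1(Y_x)=0$, so $H^0(C_X)\cong \ff_2^{V_X}$ and $H^1(C_X)=0$. To check $H^1$, I would use Mayer--Vietoris with the two thickened copies of $X'_x$ (each contractible) meeting in a collar of $\partial X'_x$; this gives $H^1(Y_x)\cong \widetilde H^0(\partial X_x)$, which vanishes because the link of $x$ is connected. Alternatively, $Y_x$ is the suspension-like double of a cone, so $H^1(Y_x)\cong \widetilde H^0(\partial X_x)$ directly. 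I will justify connectivity of $\partial X_x$ from the local structure of $T_{(i)}$; this is the main obstacle, since it relies on the precise Tanner square complex, and I would check it against the explicit picture of $U_X$ gluings. The group $H^2(C_X)$ is left unspecified, as in the statement.

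\textbf{Step 2 (the $Q$ and $Z$ rows).} $X_q$ is a star graph with center $q$ and leaves $\partial X_q$ (the $Z$-neighbors). $Y_q$ is two copies of this star joined by an edge at each leaf, so it is connected. Hence $H^0(Y_q)=\ff_2$, giving $H^1(C_Q)\cong\ff_2^{V_Q}$, while $H^1(Y_q)$ is free of rank $|\partial X_q|-1$ and lands in $H^2(C_Q)$. Finally, $H^2(C_Z)=C_Z^2=(\ff_2^2)^{V_Z}$. The remaining entries of the total complex vanish or are exactly those listed in the statement.

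\textbf{Step 3 (the induced differentials).} By the spectral-sequence discussion, $D_1=d_{1,0}+d_{1,1}+\dots$ with $d_{1,i}:E_1^{p,q}\to E_1^{p-i,q+1+i}$. Degree bookkeeping shows that the only possible nonzero components out of $H^0(C_X)$ and $H^1(C_Q)$ are $H(g_{XQ}^0)$ and $H(g^1_{QZ})$; a length-two zigzag from $H^0(C_X)$ to $H^2(C_Z)$ would have to factor through $g_{XZ}=0$. The component $H^1(C_Q)\to H^2(C_X)$ is zero since there are no maps from $Q$ to $X$. The component into $H^2(C_Q)$ is the horizontal $d_{1,0}$, which vanishes on page 1.

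For $H(g^0_{XQ})$: a class in $H^0(Y_x)$ is the constant function, and restriction to $Y_q\subset Y_x$ gives the constant function on $Y_q$ exactly when $q\sim x$. This recovers the incidence matrix $\delta^0_{(i)}$. For $H(g^1_{QZ})$: the constant on $Y_q$ restricts to both points of $Y_z$ for each $z\sim q$, which gives $(1,1)$. This is $\Delta\circ\delta^1_{(i)}$.
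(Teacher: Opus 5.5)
Your proposal follows the paper's proof. You compute the local cohomology of $Y_x$, $Y_q$, $Y_z$ via the suspension (equivalently Mayer--Vietoris) argument, relying on connectivity of $\partial X_x$, which the paper likewise only asserts, and then identify $H(g^0_{XQ})$ and $H(g^1_{QZ})$ by restricting constant cochains. Your degree bookkeeping for why the other components of $D_1$ vanish is a useful addition the paper leaves implicit, with two small fixes. In the paper's convention $d_{1,0}$ is the vertical map $H(g)$ itself, so the component $E_1^{0,1}\to E_1^{1,1}$ vanishes because $D_1$ has no component of bidegree $(1,0)$. Your remark about a zigzag from $H^0(C_X)$ to $H^2(C_Z)$ is moot, since these sit in total degrees $0$ and $2$.
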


The local complexes in $C_{(i+1)}$ (see Eqn.~\ref{eq:graded-complex-RX}) are labeled by the basis elements of vector spaces in $C_{(i)}:   \ff_2^{V_X} \rightarrow \ff_2^{V_Q}  \rightarrow \ff_2^{V_Z}$. Readers familiar with the quantum code embedding formalism might already see that the code ``embedded'' in $C_{(i+1)} = \cR_X(C_{(i)})$  must be related to the code in the previous iteration $C_{(i)}$. The ``embedded code'' is $(H(C_{(i+1)},\delta_\mrm{loc}), H(g))$; the lemma above implies that it contains a subcomplex of the form
\begin{equation} \label{eq:subcomplex-Ci}
  \ff_2^{V_X} \rightarrow
 \ff_2^{V_Q}  \rightarrow
  \left (\ff_2^2 \right)^{V_Z}
\end{equation}
This subcomplex almost looks like $C_{(i)}$ except that the boundary of $X$-type operators (1-cochains) is doubled, which is reflected in the last vector space $\left(\ff_2^2 \right)^{V_Z}$.  In the next subsubsection, we will argue that there is a chain map from $C_{(i)}$ to $H(C_{i+1}, \delta_{\mrm{loc}})$, and this will help us prove an isomorphism between the first cohomology groups of $C_{(i)}$ and $C_{(i+1)}$


\begin{proof}[Proof of Lemma~\ref{lemma:HC-structure}]
We start by determining the vector spaces and the maps in complex $H(C_{(i+1)},\delta_\mrm{loc})$. For this, we will first need the cohomology groups of $C_x$, $C_q$, and $C_z$ with respect to their local coboundary maps.
First, consider $C_x$. Note that it is the cellular cochain complex of $Y_x$. Further, observe that
$X'_x$ is topologically the cone of $\partial X'_x$ and $Y_x$ is topologically the suspension of $\partial X'_x$.
For reduced cohomology, suspension shifts the degree by one:
\begin{equation}
  \wt H^m(Y_x) \cong \wt H^{m-1}(\partial X'_x).
\end{equation}
In addition, we trivially have $\wt H^{m-1}(\partial X'_x) \cong \wt H^{m-1}(\partial X_x)$. For $m > 0$, the reduced cohomology and the usual cohomology agree,
and $H^0(Y_x) \cong \ff_2$ because $Y_x$ is connected. Therefore,
\begin{equation}
 H^0(Y_x) \cong \ff_2, \qquad
  H^1(Y_x) \cong \wt H^0(\partial X_x) = 0,
\end{equation}
where the second identity holds because $\partial X_x$ is connected.
This translates into the cohomological properties of $C_x$:
\begin{equation}
  H^0(C_x, \delta_x) \cong \ff_2, \qquad H^1(C_x, \delta_x) \cong 0.
\end{equation}
In addition, $H^2(C_x)$ is nontrivial, but we will not need to evaluate it.

Next, we consider $C_q$. Notice that
$X'_q$ is topologically the cone of $\partial X'_q$,
and $Y_q$ is topologically the suspension of $\partial X'_q$.
Thus, we have the same relation between the reduced cohomology of $Y_q$ and $\partial X'_q$ as we had before, i.e. $\wt H^m(Y_q) \cong \wt H^{m-1}(\partial X'_q)$.
$Y_q$ is still connected,
so $H^0(Y_q) \cong \ff_2$, but $\partial X_q'$ is not connected so $ \wt H^1(Y_q) \cong \wt H^0(\partial X'_q)$ will now be nontrivial.  However, $H^2(C_q) \cong H^1(Y_q)$, and we will not need to evaluate this for our purposes. Including the shift by one in the definition of $C_q$, we have:
\begin{equation}
  H^0(C_q, \delta_q) \cong 0, \qquad H^1(C_q, \delta_q) \cong \ff_2.
\end{equation}

Finally, the homological properties of $C_z$ are
\begin{equation}
  H^0(C_z, \delta_z) \cong 0, \qquad H^1(C_z, \delta_z) \cong 0, \qquad H^2(C_z, \delta_z) \cong \ff_2^2,
\end{equation}

and the complex $H(C_{(i+1)}, \delta_\mrm{loc})$ has the following structure:
\begin{equation}
\begin{tikzpicture}[baseline]

\def\longnodefont{\fontsize{10}{9.5}\selectfont} \def\arrowfont{\fontsize{7}{9.5}\selectfont}

\matrix (a) [
  matrix of math nodes,
  nodes in empty cells,
  row sep=2em,
  column sep=3.2em,
  nodes={minimum size=25pt, align=center, text width=2cm, font=\longnodefont}
] {
&& |[text width=2.6cm]| \blue{H^0(C_X,\delta_X) \cong \ff_2^{V_X}}
   & \blue{0}
   & \blue{H^2(C_X,\delta_X)} \\
& { 0}
  & |[text width=2.6cm]| {H^1(C_Q,\delta_Q) \cong \ff_2^{V_Q}}
  & {H^2(C_Q,\delta_Q)} & \\
\red{0}
  & \red{0}
  & |[text width=2.8cm]| \red{H^2(C_Z,\delta_Z) \cong (\ff_2^2)^{V_Z}}
  && \\
};

\path[->,font=\arrowfont]
(a-1-3.south) edge node[right]{$H(g_{XQ}^0)$} (a-2-3.north)
(a-1-4.south) edge node[right]{$H(g_{XQ}^1)$} (a-2-4.north)
(a-2-2.south) edge node[left]{$H(g_{QZ}^0)$} (a-3-2.north)
(a-2-3.south) edge node[left]{$H(g_{QZ}^1)$} (a-3-3.north);

\path[->,font=\scriptsize]
([xshift=-4pt]a-1-3.south) edge[dashed] ([xshift=4pt]a-3-2.north)
([xshift=-4pt]a-1-4.south) edge[dashed] ([xshift=4pt]a-3-3.north);
\end{tikzpicture}
\end{equation}

A direct verification shows that
$$H(g_{xq}^0): H^0(C_x, \delta_x) \to H^1(C_q,\delta_q)$$  maps $1 \mapsto 1$
for adjacent $x$ and $q$. To see this, notice that $H^0(C_x, \delta_x) \cong \ff_2$ is generated by the all-1 0-cochain on $C_x^0$, and $H^1(C_q, \delta_q) \cong \ff_2$ is generated by the all-1 1-cochain on $C_q^1$. For adjacent $x $ and $q$, $g_{xq}^0$ restricts $C_x$ to a subcomplex isomorphic to $C_q$, therefore mapping one generator to the other. Thus, the map $H(g_{xq}^0)$ is equivalent to the adjacency map between $V_X$ and $V_Q$ in $T_{(i)}$, i.e. the coboundary map $\delta^0_{(i)}$ of $C_{(i)}$.

Similarly, a direct verification shows that
$$H(g_{qz}^1): H^1(C_q, \delta_q) \to H^2(C_z, \delta_z)$$ maps $1 \mapsto (1,1)$
for adjacent $q$ and $z$. Consider again the all-1 1-cochain in $C_q$ that is a generator of $H^1(C_q, \delta_q) \cong \ff_2$. Restricting it to the subcomplex isomorphic to $C_z$ (again, with a degree shift by one), gives an all-1 2-cochain on $C_z^2$ when $q$ and $z$ are adjacent. This 2-cochain corresponds to the diagonal class $(1,1) \in H^2(C_z, \delta_z) \cong \ff_2^2$. Thus, the map  $H(g_{qz}^1)$ is the composition of the adjacency map between $V_Q$ and $V_Z$ in $T_{(i)}$ (i.e. the coboundary map $\delta^1_{(i)}$ of $C_{(i)}$) with the diagonal map $1 \mapsto (1,1)$.

Thus, the cochain complex $H(C_{(i+1)}, \delta_\mrm{loc})$ can be summarized by Eqn.~\ref{eq:RXcochainstructure}.

\end{proof}

\begin{remark}
  We see that the subcomplex in eq.~\eqref{eq:subcomplex-Ci} and the full complex $H(C_{(i+1)},\delta_\mrm{loc})$ agree up to the extra vector spaces $H^2(C_X, \delta_X)$ and $H^2(C_Q, \delta_Q)$. These capture relations between the extra   $Z$-type stabilizers added in
  $C_{(i+1)} = \cR_X(C_{(i)})$ in comparison to $C_{(i)}$.

  Additionally, the fact that we end up with two copies of $\ff_2^{V_Z}$ means that applying  $\cR_X$ results in doubling of the syndrome of $X$-type errors.
\end{remark}

\subsubsection{Properties of the code $C_{(i+1)} = \cR_X (C_{(i)})$}
\label{sec:properties-Rx}

We now analyze the properties of the code $C_{(i+1)}$ obtained from the replacement process.
\begin{lemma} \label{lemma:chain-map}
  For the replacement $C_{(i+1)} = \cR_X(C_{(i)})$, there exists a chain map
  \begin{equation}
    \cF_{\cR_X,i}: C_{(i)} \to C_{(i+1)}.
  \end{equation}

  Furthermore,
    $\cF_{\cR_X,i}$ induces an isomorphism on the first cohomology groups
  \begin{equation}
    H^1(C_{(i)}) \cong H^1(C_{(i+1)}).
  \end{equation}
  In particular, the replacement preserves the number of logical qubits.
\end{lemma}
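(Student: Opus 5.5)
The plan is to write down $\cF_{\cR_X,i}$ explicitly at the chain level. Then I would show it is the composite of the inclusion $C_{(i)} \to C^g \hookrightarrow H(C_{(i+1)},\delta_\mrm{loc})$ with the homological-perturbation equivalence $H(C_{(i+1)},\delta_\mrm{loc}) \xrightarrow{\simeq} C_{(i+1)}$. The isomorphism on $H^1$ then follows from the earlier Lemma giving $H^1(C^g)\cong H^1(C)$.

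\emph{Step 1 (the chain map).} Define $\cF=\cF_{\cR_X,i}$ on basis elements as follows:
\begin{itemize}
\item $x\in V_X$ goes to the all-ones $0$-cochain $\mathbf 1_{Y_x}\in C_x^0$;
\item $q\in V_Q$ goes to the all-ones element $\mathbf 1_{Y_q}\in C^0(Y_q)=C_q^1$;
\item $z\in V_Z$ goes to the all-ones element of $C^0(Y_z)=C_z^2$, i.e.\ the sum of the two points.
\end{itemize}
To check $\delta\circ\cF=\cF\circ\delta_{(i)}$, write $\delta=\delta_\mrm{loc}+g$.

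First, $\delta_\mrm{loc}$ kills every all-ones $0$-cochain of a cell complex, so the local part vanishes.

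Second, $g_{xq}$ is restriction along $Y_q\hookrightarrow Y_x$. It therefore sends $\mathbf 1_{Y_x}$ to $\mathbf 1_{Y_q}$ for each $q$ adjacent to $x$ in $T_{(i)}$, which is exactly $\cF(\delta^0_{(i)}x)$.

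Third, $g_{qz}$ restricts $\mathbf 1_{Y_q}$ to the two points of $Y_z$. This gives the all-ones element on $Y_z$ for each $z$ adjacent to $q$, which is $\cF(\delta^1_{(i)}q)$.

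Since $g_{XZ}=0$, there are no further components to check. This verification is routine and is essentially the computation already carried out in the proof of Lemma~\ref{lemma:HC-structure}.

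\emph{Step 2 (cohomology of the embedded code).} By Lemma~\ref{lemma:HC-structure}, the column
\[
C^g:\ \ff_2^{V_X}\xrightarrow{\delta^0_{(i)}}\ff_2^{V_Q}\xrightarrow{\Delta\circ\delta^1_{(i)}}(\ff_2^2)^{V_Z}
\]
receives a chain map from $C_{(i)}$ given by $(\id,\id,\Delta)$. The diagonal $\Delta$ is injective, so $\ker(\Delta\circ\delta^1_{(i)})=\ker\delta^1_{(i)}$. Hence $(\id,\id,\Delta)$ induces an isomorphism $H^1(C_{(i)})\cong H^1(C^g)$.

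The local computations in the proof of Lemma~\ref{lemma:HC-structure} give $H^1(C_x,\delta_x)=0$ and $H^1(C_z,\delta_z)=0$. These are exactly the hypotheses of the earlier Lemma, so $C^g\hookrightarrow H(C_{(i+1)},\delta_\mrm{loc})$ induces $H^1(C^g)\cong H^1(C_{(i+1)})$, where the identification goes through the homological perturbation map $i'$.

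\emph{Step 3 (main obstacle: matching $\cF$ with the abstract equivalence).} It remains to show that the map on $H^1$ induced by the explicit $\cF$ coincides with the composite isomorphism from Step 2. I would choose the contraction data $(i,p,h)$ on each local complex so that $i$ sends each generator to its all-ones representative. I would also impose the standard side conditions $h\,i=0$, $p\,h=0$ and $h^2=0$, which can always be arranged over a field.

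With these choices, $i'=(1-h\,g)^{-1}i=i+h\,g\,i+\cdots$. The term $g\,i(\cdot)$ is an all-ones cochain on some $Y_q$ or $Y_z$, so it lies in the image of $i$. Then $h\,g\,i=h\,i\,(\cdots)=0$, all correction terms vanish, and $i'$ restricted to $C^g$ equals the explicit $\cF$.

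If arranging these side conditions turns out to be delicate, I would fall back on a direct argument. Injectivity: a cocycle $\cF(c)$ that is a coboundary in $C_{(i+1)}$ would, after applying $p'$, give a coboundary in $C^g$, hence $c\in B^1(C_{(i)})$. Surjectivity: every class in $H^1(C_{(i+1)})$ is $i'$ of a class supported in $E_1^{0,1}$, whose representative is a sum of the cochains $\mathbf 1_{Y_q}$.
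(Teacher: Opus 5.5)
Your proof is correct. Its backbone matches the paper's: factor through the induced complex $H(C_{(i+1)},\delta_{\mrm{loc}})$, get the $H^1$ isomorphism from injectivity of $\Delta$, and transport it to $C_{(i+1)}$ by the perturbation-lemma equivalence. The paper computes $Z^1$ and $B^1$ of the whole induced complex directly rather than invoking the $C^g$ lemma. Here the two complexes agree in degrees $0$ and $1$, because $H^0(C_Q)=H^0(C_Z)=H^1(C_X)=H^1(C_Z)=0$.

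The genuine difference is what $\cF_{\cR_X,i}$ is. The paper simply \emph{defines} it as the composite $C_{(i)}\hookrightarrow H(C_{(i+1)},\delta_{\mrm{loc}})\xrightarrow{\simeq}C_{(i+1)}$ and never says what it does on cochains. You instead write down the all-ones map, check the chain-map identity by hand, and show it coincides with that composite. This is worth having, because the decoder section uses $\cF$ concretely. For instance, $\cF_{i-1}^{-1}(\sigma_2)$ and $|\sigma'|=\tfrac12|\sigma_2|$ presuppose that $z$ is sent to the sum of the two points of $Y_z$, and the abstract definition leaves this implicit.

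Your Step~3 is also easier than you expect, and needs no side conditions. Since $C_Q^0=0$ and $C_Z^1=0$, the local homotopy $h$ vanishes on $C_Q^1$ and $C_Z^2$ for degree reasons. There are also no local coboundaries in $C_x^0$, $C_q^1$, $C_z^2$, so the representatives chosen by $i$ must be the all-ones cochains. Hence $i'=i$ on the column $E_1^{0,\bullet}$ for the bigraded contraction data, whatever it is.

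The fallback argument is therefore unnecessary. As written it is not really independent anyway: it relies on the same identification of $\cF$ with $i'$, both through $p'\circ\cF$ and through the form of $i'$ on $E_1^{0,1}$.
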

\begin{proof}
  We prove the lemma by factoring the desired map through the induced complex
    $H(C_{(i+1)},\delta_{\mathrm{loc}})$.
  Specifically, we construct $\cF_{\cR_X,i}$ as a composition
  \begin{equation}
    C_{(i)}
    \hookrightarrow
    H(C_{(i+1)},\delta_{\mathrm{loc}})
    \xrightarrow{\simeq}
    C_{(i+1)}.
  \end{equation}
  We will show that both maps induce isomorphisms on first cohomology,
    which implies that $\cF_{\cR_X,i}$ also induces an isomorphism on first cohomology.

  We first define the chain map $C_{(i)} \hookrightarrow H(C_{(i+1)},\delta_\mrm{loc})$.
    The induced complex $H(C_{(i+1)},\delta_{\mrm{loc}})$ has the structure shown in Lemma~\ref{lemma:HC-structure}, and this leads to the following maps:
  \begin{equation}
    \begin{tikzcd}[column sep=5em]
      C_{(i)}:\hspace{-4em}
      &
      \ff_2^{V_{X,i}}
        \arrow[r, "{\delta^0_{(i)}}"]
        \arrow[d, hook]
      &
      \ff_2^{V_{Q,i}}
        \arrow[r, "{\delta^1_{(i)}}"]
        \arrow[d, hook]
      &
      \ff_2^{V_{Z,i}}
        \arrow[d, hook, "{\Delta \;\oplus\; 0 \;\oplus\; 0}"]
      \\
      H(C_{(i+1)}, \delta_\mrm{loc}):\hspace{-4em}
      &
      \ff_2^{V_{X,i}}
        \arrow[r, "{\delta^0_{(i)}}"]
      &
      \ff_2^{V_{Q,i}}
        \arrow[r, "{\Delta \circ \delta^1_{(i)} \;\oplus\; 0 \;\oplus\; 0}"]
      &
      (\ff_2^2)^{V_{Z,i}} \oplus H^2(C_X,\delta_X) \oplus H^2(C_Q,\delta_Q)
    \end{tikzcd}
  \end{equation}
  Recall that $\Delta$ is the diagonal map $\ff_2^{V_Z} \to (\ff_2^2)^{V_Z} \cong \ff_2^{V_Z} \oplus \ff_2^{V_Z}$, $b \mapsto (b,b)$,
    which is injective,
    so the last vertical map is also injective.
  It is clear that the above maps commute with the coboundary maps in both complexes,
    so we have defined a chain map $C_{(i)} \hookrightarrow H(C_{(i+1)},\delta_\mrm{loc})$.

  We now argue that this chain map induces an isomorphism on first cohomology.
  Since the degree-$0$ to degree-$1$ maps coincide in both complexes, we have
  \begin{equation}
    B^1(C_{(i)})
    = \operatorname{im} \delta^0_{(i)}
    = B^1(H(C_{(i+1)}, \delta_\mrm{loc})).
  \end{equation}
  Moreover, since $\Delta$ is injective,
  \begin{equation}
    Z^1(H(C_{(i+1)}, \delta_\mrm{loc}))
    = \ker (\Delta \circ \delta^1_{(i)})
    = \ker \delta^1_{(i)}
    = Z^1(C_{(i)}).
  \end{equation}
  It follows that
  \begin{equation}
    H^1(C_{(i)}, \delta_{(i)}) \cong H^1(H(C_{(i+1)}, \delta_\mrm{loc}), H(g)).
  \end{equation}

  We now define the chain map $H(C_{(i+1)}, \delta_\mrm{loc}) \xrightarrow{\simeq} C_{(i+1)}$.
  By the code embedding formalism and the spectral sequence analysis in \Cref{subsec:spectral-sequence},
    the complex $C_{(i+1)}$ can be realized as the total complex of the $E_0$ page of a spectral sequence.
  In particular, we identify
  \begin{equation}
    (\mrm{Tot}(E_0),D_0)
    =
    (C_{(i+1)},\delta_{(i+1)}),
  \end{equation}
  where $E^{p,0}_0 \cong C^p_X$, $E^{p,1}_0 \cong C^{p+1}_Q$, $E^{p,2}_0 \cong C^{p+2}_Z$, and $E^{p,q}_0 = 0$ otherwise.
  The horizontal maps are the local maps $\delta_X$, $\delta_Q$, and $\delta_Z$,
    while the vertical maps are the connecting maps $g_{XQ}$ and $g_{QZ}$.

  Similarly, the induced complex $H(C_{(i+1)}, \delta_\mrm{loc})$
    is the total complex of the $E_1$ page:
  \begin{equation}
    (\mrm{Tot}(E_1), D_1) = (H(C_{(i+1)}, \delta_\mrm{loc}), H(g)).
  \end{equation}
  By the general theory of spectral sequences, as discussed in \Cref{subsec:spectral-sequence},
    we have
  \begin{equation}
    (\mrm{Tot}(E_0),D_0) \simeq (\mrm{Tot}(E_1), D_1).
  \end{equation}
  Hence, this yields a chain homotopy equivalence
  \begin{equation}
    H(C_{(i+1)}, \delta_\mrm{loc}) \xrightarrow{\simeq} C_{(i+1)},
  \end{equation}
  which induces an isomorphism on first cohomology:
  \begin{equation}
    H^1(H(C_{(i+1)}, \delta_\mrm{loc}), H(g)) \cong  H^1(C_{(i+1)}, \delta_{(i+1)}).
  \end{equation}
  This completes the proof.
\end{proof}

\subsubsection{$\cR_Z$ in the code embedding formalism} \label{sec:RZ-embedding-formalism}

We now define replacement during the $Z$-iteration.  We will keep the discussion brief since the definition of $\cR_Z$ is, in a sense, dual to that of $\cR_X$.  In particular, the roles of $x$ and $z$ vertices will be effectively exchanged, and we will take the dual of certain intermediate cell complexes.

To each $v \in T_{(i)}(0)$,  associate a local cochain complex, which now has a different structure. The code at the next level $C_{(i+1)} = \cR_Z(C_{(i)})$ can be visualized as:
\begin{equation}
\begin{tikzpicture}[baseline]
\matrix(a)[matrix of math nodes, nodes in empty cells, nodes={minimum size=25pt},
row sep=2em, column sep=2em,
text height=1.25ex, text depth=0.25ex]
{&& \blue{C_X^0}  & \blue{0} & \blue{0}\\
& {C_Q^0}  & {C_Q^1}  & {0} &\\
\red{C_Z^0} & \red{C_Z^1} & \red{C_Z^2} &&\\};
\path[->,blue,font=\scriptsize]
(a-1-3) edge node[above]{$\delta_X^0$} (a-1-4)
(a-1-4) edge node[above]{$\delta_X^1$} (a-1-5);
\path[->,font=\scriptsize]
(a-2-2) edge node[above]{$\delta_Q^0$} (a-2-3)
(a-2-3) edge node[above]{$\delta_Q^1$} (a-2-4);
\path[->,red,font=\scriptsize]
(a-3-1) edge node[above]{$\delta_Z^0$} (a-3-2)
(a-3-2) edge node[above]{$\delta_Z^1$} (a-3-3);
\path[->,font=\scriptsize]
(a-1-3) edge node[right]{$g_{XQ}^0$} (a-2-3)
(a-1-4) edge node[right]{$g_{XQ}^1$} (a-2-4)
(a-2-2) edge node[right]{$g_{QZ}^0$} (a-3-2)
(a-2-3) edge node[right]{$g_{QZ}^1$} (a-3-3);
\end{tikzpicture}
\end{equation}
The local complex $C_z: C_z^0 \rightarrow C_z^1 \rightarrow C_z^2$ plays a similar role to that of $C_x$ in the $X$-iteration. It is determined by the local 2D structure of the square complex $T_{(i)}$ in the vicinity of $z \in V_Z$. We define auxiliary complexes $X_z$, $\partial X_z$, as well as subdivided $X'_z$, $\partial X_z'$ by analogy with the definition for $C_x$ during the $X$-iteration.  We then define
$$Y_z = \left (X'_z \times \{ 0 \} \right ) \sqcup \left (X'_z \times \{ 1 \} \right )  \sqcup \left ( \partial X'_z \times [0,1] \right )/ \sim.$$  We further define $C_z$ to be the dual cochain complex of $Y_z$, i.e. $C_z^\bullet = (C^\bullet(Y_z))^T$.  We illustrate an example below.

\begin{equation} \label{fig:replacement-Cz-RZ}
    \includegraphics[width = 0.75\textwidth]{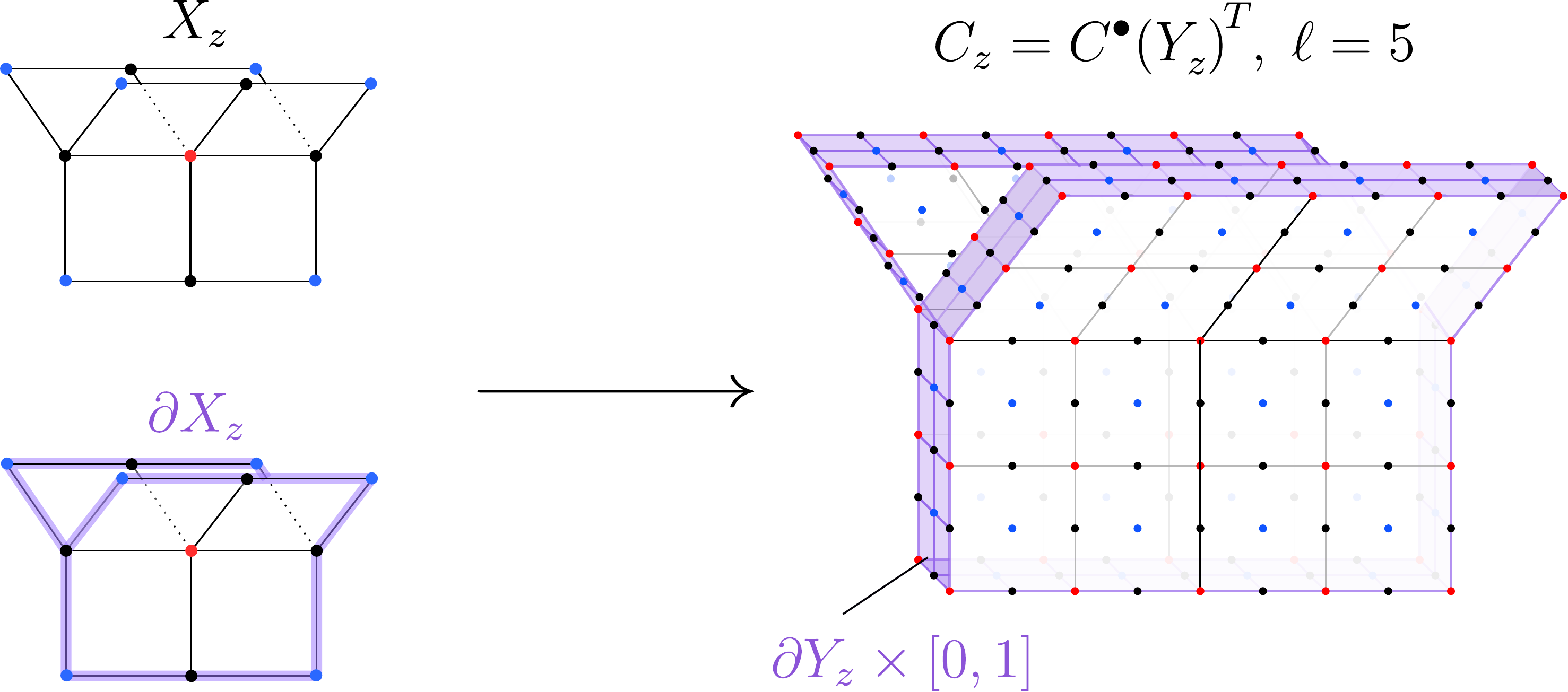}
\end{equation}

Similarly, the local complex $C_q: C_q^0  \rightarrow C_q^1 \rightarrow 0$ is induced by the local 1D structure of $T_{(i)}$ in the vicinity of $q$, where now we keep the edges connecting the vertex $q \in V_Q$ to blue vertices $x \in V_X$. We define $Y_q$ by analogy as before and set $C_q^\bullet = (C^\bullet (Y_q))^T$. Note that, unlike for $X$-iteration, there is no degree shift in the definition, as it is taken care of due to the transposition.
Finally, $C_x: C_x^0 \rightarrow 0 \rightarrow 0$ is defined to be two copies of the point $x$, and $C_x^\bullet = (C^\bullet (Y_x))^T$.

The gluing maps $g_{XQ}$ and $g_{QZ}$ are defined similarly to those in $X$-iteration. The structure of the full complex $C_{(i+1)} = \cR_Z(C_{(i)})$ obtained this way is shown below for one example of a local neighborhood.

\begin{equation} \label{fig:replacement-full-RZ}
    \includegraphics[width = 0.6\textwidth]{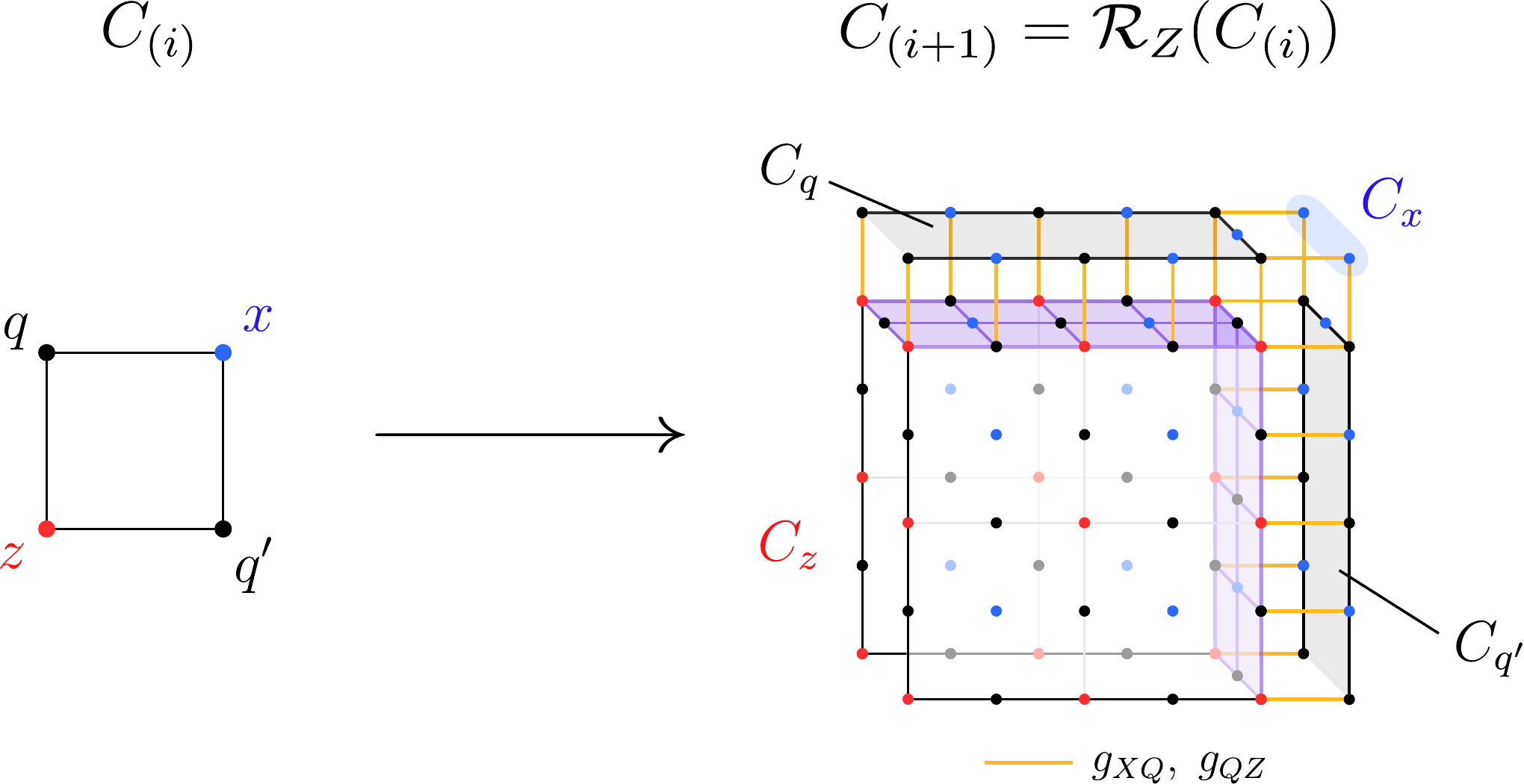}
\end{equation}

\subsubsection{Properties of the code $C_{(i+1)}= \cR_Z (C_{(i)})$}
\label{sec:properties-Rz}

As before, we first study the structural property of the cochain complex $\big(H(C_{(i+1)},\delta_\mrm{loc}), H(g)\big)$.

\begin{lemma} \label{lemma:HC-structure2}
  The cochain complex $(H(C_{(i+1)}, \delta_\mrm{loc}), H(g))$ has the following structure:
  \begin{equation}\label{eq:RZcochainstructure}
    \begin{tikzcd}[column sep=3.7em, row sep=1em]
    {}
    &
      H^0(C_Z,\delta_Z) \oplus H^0(C_X,\delta_X) \oplus H^0(C_Q,\delta_Q)
        \arrow[r, "{H(g^0_{XQ})}"]
      &
      H^1(C_Q,\delta_Q)
        \arrow[r, "{H(g^1_{QZ})}"]
      &
      H^2(C_Z,\delta_Z)
      \\
      \hspace{-2em}\cong \hspace{-6em}
      &
      (\ff_2^2)^{V_{X,i}} \oplus H^0(C_X,\delta_X) \oplus H^0(C_Q,\delta_Q)
        \arrow[r, "{\delta^0_{(i)} \circ \Delta'} \oplus \; 0 \; \oplus \; 0"]
      &
      \ff_2^{V_{Q,i}}
        \arrow[r, "{\delta^1_{(i)}}"]
      &
      \ff_2^{V_{Z,i}}
    \end{tikzcd}
  \end{equation}
  Here $\delta^0_{(i)}$ and $\delta^1_{(i)}$
    are the cochain maps in $C_{(i)}: \ff_2^{V_{X,i}} \xrightarrow{\delta^0_{(i)}} \ff_2^{V_{Q,i}} \xrightarrow{\delta^1_{(i)}} \ff_2^{V_{Z,i}}$
    and $\Delta' : (\ff_2^2)^{V_{X,i}} \cong
    \ff_2^{V_{X,i}} \oplus \ff_2^{V_{X,i}} \to \ff_2^{V_{X,i}}$ is the codiagonal map $(b,0) \mapsto b$, $(0,b) \mapsto b$.
\end{lemma}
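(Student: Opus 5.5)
The proof will mirror the proof of Lemma~\ref{lemma:HC-structure}, with every topological computation replaced by its dual. The first step is to compute the local cohomology groups $H^\bullet(C_v,\delta_v)$ for $v\in V_X\cup V_Q\cup V_Z$. We then identify the induced maps $H(g^0_{XQ})$ and $H(g^1_{QZ})$ on generators. Finally, we read off the $E_1$ page as in \Cref{subsec:spectral-sequence}.

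\textbf{Local cohomology.} Since $C_z^\bullet=(C^\bullet(Y_z))^T$, the cohomology of $C_z$ in degree $m$ is the cellular homology of $Y_z$ in degree $2-m$.

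As before, $Y_z$ is topologically the suspension of $\partial X'_z$, and $\partial X_z$ is connected. Using $\wt H_m(\Sigma A)\cong \wt H_{m-1}(A)$, we get $H_0(Y_z)\cong\ff_2$ and $H_1(Y_z)\cong \wt H_0(\partial X_z)=0$. In the reindexed complex this gives $H^2(C_z)\cong\ff_2$ and $H^1(C_z)=0$, while $H^0(C_z)\cong H_2(Y_z)$ is left unevaluated.

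For $q$, the space $Y_q$ is again a suspension of the discrete set $\partial X'_q$, and it is connected. Its $H_0$ gives $H^1(C_q)\cong\ff_2$, and its $H_1$ gives an unevaluated $H^0(C_q)$.

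For $x$, the complex $C_x$ consists of two points in degree $0$, so $H^0(C_x)\cong\ff_2^2$ and its other cohomology vanishes.

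With these groups the $E_1$ page has the entries displayed in \eqref{eq:RZcochainstructure}. The only nonzero differentials $d_{1,i}$ that can connect degree $0$ to degree $1$, and degree $1$ to degree $2$, are the components written there. All the other potential components have a source or target that vanishes, by the same degree bookkeeping as in the proof of the isomorphism lemma in \Cref{subsec:spectral-sequence}.

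One point needs care. The labels in \eqref{eq:RZcochainstructure} show $H^0(C_Z)$ as the summand $(\ff_2^2)^{V_X}$, whereas the computation above gives that summand as $H^0(C_X)$. Presumably the labels are permuted, and I will follow the computation.

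\textbf{Generators.} For the homology of a connected space, the generator of $H_0$ is the class of any single vertex.

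The gluing maps are transposes of restrictions, so they are inclusions of chains $C_\bullet(Y_q)\to C_\bullet(Y_x)$ and $C_\bullet(Y_z)\to C_\bullet(Y_q)$ in the appropriate degrees. This is the duality with $\cR_X$: in $\cR_X$ the maps restrict cochains, here they include chains.

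Take adjacent $q$ and $z$. A point class in $Y_q$ maps to a point class in $Y_z$. Hence $H(g^1_{qz})$ sends $1\mapsto1$, which is the incidence relation, so $H(g^1_{QZ})=\delta^1_{(i)}$.

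Take adjacent $x$ and $q$. The two basis points of $Y_x$, which are the generators $(1,0)$ and $(0,1)$ of $H^0(C_x)$, are included as the two copies of the endpoint $x$ in $Y_q$. These two copies are the two suspension-side points over $x$ in $\partial X'_q\times\{0,1\}$. Each copy is a vertex of the connected space $Y_q$, so each is homologous to the generator of $H_0(Y_q)$. Therefore both $(1,0)$ and $(0,1)$ map to $1$, which is exactly the codiagonal $\Delta'$ composed with the incidence relation. This gives $H(g^0_{XQ})=\delta^0_{(i)}\circ\Delta'$ on the $(\ff_2^2)^{V_X}$ summand.

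\textbf{Main obstacle.} The delicate step is getting the grading conventions of the transposed complexes right. We must confirm that the transposition places $Y_z$'s $H_0$ in degree $2$ and $Y_x$'s points in degree $0$. We must also confirm that the gluing maps are genuinely the chain-level inclusions, with the correct degree shift, so that $g$ goes from $C_X$ to $C_Q[1]$.

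Once these conventions are fixed, the remaining claims in \eqref{eq:RZcochainstructure} reduce to the degree check already described. These are that the components out of $H^0(C_Q)$ and the extra degree-$0$ summand vanish, and that no $d_{1,i}$ with $i\ge1$ contributes.
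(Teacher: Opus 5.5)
Your proposal is correct and follows essentially the same route as the paper. You compute the local cohomology, here via $H^m(C_v)\cong H_{\dim Y_v-m}(Y_v)$ and the suspension structure. You check that $H(g^0_{xq})$ sends both $(1,0)$ and $(0,1)$ to the point-class generator of $H^1(C_q)$, and that $H(g^1_{qz})$ sends $1\mapsto 1$. The degree bookkeeping then kills every other component of $D_1$, as in the paper.

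You are also right that the labels of the degree-$0$ summands in the statement are permuted. The paper's own proof has $H^0(C_X)\cong(\ff_2^2)^{V_X}$, with $H^0(C_Z)$ and $H^0(C_Q)$ as the extra summands.

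One slip: your sentence describing the gluing maps has the arrows reversed. They are chain inclusions $C_\bullet(Y_x)\to C_\bullet(Y_q)$ and $C_\bullet(Y_q)\to C_\bullet(Y_z)$, which is in fact how you use them in the generator computations.
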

Thus, the cochain complex $H(C_{(i+1)}, \delta_\mrm{loc})$ contains a subcomplex of the following form:
\begin{equation} \label{eq:subcomplex-Ci-RZ}
  \left (\ff_2^2 \right)^{V_X} \rightarrow
 \ff_2^{V_Q}  \rightarrow
 \ff_2^{V_Z}
\end{equation}
This subcomplex almost looks like $C_{(i)}$, and the part of this structure relevant to $X$-operators is identical to that in $C_{(i)}$. The doubling in the first vector space $\left (\ff_2^2 \right)^{V_X}$ reflects the fact that the boundary of $Z$-type errors, which are 1-chains in the dual chain complex, is doubled.

\begin{proof}[Proof of Lemma~\ref{lemma:HC-structure2}]
After computing all the local cohomology groups in a similar manner to the proof of Lemma~\ref{lemma:HC-structure}, we find that the complex $H(C_{(i+1)}, \delta_\mrm{loc})$ has the following structure:
\begin{equation}
\begin{tikzpicture}[baseline]

\def\longnodefont{\fontsize{10}{9.5}\selectfont} \def\arrowfont{\fontsize{7}{9.5}\selectfont}

\matrix (a) [
  matrix of math nodes,
  nodes in empty cells,
  row sep=2em,
  column sep=3.2em,
  nodes={minimum size=25pt, align=center, text width=2cm, font=\longnodefont}
] {
&& |[text width=2.6cm]| \blue{H^0(C_X,\delta_X) \cong (\ff_2^2)^{V_X}}
   & \blue{0}
   & \blue{0} \\
& { H^0(C_Q,\delta_Q)}
  & |[text width=2.6cm]| {H^1(C_Q,\delta_Q) \cong \ff_2^{V_Q}}
  & {0} & \\
\red{H^0(C_Z,\delta_Z)}
  & \red{0}
  & |[text width=2.8cm]| \red{H^2(C_Z,\delta_Z) \cong \ff_2^{V_Z}}
  && \\
};

\path[->,font=\arrowfont]
(a-1-3.south) edge node[right]{$H(g_{XQ}^0)$} (a-2-3.north)
(a-1-4.south) edge node[right]{$H(g_{XQ}^1)$} (a-2-4.north)
(a-2-2.south) edge node[left]{$H(g_{QZ}^0)$} (a-3-2.north)
(a-2-3.south) edge node[left]{$H(g_{QZ}^1)$} (a-3-3.north);

\path[->,font=\scriptsize]
([xshift=-4pt]a-1-3.south) edge[dashed] ([xshift=4pt]a-3-2.north)
([xshift=-4pt]a-1-4.south) edge[dashed] ([xshift=4pt]a-3-3.north);
\end{tikzpicture}
\end{equation}

A direct verification shows that
$$H(g_{xq}^0): H^0(C_x, \delta_x) \to H^1(C_q,\delta_q)$$  maps $(1,0) \mapsto 1$ and $(0,1) \mapsto 1$
for adjacent $x$ and $q$. To see this, notice that $H^1(C_q, \delta_q) \cong \ff_2$ is generated by the 1-cochain that is supported on a single (arbitrarily chosen) basis element in $C_q^1$. $H^0(C_x, \delta_x) \cong \ff_2^2$ is generated by $(1,0)$ and $(0,1)$, and each of these maps into a nontrivial representative of $H^1(C_q, \delta_q) \cong \ff_2$ under $H(g_{xq}^0)$ if $x$ and $q$ are adjacent. Thus, the map $H(g_{xq}^0)$ is the direct sum of two copies of the adjacency map between $V_X$ and $V_Q$ in $T_{(i)}$ (i.e. a direct sum of two copies of the coboundary map $\delta^0_{(i)}$ of $C_{(i)}$).

We also directly check that
$$H(g_{qz}^1): H^1(C_q, \delta_q) \to H^2(C_z, \delta_z)$$ maps $1 \mapsto 1$
for adjacent $q$ and $z$. Note that $H^2(C_z, \delta_z) \cong \ff_2$ is generated by the 2-cochain that is supported on a single (arbitrarily chosen) basis element in $C_z^2$.  For adjacent $q$ and $z$, the nontrivial generator of $H^1(C_q, \delta_q) \cong \ff_2$ in $C_q$ maps onto the nontrivial generator of $H^2(C_z, \delta_z) \cong \ff_2$ under $H(g_{qz}^1)$. Thus, $H(g_{qz}^1)$ is the adjacency map between $V_Q$ and $V_Z$ in $T_{(i)}$ (i.e. the coboundary map $\delta^1_{(i)}$ of $C_{(i)}$).

Thus, the cochain complex $H(C_{(i+1)}, \delta_\mrm{loc})$ can be summarized by Eqn.~\ref{eq:RZcochainstructure}.

\end{proof}

\begin{remark}
  We see that the subcomplex in eq.~\eqref{eq:subcomplex-Ci-RZ} and the full complex $H(C_{(i+1)},\delta_\mrm{loc})$ agree up to the extra vector spaces $H^0(C_Q, \delta_Q)$ and $H^0(C_Z, \delta_Z)$. These capture relations between the extra  $X$-type stabilizers added in
  $C_{(i+1)} = \cR_Z(C_{(i)})$ in comparison to $C_{(i)}$.

  Additionally, the fact that we end up with two copies of $\ff_2^{V_X}$ means that applying  $\cR_Z$ results in doubling of the syndrome of $Z$-type errors.
\end{remark}

\begin{lemma} \label{lemma:chain-map2}
  For the replacement $C_{(i+1)} = \cR_Z(C_{(i)})$, there exists a chain map
  \begin{equation}
    \cF_{\cR_Z,i}: C_{(i)} \to C_{(i+1)}.
  \end{equation}

  Furthermore,
    $\cF_{\cR_Z,i}$ induces an isomorphism on the first cohomology groups
  \begin{equation}
    H^1(C_{(i)}) \cong H^1(C_{(i+1)}).
  \end{equation}
  In particular, the replacement preserves the number of logical qubits.
\end{lemma}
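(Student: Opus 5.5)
The plan mirrors the proof of Lemma~\ref{lemma:chain-map}: factor the map as
\begin{equation*}
  C_{(i)} \longrightarrow H(C_{(i+1)},\delta_{\mrm{loc}}) \xrightarrow{\simeq} C_{(i+1)},
\end{equation*}
and show that each factor induces an isomorphism on $H^1$. The second factor comes directly from the spectral sequence discussion in \Cref{subsec:spectral-sequence}. Identify $(\Tot(E_0),D_0)=(C_{(i+1)},\delta_{(i+1)})$ with $E_0^{p,0}=C_X^p$, $E_0^{p,1}=C_Q^{p+1}$, $E_0^{p,2}=C_Z^{p+2}$. Then $(\Tot(E_1),D_1)=(H(C_{(i+1)},\delta_\mrm{loc}),H(g))$. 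Since all complexes are bounded, the homological perturbation lemma gives a chain homotopy equivalence $H(C_{(i+1)},\delta_{\mrm{loc}})\simeq C_{(i+1)}$, which is an isomorphism on $H^1$.

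The first factor is the new ingredient. In the $\cR_X$ case the inclusion was obvious because the doubling sat on degree $2$ and we could use the injective diagonal $\Delta$. Here, by Lemma~\ref{lemma:HC-structure2}, the doubling sits on degree $0$ through the codiagonal $\Delta'$, which is surjective. A natural chain map therefore goes in degree $0$ through a section of $\Delta'$. I will define it in each degree as follows:
\begin{itemize}
  \item degree $0$: $s\colon\ff_2^{V_{X,i}}\to(\ff_2^2)^{V_{X,i}}\oplus H^0(C_X,\delta_X)\oplus H^0(C_Q,\delta_Q)$, with $b\mapsto((b,0),0,0)$;
  \item degree $1$: the identity on $\ff_2^{V_{Q,i}}$;
  \item degree $2$: the identity on $\ff_2^{V_{Z,i}}$.
\end{itemize}
Commutativity holds because $(\delta^0_{(i)}\circ\Delta'\oplus0\oplus0)\circ s=\delta^0_{(i)}$ and the degree-$1$ to degree-$2$ maps coincide.

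For cohomology, the degree-$1\to2$ maps agree, so $Z^1$ agrees in both complexes. For $B^1$, the image of $\delta^0_{(i)}\circ\Delta'\oplus 0\oplus 0$ equals $\operatorname{im}\delta^0_{(i)}$, because $\Delta'$ is surjective and the extra summands map to zero. Hence $B^1$ agrees too, and the map is an isomorphism $H^1(C_{(i)})\cong H^1(H(C_{(i+1)},\delta_\mrm{loc}),H(g))$. Composing with the perturbation-lemma equivalence yields $\cF_{\cR_Z,i}$ and the claimed isomorphism $H^1(C_{(i)})\cong H^1(C_{(i+1)})$.

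The main obstacle is to confirm that Lemma~\ref{lemma:HC-structure2} really describes the full differential $D_1$ on the relevant degrees. In particular, the extra pieces $H^0(C_X,\delta_X)$ and $H^0(C_Q,\delta_Q)$ (really $H^0(C_Q)$ and $H^0(C_Z)$ in the diagram) must map to zero in degree $1$, and no higher-page contributions $d_{1,i}$ with $i\ge1$ may land in $E_1^{0,1}$. I would check this by degree bookkeeping, as in the lemma recovering \cite[Theorem I.1]{yuan2026unified}. The only sources in total degree $0$ are $E_1^{0,0}$, $E_1^{-1,1}=H^0(C_Q)$ and $E_1^{-2,2}=H^0(C_Z)$. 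Since $d_{1,i}\colon E_1^{p,q}\to E_1^{p-i,q+1+i}$, the only component that can hit $E_1^{0,1}$ is $d_{1,0}\colon E_1^{0,0}\to E_1^{0,1}$. The degree-$1$ to degree-$2$ differential is handled the same way: $E_1^{1,0}$ and $E_1^{1,1}$ vanish, and $H^1(C_Z)=0$. If that bookkeeping holds, the images and kernels are exactly as computed above.
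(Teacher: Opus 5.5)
Your proposal is correct and is essentially the paper's proof, which is stated only as ``analogous'' to that of Lemma~\ref{lemma:chain-map}. You factor through the induced complex $H(C_{(i+1)},\delta_{\mathrm{loc}})$ and compose with the perturbation-lemma equivalence; the only change is that the doubling now sits in degree $0$, so you use a section of the surjective codiagonal $\Delta'$ (which preserves $B^1$) in place of the injective diagonal $\Delta$ in degree $2$ (which preserved $Z^1$). Your degree bookkeeping showing that no extra components of $D_1$ reach $E_1^{0,1}$ or $E_1^{0,2}$ is sound, and you correctly note that the extra degree-$0$ summands are really $H^0(C_Q,\delta_Q)$ and $H^0(C_Z,\delta_Z)$.
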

\begin{proof}
Analogous to that of Lemma~\ref{lemma:chain-map}.
\end{proof}

\subsubsection{Embedding in $\mathbb{R}^3$ after replacement} \label{subsec:embedding-replacement}

Here, we will show that the embedding map $I_{(i+1)}: T_{(i+1)} \rightarrow \mathbb{R}^3$ defined earlier in subsection~\ref{sec:embedding-tanner-graph} is a $(t,1)$-embedding.
First, we establish a useful geometric property of our construction, namely, that the edge and vertex degrees of  $T_{(i+1)}$ remain bounded by a universal constant:

\begin{lemma}
    If $T_{(i)}$ has maximum vertex and edge degrees $z_{v}^{(i)} \leq 7$ and $z_e^{(i)} \leq 4$, then $T_{(i+1)}$ similarly has maximum vertex and edge degrees $z_v^{(i+1)} \leq 7$ and $z_e^{(i+1)} \leq 4$.
\end{lemma}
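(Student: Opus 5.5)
The plan is a local case analysis on the cells of $T_{(i+1)}$, organized by where each vertex or edge originates in the gluing $T_{(i+1)} = (\bigsqcup_s U_{X,s})/\sim$. Every vertex of $T_{(i+1)}$ falls into exactly one of three classes. It is interior to a single copy $U_{X,s}$; or it lies in the relative interior of a boundary subcomplex associated with an edge $e \in T_{(i)}(1)$; or it lies in a corner subcomplex associated with a vertex $v \in T_{(i)}(0)$. The same trichotomy applies to edges. I will bound the degrees separately in each class and handle $\cR_X$ first; $\cR_Z$ follows by the $x \leftrightarrow z$ exchange of \Cref{sec:RZ-embedding-formalism}, since $U_Z$ is obtained from $U_X$ by the same exchange and the combinatorics of degrees are symmetric.

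Interior cells are the easy case. Their neighborhoods are determined entirely by the fixed finite complex $U_X$: a square-grid vertex in one of the two $\ell\times\ell$ layers, a vertex on the thickness-2 transverse strip, or a vertex on the seam where the strip meets a layer. I will enumerate these local pictures directly from Fig.~\ref{fig:replacement}. The seam is the worst case: a vertex there sees 4 grid edges plus the strip edges, at most 6 or 7. An edge on the seam is shared by two layer squares and up to two strip squares, giving at most 4. This step is a finite check, independent of $i$.

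Boundary and corner cells are where $T_{(i)}$ enters. An edge-boundary subcomplex labeled by $e$ is glued across all $\deg(e) \le z_e^{(i)} \le 4$ squares containing $e$. In $U_X$, each boundary subcomplex is a 1-dimensional path (two parallel copies, matching the $\partial X'_x\times[0,1]$ and $Y_q$ structure of \Cref{sec:embedding-tanner-graph}). So an edge in it receives one square from each adjacent $U_{X,s}$, giving edge degree $\le \deg(e) \le 4$. For a vertex in the relative interior of the boundary subcomplex, I will count two path edges plus one inward edge per adjacent square. This exceeds 7 if $\deg(e)=4$ and each square contributes an inward edge. The key point to verify is therefore that, in $U_X$, boundary vertices alternate so that along the boundary only every other vertex has an inward edge, or that the boundary vertices of the doubled copies each carry only the inward edges of their own layer. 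I expect the count to come out as $2 + \deg(e) \le 6$ or similar.

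Corners are the main obstacle. A corner subcomplex is a pair of vertices labeled by $v \in T_{(i)}(0)$, and it is incident to as many squares as $v$ has in $T_{(i)}$. That number is not a priori bounded by $z_v$; it is bounded only by something like $z_v z_e$. The degree of a corner vertex in $T_{(i+1)}$ counts the boundary-subcomplex edges emanating from it, one per edge $e \ni v$ (so at most $z_v^{(i)} \le 7$), plus any edges going directly into square interiors. So the crucial structural fact I must extract from the figure is that corner vertices of $U_X$ have no edges into the interior of $U_X$ other than along the four boundary subcomplexes. Each of these is identified across squares sharing that edge of $T_{(i)}$, so edges correspond bijectively to edges of $T_{(i)}$ at $v$. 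This identification is exactly what the $\sim$ gluing (and the cone/suspension description of $Y_x$, $Y_q$) provides, and it yields $z_v^{(i+1)} \le z_v^{(i)} \le 7$. The analogous statement for corner edges (the pair of corner vertices may be joined by an edge in the thickened direction) needs a separate check that such an edge lies in at most 4 squares. I will verify this by reading it off $Y_q$ for $q=v$, and if it fails I will adjust by noting that the vertex coloring forces such edges only at $x$- or $z$-corners, where the relevant count is governed by $Y_z$ or $Y_x$.
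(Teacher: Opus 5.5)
Your trichotomy (bulk of a single $U_{X,s}$, boundary subcomplex labelled by $e\in T_{(i)}(1)$, corner subcomplex labelled by $w\in T_{(i)}(0)$) is the paper's proof, and so are your three mechanisms. A boundary vertex has its neighbours inside the boundary subcomplex plus one neighbour per adjacent square. A boundary edge lies in one square per adjacent square, so its degree is at most $z_e^{(i)}$. A corner vertex has exactly one edge per edge of $T_{(i)}$ at $w$, so its degree is at most $z_v^{(i)}$, even though $w$ may lie in up to about $z_v z_e$ squares. The paper likewise reads these local facts off the figure.

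Two details in your plan are off, though neither breaks the argument.

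\emph{Boundary vertices.} First, $2+4=6$ does not exceed $7$, so the problem you set out to rule out does not arise. Second, your predicted bound $2+\deg(e)\le 6$ is too optimistic. Every boundary vertex has one neighbour in each adjacent bulk subcomplex, so the alternation you hoped for does not occur. Moreover, some boundary vertices have up to three neighbours inside the closed boundary subcomplex; these sit next to a corner, where the two layer copies of the boundary path are joined in the doubling direction. The correct count is therefore $3+z_e^{(i)}\le 7$. This is why $7$, and not $6$, is the vertex-degree bound that propagates through the induction.

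\emph{Corner edges.} There are none to check. The two vertices of a corner subcomplex carry the same colour ($x$, $q$, or $z$), and the Tanner graph has no edges between vertices of the same colour. So your fallback about $x$- or $z$-corners is unnecessary: the colouring rules such edges out at every corner.
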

\begin{proof}
For the purpose of this proof, we will call $U_{X,s} \subseteq T_{(i+1)}$ minus all its boundary and corner subcomplexes the \emph{bulk subcomplex} of $U_{X,s}$.

Then, for any vertex in $T_{(i+1)}(0)$, the following holds:
\begin{enumerate}
    \item[(1)] $v \in T_{(i+1)}(0)$ is inside the bulk subcomplex associated with some $U_{X,s}$. Then, the maximum degree of this vertex is 6, which occurs at the crossing point of two subcomplexes labeled in purple in the figure below.
    \item[(2)] $v' \in T_{(i+1)}(0)$  is inside the boundary subcomplex labeled by some $e \in T_{(i)}(1)$. Any vertex in the boundary subcomplex neighbors a single vertex for each adjacent bulk subcomplexes along with at most 3 other vertices that belong to the boundary. This is illustrated in the figure below, where an adjacent vertex in the bulk is connected to vertex $v$ via an edge highlighted in red. The number of adjacent bulk subcomplexes equals $z_e^{(i)} \leq 4$. Thus, the degree of $v'$ is at most 7.
    \item[(3)] $v'' \in T_{(i+1)}(0)$ is inside the corner subcomplex labeled by some $w \in T_{(i)}(0)$. The connectivity of this vertex is the same as for $w$, and thus, has degree at most $z_v^{(i)} \leq 7$.
\end{enumerate}
Thus, we have directly verified that $z_v^{(i+1)} \leq 7$.

\begin{equation} \label{fig:degree-vertex}
    \includegraphics[width = 1\textwidth]{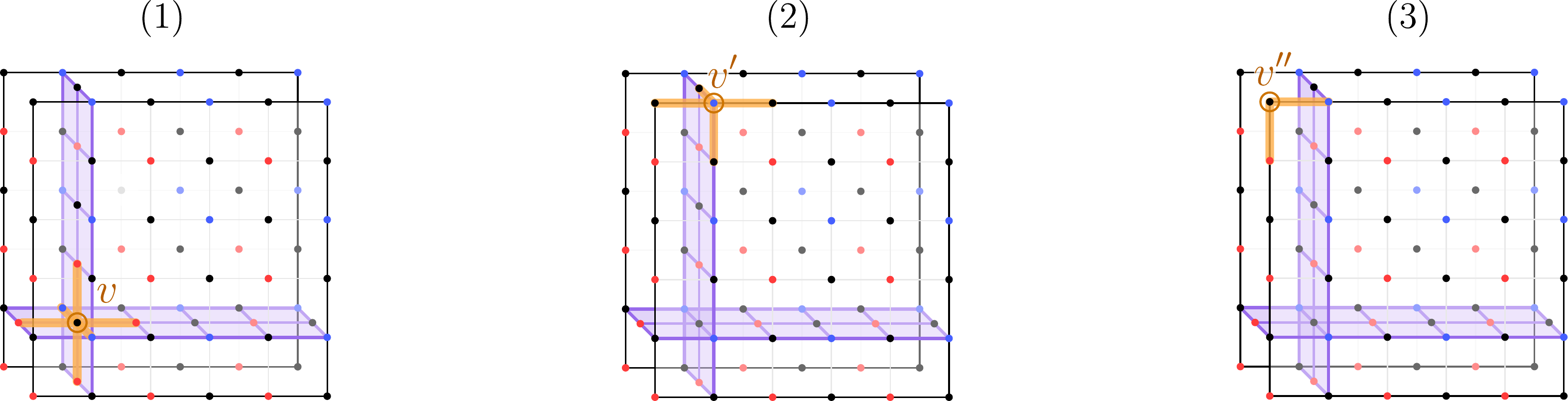}
\end{equation}

For any edge in $T_{(i+1)}(1)$, the following holds:
\begin{enumerate}
    \item[(1)]  $e \in T_{(i+1)}(1)$ is inside the bulk subcomplex associated with some $U_{X,s}$. The highest edge degree is 4, which is shown in the figure below.
    \item[(2)] $e' \in T_{(i+1)}(1)$  is inside the boundary subcomplex labeled by some $e \in T_{(i)}(1)$. Any edge in the boundary subcomplex has one incident square per adjacent  bulk subcomplex; the number of adjacent bulk subcomplexes is $z_e^{(i)} \leq 4$.
\end{enumerate}
Thus, we have directly verified that $z_e^{(i+1)} \leq 4$.

\begin{equation} \label{fig:degree-edge}
    \includegraphics[width = 0.7\textwidth]{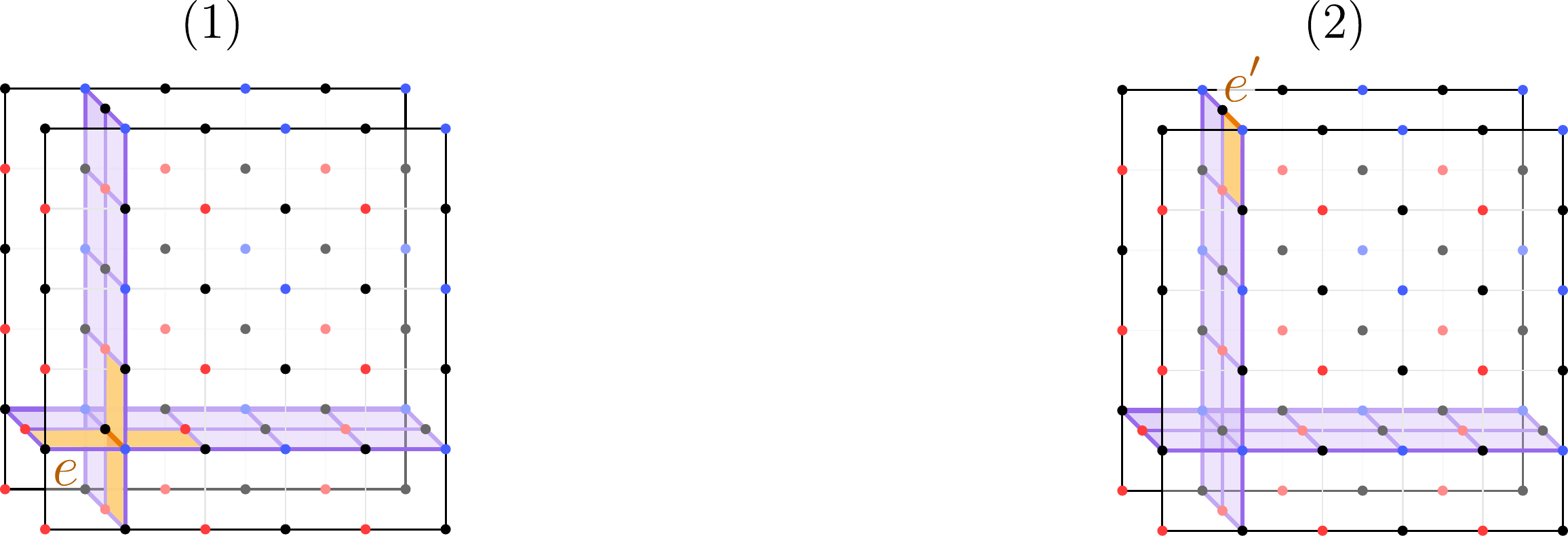}
\end{equation}

\end{proof}

\begin{lemma} \label{lemma:step2-embedding}
    Given $I'_{(i)}: T_{(i)} \rightarrow \mathbb{R}^3$ as defined in Subsec.~\ref{subsec:perturbation} and $I_{(i+1)}: T_{(i+1)} \rightarrow \mathbb{R}^3$ as defined in Subsec~\ref{sec:embedding-tanner-graph}, if
    $I'_{(i)}$ is a $(t', \ell)$-embedding, then $I_{(i+1)}$ is a $(t, 1)$-embedding where $t = 64 \pi t' (2 \alpha + 1)^2 / (\gamma (\alpha_2 -1))$.
\end{lemma}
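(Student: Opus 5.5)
The plan is to establish the two defining properties of a $(t,1)$-embedding separately: geometric locality (edges of length at most $1$) and bounded density. Both follow from how $I_{(i+1)}$ is built. Each $U'_{X,s}$ is placed by linear interpolation over the two triangles into which $I'_{(i)}(s)$ was split. So each 2-cell of $T_{(i+1)}$ lies inside one of the two triangles of some $s\in T_{(i)}(2)$, and it is an affine image of an $(\ell-1)/2\times(\ell-1)/2$-subdivided grid square (or a collapsed piece of one).

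\textbf{Locality.} Take the affine map from a standard triangle of legs $\ell$ onto a perturbed triangle. By Claim~\ref{claim:dist}, the image triangle has side lengths at most $(2\alpha+1)\lambda\le \ell$. A grid edge of length $O(1)$ in the subdivided square of side $\asymp\ell$ is therefore mapped to length at most $\ell/((\ell-1)/2)\cdot O(1)$. With the subdivision scale set so that edges become unit length, this gives edge length $\le 1$. Edges in the thickened direction collapse to length $0$ under flattening, and boundary and corner subcomplexes are shared consistently, so locality holds globally.

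\textbf{Density.} Fix a unit ball $B$.
\begin{enumerate}
\item Any 2-cell of $T_{(i+1)}$ meeting $B$ comes from a square $s$ of $T_{(i)}$ whose image meets $B$. By the $(t',\ell)$ property there are at most $t'$ such $s$, and each contributes two triangles.
\item It then suffices to bound, for a single perturbed triangle $\Delta$, how many small cells inside $\Delta$ can meet $B$. The cells are images of grid cells under an affine map $A$. Their count is at most $\mathrm{area}(A^{-1}(B\cap\mathrm{plane}))/\mathrm{area}(\text{grid cell})$, plus boundary effects handled by enlarging $B$ to radius $2$.
\item To bound $A^{-1}$, use Claim~\ref{claim:simplex-thickness}: the width of $\Delta$ is $\ge\gamma\lambda$. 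Together with the minimal side length $\ge(\alpha_2-1)\lambda$, the smallest singular value of $A$ (normalized so that the reference triangle has side $\ell$ and grid spacing $\ell/((\ell-1)/2)$) is $\gtrsim \gamma(\alpha_2-1)\lambda^2/((2\alpha+1)\lambda\cdot \ell)$. Here we write area $=\tfrac12\cdot\text{base}\cdot\text{height}$, and the reference triangle has area $\ell^2/2$.
\item Hence the preimage of a disk of radius $2$ has area at most $4\pi\,\ell^2/\mathrm{area}(\Delta)\cdot O(1)$, with $\mathrm{area}(\Delta)\ge\tfrac12\gamma\lambda(\alpha_2-1)\lambda$. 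Since each grid cell has area $\asymp 1$ after rescaling, the number of cells is at most $\asymp 16\pi(2\alpha+1)^2/(\gamma(\alpha_2-1))$, using $\ell\approx(2\alpha+1)\lambda$.
\item Multiply by the number of layers collapsed onto the same location by flattening (the two $\ell\times\ell$ layers plus the transverse strip, giving at most a factor $2$) and by the two triangles per square. This yields $t=64\pi t'(2\alpha+1)^2/(\gamma(\alpha_2-1))$.
\end{enumerate}

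\textbf{Main obstacle.} The delicate step is converting the width lower bound into a bound on the anisotropy of the affine map: a thin triangle could pack many cells into one ball. Here I would use that the ratio of squared diameter to area is at most $(2\alpha+1)^2\lambda^2/(\gamma(\alpha_2-1)\lambda^2/2)$. I also need to track the extra constants from the collapsed cells of $U'_X$ and from cells straddling the two triangles of $s$. Those are absorbed into the numerical prefactor $64\pi$, and I would verify it by counting carefully with radius-$2$ enlargement.
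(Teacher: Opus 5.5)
Your proposal is correct and follows essentially the same route as the paper. You lower-bound the area of each nonzero-area cell of the flattened $U'_X$ (equivalently, the determinant of your affine map) using the side-length bound of Claim~\ref{claim:dist} and the width bound of Claim~\ref{claim:simplex-thickness}, count such cells by area inside an enlarged disk in each triangle's plane, and multiply by $t'$ squares and two triangles per square. The differences are only bookkeeping: the singular-value/anisotropy step is unnecessary, since only the determinant and the upper bound on side lengths enter. Also, the zero-area transverse-strip cells are not controlled by area counting at all, so rather than folding them into your factor of $2$, the paper accounts for them with a separate crude factor of $8$, using that their side lengths are at most $1$.
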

\begin{proof}
Let us first recall several facts about the embedding map $I'_{(i)}: T_{(i)} \rightarrow \mathbb{R}^3$.
In lemma~\ref{lemma:embedding-main}, we showed that $I'_{(i)}$ is a $(t', \ell)$-embedding with $\ell = \ell(t,t')$ satisfying $$\ell (t,t') \geq (2 \alpha +1)\lambda \geq (4\alpha^3)^{A'/(t'-3A')}( 2 \alpha^3 \Gamma t)^{t'/(t' - 3A')}.$$ The constants $A'$, $\alpha$ and $\Gamma$ are $O(1)$ universal constants.

To obtain $(T_{(i+1)}, I_{(i+1)})$, we perform replacement where each square in $T_{(i)}$ is subdivided into a pair of (triangular) simplices and is replaced with a flat square complex $U'_X$ (or $U'_Z$ for the $\cR_Z$ iteration). We additionally subdivide some of the squares in $U'_X$ ($U'_Z$) into a pair of simplices as shown in \eqref{fig:replacement-fold}. Upon replacement, there will be two types of 2-cells, ones with nonzero area (these are the squares and triangles shown in \eqref{fig:replacement-fold} that are in-plane with the original 2-cells to be substituted) and the ones with zero area (these are the squares shaded purple in \eqref{fig:replacement-flatten}).  Let us first consider only 2-cells with nonzero area.

Since the sidelengths of the original triangular simplices are $\in [(\alpha_2-1) \lambda, (2\alpha+1)\lambda]$, the sidelengths of 2-cells with nonzero area upon replacement are $\in [\frac{\alpha_2-1 }{2 \alpha + 1}, 1]$. Since each of the original triangular simplices had width $\geq \gamma \lambda$, any diagonal added to the complex $U'_X$ ($U'_Z$) has length $\in [\frac{2\gamma}{2 \alpha+1}$,1]. Therefore, the 2-cells in $U'_X$ ($U'_Z$) have area  $\geq \frac{\gamma (\alpha_2 - 1)}{ (2 \alpha +1)^2}$. Since number of 2-cells that intersect in a unit ball is $\leq t'$ before replacement, the number of new 2-cells with non-zero area that intersect with a unit ball after replacement is at most
\begin{equation}
8\pi t'\frac{(2 \alpha+1)^2}{\gamma (\alpha_2 - 1)}.
\end{equation}

Now let us add the zero-area squares back into consideration. All sidelengths of these squares are $\leq 1$. The density of 2-cells intersecting a unit ball now increases by at most a factor of 8, and a unit ball intersects
\begin{equation}
t \leq 64\pi t' \frac{(2 \alpha+1)^2}{\gamma (\alpha_2 - 1)}
\end{equation}
simplices in total.
\end{proof}

\subsection{Main statements}

We will start with the base case. Consider a 2-dimensional cochain complex
\begin{equation}
C_{(0)}: C^0_{(0)} \overset{\delta_{(0)}^0 }{\longrightarrow} C^1_{(0)} \overset{\delta_{(0)}^1 }{\longrightarrow}  C^2_{(0)}
\end{equation}
corresponding to the $[\![ 5,1,2 ]\!] $ surface code that contains two $X$-type and two $Z$-type stabilizers, where:
\begin{equation}
    C^0_{(0)}  \cong \ff_2^2, \ \ C^1_{(0)}  \cong \ff_2^5, \ \  C^2_{(0)}  \cong \ff_2^2.
\end{equation}
Its associated Tanner square complex is shown below:
\begin{equation}
    \includegraphics[width = 0.15\textwidth]{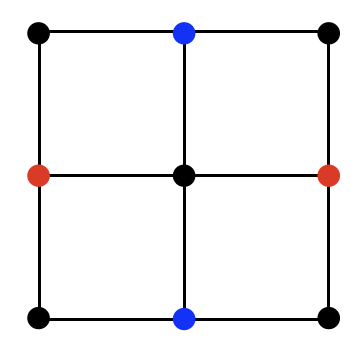}
\end{equation}

Starting from this base complex, we define the level-$2k$ code by alternating $X$- and $Z$-iterations:
\begin{equation}
C_{(2k)} := (\cR_Z \circ \cR_X)^{\circ k}(C_{(0)}).
\end{equation}
Let $T_{(2k)}$ be the associated Tanner square complex, and let $H_{(2k)}$ be the corresponding CSS stabilizer Hamiltonian after embedding $T_{(2k)}$ in $\mathbb{R}^3$. The size of the code at level $2k$ is $n_{2k} = \exp(\Theta(k))$. Because the base case had a single logical qubit, our code will always have one logical qubit by Lemmas~\ref{lemma:chain-map} and \ref{lemma:chain-map2}.

\begin{theorem}[3D self-correcting quantum memory] \label{thm:main}
    For the code $C_{(2k)}$ at any level $k \geq 0$, there exist constants $t$, $\beta_c < \infty$ and $\eta >0$ independent of $k$ such that:
    \begin{enumerate}
        \item [(1)] There exists a $(t,1)$-embedding map
            \begin{equation}
            I_{(2k)}: T_{(2k)} \rightarrow \mathbb{R}^3
            \end{equation}
        which can be constructed by a randomized algorithm in expected time $\poly(n_{2k})$.
        \item [(2)] The code $C_{(2k)}$ encodes a single logical qubit, equivalently $H^1(C_{(2k)}) \cong \ff_2$.
        \item [(3)] The pair $(T_{(2k)}, I_{(2k)})$ defines a local Hamiltonian $H_{(2k)}$ in $\mathbb{R}^3$.
        At $\beta > \beta_c$, the memory lifetime defined in Subsec.~\ref{sec:review-memorylifetime} satisfies
        \begin{equation}
            t_{\mathrm{mem}} \geq \exp( \Theta( n_{2k}^\eta))
        \end{equation}
        for some $\eta > 0$.
    \end{enumerate}
\end{theorem}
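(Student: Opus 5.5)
Parts (1) and (2) follow by induction on the iteration index; part (3) needs a new decoder-based Peierls argument.

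\textbf{Part (1).} I will induct on the level $i$, with hypothesis: $I_{(i)}$ is a $(t,1)$-embedding and $T_{(i)}$ has $z_v\le 7$, $z_e\le 4$. The base case holds for the $[\![5,1,2]\!]$ Tanner square complex once $t$ is chosen large.

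For the step, Lemma~\ref{lemma:embedding-main} produces a $(t',\ell)$-embedding $I'_{(i)}$. The degree lemma preserves the degree bounds, and Lemma~\ref{lemma:step2-embedding} returns a $(t,1)$-embedding provided
\[
64\pi t'(2\alpha+1)^2/(\gamma(\alpha_2-1))\le t .
\]
So the only real content is to fix the constants consistently, independent of $i$: choose $\alpha,\alpha_2$ (so $\gamma>0$), then $t'$, then $t$ from this inequality, then $\lambda$ large enough for the Lov\'asz condition in Lemma~\ref{lemma:embedding-main}.

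Algorithmically, each perturbation step is made constructive by Moser--Tardos in expected polynomial time. Summing over $2k=O(\log n)$ levels, whose sizes grow geometrically, gives $\poly(n_{2k})$.

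\textbf{Part (2).} Compose Lemmas~\ref{lemma:chain-map} and~\ref{lemma:chain-map2} $k$ times from $H^1(C_{(0)})\cong\ff_2$.

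\textbf{Part (3).} Locality is immediate: each check is a vertex of $T_{(2k)}$, and its support consists of neighbors at distance $\le 1$ with bounded density. For the lifetime I will apply Theorem~\ref{thm:memorylifetime} with single-qubit Pauli jumps. It then suffices to build decoders $\dec_X,\dec_Z$ and a stable projector $P_{\mrm{stable}}$ such that
\[
\Tr\big((1-P_{\mrm{stable}})\rho_\beta\big)\le \exp(-n^\eta).
\]
By Claim~\ref{claim:stableconfig}, I take $P_{\mrm{stable}}$ to project onto syndromes that are not unstable. Since $\rho_\beta$ factorizes over the $X$ and $Z$ sectors, I treat one sector at a time.

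The decoder is RG-like, built from the chain maps $\cF_{\cR_X},\cF_{\cR_Z}$ and the explicit local structure of $U_X,U_Z$. At level $i+1$, inside each local block $Y_x$ / $Y_q$, I clean the syndrome locally, pushing it to the cohomology representatives of the induced complex of Lemma~\ref{lemma:HC-structure}. This yields a coarse syndrome on $T_{(i)}$, namely the pair of copies in $(\ff_2^2)^{V_Z}$, and I then recurse.

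The key quantitative claim to prove is: if any syndrome survives to the top level, the original syndrome contains a connected cluster (in a bounded-degree decoding graph) of weight $\gtrsim 2^{k}$. The intended mechanism is as follows. A surviving coarse defect at $z$ forces syndrome on both copies of $z$, hence at least two disjoint level-$(i+1)$ pieces, each carrying a coarse defect. Iterating over the $\cR_X$ steps gives doubling for $X$-errors, and the $\cR_Z$ steps do the same for $Z$-errors via the dual picture (Lemma~\ref{lemma:HC-structure2}). Connectivity is preserved because all cleaning happens within $O(\ell)$-size neighborhoods.

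Combining with the count of $\le B^m$ connected clusters of syndrome weight $m$ through a fixed site, the Peierls sum gives
\[
\Pr(\text{unstable})\le n\sum_{m\gtrsim 2^k}(Be^{-\beta})^m\le \exp(-c\,2^k)
\]
for $\beta>\beta_c=\log B+O(1)$. Since $2^k=n_{2k}^\eta$, this is the required bound.

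\textbf{Main obstacle.} I expect the hardest step to be the syndrome-doubling and connectivity lemma for the RG decoder. Local cleaning can merge or cancel defects. The fix I would try first is to define the coarse-graining so that the cleaning error is a small error whose own syndrome is dominated (up to a factor $O(\ell^2)$) by the local syndrome it was computed from. I would then track a weighted count showing that each level multiplies the minimal syndrome weight of a surviving configuration by $2$ while the constant-factor losses only affect $\beta_c$.
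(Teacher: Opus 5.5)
Parts (1) and (2) are fine and follow the paper's induction: Lemmas~\ref{lemma:embedding-main} and~\ref{lemma:step2-embedding} with the constants fixed once, then Lemmas~\ref{lemma:chain-map} and~\ref{lemma:chain-map2}. The gap is in part (3), at exactly the step you flag as the main obstacle, and the fix you propose would not work.

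The argument needs the coarse-graining to be \emph{lossless}. For $X$-errors, every $\cR_Z$ step must satisfy $|\sigma_{i-1}|\le|\sigma_i|$, and every $\cR_X$ step must satisfy $|\sigma_{i-1}|\le\tfrac12|\sigma_i|$, with constants exactly $1$ and $\tfrac12$. Suppose the corrections are only controlled up to a factor $C=O(\ell^2)$, so that $|\sigma_{i-1}|\le C|\sigma_i|$ and $|\sigma_{i-1}|\le\tfrac{C}{2}|\sigma_i|$. Then a defect surviving to level $2$ only forces $|\sigma_{2k}|\gtrsim(2/C^2)^{k}$, which is vacuous. The per-level losses compound over the $2k$ levels and cancel the doubling; they do not merely shift $\beta_c$.

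Your observation that a coarse defect at $z$ appears on both copies is correct for the \emph{cleaned} syndrome. But it counts original defects only if cleaning never increases weight. Merging or cancellation of defects, which you worry about, is the harmless direction. The real danger is the decoder's own corrections creating coarse defects.

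Exactness is needed a second time in the Peierls step. The witness is connected only in the multi-level decoding graph; its level-$2k$ support need not be connected at level $2k$. So the entropy is exponential in the number of nodes summed over all levels, while the Boltzmann factor only sees $|\tau_{2k}|$. The paper bridges this with $\sum_i|\tau_i|\le 4|\tau_{2k}|$ (Lemma~\ref{lem:memlifetimelem6}). That bound is just the geometric series with ratios $1,\tfrac12,1,\tfrac12,\dots$, and it fails under any lossy reduction. Your sum $n\sum_{m}(Be^{-\beta})^m$ conflates these two quantities.

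What is missing is the pair of exact local cleaning lemmas.

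\begin{itemize}
\item For $\cR_Z$ this is easy. The correction stays inside $C_z$, and $\sigma'|_z$ is the parity of $\sigma|_{Y_z}$, so it cannot exceed $|\sigma|_{Y_z}|$.
\item For $\cR_X$, you must clean each $Y_x$ so that the syndrome leaked into neighboring blocks satisfies $\sum_q|g^1_{xq}(f_x)|\le|\sigma_x|$ (Lemma~\ref{lemma:cleaning-syndrome-Yx}). First repair the odd parities of the pieces $Y_{x,j}$ by a correction near the central check. Place it entirely in the layer carrying at least as many odd-parity regions as the other. Then pair syndromes inside each piece with the fewest interface crossings; a left--right crossing, which costs $2$, is used only when both of those regions are odd.
\item Next clean each $Y_q$ so that $|h_\partial|\le|\sigma_q|$ (Lemma~\ref{lemma:cleaning-syndrome-CQ}). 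Use the all-ones cocycle of $C_q$ to make the $(1,1)$ boundary pairs no more numerous than the $(0,0)$ pairs, and charge each $(1,1)$ pair to two distinct syndrome points.
\end{itemize}

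Only after these steps does the cleaned syndrome lie in the image of $\Delta$, and then $\cF^{-1}$ halves it exactly.

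Finally, your bound $\Pr(\tau_{2k}\subseteq\sigma_{2k})\le e^{-\beta|\tau_{2k}|}$ also requires the witness's top-level syndrome to be a coboundary itself (Lemma~\ref{lem:bndfoot}). Syndromes are constrained, so the map $\sigma\mapsto\sigma+\tau_{2k}$ must send valid syndromes to valid ones.
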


\begin{proof}

For our base case $C_{(0)}$, all three items are immediately satisfied.

We now prove statement (1) by induction. Assume $T_{(i)}$ admits a $(t,1)$-embedding $I_{(i)}$ for some $i$-independent constant $t$. The perturbation step produces a $(t',\ell)$-embedding by Lemma~\ref{lemma:embedding-main}. The replacement step then turns this into a $(t,1)$-embedding $I_{(i+1)}$ of $T_{(i+1)}$ by Lemma~\ref{lemma:step2-embedding}, after choosing the geometric constants as in Subsec.~\ref{sec:explicit-choice-constants}. Thus, every $T_{(2k)}$ admits a $(t,1)$-embedding with $t$ independent of $k$.

Statement (2) follows from the base case combined with Lemmas~\ref{lemma:chain-map} and \ref{lemma:chain-map2}.  Proving statement (3) will be the main goal of the next two sections.
\end{proof}

\subsection{Explicit choice of constants} \label{sec:explicit-choice-constants}

We now argue that there is a choice of all the constants that fit the constraints.  Our constants are very far from optimal as the goal is merely to show that the construction is well-defined, and we leave the task of optimizing these constants to the future.

First we note that $A' \leq 4z_v + 1$ and $A \leq 2 z_v + 1$.  By the induction hypothesis, $z_e = 4$ and $z_v = 7$ so $A' \leq 29$ and $A \leq 15$.  Select $\alpha = 1000$ as well as $\alpha_1 = 2\alpha_2 = 139$, which allows for an arrangement of $A$ caps on $B(\alpha \lambda)$ with enough separation.
We can compute
\begin{equation}
\gamma = \alpha\left(1-\cos\frac{\alpha_2}{\alpha}\right) - 1 \geq 1
\end{equation}
We can now compute the constant $\Gamma(\alpha, \alpha_1, \alpha_2, \gamma)$ determined in Appendix~\ref{app:extraemb}
\begin{equation}
\Gamma \leq \frac{18 (2\alpha+1)}{\alpha \gamma \left(1-\cos\frac{\alpha_1}{2\alpha}\right)}  \leq 14920
\end{equation}
Following Lemma~\ref{lemma:step2-embedding}, we find
\begin{equation}
\frac{t}{t'} \leq 64 \pi \frac{(2\alpha+1)^2}{\gamma  (\alpha_2 - 1)} \leq 1.18 \times 10^7
\end{equation}
Finally, in order to apply the Lovász local lemma, we need
\begin{equation}
\alpha^3 (2\alpha^3 \Gamma  t)^k \lambda^{3-k} \leq \alpha^3 ((6.84 \times 10^8) \alpha^3 \Gamma  k)^k \lambda^{3-k}  \leq 1/4
\end{equation}
where we use the upper bound for $t/t'$ as well as $t' \leq k A' \leq 29 k$.  The choice of $k$ which minimizes $\lambda$ gives $t' \leq 29k \leq  5595$ and $\lambda \approx 5.4 \times 10^{24}$.    Thus, with these parameters, we can guarantee that after each iteration, the construction is $(6.6 \times 10^{10}, 1)$-embedded in $\mathbb{R}^3$.  It is likely that one additional round of subdivision and scaling one can improve this to a $(t,1)$-embedding for $t$ close to 1; proving this will require some additional geometric properties of the random perturbation, which we leave to the future.  However, in this case the scaling factor $\ell$ will still need to be quite large.

\section{Decoder}
\label{sec:decoder}

In this section, we describe the decoder $\dec_X$ for the Pauli-$X$ errors and the decoder $\dec_Z$ for the Pauli-$Z$ errors for the code family $\{ C_{(2k)} \}$ constructed in the previous section.
These decoders will be used in the next section to analyze the memory lifetime of the code family.

The goal of the decoder $\dec_X$ for the code $C_{(2k)}$ is the following: given a syndrome $\sigma_X = \delta^1_{(2k)} e_X \in B^2_{(2k)}$ induced by an error $e_X \in C^1_{(2k)}$, find a correction $f_X \in C^1_{(2k)}$ such that
\begin{equation}
  \delta^1_{2k} f_X = \sigma_X.
\end{equation}
The correction need not coincide with the error. However, decoding is successful if the difference between the correction and the error is an $X$-stabilizer, i.e.
\begin{equation} \label{eq:decoder-correct}
  f_X - e_X \in B^1_{(2k)}.
\end{equation}
In this case, we say that the decoder successfully corrects the error.

The decoder $\dec_Z$ for Pauli-$Z$ errors is defined analogously using
the dual complex. Since the $X$ and $Z$ sectors can be analyzed
independently and behave equivalently in our construction, we focus
on just the $X$-decoder $\dec_X$ and suppress the superscript $X$.

As we will discuss in the next subsection,
  our decoder follows a renormalization group (RG)-like approach and satisfies two key properties: (1) syndrome reduction and (2) local computability.
Both conditions are essential for establishing the memory lifetime in the next section.

\subsection{Decoder via coarse-graining maps}
\label{sec:decoder-CG-maps}

We construct the decoder through a sequence of RG-like coarse-graining steps.   Given an initial syndrome
$\sigma_{2k}\triangleq \sigma\in B^2_{(2k)}$, the coarse-graining step at
level $i>0$ takes a syndrome $\sigma_i\in B^2_{(i)}$ as input and outputs
a level-$i$ correction $f_i\in C^1_{(i)}$ together with a coarser syndrome
$\sigma_{i-1}\in B^2_{(i-1)}$, satisfying
\begin{equation}\label{eq:coarse-graining-relation}
  \sigma_i
  =
  \delta^1_i f_i
  +
  \cF_{i-1}(\sigma_{i-1}).
\end{equation}
Here $\cF_{i-1}:C_{(i-1)}\to C_{(i)}$ is either
  the cochain map $\cF_{\cR_X,i-1}$ defined in \Cref{sec:properties-Rx}
  or the cochain map $\cF_{\cR_Z,i-1}$ defined in \Cref{sec:properties-Rz},
  depending on the replacement step at level $i-1$.
We suppress the replacement label for conciseness.

Thus, each coarse-graining step decomposes the syndrome $\sigma_i$ into a locally correctable part $\delta_i^1 f_i$ and a residual part $\cF_{i-1}(\sigma_{i-1})$ represented by the coarse-grained syndrome $\sigma_{i-1}$. Applying this recursively produces the pairs
\begin{equation}
  (\sigma_{2k-1},f_{2k}),
  (\sigma_{2k-2},f_{2k-1}),
...,
  (\sigma_0,f_1).
\end{equation}
At the base level $i=0$, the code has finite size, so the decoder directly outputs a correction $f_0 \in C^1_{(0)}$ such that
\begin{equation}
  \sigma_0 = \delta^1_0 f_0.
\end{equation}
Using these corrections $f_0, f_1,..., f_{2k}$, we construct a global correction $f$ for the original syndrome $\sigma$ by lifting each lower-level correction to level $2k$ and summing the contributions:
\begin{equation}
  f = f_{2k}
  + \cF_{2k-1}(f_{2k-1})
  + \cF_{2k-1} \circ \cF_{2k-2}(f_{2k-2})
  + \cdots
  + \cF_{2k-1} \circ \cdots \circ \cF_{0}(f_0).
\end{equation}
By direct calculation, one can verify that $\delta^1_{2k} f=\sigma$.

The full level-$2k$ decoder is obtained by composing the coarse-graining steps
\begin{equation}
  \sigma_i \mapsto (\sigma_{i-1},f_i),
\end{equation}
together with the base-level decoder for $\sigma_0$. The remainder of this section constructs these coarse-graining steps explicitly.

Before describing the coarse-graining steps, we discuss two additional properties that the coarse-graining map must satisfy: (1) syndrome reduction and (2) local computability. These two properties will play a central role in the proof of memory lifetime.

\subsubsection{Syndrome reduction}
\label{sec:syndrome-reduction}

In the proof of the memory lifetime bound, we compare two competing effects: the Boltzmann suppression coming from the energy of a syndrome, and the entropic/combinatorial contribution coming from the number of syndrome configurations of a certain energy. Syndrome reduction controls the Boltzmann factor, while local computability controls the entropic factor.

In particular, we need the syndrome weight to decrease sufficiently under coarse-graining. Equivalently, if a syndrome persists through many coarse-graining steps, then the original syndrome at level $2k$ must have had a large weight.

Specifically, we will prove the following two bounds:
\begin{itemize}
\item for $\cR_Z$ iterations, the coarse-graining step satisfies $|\sigma_{i-1}| \le |\sigma_i|$;
\item for $\cR_X$ iterations, the coarse-graining step satisfies $|\sigma_{i-1}| \le \frac{1}{2}|\sigma_i|$.
\end{itemize}

The implication of these results is that the syndrome weight decreases throughout the coarse-graining process: $\cR_Z$ steps are weight non-increasing, while $\cR_X$ steps are weight-contracting.

\subsubsection{Local computability and the decoding graph}

To control the entropic factor, we want both the coarse-grained syndrome $\sigma_{i-1}$ and the correction $f_i$ at a given location to depend only on the values of $\sigma_i$ in a constant-size neighborhood of this location.
Because of \Cref{eq:coarse-graining-relation}, this controls the number of syndrome configurations $\sigma_i$ that can be directly responsible for a given syndrome configuration $\sigma_{i-1}$ after coarse-graining.

The local computability definition will rely on the notion of a ``decoding graph'', which we use to capture the history of the coarse-graining steps applied by the decoder. The nodes of the decoding graph index the local complexes where either the syndrome or the corrections at each level can be supported. The simplest definition of local computability uses a ``directed'' graph distance on the decoding graph as a measure of locality. While we could have used spatial locality instead, graph locality will be more convenient for the proofs in Sec.~\ref{sec:memlifetime}.

The decoding graph $G$ is defined to be a union of subgraphs $G_i$, one for each level $i$ of the construction. Because the local complexes $C_{(i), x}$, $C_{(i), q}$, and $C_{(i), z}$ at level $i$  are labeled by the vertices of the Tanner square complex at level $i-1$, i.e. $x \in V_{X,(i-1)}$, $q \in V_{Q,(i-1)}$ and $z \in V_{Z,(i-1)}$, $G_i$ is isomorphic to the Tanner square complex at level $i-1$.
Nodes $v$ in $G_i$ and $v'$ in $G_{i-1}$ are connected if the $v$ is in the local complex $Y_{(i-1),v'}$.
A schematic example of a decoding graph is shown in Fig.~\ref{fig:subgraph}.  A more formal definition is given below.
\begin{definition}[Decoding graph]
\label{def:decoding-graph}
For $1\leq i\leq 2k$, define $G_i = (V_i, E_i)$ to be a graph isomorphic to the 1-skeleton of $T_{(i-1)}$. Label vertices in $G_i$ by vertices in $T_{(i-1)}(0)$ and edges in $G_i$ by edges in $T_{(i-1)}(1)$.
Define $G_0$ to be a single vertex, labeled $v^{(0)}$.

The decoding graph $G = (V,E)$ contains a disjoint union $\bigcup_{i=0}^{2k} G_i$ as well as additional edges connecting consecutive levels $G_{i-1}$ and $G_i$.
A node $v \in G_{i-1}$ is connected to a node $v' \in G_i$ if the latter satisfies $v' \in Y_{(i-1),v}(0)$.
Finally, the root node $v^{(0)}$ is connected to all nodes in $G_1$.
\end{definition}
We will now define local computability with respect to a ``directed'' graph distance on $G$.
The reason for using a directed distance, rather than the usual graph distance, is that the coarse-graining map has a natural directed dependence: the correction or the coarse-grained syndrome in one local region may depend only on the input syndrome in regions that can reach it along directed paths in $G$.

\begin{definition}[Directed distance]
    Define a directed structure on $G$ at each level $G_i$ in the following way. The edge $(u,v) \in E_i$ is directed from $u$ to $v$ if the corresponding vertices in the Tanner square complex $T_{(i-1)}(0)$ satisfy $u \in V_{X, (i-1)}$ and $v \in V_{Q, (i-1)}$ or $u \in V_{Q, (i-1)}$ and $v \in V_{Z, (i-1)}$.  Define $\mrm{dist}_i(u,v)$ to be the smallest length of the directed path from $u$ to $v$ in $G_i$. We will call $\mrm{dist}_i(u,v)$ the directed distance between $u$ and $v$.  If no directed path exists, $\mrm{dist}_i(u,v) = \infty$.
\end{definition}

\begin{definition}[Locally computable] \label{def:locality}
For node $v \in T_{(i-1)}(0)$, define
\begin{equation}
\xi_{i, v}(m) = \sum_{\substack{v' \in T_{(i-1)}(0): \\ \mrm{dist}_i(v',v) \leq m }} \sigma_i \big|_{Y_v}.
\end{equation}
Given a coarse-graining step at level $i$ where $\sigma_i \mapsto (\sigma_{i-1}, f_i)$, we say that the correction $f_i$ is locally computable if it decomposes into a sum of local components, i.e.
    \begin{equation}
        f_i = \sum_{v \in T_{(i-1)}(0)} f_{i,v},
    \end{equation}
such that  $f_i\big|_{Y_v} = f_{i,v}$, and each local component $f_{i,v}$ depends only on the the values of the syndrome $\sigma_i$ in the local complexes $C_{v'}$ for which $\mrm{dist}_i(v',v) \leq 1$. More formally,
\begin{equation}
    f_{i,v}(\sigma_i) = f_{i,v}(\xi_{i, v}(1)).
\end{equation}
We say that the coarse-grained syndrome $\sigma_{i-1}$ is locally computable if its value $\sigma_{i-1,z} = \sigma_{i-1}\big|_{z}$ at location $z \in V_{Z,(i-1)}$ only depends on the values of syndrome $\sigma_i$ in the local complexes $C_{v'}$ for which $\mrm{dist}_i(v',v) \leq 2$. More formally,
\begin{equation}
    \sigma_{i-1,z}(\sigma_i) = \sigma_{i-1,z}(\xi_{i, v}(2)).
\end{equation}
\end{definition}

\begin{figure}
\centering
\includegraphics[width=0.75\textwidth]{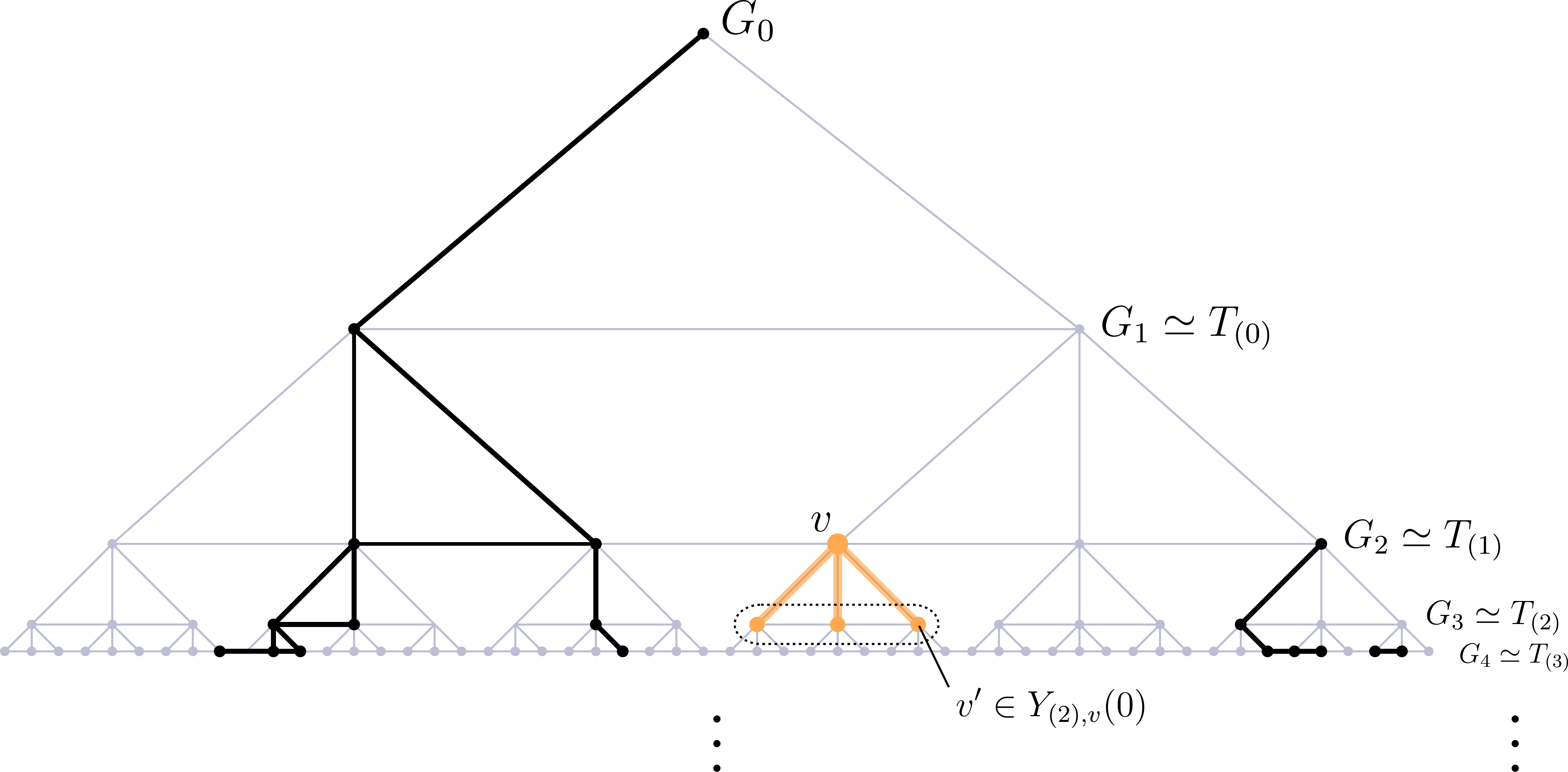}
\caption{A schematic illustration of the nodes and edges in the decoding graph. It contains subgraphs at each level $G_i \cong T_{(i-1)}$. The subgraph at level 0 $G_0$ consists of a single node. The nodes at different levels are connected; the node labeled by $v$ at level $i$ is connected to the node $v'$ at level $i+1$ if $v' \in Y_{(i-1),v}(0)$. There is a single node at level 0, which is connected to all nodes at level 1.
The marking of nodes and edges in black will be discussed in more detail in Sec.~\ref{sec:memlifetime}.  At a high level, the marked nodes and edges illustrate the history of coarse-graining steps starting from some syndrome $\sigma_{2k}$. A node $v \in G_i \simeq T_{(i-1)}$ is marked black if $C_{(i),v}$ contains the support of either syndrome $\sigma_i$, or the coboundary of any local correction $f_{i,s}$ for some $w \in T_{(i-1)}(0)$. An edge between two marked nodes is also marked. }
\label{fig:subgraph}
\end{figure}

\begin{remark}
    Because of the condition $\sigma_i = \delta^1_i f_i + \cF_{i-1}(\sigma_{i-1})$, if $f_i$ is locally computable, $\sigma_{i-1}$ is also locally computable. This is because the syndrome points in $\cF_{i-1}(\sigma_{i-1})$ must be either in $\sigma_i$ (which is graph-distance 0 away) or in the coboundary of some local correction $f_{i,v}$. In the latter case, the correction $f_{i,v}$ depends on the syndrome within graph-distance 1 away, and taking the coboundary of the correction increases graph-distance by at most 1 because $\delta^1 = \delta^1_\loc + g^1$.
\end{remark}
\begin{remark}
The use of the directed distance implies the following dependencies.  We start with subgraphs associated with the $\cR_X$ iteration.  Any correction $f_{i,x}$ will only depend on $\sigma_i\big|_{Y_x}$ and $f_{i,q}$ will only depend on
$$\sigma_i\big|_{Y_q} + \sum_{x \sim q} \sigma_i\big|_{Y_x}.$$
Any syndrome $\sigma_{i-1,z}$ will only depend on
$$\sigma_i\big|_{Y_z} + \sum_{q \sim z} \sigma_i\big|_{Y_q} + \sum_{\substack{ x \sim q \\ q \sim z }} \sigma_i\big|_{Y_x}.$$
The syndrome and correction will not have support in any of the other regions.  For subgraphs associated with the $\cR_Z$ iteration, $f_{i,z}$ will only depend on $\sigma_i\big|_{Y_z}$ and $\sigma_{i-1}\big|_{Y_z}$ will only depend on $\sigma_i\big|_{Y_z}$.  The syndrome and correction will not have support in any of the other regions.
\end{remark}

\subsection{Coarse-graining for $\cR_Z$}
\label{sec:coarse-graining-RZ}

We first describe the coarse-graining step at levels corresponding to the $\cR_Z$ iteration. This is simpler than the coarse-graining step for the $\cR_X$ iteration.
\begin{theorem} \label{thm:coarse-graining-RZ}
Given a level-$i$ syndrome $\sigma \in B^2_{(i)}$, there exist a locally computable coarse-graining map $\sigma \mapsto (\sigma', f)$ where the coarse-grained syndrome $\sigma' \in B^2_{(i-1)}$ and correction $f \in C^1_{(i)}$ satisfy
  \begin{equation} \label{eq:thm-RZ-CG}
    \sigma = \delta^1_{i} f + \cF_{\cR_Z, i-1}(\sigma').
  \end{equation}
Moreover,
  \begin{equation}
    |\sigma'| \le |\sigma|.
  \end{equation}
\end{theorem}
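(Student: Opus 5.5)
The plan is to exploit the shape of the complex produced by $\cR_Z$. In $C_{(i)}=\cR_Z(C_{(i-1)})$ the degree-$2$ space is $C^2_{(i)} = C_Z^2 = \bigoplus_{z\in V_{Z,(i-1)}} C_z^2$, because $C_X^2=C_Q^2=0$. Since $C_z^\bullet = (C^\bullet(Y_z))^T$, the space $C_z^2$ is the space of $0$-chains (vertices) of the connected square complex $Y_z$, and $\delta_z^1$ is the boundary map from $1$-chains to $0$-chains. Hence $H^2(C_z,\delta_z)\cong H_0(Y_z)\cong\ff_2$, with the class of a $0$-chain given by its parity. The coarse-graining will be computed separately for each $z$, using only $\sigma|_{Y_z}$.

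\textbf{Step 1 (identify $\cF_{\cR_Z,i-1}$ in top degree).} First I will make the chain map $\cF_{\cR_Z,i-1}$ from Lemma~\ref{lemma:chain-map2} explicit in degree $2$. It is the composite of the inclusion $C_{(i-1)}\hookrightarrow H(C_{(i)},\delta_\loc)$, which in degree $2$ is the identity $\ff_2^{V_Z}\to H^2(C_Z)$ by Lemma~\ref{lemma:HC-structure2}, with the perturbed inclusion $i'=i\sum_n (h\,d_{\mrm{pert}})^n$. On a degree-$2$ input, $d_{\mrm{pert}}$ lands in degree $3$, which is zero, so $i'=i$ there. Therefore $\cF_{\cR_Z,i-1}$ sends the basis vector $z$ to a fixed representative of the generator of $H_0(Y_z)$, which I take to be a single basepoint vertex $v_z\in Y_z$. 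The retraction $p$ is likewise a chain map. In degree $2$ it is the parity map $\pi_z:\sigma|_{Y_z}\mapsto |\sigma|_{Y_z}| \bmod 2$, and in degree $1$ it lands in $H^1(C_Q)\cong\ff_2^{V_Q}$ with $D_1=\delta^1_{(i-1)}$ there.

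\textbf{Step 2 (define the map).} Set $\sigma'_z := \pi_z(\sigma|_{Y_z})$. Then $\sigma|_{Y_z}+\sigma'_z v_z$ is an even $0$-chain on the connected complex $Y_z$, so it equals $\delta_z^1 f_z$ for some $f_z\in C_z^1$. I will take $f_z$ to be a sum of paths pairing up the points, for instance a minimal one. Put $f=\sum_z f_z$. No connecting map leaves $C_Z$, so $\delta^1_{i} f_z=\delta_z^1 f_z$. This gives \eqref{eq:thm-RZ-CG}:
\[
\delta^1_{i} f + \cF_{\cR_Z,i-1}(\sigma') = \sum_z \bigl(\sigma|_{Y_z}+\sigma'_z v_z\bigr) + \sum_z \sigma'_z v_z = \sigma .
\]
Local computability holds trivially, since $f_z$ and $\sigma'_z$ depend only on $\sigma|_{Y_z}$ (directed distance $0$), as anticipated in the remark after Definition~\ref{def:locality}. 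The weight bound also follows directly: $\sigma'_z=1$ forces $\sigma|_{Y_z}\neq 0$, so $|\sigma'|\le\#\{z:\sigma|_{Y_z}\ne0\}\le|\sigma|$.

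\textbf{Step 3 (show $\sigma'\in B^2_{(i-1)}$).} Write $\sigma=\delta^1_{i} e$. Because $p$ is a chain map,
\[
\sigma'=p(\sigma)=p(\delta^1_{i}e)=D_1\,p(e)=\delta^1_{(i-1)}\bigl(p(e)\bigr),
\]
where $p(e)\in\ff_2^{V_Q}$. The other degree-$1$ summands of $\Tot(E_1)$ vanish: $E_1^{1,0}=0$ and $H^1(C_Z)\cong H_1(Y_z)$. The latter is $H_1$ of a suspension of the connected complex $\partial X_z$, so it vanishes as in Lemma~\ref{lemma:HC-structure}. Hence $\sigma'$ is a coboundary at level $i-1$.

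The main obstacle I expect is Step~1 together with Step~3. One has to check carefully that the homological-perturbation chain maps are exactly the basepoint inclusion and the parity projection in top degree, and that no correction terms from $h\,d_{\mrm{pert}}$ contaminate $p$ in degree $2$. If that bookkeeping becomes awkward, the fallback is to verify directly that parity $\circ\,\delta^1_{i}$ equals $\delta^1_{(i-1)}$ composed with the class map on $C^1_Q$. This should be a local check: $g^1_{qz}$ sends the generator of $H^1(C_q)$ to an odd $0$-chain in $Y_z$ exactly when $q\sim z$, and $\delta_z^1$ preserves parity.
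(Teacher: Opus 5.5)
Your proposal is correct and follows essentially the same route as the paper. In both, $\sigma'|_z$ is the local class (parity) of $\sigma|_{Y_z}$ in $H^2(C_z,\delta_z)\cong\ff_2$, the correction $f=\sum_z f_z$ is chosen inside each $C_z$ to move $\sigma|_{Y_z}$ onto $\cF_{\cR_Z,i-1}(\sigma')|_{Y_z}$, and the weight bound is the same one-bit-per-$z$ count. Your Step~3 uses the chain-map property of the perturbed projection $p'$, which is the same computation as the paper's explicit $e'|_q=[e|_{Y_q}]_{\loc}$ together with $H(g^1_{QZ})=\delta^1_{(i-1)}$: here $p'(e)=e'$. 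The worry you flag also dissolves, since $p'=p$ in degree $2$: the correction $p\Sigma h$ begins with $g h$, and $g$ vanishes on $C_Z$. Your fallback is exactly the paper's direct parity check. The only slip is harmless: the perturbed inclusion is $i'=\sum_n (h\,d_{\mrm{pert}})^n\, i$, with $i$ on the right.
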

Moving forward, we will drop the subscript $i$ in the notation for the local complexes; for example, we will write $C_z$ instead of $C_{(i),z}$.

The key idea is that the syndrome can be coarse-grained by considering configurations locally.
Each local part of the syndrome is contained in a local subcomplex $C_z = (C^\bullet(Y_z))^T$ for some $z \in V_{Z,(i-1)}$, and can be corrected independently of the others.
In each $C_z$, we choose a correction supported only on $C_z$ that brings the local syndrome to the image of $\cF_{\cR_Z,i-1}$, which consists of a single point inside $C_z$.
Geometrically, this correction is a sum of open strings on $Y_z$ that bring all syndromes inside $Y_z$ to the coarse-grained point.
This correction is local, and the resulting syndrome weight cannot increase: each local cluster contributes at most one syndrome point after coarse-graining.

We now formalize this argument.
In the remainder of this subsection, we suppress the subscript $\cR_Z$ in $\cF_{\cR_Z, i-1}$ for notational simplicity.

\begin{proof}[Proof of Theorem~\ref{thm:coarse-graining-RZ}]
We split the proof into four parts:

\vspace{5 pt}
\noindent 1. \textbf{Defining the coarse-grained syndrome $\sigma'$:}
Recall that $C^2_{(i)} \cong C_Z^2 = \bigoplus_z C_z^2$, and $C_z \triangleq (C^\bullet(Y_z))^T$. Thus, the syndrome $\sigma \in C^2_{(i)}$ decomposes as
\begin{equation} \label{eq:sigma-sum-RZ}
    \sigma = \sum_{z \in V_{Z,(i-1)}} \sigma |_{Y_z}
\end{equation}
where $\sigma|_{Y_z}$ denotes the restriction of $\sigma$ to the local complex $Y_z$, viewed as a vector in $(C_z^2)^T$. In this expression and similar expressions below, it is implied that the local cochains can be included in the full complex via extension by zero.

We define the coarse-grained syndrome $\sigma' \in C_{(i-1)}^2 \cong \ff_2^{V_{Z,(i-1)}}$ via
  \begin{equation}
    \sigma'|_z \triangleq [\sigma|_{Y_z}]_{\loc} \in H^2(C_z, \delta_{z}) \cong \ff_2,
  \end{equation}
for each $z \in V_{Z,(i-1)}$. Here, $[\alpha]_{\loc}$ denotes the local cohomology class $\alpha + B^2_z$ in $H^2(C_z, \delta_z)$.
Notice that by $C^3_z = 0$, we have $C^2_z = Z^2_{z}$, and $\sigma|_{Y_z}$ is automatically a local cocycle; thus, the class $[\sigma|_{Y_z}]_{\loc}$ is well-defined.

\begin{figure}[t]
  \centering
  \includegraphics[width=0.5\textwidth]{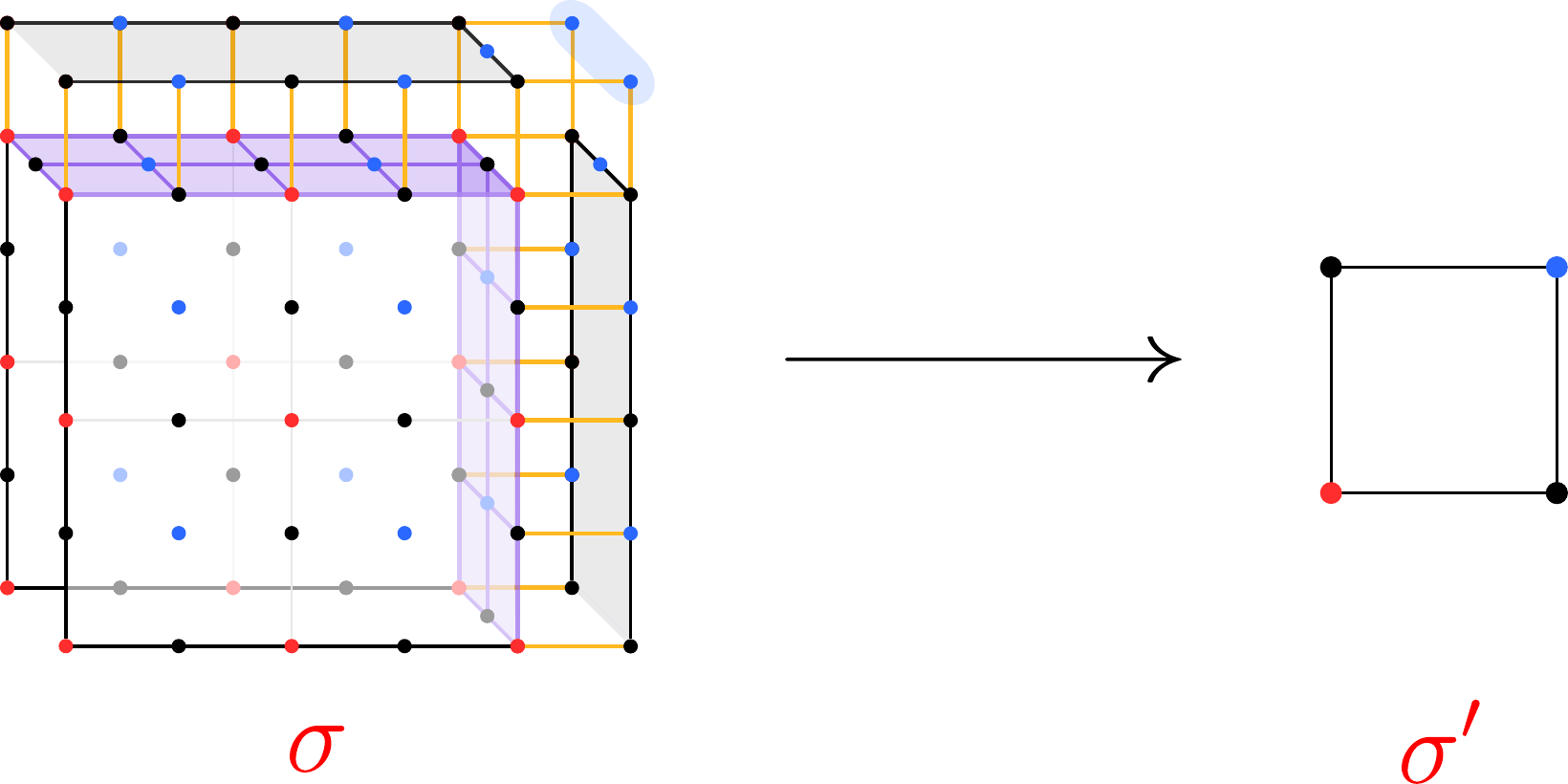}
  \caption{The syndrome $\sigma$ is supported on the level-$i$ complex
    $C_{(i)}$, while the coarse-grained syndrome $\sigma'$ is supported on the level-$(i-1)$ complex $C_{(i-1)}$.  }
\end{figure}

Equivalently,
  \begin{equation}
    \sigma'|_z = \sum_{p  \in Y_z(2)} \sigma|_{p}.
  \end{equation}
namely, $\sigma'|_z$ is the parity of the entries of the vector $\sigma|_{Y_z}$. The local computability of the map $\sigma \mapsto \sigma'$ follows directly from the definition.

\vspace{5 pt}
\noindent 2. \textbf{Verifying that $\sigma' \in B^2_{(i-1)}$:}
We will show that there exists $e' \in C^1_{(i-1)}$ such that $\sigma' = \delta^1_{i-1} e'$.
Since $\sigma \in B^2_{(i)}$, there exists some $e \in C^1_{(i)}$ such that $\sigma = \delta_i^1 e$. We define $e' \in C^1_{(i-1)} \cong \ff_2^{V_{Q,(i-1)}}$ via
  \begin{equation}
    e'|_q \triangleq [e|_{Y_q}]_{\loc} \in H^1(C_q, \delta_q) \cong \ff_2,
  \end{equation}
for each vertex $q \in V_{Q,(i-1)}$.
Since $C^2_q = 0$, we have $C^1_q = Z^1_q$, so $e|_{Y_q}$ is automatically a local cocycle, and the class $[e|_{Y_q}]_{\loc}$ is well-defined.

We now show that $\sigma' = \delta_{i-1}^1 e'$.
Recall from \Cref{sec:properties-Rz} that the induced map $H(g^1_{QZ})$ is precisely the coboundary map $\delta_{i-1}^1$ in the coarse-grained complex:
  \begin{align}
    H(g^1_{QZ}) \cong \delta_{i-1}^1: \  \ff_2^{V_{Q,(i-1)}} \to \ff_2^{V_{Z,(i-1)}}
  \end{align}
Therefore,
  \begin{align}
    (\delta_{i-1}^1 e')|_z
    &= \sum_{q \in V_{Q,(i-1)}} H(g_{qz}^1)\!\left([e|_{Y_{q}}]_{\loc}\right) \\
    &= \left[\sum_{q \in V_{Q,(i-1)}} g_{qz}^1\!\left(e|_{Y_{q}}\right)\right]_{\loc} \\
    &= \left[(\delta_i^1 e)|_{Y_{z}}\right]_{\loc} \\
    &= [\sigma|_{Y_{z}}]_{\loc} \\
    &= \sigma'|_z.
  \end{align}
Thus, $\sigma'  = \delta^1_{i-1} e' \in B^2_{(i-1)}$.

The identity above can also be checked more directly.
From the definition of $e'$, we have
  \begin{equation}
    e'|_q = \sum_{a \in Y_q{(1)}} e_{a},
  \end{equation}
so $e'|_q$ is simply the parity of the entries of vector $e$. Since each map $g_{qz}^1$ sends a basis element of $Y_q{(1)}$ to a basis element of $Y_z{(2)}$, the relation $\sigma' = \delta_{i-1}^1 e'$ follows by counting in $\ff_2$.

\vspace{5 pt}
\noindent  3. \textbf{Constructing the correction $f$:}
A natural candidate for the correction $f\in C^1_{(i)}$ would be $f = e - \cF_{i-1}(e')$, which indeed satisfies \Cref{eq:thm-RZ-CG}. However, this expression depends on the unknown error $e$, whereas the decoder has access only to $\sigma$.
We therefore construct $f$ directly from $\sigma$ via a local map.

By definition of $\sigma'$, for each $z \in V_{Z,(i-1)}$ we have
  \begin{equation}
    \sigma|_{Y_{z}} - \cF_{i-1}(\sigma')|_{Y_{z}}
    \in B^2(C_{z}, \delta_z).
  \end{equation}
Hence there exists a local correction $f_z \in C^1_{z}$ such that
  \begin{equation}
    \delta_{z}^1 f_z = \sigma|_{Y_{z}} - \cF_{i-1}(\sigma')|_{Y_{z}},
  \end{equation}
that is locally computable by definition. We then define
  \begin{equation}
    f \triangleq \sum_{z \in V_{Z,(i-1)}} f_z.
  \end{equation}
We now verify that $f$ satisfies $\sigma = \delta_i^1 f + \cF_{i-1}(\sigma')$. Recall that $\delta_i = \delta_{\loc} + g_{XQ} + g_{QZ}$.
Since $f$ is supported on the local complexes $C_z$ only, the terms $g_{XQ}$ and $g_{QZ}$ vanish on $f$. Therefore,
  \begin{equation}
    \delta_i^1 f =  \delta_\loc^1 f
    = \sum_{z \in V_{Z,(i-1)}} \delta_{z}^1 f_z
    = \sum_{z \in V_{Z,(i-1)}}
    \left(\sigma|_{Y_z} - \mathcal{F}_{i-1}(\sigma')|_{Y_z}\right)
    = \sigma - \cF_{i-1}(\sigma').
  \end{equation}

\vspace{5 pt}
\noindent  4. \textbf{Verifying that $|\sigma'| \le |\sigma|$:}
Since $\sigma'|_z \in \ff_2$ is a single bit, its weight is either $0$ or $1$. Therefore,
  \begin{equation}
    \Big| \sigma'|_{z} \Big| \le \Big| \sigma|_{Y_{z}} \Big|.
  \end{equation}
Using \Cref{eq:sigma-sum-RZ} and that $C_z$ are disjoint for different $z \in V_{Z,(i-1)}$, it follows from the above equation that  $|\sigma'| \le |\sigma|$.

\end{proof}

\subsection{Coarse-graining for $\cR_X$}
\label{sec:coarse-graining-RX}

We now state the main result for the coarse-graining steps associated with $\cR_X$.
\begin{theorem} \label{thm:coarse-graining-RX}
Given a level-$i$ syndrome $\sigma \in B^2_{(i)}$, there exist a locally computable coarse-graining map $\sigma \mapsto (\sigma', f)$ where the coarse-grained syndrome $\sigma' \in B^2_{(i-1)}$ and a correction $f \in C^1_{(i)}$ satisfy
  \begin{equation} \label{eq:thm-RX-CG}
    \sigma = \delta^1_{i} f + \cF_{\cR_X, i-1}(\sigma').
  \end{equation}
Furthermore,
  \begin{equation} \label{eq:thm:syn-red}
    |\sigma'| \le  \frac{1}{2} |\sigma|.
  \end{equation}
\end{theorem}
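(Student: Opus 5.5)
The plan is to mimic the proof of Theorem~\ref{thm:coarse-graining-RZ}. The difference is that the syndrome is now spread over all three rows of \eqref{eq:graded-complex-RX}, so it has to be cleaned row by row: first the $X$-row, then the $Q$-row, and only then is it read off on the $Z$-row. Write $\sigma=\delta^1_i e$ with $e=e_X+e_Q$, where $e_X\in C^1_X$ and $e_Q\in C^1_Q$. Then $\sigma$ has three components:
\[
\sigma_X=\delta_X e_X,\qquad \sigma_Q=g_{XQ}^1e_X+\delta_Q e_Q,\qquad \sigma_Z=g_{QZ}^1e_Q .
\]
No map enters the $X$-row from the other rows, so $\sigma|_{Y_x}=\delta_x(e|_{Y_x})$ lies in $B^2(C_x,\delta_x)$ for every $x$.

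\textbf{Step 1 (clean the $X$-row).} For each $x$, pick a local $f_x\in C^1_x$ with $\delta_x f_x=\sigma|_{Y_x}$. It depends only on $\sigma|_{Y_x}$, which is directed distance $0$. By Lemma~\ref{lemma:HC-structure}, $H^1(C_x)=0$, so $e_X+f_X=\delta_X h$ for some $h\in C^0_X$. Using $g\delta_{\loc}=\delta_{\loc}g$, the residual on the $Q$-row is
\[
\sigma_Q+g_{XQ}^1 f_X=\delta_Q\big(e_Q+g_{XQ}^0h\big),
\]
which lies in $B^2(C_q)$ for each $q$ and is computable from $\sigma|_{Y_q}$ and the $f_x$ with $x\sim q$.

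\textbf{Step 2 (clean the $Q$-row).} Pick $f_q\in C^1_q$ killing this residual. Then $w_Q:=e_Q+g^0_{XQ}h+f_Q$ satisfies $\delta_Q w_Q=0$. The $Z$-row residual is $\sigma_Z+g^1_{QZ}(f_Q+g^0_{XQ}h)=g^1_{QZ}w_Q$, because $g_{QZ}g_{XQ}=0$ ($g^2=0$ with $g_{XZ}=0$). Since $C_z^1=0$, this residual is determined by the classes $[w_Q|_{Y_q}]\in H^1(C_q)\cong\ff_2$. By the computation in Lemma~\ref{lemma:HC-structure}, on each $z$ it equals $(b_z,b_z)=\Delta(b_z)$ with $b_z=\sum_{q\sim z}[w_Q|_{Y_q}]$. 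Two things follow. Define $\sigma'_z:=b_z$, so that $\sigma=\delta^1_i f+\cF_{\cR_X,i-1}(\sigma')$ with $f=f_X+f_Q$. And $\sigma'=\delta^1_{(i-1)}e'$ with $e'_q=[w_Q|_{Y_q}]$, so $\sigma'\in B^2_{(i-1)}$. Local computability holds because $\sigma'_z$ depends on data within directed distance $2$ of $z$, matching the remark following Definition~\ref{def:locality}.

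\textbf{Step 3 (the halving bound).} This is the main obstacle. The proof of Theorem~\ref{thm:coarse-graining-RZ} gave $|\sigma'|\le|\sigma|$ for free, but here the bound depends on choosing $f_x$ and $f_q$ canonically, and a careless choice can pump weight into the $Z$-row. I would make $\sigma'_z$ independent of the choices by expressing it as a syndrome parity. Each $Y_x$ and each $Y_q$ is a suspension: two layer copies, labelled $0$ and $1$, glued by a cylinder. Using this, I would show that $\sigma'_z$ equals the parity of $\sigma$ over the layer-$0$ cells of the region consisting of $z$'s own point, the $Y_q$ with $q\sim z$, and the $Y_x$ reachable from them. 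The same parity computed on layer $1$ should also equal $\sigma'_z$; checking this should reduce to a $\ff_2$ count as in Part 2 of the proof of Theorem~\ref{thm:coarse-graining-RZ}.

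Given these two parities, $\sigma'_z=1$ forces at least one syndrome cell in the layer-$0$ region of $z$ and at least one in the layer-$1$ region. The remaining difficulty is to assign each syndrome cell to a single $z$, so that the regions of distinct $z$ do not double count. I would first try to define the regions through a fixed partition of the layer cells of each $Y_x$ and $Y_q$, for example by nearest boundary point. With such a partition, summing over $z$ gives $2|\sigma'|\le|\sigma|$. If the partition cannot be made consistent, the fallback is to charge only through the two points of $Y_z$ plus a minimal-weight canonical choice of $f_q$ supported in the layer of the corresponding syndrome.
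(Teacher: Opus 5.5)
Your Steps 1 and 2 are correct and follow the paper's skeleton. You clean the $X$-row using $H^1(C_x)=0$, clean the $Q$-row, and use $g_{QZ}g_{XQ}=0$ to show that the $Z$-row residual is $\Delta(\sigma')$ with $\sigma'=\delta^1_{(i-1)}e'$.

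The gap is in Step 3, and its premise is false, for two reasons.

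First, $\sigma'$ is not independent of the choices. The correction $f_q$ is fixed only up to the all-ones cocycle $\mathbf{1}\in Z^1(C_q)$. Replacing $f_q$ by $f_q+\mathbf{1}$ flips $\sigma'_z$ at every $z\sim q$. So the halving bound has to come from the rule that selects $f_x$ and $f_q$, not from an invariant of $\sigma$.

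Second, the identity ``layer-$0$ parity $=$ layer-$1$ parity'' fails. Take $q$ with two $Z$-neighbours $z_1,z_2$, and flip the single qubit at the centre vertex $q^0$ of the layer-$0$ star of $Y_q$.
\begin{itemize}
\item The syndrome $\sigma$ is the two layer-$0$ spoke checks at $q^0$, so $|\sigma|=2$.
\item With your nearest-boundary partition, the layer-$0$ parities are $1$ at both $z_1$ and $z_2$, while the layer-$1$ parities are $0$.
\item Reading $\sigma'$ off layer $0$ gives $|\sigma'|=2>|\sigma|/2$. The mirror flip at $q^1$ defeats the layer-$1$ rule in the same way.
\item Without the partition, a flip at the leaf vertex $z_2^0$ of $Y_q$ already gives parity $1$ on layer $0$ and $0$ on layer $1$ for $z_1$.
\end{itemize}
Structurally, equal parities would require the union of the two regions to be a relation among $Z$-checks. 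The local relations, such as $\{z^0,z^1\}$ together with the vertical edges of all $Y_q$ over $z$, run through cylinder cells that your regions exclude. So no fixed linear parity rule can work. In the example above the correct output is $\sigma'=0$, and getting it requires a data-dependent choice.

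What is missing is weight monotonicity of the two cleaning steps. You already observed that the final $Z$-row residual equals $\Delta(\sigma')$, so it has weight exactly $2|\sigma'|$. The paper combines this with two facts.
\begin{enumerate}
\item[(i)] Each $f_x$ can be chosen with $\sum_q|g^1_{xq}f_x|\le|\sigma_x|$, where $\sigma_x=\sigma|_{Y_x}$. The construction has three parts:
\begin{itemize}
\item Split $Y_x$ into pieces $Y_{x,j}$, one for each face $j$ of $X_x$.
\item Fix the piece parities with a correction $f_0$ on the edges at the central vertex. Place it in the layer that carries the larger total syndrome parity.
\item Pair the syndromes inside each piece using the fewest interface crossings.
\end{itemize}
\item[(ii)] Each $f_q$ can be chosen with $\sum_z|g^1_{qz}f_q|\le|\sigma_q|$. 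Add $\mathbf{1}$ if needed so that boundary pairs equal to $(1,1)$ are no more numerous than pairs equal to $(0,0)$. Each $(1,1)$ pair matched with a $(0,0)$ pair then forces two syndrome points.
\end{enumerate}
Together these give $2|\sigma'|=|\sigma_2|\le|\sigma_1|\le|\sigma|$.

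Your fallback points towards (ii) but does not address (i), which is the hard part. An arbitrary $f_x$, for example a pairing string routed along the rim $\partial X'_x\times\{0\}$, can push far more than $|\sigma_x|$ onto the $Q$-row.
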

\noindent In the remainder of this subsection, we suppress the superscript $\cR_X$ in $\cF_{\cR_X, i-1}$. We will also suppress the subscript $i$ in many of the variables.  For example, we will write $C_X$ instead of $C_{(i),X}$, $C_x$ instead of $C_{(i),x}$, and $f_x$ instead of $f_{i,x}$.

In contrast to the case of $\cR_Z$, the coarse-graining step for $\cR_X$ is more complicated.
For $\cR_Z$, the coarse-grained syndrome $\sigma'$ can be defined directly from the local cohomology classes of the original syndrome $\sigma'|_z = [\sigma|_{Y_z}]$.
For $\cR_X$, however, such a direct construction is not possible.
Instead, the map is constructed as a sequence of steps:
\begin{enumerate}
  \item \textbf{Cleaning the syndrome from $C_X$:}
           For each $x\in V_{X,(i-1)}$, we find a local correction $f_x\in C_x^1$, so that the updated syndrome and error are
          \begin{equation}
              \sigma_1 \triangleq \sigma + \sum_{x \in V_{X,(i-1)}} \delta^1 f_x, \qquad   e_1 \triangleq e + \sum_{x \in V_{X,(i-1)}} f_x.
          \end{equation}
        where $\sigma = \delta^1_i e$, and $\sigma_1$ is supported on $Y_Q \cup Y_Z$.
  \item  \textbf{Cleaning the error from $C_X$:}
            This step is not performed by the decoder; instead, this is an auxiliary step used to prove that the final coarse-grained syndrome is a coboundary. For each local complex $C_x$, we show that there exists a stabilizer $s_x \in C^0_x$ that removes the error within $C_x$:
          \begin{equation}
              e_1' \triangleq e_1 + \sum_{x \in V_{X,(i-1)}} \delta^0_i s_x
          \end{equation}
          so that $e'_1 \in C_Q^1$.

  \item \textbf{Cleaning the syndrome from $C_Q$:}
          For each $q \in V_{Q,(i-1)}$, we find a local correction $f_q \in C^1_q$ so that the final cleaned syndrome and the new error are:
          \begin{equation}
              \sigma_2 \triangleq \sigma_1 + \sum_{q \in V_{Q,(i-1)}} \delta^1 f_q, \qquad e_2 \triangleq e_1' + \sum_{q \in V_{Q,(i-1)}} f_q,
          \end{equation}
        which satisfy $\sigma_2 \in C_Z^2$ and $e_2 \in C_Q^1$.
  \item \textbf{Defining the coarse-grained syndrome $\sigma'$ and correction $f$:}
            Finally, the new syndrome $\sigma'$ can be defined directly from the cleaned syndrome $\sigma_2 \in B^2_{(i)}$. We will show that the following is well-defined:
            \begin{equation}
            \sigma' = \cF_{i-1}^{-1}(\sigma_2),
            \qquad e' = \cF_{i-1}^{-1}(e_2).
            \end{equation}
            The total correction is given by
            \begin{equation}
                f = \sum_{x \in V_{X,(i-1)}} f_x + \sum_{q \in V_{Q,(i-1)}} f_q.
            \end{equation}
\end{enumerate}
Throughout the section, we also track the syndrome weight. In particular, we show that
\begin{equation}
    |\sigma_1| \leq |\sigma|, \quad
    |\sigma_2| \leq |\sigma_1|, \quad
    |\sigma'| = \frac{1}{2} |\sigma_2|.
\end{equation}
Together, these inequalities imply the desired syndrome-reducing property from \Cref{eq:thm:syn-red},
\begin{equation}
  |\sigma'| \leq \frac{1}{2} |\sigma|.
\end{equation}

The first three steps of the coarse-graining procedure are established in separate lemmas, while the final step is carried out in the proof of Theorem~\ref{thm:coarse-graining-RX}.

\subsubsection{Cleaning the syndrome from $C_X$}

We begin with the step that cleans the syndrome from the local complexes
$C_x$. This is the most involved part of the decoding step.

First observe that if $\sigma\in B^2(C_{(i)})$, then its restriction to
$Y_x$ for $x \in V_{X,(i-1)}$ is a local coboundary. Indeed, if $\sigma=\delta_i^1 e$, then
\begin{equation}
  \sigma_x:=\sigma|_{Y_x} = (\delta_i^1 e)|_{Y_x} = \delta_x^1(e|_{Y_x}),
\end{equation}
so $\sigma_x\in B_x^2$.
Thus, the syndrome can be cleaned from each local complex $C_x$ independently.

\begin{lemma}[Cleaning the syndrome from $C_x$]
\label{lemma:cleaning-syndrome-Yx}

Let $\sigma_x = \sigma|_{Y_x} \in B^2_x$ be a local syndrome.
Then there exists a local correction $f_x \in C^1_x$ such that
\begin{enumerate}
    \item[(1)] $f_x$ removes the syndrome on $Y_x$, i.e. $\delta^1_{x} f_x = \sigma_x$;
    \item[(2)] $\sum_{q \in V_{Q,(i-1)}} |g^1_{xq}(f_x)|
    \le |\sigma_x|$.
\end{enumerate}
\end{lemma}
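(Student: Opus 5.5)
The plan is to build $f_x$ by hand on the geometry of $Y_x$. Recall that $Y_x$ is the suspension of $\partial X'_x$. It consists of two copies of the subdivided star $X'_x$: the top disk $X'_x\times\{1\}$ and the bottom disk $X'_x\times\{0\}$. These are joined by the cylinder $\partial X'_x\times[0,1]$, which has thickness one.

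Since $\partial X_x$ consists exactly of the $q$--$z$ edges of $T_{(i)}$ not containing $x$, each $Y_q\subset Y_x$ lies on the \emph{seam}. By the seam we mean the two copies $\partial X'_x\times\{0,1\}$ together with the vertical edges above the $z$-points. Hence $\sum_q |g^1_{xq}(f_x)|$ counts the edges of $f_x$ lying on the seam, with each seam edge counted once for each $Y_q$ that contains it. This multiplicity is at most $2$, at the $z$-vertical edges. Condition (2) therefore says that $f_x$ may cross the seam at most about once per syndrome face.

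I would work in the dual (``string'') picture. A 1-cochain supported on an edge flips the parity of the 2-cells containing it, so $f_x$ is a union of dual strings pairing up syndrome faces. Split $\sigma_x=\sigma_{\mathrm{top}}+\sigma_{\mathrm{cyl}}+\sigma_{\mathrm{bot}}$ according to whether a face lies in the top disk, the cylinder, or the bottom disk.

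\textbf{Step 1.} Each cylinder face is moved into the adjacent disk through one horizontal seam edge, which is an edge of $\partial X'_x\times\{0\}$ or $\partial X'_x\times\{1\}$. I choose whichever one is not a $z$-vertical edge, so the cost is $1$ per cylinder face.

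\textbf{Step 2.} Inside each disk, $X'_x$ is a cone and hence contractible. Its interior cells away from $\partial X'_x$ are not seen by any $g_{xq}$. So every even collection of faces in one disk is the coboundary of a 1-cochain supported on interior edges, which costs $0$. For this I would use $H^2(X'_x,\partial X'_x)\cong \tilde H^1(\partial X'_x)$ and the relative cochain complex. Alternatively, I would route everything to a fixed base square adjacent to $x$.

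\textbf{Step 3.} After Steps 1 and 2, the top disk carries a parity $\epsilon_{\mathrm{top}}$ and the bottom disk a parity $\epsilon_{\mathrm{bot}}$. Since $\sigma_x$ is a coboundary, the total parity is even, so $\epsilon_{\mathrm{top}}=\epsilon_{\mathrm{bot}}$. If both are odd, I connect the two base squares by one string through a single cylinder strip, costing $2$ seam edges. In that case both the top and bottom groups, after absorbing the cylinder faces, were nonempty, so some budget remains. The total cost is $c+2\le|\sigma_x|$, where $c$ is the number of cylinder faces. A short case check, according to whether the odd parity comes from original disk faces or from absorbed cylinder faces, should confirm the inequality in all cases. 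Finally, I would assign cylinder faces to the top or bottom disk so as to balance parities; that choice can only help.

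The step I expect to be the main obstacle is the legitimacy of the string picture itself. Edges of $X'_x$ that come from the subdivided $x$--$q$ edges of $T_{(i)}$ have degree up to $z_e=4$, and the vertices of $\partial X'_x$ sit where several original squares meet. So a dual ``string'' can flip more than two faces, and not every even set need be a coboundary on the nose.

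I would handle this in two ways. First, I would only use edges of degree exactly $2$: those interior to a single subdivided square, and the horizontal seam edges between a disk square and a cylinder square. Second, I would use the hypothesis $\sigma_x\in B^2_x$, rather than mere evenness, to argue that the syndrome restricted to each sector decomposes compatibly. Concretely, I would write $\sigma_x=\delta^1_x(e|_{Y_x})$ and replace $e|_{Y_x}$ by a cohomologous cochain. The new cochain should be obtained by pushing the part of $e$ supported on the $x$--$q$ spokes and the seam off into square interiors, using the contractibility of each square and of the cone $X'_x$. The remaining seam contribution is then controlled face-by-face as in Steps 1 to 3, with the seam multiplicity checked not to exceed the counted budget.
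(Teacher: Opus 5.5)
\textbf{The gap is in Step 2, and Step 3 inherits it.} You assert that every even set of faces in one disk is the coboundary of a cochain on non-seam disk edges, so that after Step 1 each disk carries a single parity bit. This is false in general.

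Inside one subdivided square, all faces are equivalent modulo the degree-$2$ interior edges. An edge on a subdivided spoke $x$--$q$, however, flips one face in \emph{every} square containing that spoke. So the seam-free moves in a disk act on the vector of per-square parities in $\ff_2^{E_G}$ exactly by $\mathrm{Im}\,\delta_G$. Here $G$ is the link graph of $x$: its vertices are the spokes and its edges are the squares at $x$. The obstruction to cleaning a disk for free is therefore $\ff_2^{E_G}/\mathrm{Im}\,\delta_G\cong H^1(G)$. This is the group $H^2(X'_x,\partial X'_x)\cong\widetilde H^1(\partial X'_x)$ you wrote down yourself, and its dimension is the cycle rank of $G$, not $1$.

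Consider an $X$-vertex on a line where three sheets of $T_{(i-1)}$ meet; the replacement step produces such vertices. There $G$ is a theta graph of cycle rank $2$. Take squares $j_1,j_3$ on different arcs, and let $\sigma_x$ have one face in $j_1$ and one in $j_3$ in \emph{each} layer. Then $\sigma_x\in B^2_x$, since two cross-layer strings realize it with seam cost $4=|\sigma_x|$. But $\mathbf{1}_{j_1}+\mathbf{1}_{j_3}\notin\mathrm{Im}\,\delta_G$, so the top disk, although even, cannot be cleaned at zero seam cost. Step 2 yields nothing. Step 3's rule, which connects the layers only if both disks are odd, declares no connection necessary, and that is wrong.

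Your fallback does not repair this. Restricting to degree-$2$ edges throws away exactly the free spoke moves that realize $\mathrm{Im}\,\delta_G$. Pushing $e$ off the spokes and rerunning Steps 1--3 still presupposes a one-bit obstruction.

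\textbf{The missing idea, which is how the paper argues, is to do the bookkeeping per square piece} $Y_{x,j}$ (both layers plus the two cylinder strips over that square):
\begin{enumerate}
\item Let $\widetilde\sigma\in\ff_2^{E_G}$ be the vector of piece parities of $\sigma_x$. It equals $\delta_G$ applied to the parities of $e$ on the spoke interfaces, so it lies in $\mathrm{Im}\,\delta_G$ precisely because $\sigma_x\in B^2_x$.
\item Realize a preimage $\alpha$ by the edges at $x$ in a single layer. This has seam cost $0$. Choose that layer to be the one whose region parities have the larger total.
\item Clean each now-even piece separately, at seam cost $|\widetilde\tau_{j,0}|+|\widetilde\tau_{j,1}|+|\widetilde\tau_{j,\mathrm{left}}|+|\widetilde\tau_{j,\mathrm{right}}|-|\widetilde\tau_{j,0}+\widetilde\tau_{j,1}|$.
\end{enumerate}
Since the spoke fix changes only the parities in the chosen layer, the triangle inequality together with the layer choice bounds the total by $|\sigma_x|$. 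Your disk-level argument is valid only when the obstruction really is total parity, e.g.\ at an $X$-vertex inside a flat sheet, where $G$ is a cycle.
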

\noindent Before proving the lemma, we explain why the second condition guarantees that cleaning the syndrome from $C_x$ does not increase the total weight of the updated syndrome.
\begin{corollary} \label{cor:CX}
Suppose that for each $x \in V_{X,(i-1)}$, we found a local correction $f_x \in C^1_x$ satisfying Lemma~\ref{lemma:cleaning-syndrome-Yx}. Then
\begin{equation} \label{eq:RX-total-counting-Cx-1}
    |\sigma_1| = \big|\sigma + \delta^1_i \sum_{x \in V_{X,(i-1)}} f_x\big| \le |\sigma|.
\end{equation}
\end{corollary}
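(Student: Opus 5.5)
The plan is to compute $\delta^1_i f$ for $f=\sum_x f_x$ directly from the block form of the coboundary, then bound the weight by the triangle inequality. In the $\cR_X$ iteration we have $\delta^1 = \delta_\loc^1 + g^1$ with $g^1_{XZ}=0$. Since each $f_x$ lies in $C^1_x$, its coboundary has only two kinds of components: the local component $\delta^1_x f_x \in C^2_x$, and the connecting components $g^1_{xq}(f_x)\in C^2_q$ for $q$ adjacent to $x$. There is no component in $C_Z^2$. Hence
\begin{equation*}
  \delta^1_i \sum_{x} f_x = \sum_x \delta^1_x f_x + \sum_x \sum_q g^1_{xq}(f_x).
\end{equation*}

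The first step is to use property (1) of Lemma~\ref{lemma:cleaning-syndrome-Yx}, which says $\delta^1_x f_x=\sigma_x=\sigma|_{Y_x}$. The spaces $C^2_x$ for distinct $x$ are disjoint direct summands of $C^2_X$, and $\sigma|_{C^2_X}=\sum_x \sigma_x$. Therefore adding $\sum_x\delta^1_x f_x$ exactly cancels the $C^2_X$-part of $\sigma$. It follows that
\begin{equation*}
  \sigma_1 = \sigma|_{C^2_Q\oplus C^2_Z} + \sum_{x,q} g^1_{xq}(f_x).
\end{equation*}

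The second step is the weight bound. Applying the triangle inequality for Hamming weight gives
\begin{equation*}
  |\sigma_1| \le |\sigma|_{C^2_Q\oplus C^2_Z}| + \sum_x\sum_q |g^1_{xq}(f_x)|.
\end{equation*}
By property (2) of Lemma~\ref{lemma:cleaning-syndrome-Yx}, the double sum is at most $\sum_x|\sigma_x| = |\sigma|_{C^2_X}|$. The weight is additive across the direct sum $C^2_X\oplus C^2_Q\oplus C^2_Z$, so $|\sigma|_{C^2_X}| + |\sigma|_{C^2_Q\oplus C^2_Z}| = |\sigma|$, which yields $|\sigma_1|\le|\sigma|$.

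There is no real obstacle here, since all the content sits in Lemma~\ref{lemma:cleaning-syndrome-Yx}. The only point needing care is confirming that $g^1_{xz}=0$ in this construction, and that the syndrome components $\sigma_x$ on different $Y_x$ do not overlap in $C^2_X$. Both facts hold by the definition of $C_{(i+1)}$ as a direct sum of local complexes in \eqref{eq:graded-complex-RX}.
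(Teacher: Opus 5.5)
Your proposal is correct and matches the paper's proof step for step. You split $\delta^1_i\sum_x f_x$ into the local part, which exactly cancels the $C^2_X$-component of $\sigma$ by property (1), and the connecting part $\sum_{x,q} g^1_{xq}(f_x)$, which lands in $C^2_Q$ because $g^1_{XZ}=0$. The triangle inequality, property (2), and additivity of Hamming weight over $C^2_X\oplus C^2_Q\oplus C^2_Z$ then give $|\sigma_1|\le|\sigma|$.
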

\begin{proof}
Recall that $\delta^1_i = \delta^1_\loc + g^1_{XQ} + g^1_{QZ}$, where $g^1_{XQ}: C^1_X \rightarrow C^2_Q$ and $g^1_{QZ}: C^1_Q \rightarrow C^2_Z$. Let $f_X:=\sum_x f_x$. Since $f_X$ is supported on $C_X^1$, we have
\begin{equation}
  \delta_i^1 f_X = \sum_{x \in V_{X,(i-1)}} \left ( \delta_x^1 f_x + \sum_{q \in V_{Q,(i-1)}} g^1_{xq}(f_x) \right).
\end{equation}
The first term cleans the syndrome out of $C_X = \bigoplus_x C_x$ because
$\delta_x^1f_x=\sigma_x$.  Therefore, the net change in the syndrome weight is
  \begin{align}
    |\sigma_1|  &= \big|\sigma + \sum_x \sigma_x + \sum_{x,q}  g^1_{xq}(f_x) \big| \\
    &\leq \big|\sigma + \sum_x \sigma_x\big| + \sum_{x,q}  |g^1_{xq}(f_x)| \\
    &\leq |\sigma| - \sum_x |\sigma_x| + \sum_x |\sigma_x| = |\sigma|
  \end{align}
where the last line follows from the definition $\sigma_x :=\sigma|_{Y_x}$ and statement (2) of Lemma~\ref{lemma:cleaning-syndrome-Yx}.
\end{proof}
\begin{proof}[Proof of Lemma~\ref{lemma:cleaning-syndrome-Yx}]
We fix $x$ and drop the subscript $x$ throughout the proof; for example, $\sigma_x \in C_x^2$ and $f \in C_x^1$ will be denoted as $\sigma$ and $f$.

The complex $Y_x$ decomposes into pieces, each corresponding to a face $j \in X_x(2)$; see \Cref{fig:region} for one such piece. On each piece $Y_{x,j}$, if it contains an even syndrome weight, we can choose a correction $f_j$ consisting of open strings that pair the syndrome points within that piece. It is straightforward to verify that the choice can be made so that statement~(2) of the Lemma holds locally for each such piece.

The case when the number of syndrome points in $Y_{x,j}$ is odd is more subtle and is addressed separately. We construct a correction $f_0$, such that $\sigma + \delta^1 f_0$ has even syndrome parity on every piece $Y_{x,j}$. We then choose a correction $f_j$ on each piece. Finally, we show that
\begin{equation}
  f = f_0 + \sum_{j \in X_x(2)} f_j
\end{equation}
satisfies the desired properties.

\begin{figure}[t]
  \centering
  \includegraphics[width=0.37\textwidth]{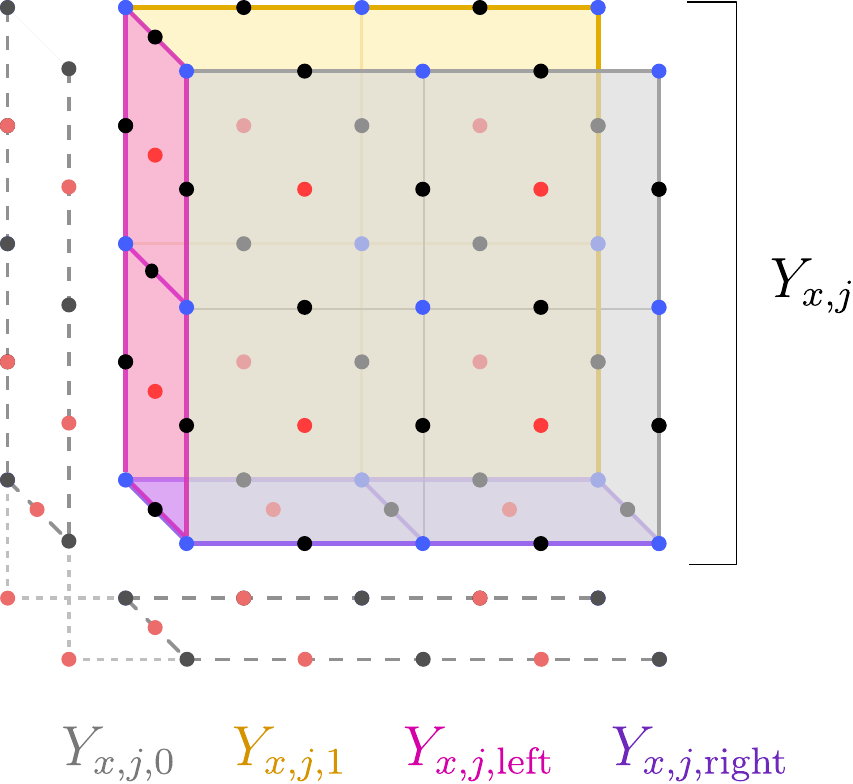}
  \caption{Decomposition of a piece $Y_{x,j}$ associated with a face $j \in X_x(2)$. Each piece consists of $Y_{x,j,0}$ (gray) and $Y_{x,j,1}$ (orange), corresponding to layers $0$ and $1$, together with $Y_{x,j,\mathrm{left}}$ (pink) and $Y_{x,j,\mathrm{right}}$ (purple) which connect the two layers.}
  \label{fig:region}
\end{figure}

\vspace{5 pt}
\noindent 1. \textbf{Defining the regions in $Y_x$.}
Recall that
\begin{equation}
  Y_x = \left(X'_x \times \{0\}\right) \sqcup \left(X'_x \times \{1\}\right) \sqcup \left(\partial X'_x \times [0,1]\right) / \sim,
\end{equation}
where $X'_x$ is a subdivision of $X_x$.
Each face $j \in X_x(2)$
  corresponds to a \emph{piece} $Y_{x,j}$ of $Y_x$, so that
\begin{equation}
  Y_x
  = \bigcup_{j \in X_x(2)} Y_{x,j}.
\end{equation}
We further decompose each piece into four regions:
\begin{equation}
  Y_{x,j}
  = \left(Y_{x,j,0} \cup Y_{x,j,1} \cup Y_{x,j,\mathrm{left}} \cup Y_{x,j,\mathrm{right}}\right),
\end{equation}
which are illustrated in \Cref{fig:region}. We call the one-dimensional subcomplex separating two adjacent two-dimensional regions an interface. We use both the local interfaces inside a fixed $Y_{x,j}$ and the interfaces between adjacent pieces $Y_{x,j}$ and $Y_{x,j'}$.

Accordingly, we decompose the syndrome as
\begin{align}
  \sigma = \sum_{j \in X_x(2)} \sigma_j
  = \sum_{j \in X_x(2)} \left(\sigma_{j,0} + \sigma_{j,1}
  + \sigma_{j,\mathrm{left}} + \sigma_{j,\mathrm{right}}\right),
\end{align}
where each term in the sum is supported in the region denoted by the summation index.

We will use the notation $\wt{\alpha_R} \in \ff_2$ to denote the parity of the syndrome $\alpha$ in a given region $R$; for example,
\begin{equation}
  \wt{\sigma_j} \coloneqq \sum_{s \in Y_{x,j}(2)} \sigma_j \big|_s,
\end{equation}
where the sum runs over all coordinates in the corresponding region.

\vspace{5 pt}
\noindent 2. \textbf{Constructing $f_0$.}
Let $G = (V_G, E_G)$ be the link graph for the vertex $x$ in the complex $X_x$. Thus, the edge set $E_G \cong X_x(2)$ corresponds to the faces of $X_x$ incident to $x$, while the vertex set ${V_G} \cong X_x(1) \setminus\partial X_x$ corresponds to the edges of $X_x$ incident to $x$. Note that the elements of ${V_G}$ index the interface regions between adjacent pieces $Y_{x,j}$.

Let $\wt \sigma \in \ff_2^{E_G}$ be the vector recording the parity of $\sigma$ on each piece $Y_{x,j}$:
\begin{equation}
  \wt{\sigma} = (\wt{\sigma_j})_{j \in {E_G}}.
\end{equation}
We first show that $\wt \sigma \in \mathrm{Im} \, \delta_G$. Since $\sigma \in B^2_x$, there exists some $e \in C^1_x$ such that $\delta^1_x e = \sigma$. We decompose $e$ according to the regions defined above:
\begin{equation}
  e = \sum_{j \in {E_G}} e_j + \sum_{v \in {V_G}} e_v,
\end{equation}
where $e_v$ is supported on the interface region shared by $\{Y_{x,j}: j \ni v\}$, while $e_j$ is supported on the remaining part of the piece corresponding to $j \in {E_G}$, excluding interface regions between the pieces.

Let $\wt e_\int \in \ff_2^{V_G}$ be the vector recording the parity of $e$ on each interface region:
\begin{equation}
  \wt{e}_\int = (\wt{e_v})_{v \in {V_G}}.
\end{equation}
Let $\delta_G: \ff_2^{V_G} \to \ff_2^{E_G}$ be the coboundary map of $G$. We claim that
\begin{equation}
  \wt \sigma = \delta_G \wt e_\int.
\end{equation}
This holds because a basis element of $C_x^1$ supported in an interface region indexed by $v$ changes the syndrome parity on the pieces indexed by $j \ni v$, whereas a basis element supported inside $Y_{x,j}$ leaves the syndrome parity on $Y_{x,j}$ and all other pieces unchanged. Applying this to the definition of $\wt e_\int$ yields the claim.

Thus, $\wt{\sigma}$ lies in $\operatorname{Im}\delta_G$. Although the decoder does not know $e$, because $\wt{\sigma}  \in \operatorname{Im}\delta_G$, we choose any $\alpha \in \ff_2^{V_G}$ satisfying
\begin{equation}
  \delta_G \alpha = \wt{\sigma}.
\end{equation}
We now choose $f_0$ so that its parity on the interface indexed by $v\in V_G$ is $\alpha_v$, i.e. it satisfies
\begin{equation}
  \wt f_0 = \bigl(\wt{(f_0)_k}\bigr)_{k \in {V_G}} = \alpha.
\end{equation}

Our choice of $f_0$ will be such that its support entirely lies in only one of the two layers, 0 or 1, and choose the correction that gives the smaller contribution to $g^1_{xq}(f)$.
By symmetry between the two layers, we may assume
\begin{equation} \label{eq:proof-assumption-1}
  \sum_{j \in X_x(2)} |\wt{\sigma}_{j,1}|
  \leq
  \sum_{j \in X_x(2)} |\wt{\sigma}_{j,0}|.
\end{equation}
Under this assumption, we take $f_0$ to be supported near the center of the copy $X'_x \times \{0\}$, on the qubits adjacent to the central $X$-check, which is shared by all faces. These qubits are in one-to-one correspondence with vertices in ${V_G}$, so we choose the value of $f_0$ on them to be $\alpha$.

Let the updated syndrome after applying $f_0$ be
\begin{equation}
  \tau = \sigma + \delta^1_x f_0.
\end{equation}
By construction,
\begin{equation}
  \wt{\tau_j} = 0
\end{equation}
for every $j \in X_x(2)$. Thus, the updated syndrome $\tau$ has even syndrome parity on each piece $Y_{x,j}$.

\vspace{5 pt}\noindent 3. \textbf{Constructing $f_j$ for each piece.}
For each $j \in X_x(2)$, we choose a correction $f_j$ consisting of open ``strings'' that pair the syndrome points inside each $Y_{x,j}$. See \Cref{fig:pairing} for an example of such a pairing.

\begin{figure}[t]
  \centering
  \includegraphics[width=0.32\textwidth]{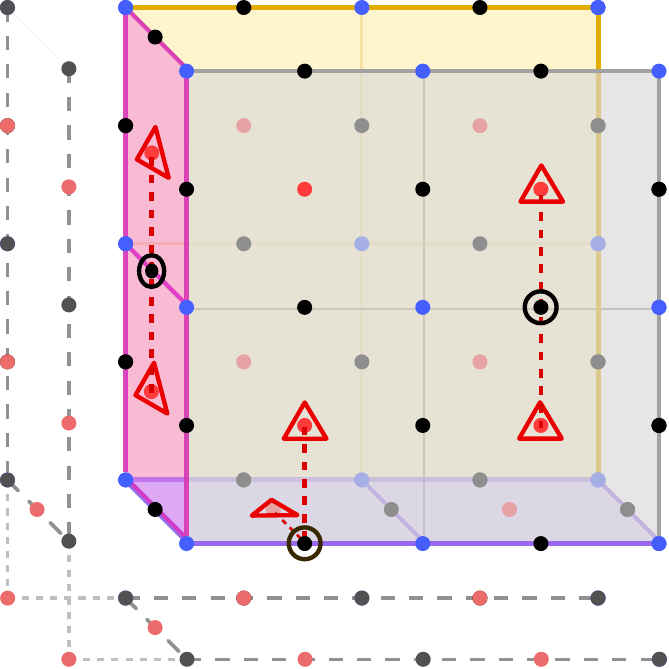}
  \caption{An example of pairing syndromes on a piece $Y_{x,j}$.
    The red triangles indicate the syndromes to be paired.
    The circles indicate the correction 1-cochains.
    The pairing is chosen to minimize the number of crossings between the regions
    $Y_{x,j,0}$, $Y_{x,j,1}$, $Y_{x,j,\mathrm{left}}$, and $Y_{x,j,\mathrm{right}}$.}
  \label{fig:pairing}
\end{figure}

We choose the strings so that they cross the interfaces between the regions $Y_{x,j,0}$, $Y_{x,j,1}$,  $Y_{x,j,\mathrm{left}}$, and $Y_{x,j,\mathrm{right}}$ as few times as possible because each crossing contributes to $g^1_{xq}(f)$, which we want to minimize.
Our choice will be determined by the syndrome parities within each region, which we denote
\begin{equation}
  \wt{\tau}_{j,0}, \
  \wt{\tau}_{j,1}, \
  \wt{\tau}_{j,\mathrm{left}}, \
  \wt{\tau}_{j,\mathrm{right}} \in \ff_2.
\end{equation}
Since $\wt{\tau_j}= 0$, we have
\begin{equation}
  b = \wt{\tau}_{j,0}+\wt{\tau}_{j,1}
  =
\wt{\tau}_{j,\mathrm{left}}+\wt{\tau}_{j,\mathrm{right}} \in \ff_2.
\end{equation}
First pair syndrome points within each region, using strings that do not cross any interface. This leaves exactly one unpaired syndrome in each region whose
parity is $1$, and no unpaired syndrome in regions whose parity is $0$. The remaining correction strings are chosen according to the value of $b$.
\begin{itemize}
  \item If $b=0$, the possible unpaired syndromes in $Y_{x,j,0}$ and $Y_{x,j,1}$ can be paired by a string crossing the interfaces between $0$ and $\mathrm{left}$, and between $\mathrm{left}$ and $1$. Similarly, the possible unpaired syndromes in $Y_{x,j,\mathrm{left}}$ and $Y_{x,j,\mathrm{right}}$ can be paired by a string crossing the interface between $\mathrm{left}$ and $\mathrm{right}$.
  \item If $b=1$, then there is one unpaired syndrome in one of $Y_{x,j,0}$ and $Y_{x,j,1}$, and one unpaired  syndrome in one of $Y_{x,j,\mathrm{left}}$ and $Y_{x,j,\mathrm{right}}$. These two syndromes can be paired by a string crossing a single interface.
\end{itemize}

\begin{figure}[t]
  \centering
  \includegraphics[width=0.32\textwidth]{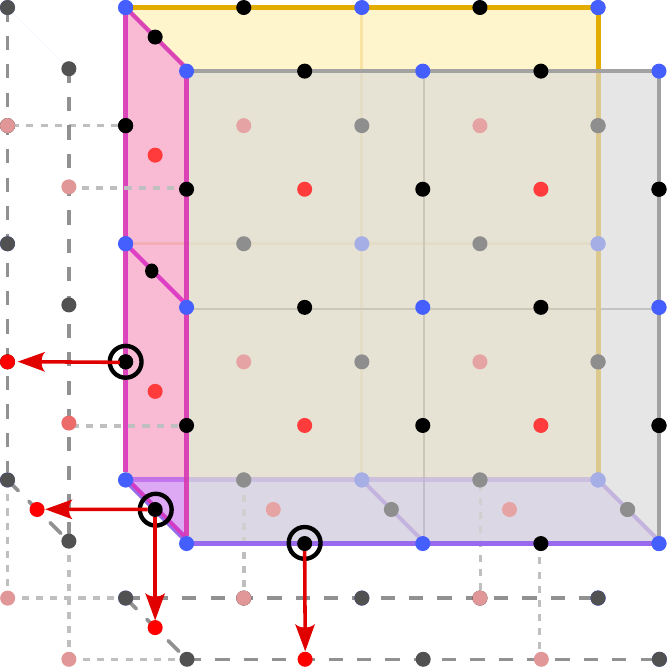}
  \caption{Illustration of the contribution to $g^1_{xq}(f)$ from crossings of the interfaces.
            Each crossing involves qubits supported on the interface,
              which induce syndrome outside $Y_x$ through $g^1_{xq}$.
            Crossings along the $\mathrm{left}$--$\mathrm{right}$ interface contribute weight $2$,
              while crossings along other interfaces contribute weight $1$.}
  \label{fig:g_xq}
\end{figure}

A crossing between $Y_{x,j,\mathrm{left}}$ and $Y_{x,j,\mathrm{right}}$ corresponds to a qubit at the corner which contributes $2$ to $|g^1_{xq}(f_x)|$, whereas crossing any of the other interfaces contributes $1$; see \Cref{fig:g_xq} for an example.
Therefore, we obtain
\begin{equation} \label{eq:total-parities}
  \sum_{q \in V_{Q,(i-1)}} |g^1_{xq}(f_j)|
  =
  |\wt{\tau}_{j,0}|
  + |\wt{\tau}_{j,1}|
  + |\wt{\tau}_{j,\mathrm{left}}|
  + |\wt{\tau}_{j,\mathrm{right}}|
  - |\wt{\tau}_{j,0} + \wt{\tau}_{j,1}|.
\end{equation}
To verify this expression, it suffices to check all possible distinct parity configurations up to the symmetries exchanging $Y_{x,j,0}$ with $Y_{x,j,1}$ and $Y_{x,j,\mathrm{left}}$ with $Y_{x,j,\mathrm{right}}$:
\begin{itemize}
  \item $(\wt{\tau}_{j,0}, \wt{\tau}_{j,1}, \wt{\tau}_{j,\mathrm{left}}, \wt{\tau}_{j,\mathrm{right}}) = (0,0,0,0)$: no crossing is needed, and both sides of \Cref{eq:total-parities} equal $0$.
  \item $(\wt{\tau}_{j,0}, \wt{\tau}_{j,1}, \wt{\tau}_{j,\mathrm{left}}, \wt{\tau}_{j,\mathrm{right}}) = (1,1,0,0)$: two crossings are needed, and both sides of \Cref{eq:total-parities} equal $2$.
  \item $(\wt{\tau}_{j,0}, \wt{\tau}_{j,1}, \wt{\tau}_{j,\mathrm{left}}, \wt{\tau}_{j,\mathrm{right}}) = (0,0,1,1)$: one crossing between $\mathrm{left}$ and $\mathrm{right}$ is needed, and both sides of \Cref{eq:total-parities} equal $2$.
  \item $(\wt{\tau}_{j,0}, \wt{\tau}_{j,1}, \wt{\tau}_{j,\mathrm{left}}, \wt{\tau}_{j,\mathrm{right}}) = (1,0,1,0)$: one crossing between $0$ and $\mathrm{left}$ is needed, and both sides of \Cref{eq:total-parities} equal $1$.
\end{itemize}
Finally, we set
\begin{equation}
    f_1 = \sum_{j \in X_x(2)} f_j,
\end{equation}
and the full correction is $f = f_0 + f_1$.

\vspace{5 pt}\noindent 4. \textbf{Verifying statement (2) of the Lemma.}
Statement (1) of the Lemma follows from the construction of $f_0$ and $f_j$.
It remains to show that
\begin{equation}
  \sum_{q \in V_{Q,(i-1)}} |g^1_{xq}(f)| \le |\sigma|.
\end{equation}
First, by the decomposition of $\sigma$, we have
\begin{equation}
  |\sigma|
  =
  \sum_{j \in X_x(2)}
  \left(
    |\sigma_{j,0}|
    + |\sigma_{j,1}|
    + |\sigma_{j,\mathrm{left}}|
    + |\sigma_{j,\mathrm{right}}|
  \right).
\end{equation}
In particular,
\begin{equation} \label{eq:sigma-lower-bound}
  |\sigma|
  \ge
  \sum_{j \in X_x(2)}
  \left(
    |\wt{\sigma}_{j,0}|
    + |\wt{\sigma}_{j,1}|
    + |\wt{\sigma}_{j,\mathrm{left}}|
    + |\wt{\sigma}_{j,\mathrm{right}}|
  \right),
\end{equation}
since the weight of $\sigma$ on each region is at least the parity of the syndrome in that region.

By step 3 of the proof and $g_{xq}^2(f_0) = 0$, we have
\begin{equation} \label{eq:gxq-fj-count}
\begin{split}
     \sum_{q \in V_{Q,(i-1)}} |g^1_{xq}(f)| &=   \sum_{q \in V_{Q,(i-1)}} \sum_{j \in X_x(2)} |g^1_{xq}(f_j)|
     \\
  &=
  \sum_{j \in X_x(2)}
  \left(
    |\wt{\tau}_{j,0}|
    + |\wt{\tau}_{j,1}|
    + |\wt{\tau}_{j,\mathrm{left}}|
    + |\wt{\tau}_{j,\mathrm{right}}|
    - |\wt{\tau}_{j,0}+\wt{\tau}_{j,1}|
  \right).
\end{split}
\end{equation}
By our choice of correction $f_0$, it only changes the parities in the regions $Y_{x,j,0}$. Therefore
\begin{equation} \label{eq:f0-property}
    \wt{\tau}_{j,1} = \wt{\sigma}_{j,1},
  \qquad
  \wt{\tau}_{j,\mathrm{left}} = \wt{\sigma}_{j,\mathrm{left}},
  \qquad
  \wt{\tau}_{j,\mathrm{right}} = \wt{\sigma}_{j,\mathrm{right}}.
\end{equation}
Comparing \eqref{eq:sigma-lower-bound} and \eqref{eq:gxq-fj-count}, we obtain
\begin{equation}
   \sum_{q \in V_{Q,(i-1)}} |g^1_{xq}(f)| + \sum_{j \in X_x(2)}
  \left(
     |\wt{\sigma}_{j,0}| - |\wt{\tau}_{j,0}|
    + |\wt{\tau}_{j,0}+\wt{\tau}_{j,1}|
  \right)
  \leq |\sigma|.
\end{equation}
Thus, to prove statement~(2), it is enough to show that
\begin{equation} \label{eq:remaining-layer0-bound}
  \sum_{j \in X_x(2)}
  \left(
    |\wt{\tau}_{j,0}| - |\wt{\tau}_{j,0}+\wt{\tau}_{j,1}|
  \right)
  \le
  \sum_{j \in X_x(2)} |\wt{\sigma}_{j,0}|.
\end{equation}
This follows from
\begin{equation}
  \sum_{j \in X_x(2)}
  \left(
    |\wt{\tau}_{j,0}| - |\wt{\tau}_{j,0}+\wt{\tau}_{j,1}|
  \right)
  \le
  \sum_{j \in X_x(2)} |\wt{\tau}_{j,1}|
  =
  \sum_{j \in X_x(2)} |\wt{\sigma}_{j,1}|
  \le
  \sum_{j \in X_x(2)} |\wt{\sigma}_{j,0}|.
\end{equation}
The first inequality follows from the triangle inequality; the equality uses \Cref{eq:f0-property}, and the last inequality follows from the assumption in \Cref{eq:proof-assumption-1}. This proves statement~(2).
\end{proof}

\subsubsection{Cleaning the error from $C_X$}
After cleaning the syndrome from $C_X$, we have
\begin{equation}
    e_1 \triangleq e + \sum_{x \in V_{X,(i-1)}} f_x,
    \qquad
    \delta_i^1 e_1 \in C_Q^2 \oplus C_Z^2.
\end{equation}
For $x \in V_{X,(i-1)}$, let $e_{1,x}\triangleq e_1|_{Y_x}$. Since
$\delta_i^1e_1$ has no component in $C_X^2$, we have
$\delta_x^1 e_{1,x}=0$, so $e_{1,x}\in Z^1(C_x,\delta_x)$.

\begin{lemma} \label{lemma:cleaning-error-CX}
Given an error with no syndrome in the local complex, i.e. $e_{1,x} \in Z^1 (C_x, \delta_x)$, there exists a stabilizer $s_x \in C_x^0$ such that $e_{1,x} + \delta^0_x s_x = 0$.
\end{lemma}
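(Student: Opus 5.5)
The statement is a purely local cohomological fact, and I would derive it from the computation of $H^\bullet(C_x,\delta_x)$ in the proof of Lemma~\ref{lemma:HC-structure}. Recall that $C_x = C^\bullet(Y_x)$ is the cellular cochain complex of $Y_x$. There it was shown that $Y_x$ is (topologically) the suspension of $\partial X'_x$. By the suspension isomorphism, $H^1(Y_x) \cong \wt H^0(\partial X'_x) \cong \wt H^0(\partial X_x)$. This vanishes because $\partial X_x$ is connected, so $H^1(C_x,\delta_x)=0$, i.e. $Z^1(C_x,\delta_x) = B^1(C_x,\delta_x) = \operatorname{Im}\delta^0_x$.

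Given $e_{1,x}\in Z^1(C_x,\delta_x)$, the vanishing of $H^1$ gives some $s_x\in C^0_x$ with $\delta^0_x s_x = e_{1,x}$. Over $\ff_2$ this is the same as $e_{1,x}+\delta^0_x s_x = 0$, which is the claim.

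To be careful, I would spell out the two ingredients that make this legitimate.

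\begin{enumerate}
\item \textbf{Cellular cohomology computes the topology.} Cellular cohomology of the regular square complex $Y_x$ agrees with its topological cohomology, and subdivision does not change it. For the suspension identification, I would note that $Y_x$ is obtained by gluing two cones $X'_x\times\{0\}$ and $X'_x\times\{1\}$ on $\partial X'_x$ along the cylinder $\partial X'_x\times[0,1]$. Each cone is contractible, since $X'_x$ is the star of $x$. A Mayer--Vietoris argument with the two (thickened) cones $A,B$ and $A\cap B\simeq\partial X'_x$ then gives $H^0(A)\oplus H^0(B)\to H^0(A\cap B)\to H^1(Y_x)\to 0$, so $H^1(Y_x)\cong\wt H^0(\partial X'_x)$. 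This can be done directly at the cochain level with the cellular decomposition if one wants to avoid thickenings.

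\item \textbf{Connectivity of $\partial X_x$.} This is the main point needing justification. $\partial X_x$ is the link-type boundary of the star of an $X$-check vertex in the Tanner square complex $T_{(i)}$. Its connectivity amounts to connectivity of the link graph $G$ of $x$ (edges at $x$ joined through faces), together with the fact that each square contributes a path of two edges opposite $x$.

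The paper uses this in Lemma~\ref{lemma:HC-structure}, and I would simply invoke it from there. If it had to be argued afresh, I would check that it is preserved by the construction. For the seed $[\![5,1,2]\!]$ code, the link of each $X$-vertex is connected by inspection. For $T_{(i+1)}$, vertices in the bulk of a $U_{X,s}$ or $U_{Z,s}$ have links that are visibly connected, since they are locally planar or slab-like. Boundary and corner vertices inherit connectivity from the links in $T_{(i)}$.
\end{enumerate}

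Finally, I would remark that $s_x$ can be computed locally from $e_{1,x}$ alone, e.g. by solving the linear system on $Y_x$. This is not needed for this step, since the cleaning of the error is only an auxiliary argument used to show the coarse-grained syndrome is a coboundary.
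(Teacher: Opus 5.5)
Your proposal is correct and takes the same route as the paper. The paper's proof is the one-line observation that $H^1(C_x,\delta_x)=0$, established by the suspension argument in the proof of Lemma~\ref{lemma:HC-structure}, which forces $Z^1(C_x,\delta_x)=B^1(C_x,\delta_x)$; your Mayer--Vietoris and connectivity remarks just unpack what the paper takes from that earlier computation.
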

\begin{proof}
  Because $H^1(C_x, \delta_x) = 0$, every local 1-cocycle is a local coboundary, i.e. $Z^1 (C_x, \delta_x) = B^1 (C_x, \delta_x)$, and the statement immediately follows.
\end{proof}

\begin{corollary} [Cleaning the error from $C_X$]
    Given an error $e_1$ such that $\delta^1_i e_1  \in C_Q^2 \oplus C_Z^2$, there exists a stabilizer
    \begin{equation}
        s_X = \sum_{x \in V_{X,(i-1)}} s_x
    \end{equation}
    that cleans the error from $C_X$, i.e.
    \begin{equation}
        e_1' \triangleq e_1 + \delta^0_i s_X \ \in C_Q^1.
    \end{equation}
\end{corollary}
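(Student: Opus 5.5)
The plan is to apply Lemma~\ref{lemma:cleaning-error-CX} independently to each local complex $C_x$ and then add up the resulting stabilizers. The only subtlety is that $\delta^0_i$ is not purely local: through $g^0_{XQ}$ it can push weight from $C_X$ into $C_Q$. We need to confirm that this leakage does not matter.

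First, fix $x \in V_{X,(i-1)}$ and set $e_{1,x} = e_1|_{Y_x} \in C_x^1$. By hypothesis, $\delta^1_i e_1$ has no component in $C_X^2$. Since $\delta^1_i = \delta^1_{\loc} + g^1_{XQ} + g^1_{QZ}$, and only $\delta_X^1$ lands in $C_X^2$, the $C_x^2$-component of $\delta^1_i e_1$ is exactly $\delta^1_x e_{1,x}$. Hence $e_{1,x} \in Z^1(C_x,\delta_x)$. Lemma~\ref{lemma:cleaning-error-CX}, which uses $H^1(C_x,\delta_x)=0$ as computed in the proof of Lemma~\ref{lemma:HC-structure}, then gives $s_x \in C_x^0$ with $\delta^0_x s_x = e_{1,x}$ (over $\ff_2$, signs do not matter). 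Define $s_X = \sum_x s_x \in C_X^0$, each $s_x$ extended by zero.

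Next, compute $\delta^0_i s_X$ using the block form of $\delta^0$ and the vanishing of $g_{XZ}$:
\begin{equation*}
  \delta^0_i s_X = \sum_{x} \delta^0_x s_x + \sum_{x,q} g^0_{xq}(s_x).
\end{equation*}
The first sum lies in $C_X^1$ and equals $\sum_x e_{1,x}$, which is exactly the $C_X^1$-component of $e_1$. The second sum lies in $C_Q^1$. There is no $C_Z^1$-component at all, since $C_Z^1 = 0$ in the $\cR_X$ iteration and in any case $g_{XZ}=0$.

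It follows that $e_1' = e_1 + \delta^0_i s_X$ has zero $C_X^1$-component. Because $C^1_{(i)} = C_X^1 \oplus C_Q^1 \oplus C_Z^1$ with $C_Z^1 = 0$, we conclude $e_1' \in C_Q^1$. Since $\delta^0_i s_X$ is a global $X$-stabilizer, $e_1'$ is equivalent to $e_1$ and has the same syndrome, $\delta^1_i e_1' = \delta^1_i e_1$.

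The only step that needs care is checking that the cross term $g^0_{XQ}(s_x)$ lands entirely in $C_Q^1$ and cannot reintroduce support in $C_X^1$. This follows directly from the lower-triangular structure of $\delta^0$, so I do not expect a genuine obstacle.
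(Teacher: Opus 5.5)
Your proposal is correct and is essentially the paper's proof. Both apply Lemma~\ref{lemma:cleaning-error-CX} to each $C_x$ and sum the resulting $s_x$. The local part $\delta^0_{\loc} s_X$ cancels $e_1$ on $C_X^1$, the cross term $g^0_{XQ}(s_X)$ lands only in $C_Q^1$, and $C_Z^1 = 0$, so $e_1' \in C_Q^1$.
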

\begin{proof}
    For each $x \in V_{X,(i-1)}$, choose $s_x$ by Lemma~\ref{lemma:cleaning-error-CX}.
    Recall that $\delta^0_i = \delta^0_\loc + g^0_{XQ} + g^0_{QZ}$, where $g^0_{XQ}: C^0_X \rightarrow C^1_Q$ and $g^0_{QZ}: C^0_Q \rightarrow C^1_Z$. Denoting $Y_X = \bigsqcup_x Y_x$, we have:
    \begin{equation}
        e_1'|_{Y_X} = e_1|_{Y_X} + \delta_\loc^0 \big( \sum_x s_x \big) =   \sum_x ( e_{1,x} + \delta^0_x s_x) = 0.
    \end{equation}
    Since $e_1'$ is not supported in $C_X^1$ and $C_Z^1 \cong 0$, this implies $e_1' \in C_Q^1$.
\end{proof}

\subsubsection{Cleaning the syndrome from $C_Q$}

After cleaning the syndrome from $C_X$, we  obtained $$\sigma_1 = \sigma + \delta^1_i \sum_x f_x \in C_Q^2\oplus C_Z^2.$$
For each $q\in V_{Q,(i-1)}$, let $\sigma_q:=\sigma_1|_{Y_q}$. After cleaning the error from $C_X$, we may assume
$\sigma_1=\delta_i^1 e_1'$ with $e_1'\in C_Q^1$. Therefore,
$\sigma_q\in B^2 (C_q, \delta_q)$.

\begin{lemma}[Cleaning the syndrome from $C_Q$]
\label{lemma:cleaning-syndrome-CQ}
Given a syndrome $\sigma_q\in B^2 (C_q, \delta_q)$, there exists a local correction
$f_q\in C_q^1$ that:
\begin{enumerate}
    \item[(1)] removes the syndrome from $C_q$, i.e. $\delta_q^1 f_q=\sigma_q$;
    \item[(2)] satisfies $\sum_{z\in V_{Z,(i-1)}} |g^1_{qz}(f_q)| \le |\sigma_q|$.
\end{enumerate}
\end{lemma}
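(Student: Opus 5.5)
The plan is to make the geometry of $Y_q$ explicit, write down every local correction solving $\delta_q^1 f_q=\sigma_q$ (there are exactly two, since $Y_q$ is connected), and then pick the cheaper of the two by a parity count.

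\textbf{Setup.} Recall that in the $\cR_X$ iteration $C_q = C^\bullet(Y_q)[-1]$. Hence $C_q^1 = C^0(Y_q)$ is the space of vertex cochains and $C_q^2=C^1(Y_q)$ is the space of edge cochains.

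The graph $Y_q$ has the following shape:
\begin{itemize}
\item two copies, at layers $0$ and $1$, of the subdivided star $X'_q$, centered at $q_0$ and $q_1$ respectively;
\item one arm per neighbor $z\sim q$ in each copy, ending at the boundary vertex $(z,0)$ or $(z,1)$;
\item one segment $\{z\}\times[0,1]$ per $z$, joining $(z,0)$ to $(z,1)$.
\end{itemize}
The map $g^1_{qz}$ restricts a vertex cochain on $Y_q$ to the two points $(z,0),(z,1)$, which form $Y_z$. Thus $\sum_z |g^1_{qz}(f_q)|$ is the number of boundary vertices on which $f_q$ equals $1$.

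\textbf{Candidate corrections.} Since $\sigma_q\in B^2(C_q,\delta_q)$ is an edge coboundary on the connected graph $Y_q$, the solutions of $\delta_q^1 f_q=\sigma_q$ are exactly $f$ and $f+\mathbf 1$. Here $f(q_0)=0$ and $f(v)$ is the parity of $\sigma_q$ along any path from $q_0$ to $v$; this is path-independent because $\sigma_q$ is a coboundary. In particular $f_q$ depends only on $\sigma_q$, so it is locally computable.

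For each $z\sim q$, let $a_z$, $c_z$, $b_z\in\ff_2$ be the parities of $\sigma_q$ on, respectively, the layer-$0$ arm of $z$, the segment at $z$, and the layer-$1$ arm of $z$. Then
\[
f(z,0)=a_z,\qquad f(z,1)=a_z+c_z,\qquad f(q_1)=a_z+c_z+b_z=:p .
\]
The value $p$ does not depend on $z$. Since the regions (arms and segments) are disjoint,
\[
|\sigma_q|\ \ge\ \sum_z \bigl(|a_z|+|b_z|+|c_z|\bigr).
\]

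\textbf{Case analysis on $p$.} This is the only step that needs care.
\begin{itemize}
\item If $p=0$, take $f_q=f$. Then $a_z+c_z=b_z$, so the boundary weight at $z$ is $|a_z|+|b_z|$. Summing over $z$ stays below the bound above.
\item If $p=1$, then $a_z+b_z+c_z=1$ for every $z$, so each $z$ contributes at least $1$ to the lower bound, giving $|\sigma_q|\ge \#\{z\sim q\}$. The boundary weights of $f$ and of $f+\mathbf 1$ at $z$ add up to exactly $2$, so the total boundary weights of the two candidates add up to $2\,\#\{z\sim q\}$. Choosing $f_q$ to be whichever of $f$, $f+\mathbf 1$ has the smaller total boundary weight therefore gives $\sum_z|g^1_{qz}(f_q)|\le \#\{z\sim q\}\le|\sigma_q|$.
\end{itemize}
In both cases, statement (1) holds by construction and statement (2) is the inequality just obtained.
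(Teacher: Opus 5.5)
Your proof is correct and follows essentially the same route as the paper. Both arguments use that the only freedom in solving $\delta_q^1 f_q=\sigma_q$ on the connected graph $Y_q$ is adding the all-ones cocycle, pick the cheaper of the two candidates, and charge boundary weight to syndrome on the disjoint arm and segment regions. The difference is only bookkeeping: the paper flips so that $(1,1)$ boundary pairs are a minority and matches each with a $(0,0)$ pair to find two syndrome points on their arms, whereas you normalize at $q_0$ and split on the center-to-center parity $p$, giving a per-$z$ bound when $p=0$ and an averaging bound when $p=1$.
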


\begin{proof}
We will drop the subscript $q$ in many of the variables in this proof for the sake of succinctness. For example, we will write $\sigma$ for $\sigma_q$ and $f$ for $f_q$.

Since $\sigma \in B^2(C_q, \delta_q)$, there exists $h \in C^1_q$ such that  $\delta^1_q h = \sigma$. We choose any such $h$ and decompose it into a part supported on the boundary and a part supported in the interior of $C_q$: $h = h_{\partial}+h_{\int}$. See \Cref{fig:decoderCq} for an illustration.
Only the $h_{\partial}$ part can create syndrome in the neighboring $C_z$ complexes, so
\begin{equation}
  \sum_{z\in V_{Z,(i-1)}} |g^1_{qz}(h)|=|h_{\partial}|.
\end{equation}
Thus, we would like to show that there exists a choice of $h$ such that $|h_\partial| \leq |\sigma|$.

Let $z \in \partial X_q \subseteq V_{Z,(i-1)}$ index the boundaries of the complex $C_q$. We denote the values of the 1-cochain $h$ on the two qubits of boundary $z$ as $h_{\partial,z,0}$ and $h_{\partial,z,1}$.  Call $\sigma_{\mid,z}$ the syndrome on the $z$-th boundary. Then, we have
$$\sigma_{\mid, z} = h_{\partial,z,0} + h_{\partial,z,1}.$$
There are two cases that we would like to consider:
\begin{itemize}
    \item $\sigma_{\mid,z} = 1$: In this case, $|h_{\partial,z,0}|$ or $|h_{\partial,z,1}|$ is $1$, and we automatically have $|h_{\partial,z}| = |h_{\partial,z,0}| + |h_{\partial,z,1}| \leq |\sigma_{\mid,z}|$.
    \item $\sigma_{\mid,z} = 0$: the pair $(h_{\partial,z,0},h_{\partial,z,1})$ is either $(0,0)$ or $(1,1)$. Recall that $C_q$ has an all-1 1-cocycle. Adding this all-1 vector to $h$ does not change its coboundary inside $C_q$, but exchanges the $(0,0)$ and $(1,1)$ pairs. Thus, we may choose $h$ such that, among the pairs with $\sigma_{\mid,z}=0$, the number of $(1,1)$ pairs is less or equal to the number of $(0,0)$ pairs. Pair each $(1,1)$ pair with a distinct $(0,0)$ pair. Consider one s uch pair, indexed by $z$ and $z'$. Since $h_{\partial,z,0} \neq h_{\partial,z',0}$, there must be at least one syndrome between these two points; similarly for $h_{\partial,z,1} \neq h_{\partial,z',1}$. Hence, for each $(h_{\partial,z,0},h_{\partial,z,1}) = (1,1)$ there exists two distinct syndrome points in $\sigma$.
\end{itemize}
Collecting everything together, we obtain
\begin{equation}
    |h_\partial| \leq \sum_{z \in \partial X_q} |h_{\partial,z}| \leq |\sigma|.
\end{equation}
Taking $f_q=h$ proves the lemma.
 \begin{figure}[t]
    \centering
    \includegraphics[width=0.45\textwidth]{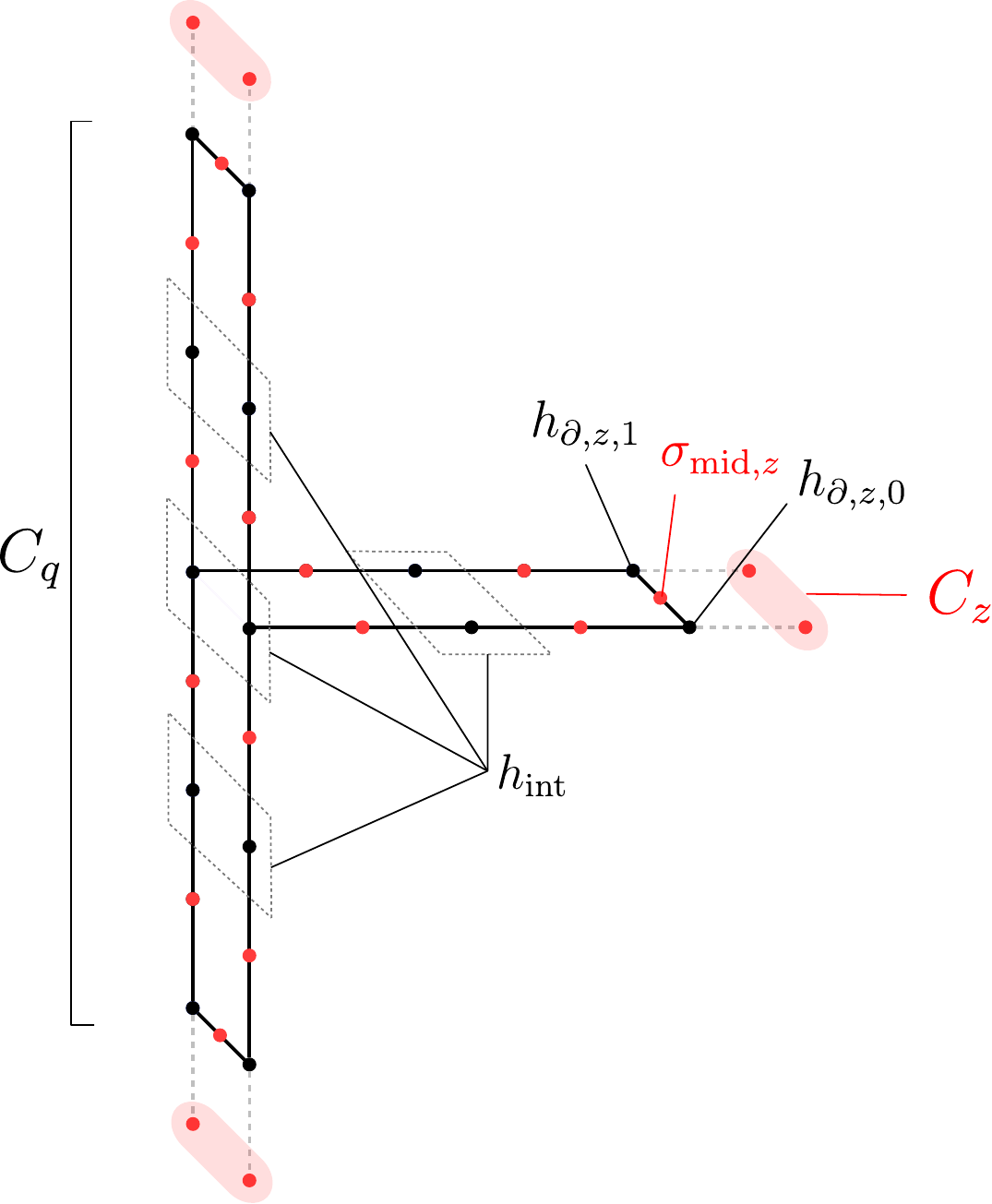}
    \caption{An example of complex $C_q$ illustrating the regions $\mid$, $\partial$, and $\int$. }
    \label{fig:decoderCq}
  \end{figure}
\end{proof}
\begin{corollary} \label{cor:CQ}
    Suppose that for each $q \in V_{Q,(i-1)}$, we found a local correction $f_q \in C^1_q$ satisfying Lemma~\ref{lemma:cleaning-syndrome-CQ}. Define $$\sigma_2 = \sigma_1 + \delta^1_i\sum_{q\in V_{Q,(i-1)}} f_q.$$
    Then
    \begin{equation} \label{eq:RX-total-counting-Cx}
        |\sigma_2|  \leq |\sigma_1|.
    \end{equation}
\end{corollary}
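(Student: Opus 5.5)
The plan is to follow the proof of Corollary~\ref{cor:CX} essentially verbatim, with the pair $(X,Q)$ replaced by $(Q,Z)$. Let $f_Q := \sum_q f_q$. The first step is to compute $\delta^1_i f_Q$ using the decomposition $\delta^1_i = \delta^1_\loc + g^1_{XQ} + g^1_{QZ}$. Since $f_Q$ is supported on $C^1_Q$, the $g^1_{XQ}$ term vanishes, so
\begin{equation*}
  \delta^1_i f_Q = \sum_{q} \Bigl( \delta^1_q f_q + \sum_{z} g^1_{qz}(f_q) \Bigr),
\end{equation*}
where the first summand lies in $C_Q^2$ and the second in $C_Z^2$. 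By statement (1) of Lemma~\ref{lemma:cleaning-syndrome-CQ}, $\delta^1_q f_q = \sigma_q = \sigma_1|_{Y_q}$.

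The second step is to split the weight of $\sigma_2$ into its $C_Q^2$ and $C_Z^2$ parts, which are disjoint coordinates. The local complexes $C_q$ are disjoint for distinct $q$, so $\sigma_1|_{C_Q^2} = \sum_q \sigma_q$. Consequently
\begin{equation*}
  \sigma_1|_{C_Q^2} + \sum_q \delta^1_q f_q = 0,
\end{equation*}
and $\sigma_2$ is supported only on $C_Z^2$. There we have
\begin{equation*}
  \sigma_2 = \sigma_1|_{C_Z^2} + \sum_{q,z} g^1_{qz}(f_q).
\end{equation*}
Applying the triangle inequality and then statement (2) of Lemma~\ref{lemma:cleaning-syndrome-CQ} gives
\begin{equation*}
  |\sigma_2| \le |\sigma_1|_{C_Z^2}| + \sum_q |\sigma_q| = |\sigma_1|_{C_Z^2}| + |\sigma_1|_{C_Q^2}| = |\sigma_1|,
\end{equation*}
where the last equality uses that $\sigma_1$ has no component in $C_X^2$, by the cleaning of the syndrome from $C_X$.

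The only point requiring care is the hypothesis of Lemma~\ref{lemma:cleaning-syndrome-CQ}: each $\sigma_q$ must be a local coboundary. This was already justified before that lemma, by writing $\sigma_1 = \delta^1_i e_1'$ with $e_1' \in C_Q^1$ after the error has been cleaned from $C_X$. Then $\sigma_q = \delta^1_q(e_1'|_{Y_q})$, because no $g$-map lands in $C_Q^2$ from $C_Q^1$ and $e_1'$ has no $C_X$ component. Beyond this check, the argument is just the bookkeeping above, and I expect no real obstacle.
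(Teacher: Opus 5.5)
Your proposal is correct and follows the paper's proof essentially line for line. The local part $\sum_q \delta^1_q f_q = \sum_q \sigma_q$ cancels $\sigma_1$ on $C_Q^2$, and the remaining $g^1_{QZ}$ contributions are bounded by the triangle inequality together with statement~(2) of Lemma~\ref{lemma:cleaning-syndrome-CQ}. As a minor note, your final appeal to $\sigma_1$ having no $C_X^2$ component is harmless but unnecessary: the weights of the $C_Q^2$ and $C_Z^2$ components of $\sigma_1$ always sum to at most $|\sigma_1|$.
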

\begin{proof}
Recall that $\delta^1_i = \delta^1_\loc + g^1_{XQ} + g^1_{QZ}$, where $g^1_{XQ}: C^1_X \rightarrow C^2_Q$ and $g^1_{QZ}: C^1_Q \rightarrow C^2_Z$. Then, we write
\begin{align}
     |\sigma_2| &= \big| \sigma_1 + \sum_{q} \delta^1_q (f_q)   + \sum_{q,z} g^1_{qz} f_q \big|
     \\
     &= \big| \sigma_1 + \sum_{q} \sigma_q  + \sum_{q,z} g^1_{qz} (f_q) \big|
     \\
     &\leq \big| \sigma_1 + \sum_{q} \sigma_q \big| + \sum_{q,z} |g^1_{qz} (f_q)|
     \\
     &\leq |\sigma_1| - \sum_{q} |\sigma_q| + \sum_{q} |\sigma_q| = |\sigma_1|,
\end{align}
where the last lines holds from the definition $\sigma_q := \sigma_1|_{Y_q}$ and statement (2) of Lemma~\ref{lemma:cleaning-syndrome-CQ}.
\end{proof}

\subsubsection{Coarse-grained syndrome $\sigma'$ and correction $f$}

After the first three steps of coarse-graining, we obtained the cleaned syndrome $\sigma \in B^2(C_{(i)})$
\begin{equation}
\sigma_2 = \sigma + \delta^1_i \sum_{x \in V_{X,(i-1)}} f_x +  \delta^1_i \sum_{q \in V_{Q,(i-1)}} f_q
\end{equation}
that satisfies $\sigma_2 \in C_Z^2$, and updated error
\begin{equation}
e_2 = e +  \sum_{x \in V_{X,(i-1)}} f_x + \delta^0_i \sum_{x \in V_{X,(i-1)}} s_x  + \sum_{q \in V_{Q,(i-1)}} f_q
\end{equation}
that satisfies $e_2 \in C_Q^1$ and $\delta^1_i e_2 = \sigma_2$.

\begin{lemma}[RG step]
\label{lemma:RX-coarse-grained-syndrome}
Given an error $e_2\in C_Q^1$ and $\delta_i^1e_2=\sigma_2\in C_Z^2$, then:
\begin{enumerate}
    \item[(1)] $\sigma_2$ is in the image of $\cF_{i-1}^2$, and thus, we can define $\sigma' = \cF^{-1}_{i-1}(\sigma_2)$.
    Moreover, it satisfies $\sigma' \in B^2(C_{(i-1)})$.
    \item[(2)] Futhermore, we have $  |\sigma'| \leq \frac{1}{2}|\sigma|$.
\end{enumerate}
\end{lemma}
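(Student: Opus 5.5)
The plan is to work entirely at the level of the local complexes $C_q$ and $C_z$, using only that $e_2\in C_Q^1$ and $\sigma_2 = \delta^1_i e_2 \in C_Z^2$. First, since $\sigma_2$ has no component in $C_Q^2$, we get $\delta^1_q(e_2|_{Y_q}) = 0$ for every $q\in V_{Q,(i-1)}$. Thus each $e_2|_{Y_q}$ is a local cocycle in $Z^1(C_q,\delta_q)$.

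The key step is to pin down what such a cocycle looks like. Recall $C_q = C^\bullet(Y_q)[-1]$, so $Z^1(C_q,\delta_q) = Z^0(Y_q)$, the locally constant $0$-cochains on $Y_q$. Since $Y_q$ is connected, $Z^0(Y_q)\cong \ff_2$ is spanned by the all-ones vector. Hence
\begin{equation*}
e_2|_{Y_q} = e'_q \cdot \mathbf{1}_{Y_q}, \qquad e'_q \in \ff_2.
\end{equation*}
This generator is exactly the image of the basis vector $q$ under the chain map $C_{(i)}\hookrightarrow H(C_{(i+1)},\delta_\loc)\to C_{(i+1)}$ of Lemma~\ref{lemma:chain-map}. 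The homotopy-equivalence part adds no correction terms in degree one from $C_Q$, because $H^1(C_q)$ is represented by a genuine cocycle and $g_{XZ}=0$.

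So I would define $e'\in C^1_{(i-1)}\cong \ff_2^{V_{Q,(i-1)}}$ by $e'|_q = e'_q$, and check that $e_2 = \cF^1_{i-1}(e')$. The main obstacle is making this identification explicit, namely verifying that $\cF_{i-1}$ is (up to the identification above) the ``all-ones'' inclusion on $C^1$ and the diagonal inclusion $\Delta$ on $C^2$. If that identification is not literally true at chain level, I would instead work directly: define $\sigma'$ by reading off one of the two copies of $Y_z$.

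Next, apply the chain-map property:
\begin{equation*}
\sigma_2 = \delta^1_i e_2 = \delta^1_i \cF^1_{i-1}(e') = \cF^2_{i-1}(\delta^1_{i-1} e').
\end{equation*}
Concretely, $g^1_{qz}(\mathbf{1}_{Y_q})$ is the all-ones vector on the two points of $Y_z$, by Lemma~\ref{lemma:HC-structure}. Therefore $\sigma_2|_{Y_z} = (\delta^1_{i-1}e')_z\cdot(1,1)$, and $\sigma_2$ lies in the image of $\cF^2_{i-1} = \Delta$. Because $\Delta$ is injective, $\sigma' := \cF^{-1}_{i-1}(\sigma_2) = \delta^1_{i-1}e'$ is well defined and lies in $B^2(C_{(i-1)})$, which proves (1).

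For (2), note that each nonzero entry of $\sigma'$ contributes exactly two entries to $\sigma_2$, so $|\sigma_2| = 2|\sigma'|$. Combining this with Corollary~\ref{cor:CX} and Corollary~\ref{cor:CQ} gives
\begin{equation*}
|\sigma'| = \tfrac12|\sigma_2| \le \tfrac12|\sigma_1| \le \tfrac12|\sigma|.
\end{equation*}

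Local computability is immediate: $\sigma'_z$ can be read off from a single point of $\sigma_2|_{Y_z}$, and $\sigma_2$ is built from corrections $f_x,f_q$ that depend on syndromes within directed distance at most $2$.
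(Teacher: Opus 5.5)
Your proposal is correct and follows essentially the same route as the paper. In both, each $e_2|_{Y_q}$ is a local cocycle whose class defines $e'$; one sets $\sigma'=\delta^1_{i-1}e'$ and uses the chain-map property of $\cF_{i-1}$ to get $\cF_{i-1}(\sigma')=\sigma_2$; and (2) follows from the weight doubling of $\cF^2_{i-1}=\Delta$ combined with Corollaries~\ref{cor:CX} and~\ref{cor:CQ}.

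The chain-level identification you flag as the main obstacle does hold, but for a different reason than the one you give. The perturbation-lemma correction $h\Sigma i$ applied to $H^1(C_Q,\delta_Q)$ lands in $C_Z^1=0$, so $\cF^1_{i-1}$ is exactly the all-ones inclusion on each $Y_q$ and $\cF^2_{i-1}=\Delta$. Your observation that $Z^1(C_q,\delta_q)=Z^0(Y_q)$ is one-dimensional makes explicit why $e_2$ lies in the image of $\cF^1_{i-1}$, a point the paper leaves implicit.
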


\begin{proof}
Recall that $H^1(C_q, \delta_q) \cong \ff_2$. For each  $q \in V_{Q,(i-1)}$, the component $e_q = e_2 |_{Y_q}$ is a cocycle in the local complex $C_q$, i.e. $e_q \in Z^1(C_q,\delta_q)$. Thus, we can consider
\begin{equation}
    [e_2]_\loc \in \bigoplus_{q \in V_{Q,(i-1)}} H^1(C_q, \delta_q) \cong \ff_2^{V_{Q,(i-1)}} \cong C_{(i-1)}^1
\end{equation}
Therefore, we can define $e' = \cF^{-1}_{i-1}(e_2)$.
We will define the coarse-grained syndrome as $\sigma' = \delta^1_{i-1} e'$
which is automatically in $B_{(i-1)}^2$.  Consider now
\begin{equation}
\begin{split}
     \cF_{i-1}(\sigma') = \cF_{i-1}(\delta^1_{i-1} e') &= \delta^1_{i} \cF_{i-1}( e') \\
     &= \delta^1_{i} \cF_{i-1}( \cF^{-1}_{i-1}(e_2)) = \delta^1_{i} e_2 \equiv \sigma_2
\end{split}
\end{equation}
which establishes statement (1).

Statement (2) follows from combining the action of $\cF^{-1}$ with the results of Corollaries~\ref{cor:CX} and~\ref{cor:CQ}.
\end{proof}
Finally, we prove the main theorem of this subsection.
\begin{proof}[Proof of Theorem~\ref{thm:coarse-graining-RX}]
Using Lemmas~\ref{lemma:cleaning-syndrome-Yx} and ~\ref{lemma:cleaning-syndrome-CQ}, we set
\begin{equation}
  f
  =
  \sum_{x\in V_{X,(i-1)}} f_x
  +
  \sum_{q\in V_{Q,(i-1)}} f_q.
\end{equation}
and $\sigma_2=\sigma+\delta_i^1 f$. $f_x$ and $f_q$ are locally computable by construction.

Using Lemma~\ref{lemma:RX-coarse-grained-syndrome}
$\sigma_2=\cF_{i-1}(\sigma')$, we obtain $\sigma' = \cF_{i-1}^{-1}(\sigma_2)$ and obtain
\begin{equation}
  \sigma
  =
  \delta_i^1 f+\cF_{i-1}(\sigma'),
\end{equation}
and
\begin{equation}
    |\sigma'| \leq \frac{1}{2}|\sigma|
\end{equation}
as required.
\end{proof}

\section{Memory lifetime}\label{sec:memlifetime}

We now use the decoder of the previous section to prove the memory lifetime bound stated in \Cref{thm:main}. In Sec.~\ref{sec:review-memorylifetime}, we reviewed some of the known memory lifetime bounds in the operator notation.  Given a CSS code $C_{(2k)}$, we consider the stabilizer Hamiltonian $H_{2k} = -\sum_i s_{(2k),i}$ where $s_{(2k),i}$ are the Pauli stabilizer generators of the code. The system evolves under the master equation shown in Sec.~\ref{sec:review-memorylifetime}, with Hamiltonian $H_{2k}$ and the set of jump operators $\cE$ for the Lindbladian. We take $\cE$ to be the set of all single-qubit Pauli operators, although the proof idea can be readily generalized to larger support and non-Pauli jump operators.

We use the following convention. The support of an Pauli-$X$ error is given by a 1-cochain $e_X \in C^1_{(2k)}$ and its syndrome is $\sigma_X = \delta^1_{2k} e_X \in B^2(C_{(2k)})$. The support of a Pauli-$Z$ error is represented by a 1-chain in the dual chain complex, and its syndrome is its boundary. Given a state with syndrome $\sigma = (\sigma_X,\sigma_Z)$, the decoder constructed in Sec.~\ref{sec:decoder} finds a correction 1-cochain $f_X$ for Pauli-$X$ errors and a correction 1-chain $f_Z$ for Pauli-$Z$ errors. This determines the Pauli correction operator $X^{\dec_X(\sigma_X)}Z^{\dec_Z(\sigma_Z)}$.

The main idea of the memory lifetime proof is to construct the projector $P_\mrm{stable}$ onto a subspace where the dressed logical operators are protected from all jump operators in $\cE$. By Claim~\ref{claim:stableconfig}, it can be defined as a subspace of states with syndrome $\sigma$ satisfying
$$X^{\dec_X(\sigma_Z)}Z^{\dec_Z(\sigma_X)}  \ket{\sigma, \pm_L} = X^{\dec_X(\sigma_X+\sigma_{E,X})}Z^{\dec_Z(\sigma_Z+\sigma_{E,Z})} E \ket{\sigma, \pm_L}$$
for any $E \in \cE$, where $\sigma_E = (\sigma_{E,X},\sigma_{E,Z})$ is the syndrome of the error $E$. This definition explicitly relies on specifying a decoder.
We will then show that the probability of the complement of this subspace under the Gibbs measure on syndrome configurations is exponentially small:
$$\Tr(1-P_\mrm{stable}) \rho_\beta \leq \exp(-\Theta(n_k^\eta)).$$
Combining this with \Cref{thm:memorylifetime} yields the desired lower bound on the memory lifetime.

\subsection{Unstable and stable syndrome configurations}

Because $X$ and $Z$ sectors can be treated separately, we first analyze Pauli-$X$ errors and their syndromes. We therefore suppress the $X$ subscript throughout this discussion; for example, we will use $\dec(\sigma)$ to denote $\dec_X(\sigma_X)$.

Recall from \Cref{sec:decoder-CG-maps} that the decoder operates via a sequence of coarse-graining maps. Starting from the level-$2k$ syndrome $\sigma_{2k} := \sigma \in B^2(C_{(2k)})$, the decoder maps it to the syndrome at each level to the coarse-grained syndrome at the next level $\sigma_{2k-1} \in B^2(C_{(2k-1)})$ and the correction $f_{2k} \in C^1_{(2k)}$. Doing this successively produces the coarse-graining map $\sigma_i \mapsto (\sigma_{i-1},f_i)$, which satisfies
\begin{equation}
    \sigma_i = \delta^1_i f_i + \cF_{i-1}(\sigma_{i-1}),
\end{equation}
where $\cF_{i-1}: C_{(i-1)} \rightarrow C_{(i)}$ is the cochain map $\cF_{\cR_X, i-1}$ or $\cF_{\cR_Z, i-1}$ depending on parity of the iteration index.

We now give a sufficient condition for a state with level-$2k$ syndrome $\sigma_{2k}$ to lie in the stable subspace $P_{\mrm{stable}}$. This is simply a (co)homological formulation of Claim~\ref{claim:stableconfig}.
\begin{definition} \label{def:alternative-stable}
A syndrome $\sigma_{2k} \in C^2_{(2k)}$ is unstable with respect to the decoder
if there exists a 1-cochain $e^1 \in C^1_{(2k)}$ with $|e^1| = 1$ (i.e., a single X flip) such that $$f(\sigma_{2k}) +  f(\sigma_{2k} + \delta^1_{2k} e^1)  + e^1$$
belongs to a nontrivial cohomology class $H^1(C_{(2k)})$.  Otherwise, it is called stable.
\end{definition}
This definition depends on the choice of the decoder. Informally, if a single $X$-flip acting on some syndrome changes the decoding outcome, then that syndrome is said to be unstable with respect to that decoder.

In this subsection, we will formulate a convenient necessary condition for a syndrome configuration to be unstable with respect to a decoder, which we will use in our memory lifetime bound. Before that, we will slightly modify our decoder.

\subsubsection{Modification of the decoder}
\label{sec:decoder-modification}

One can show that the syndrome reduction property of the decoder discussed in Sec.~\ref{sec:syndrome-reduction} guarantees the error-correction properties of the decoder, namely that the decoder will correct the errors up to size $\mrm{poly}(n_{2k})$, where $n_{2k}$ is the size of the code. However, for simplicity of some of the proofs, we would like to equip the decoder with a more explicit error-correction guarantee. For this, we will simply change the definition of the coarse-graining operations at level 2 and above (i.e. at levels 2,1, and 0).

Namely, at level $2$, we define the coarse-graining step to be the map $\sigma_2 \mapsto f_2 $ where the correction $f_2\in C^1_{(2)}$ satisfies
\begin{equation}
    \delta_2^1 f_2 = \sigma_2.
\end{equation}
We note that the code $C_{(2)} = \cR_Z \circ \cR_X (C_{(0)})$ has constant size (namely, $O(\ell^4)$), and the distance is at least $O(\ell)$. We choose any decoder (which can be done in many ways, including using a lookup table) for this finite-size code that fulfills the condition above and corrects all sufficiently small errors. For the proofs below, it is sufficient that the decoder corrects all errors whose syndrome fits inside a radius-2 ball in graph distance.

With this modification, we automatically have $\sigma_1 = 0$, $\sigma_0 = 0$, and thus, we do not need to define coarse-graining at these levels.

\subsubsection{Necessary condition for unstable configurations}

We now prove the following condition for a syndrome to be stable with respect to the decoder we constructed.

\begin{lemma}
Syndrome $\sigma_{2k}$ is stable if, upon applying the coarse-graining steps of the decoder iteratively, $\sigma_{2} = 0$.
\end{lemma}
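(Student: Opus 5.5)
The plan is to compare the decoder runs on $\sigma := \sigma_{2k}$ and on $\sigma' := \sigma_{2k} + \delta^1_{2k} e^1$, for an arbitrary single flip $e^1$, level by level. At each level $i$ I will track a \emph{difference error} $\varepsilon_i \in C^1_{(i)}$ satisfying
\begin{equation*}
\delta^1_i \varepsilon_i = \sigma_i + \sigma'_i .
\end{equation*}
It starts from $\varepsilon_{2k} = e^1$, and I will keep $\varepsilon_i$ supported in a region of bounded directed radius in $G_{i+1}$, around the nodes descended from the location of $e^1$. The inductive claim is that for every level $3 \le i \le 2k$,
\begin{equation*}
f_i + f'_i + \varepsilon_i + \cF_{i-1}(\varepsilon_{i-1}) \in B^1(C_{(i)}),
\end{equation*}
where $f_i, f'_i$ are the level-$i$ corrections for the two runs. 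Granting this, telescoping with the chain maps $\cF_{i-1}$, which send coboundaries to coboundaries, gives
\begin{equation*}
f(\sigma) + f(\sigma') + e^1 \equiv \cF_{2k-1}\circ\cdots\circ\cF_{2}\bigl(f_2 + f'_2 + \varepsilon_2\bigr) \pmod{B^1(C_{(2k)})}.
\end{equation*}

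\textbf{Base level.} By hypothesis $\sigma_2 = 0$, so $f_2 = 0$ and $\sigma'_2 = \delta^1_2 \varepsilon_2$. Since $\varepsilon_2$ is localized, its syndrome fits in a radius-2 ball in graph distance. The modified level-2 decoder of Sec.~\ref{sec:decoder-modification} corrects every error whose syndrome fits in such a ball, so $f'_2 + \varepsilon_2 \in B^1(C_{(2)})$. Therefore $f(\sigma)+f(\sigma')+e^1$ is trivial in $H^1(C_{(2k)})$, which by Definition~\ref{def:alternative-stable} means $\sigma$ is stable.

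\textbf{Inductive step.} Locality comes from local computability (Definition~\ref{def:locality}). The local pieces $f_{i,v}$ and $f'_{i,v}$ coincide unless $v$ is within directed distance $1$ of the support of $\sigma_i + \sigma'_i$. Hence $c := f_i + f'_i + \varepsilon_i$ is supported in a bounded neighbourhood. From the coarse-graining relation \eqref{eq:coarse-graining-relation} applied to both runs,
\begin{equation*}
\delta^1_i c = \cF_{i-1}(\sigma_{i-1} + \sigma'_{i-1}).
\end{equation*}
I then need to write $c = \cF_{i-1}(\varepsilon_{i-1}) + \delta^0_i s$ with $\varepsilon_{i-1}$ local. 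I would do this by repeating the cleaning argument of Sec.~\ref{sec:coarse-graining-RX}, or of Theorem~\ref{thm:coarse-graining-RZ} for $\cR_Z$ levels:
\begin{itemize}
\item Since $c|_{Y_x}$ is a local cocycle and $H^1(C_x,\delta_x)=0$, Lemma~\ref{lemma:cleaning-error-CX} removes $c$ from $C_X$ by a local stabilizer.
\item The remaining part on each $C_q$ is a local cocycle, so it determines a class in $H^1(C_q)\cong \ff_2$. This defines $\varepsilon_{i-1}$ exactly as $e' = \cF^{-1}_{i-1}(e_2)$ in Lemma~\ref{lemma:RX-coarse-grained-syndrome}.
\item The difference between $c$ and $\cF_{i-1}(\varepsilon_{i-1})$ is then locally trivial on each $C_q$, hence a local coboundary there.
\end{itemize}
This yields the inductive claim. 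Because the level-$(i-1)$ nodes are the parents of the level-$i$ nodes in the decoding graph, $\varepsilon_{i-1}$ is supported near the parents of the support of $c$.

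\textbf{Main obstacle.} I expect the hardest part to be uniform control of the support radius, so that $\varepsilon_2$ really has syndrome within a radius-2 ball. Each level enlarges the support by a constant number of directed steps in $G_i$ (one step for the correction, plus one for the coboundary through $g$). The parent map must then collapse this back, which works because each local complex $Y_{(i-1),v}$ contains a whole $\ell\times\ell$ neighbourhood. I would first try to prove a fixed-radius invariant: \emph{$\varepsilon_i$ is supported on $C_v$ for $v$ in a set of directed radius $\le 1$ around a single node of $G_i$}, checking it separately for $\cR_X$ and $\cR_Z$ levels. If this fails, I would weaken it to a constant radius $r$ and strengthen the level-2 lookup decoder to correct syndromes within radius $r$, which is harmless since $C_{(2)}$ has constant size and distance $\gtrsim \ell$.
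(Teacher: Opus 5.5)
Your proposal is correct and follows essentially the paper's argument. The paper also tracks the difference $e_j+\tilde e_j$ of the two runs' coarse-grained errors via $e_j=\cF_j^{-1}(e_{j+1}+f_{j+1}+\delta^0 s_{j+1})$, which is your congruence $f_i+f'_i+\varepsilon_i\equiv\cF_{i-1}(\varepsilon_{i-1})$ modulo $B^1$. It keeps this difference inside a single $X_{(j),x_j}$ by applying local computability across one $\cR_Z$ level and then one $\cR_X$ level, which is the fixed-radius invariant you planned to try first, and it concludes by letting the level-$2$ decoder correct the localized difference and mapping back up with the chain maps.

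The points you leave implicit are routine. The invariant $\delta^1_{i-1}\varepsilon_{i-1}=\sigma_{i-1}+\sigma'_{i-1}$ follows from injectivity of $\cF_{i-1}$ in degree $2$. At $\cR_Z$ levels your $C_X$ bullet is vacuous, and you need one extra step: clean the remainder of $c$ in $C_Z^1$, which is a local cocycle there, using $H^1(C_z,\delta_z)=0$.
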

\begin{proof}
We first prove some structural properties of the coarse-graining procedures.  These properties will be used afterwards to prove the lemma.

\vspace{0.15cm}
\noindent
\textbf{Level $i$ is even:  $\cR_Z$ coarse-graining.}   Consider two syndrome configurations $\sigma_i$ and $\wt \sigma_{i}$  that differ inside a subcomplex $X_{(i),x} \subseteq T_{(i)}$ for some $x \in V_{X,(i)}$.  Then, there exists an $x_{i-1} \in V_{X,(i-1)}$ such that $\sigma_i$ and $\wt \sigma_{i}$ only differ in the set $\{ Y_{(i),z}:  z \in \partial X_{(i-1),x_{i-1}}\}$. This set contains all complexes $Y_{(i),x'}$ such that $\mrm{dist}_i(x',x_{i-1})\leq 2$ (recall that $\mrm{dist}_i(\cdot,\cdot)$ is the directed distance on the decoding graph defined in Def.~\ref{def:locality}). Therefore, by local computability of the decoder, the corrections for $\sigma_i$ and $\wt \sigma_{i}$, which we call $f_i$ and $\wt f_i$, will only differ in the same set. Thus, after applying the correction and $\cF^{-1}_{\cR_Z,i-1}$, the coarse-grained syndrome $\sigma_{i-1}$ and $\wt \sigma_{i-1}$ will differ only in $\partial X_{(i-1),x_{i-1}} \subseteq X_{(i-1), x_{i-1}}$.

\vspace{0.15cm}
\noindent
\textbf{Level $i-1$ is odd:  $\cR_X$ coarse-graining.}
Consider now level $i-1$. There exists $x_{i-2} \in V_{X,(i-2)}$ such that the subcomplex $X_{(i-1), x_{i-1}}$ is contained in the set
$$ \{ Y_{(i-1),x_{i-2}} \} \cup \{Y_{(i-1),q}: q \in V_{Q,(i-2)}, \, q\sim x_{i-2}\} \cup\{Y_{(i-1),z}: z \in V_{Z,(i-2)}, \exists q \in V_{Q,(i-2)}, \, z \sim q\sim x_{i-2} \}.$$ This union already contains all complexes $Y_{(i),w}$ such that $\mrm{dist}_i(w,x_{i-2})\leq 2$. Therefore, by the local computability of the decoder, the associated corrections $f_{i-1}$ and $\wt f_{i-1}$ only differ within the union of these sets. Thus, after applying the correction and $\cF^{-1}_{\cR_X,i-2}$, the coarse-grained syndrome $\sigma_{i-2}$ and $\wt \sigma_{i-2}$ will differ only in $X_{(i-2),x_{i-2}}$.

\vspace{0.15cm}

\noindent \textbf{Proving the Lemma.} Now, we use the intermediate results above to prove the Lemma. Consider $\sigma_{2k} \in B^2(C_{(2k)})$; call $e_{2k} \in C^1_{(2k)}$ such that $\sigma_{2k} = \delta^1_{2k} e_{2k}$. Let $e^1 \in C^1_{(2k)}$ be an arbitrary weight-1 error $|e^1| = 1$, and call $\wt \sigma_{2k} = \delta^1_{2k} \wt e_{2k} = \delta^1_{2k} (e_{2k} + e^1)$. The syndrome difference $\delta^1_{2k} e^1$ is supported in $\partial X_{(2k),q_{2k}} \subseteq X_{(2k),x_{2k}}$ for some $x_{2k} \in V_{X, (2k)}$ and $q_{2k} \in V_{Q, (2k)}$.

Thus, we can repeatedly apply the conclusions from the two intermediate results above.  In particular, we can conclude that at level 2, we arrive at syndrome configurations $\sigma_2$ and $\wt \sigma_2$ which differ in $X_{(2),x_2}$ for some $x_2 \in V_{X,(2)}$.  By assumption, $\sigma_2 = 0$, so $\wt \sigma_2$ must be entirely supported in $X_{(2),x_2}$.

Next, we claim that there exists some $h_2 \in C^1_{(2)}$ that is supported entirely in $X_{(2),x}$ such that $\delta_2^1 h_2 = \wt \sigma_2$.  To show this, note that coarse-grained errors obey
\begin{align}
e_{j} &= \cF^{-1}_{j}(e_{j+1} + f_{j+1} + \delta^0 s_{j+1}) \\
\wt e_{j} &= \cF^{-1}_{j}(\wt e_{j+1} + \wt f_{j+1} + \delta^0 \wt s_{j+1})
\end{align}
From the intermediate results and by local computability, at each level, $f_{j+1} + \wt f_{j+1}$, and $\delta^0 s_{j+1} + \delta^0 \wt s_{j+1}$ are supported in $X_{(i+1), x_{i+1}}$ for some $x_{i+1} \in V_{X, (i+1)}$. Using this, we find that then $e_{j} + \wt e_{m}$ is also supported in $X_{(i), x_i}$ for some $x_i \in V_{X, (i)}$.
Repeatedly applying this argument starting from level $2k$ until level 2, we obtain $e_2 + \wt e_2$, which must be supported in $X_{(2),x_2}$. We define $h = e_2 + \wt e_2$ and obtain $$\delta_2^1 h = \delta_2^1 e_2 + \delta_2^1 \wt e_2 = \delta_2^1 \wt e_2 = \wt \sigma_2$$ as desired.

Since both $e_2 + \wt e_2$ and $\wt \sigma_2 = \delta_2^1 (e_2 + \wt e_2)$ are supported in $X_{(2),x_2}$, there exists a correction $\wt f_2$ applied by the level-2 decoder that is supported in $X_{(2),x_2}$ so that $e_2 + \wt e_2 + \wt f_2$ has no syndrome and is supported in $X_{(2),x_2}$.  For large enough $\ell$, $X_{(2),x}$ cannot contain a logical operator at level 2 and thus $[e_2 + \wt e_2 + \wt f_2]_2$ is a trivial element of $H^1(C_{(2)})$, where $[\alpha]_k$ denotes the homology class of $\alpha$ in  $H^1(C_{(k)})$.  No corrections will be applied at level 1 and level 0.  Repeatedly applying the map $\cF_i$ the way back to level $2k$, we get
\begin{equation}
[e_2 + \wt e_2 + \wt f_2]_2 = [e_{2k} + \wt e_{2k} + f(\sigma_{2k}) +  f(\sigma_{2k} + \delta^1_{2k} e^1)]_{2k} = [e^1 + f(\sigma_{2k}) +  f(\sigma_{2k} + \delta^1_{2k} e^1)]_{2k}
\end{equation}
and $e^1 + f(\sigma_{2k}) +  f(\sigma_{2k} + \delta^1_{2k} e^1)$ must in the trivial homology class of $H^1(C_{(2k)})$, implying that $\sigma_2$ is stable by Def.~\ref{def:alternative-stable}.
\end{proof}
Define $\Sigma_{2k}(\beta)$  to be the classical probability distribution over syndrome configurations such that $$\mathbb{P}(\sigma_{2k}) = \frac{e^{-\beta |\sigma_{2k}|}}{Z}, \quad \text{where} \ \ Z = \sum_{\sigma_{2k} \in B^2(C_{(2k)})} e^{- \beta|\sigma_{2k}|}.$$
\begin{corollary}\label{cor:projbound}
Define $\widetilde{P} = \sum_{\sigma_{2k}: \sigma_2 = 0} \ketbra{\sigma_{2k}}{\sigma_{2k}}$.  Then,
$$\mathbb{P}_{\sigma_{2k}\sim \Sigma_{2k}(\beta)}(\sigma_2 \neq 0) = \Tr (1 - \widetilde{P}) \rho_{\beta} \geq \Tr (1 - P) \rho_{\beta}.$$
\end{corollary}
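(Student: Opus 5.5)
The corollary consists of an equality followed by an inequality, and I will prove them separately. For the equality, I will work in the factorization $\cH_S = \cH_{\mrm{log}} \otimes \cH_{\mrm{syn}}$ from Sec.~\ref{sec:review-memorylifetime}. There $H = I \otimes H'$, with $H'$ diagonal in the syndrome basis and eigenvalue $|\sigma|$ on $\ket{\sigma}$. Since we restrict to the $X$ sector, $\ket{\sigma_{2k}}\!\bra{\sigma_{2k}}$ should be read as the projector $\Pi_{\sigma_{2k}}$ onto the joint eigenspace of the $Z$-stabilizers with syndrome $\sigma_{2k}$, tensored with identity on the remaining factors. The Gibbs state then gives weight
\[
\Tr\big(\Pi_{\sigma_{2k}} \rho_\beta\big) = \frac{e^{-\beta|\sigma_{2k}|}\dim(\Pi_{\sigma_{2k}})}{\Tr e^{-\beta H}}.
\]
Every $\sigma_{2k} \in B^2(C_{(2k)})$ is realized by a Pauli $X$-error, which conjugates the code space onto $\Pi_{\sigma_{2k}}$, so all these eigenspaces have the same dimension (and the $Z$-syndrome sector factors out). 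Therefore $\Tr(\Pi_{\sigma_{2k}}\rho_\beta) = e^{-\beta|\sigma_{2k}|}/Z$, which is exactly $\mathbb{P}_{\Sigma_{2k}(\beta)}(\sigma_{2k})$. Summing over the syndromes with $\sigma_2 \neq 0$ gives $\Tr(1-\widetilde{P})\rho_\beta = \mathbb{P}(\sigma_2\neq 0)$. This uses the fact that the decoder is a deterministic function of $\sigma_{2k}$, so the condition "$\sigma_2 = 0$" is well defined for each syndrome.

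For the inequality, I will take $P := P_{\mrm{stable}}$ to be the projector onto the span of all syndrome sectors $\sigma_{2k}$ that are stable in the sense of Def.~\ref{def:alternative-stable}. The preceding lemma shows that $\sigma_2 = 0$ implies stability, so $\widetilde{P} \le P$ as commuting diagonal projectors. Hence $1-\widetilde{P} \ge 1-P$. Both operators are diagonal in the same basis and $\rho_\beta \ge 0$ commutes with them, so $\Tr(1-\widetilde{P})\rho_\beta \ge \Tr(1-P)\rho_\beta$.

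The remaining point is to check that this $P$ satisfies the hypotheses of Theorem~\ref{thm:memorylifetime}. First, it commutes with $H$ and with $\widetilde{X}_L$, since both are block diagonal in the syndrome sectors. Second, the dressed observables must be protected on it. This is the cohomological restatement of Claim~\ref{claim:stableconfig}: on a stable sector, a single-qubit $X$ flip changes the decoder output only by a stabilizer. I expect this to be the only step that needs care, namely verifying that Def.~\ref{def:alternative-stable} matches the operator condition, including the $Z$ sector. The $Z$ sector follows by the same argument on the dual complex, and the final $P$ is the product of the two sector projectors.
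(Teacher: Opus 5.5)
Your proof is correct and follows the paper's route. The paper's proof is the one-line observation that the preceding lemma ($\sigma_2 = 0 \Rightarrow \sigma_{2k}$ stable) gives $\widetilde{P} \le P_{\mrm{stable}}$; your equality step correctly fills in the routine identification of each $X$-syndrome sector's Gibbs weight with $e^{-\beta|\sigma_{2k}|}/Z$, via equal sector dimensions and factoring out the $Z$-sector, and your final paragraph on the hypotheses of Theorem~\ref{thm:memorylifetime} is not needed for this corollary.
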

\begin{proof}
Follows from the previous Lemma.
\end{proof}
Thus, we simply need to find an upper bound on $\mathbb{P}_{\sigma_{2k}\sim \Sigma_{2k}(\beta)}(\sigma_2 \neq 0)$.  To do so, we will show that the event $\sigma_2 \neq 0$ requires $\sigma_{2k}$ to be of a very large weight.  By bounding the number of distinct large-weight configurations, we will show that for $\beta$ large enough, $\mathbb{P}_{\sigma_{2k}\sim \Sigma_{2k}(\beta)}(\sigma_2 \neq 0)$ is doubly exponentially small in $k$.  In physics, this proof method often goes under the name of a Peierls argument.

\subsection{Recording coarse-graining history in the decoding graph}

In Sec.~\ref{def:locality}, we defined the decoding graph $G$. We will use it to track the history of the coarse-graining steps in decoding, i.e.
$$\sigma_{2k}, \ (\sigma_{2k-1},f_{2k}),
  (\sigma_{2k-2},f_{2k-1}),
...,
  (\sigma_0,f_1), \ f_0.$$
More specifically, we will store the indices of local complexes that contain the support of the syndrome and the coboundaries of local corrections. In particular, the decoding graph at level $i$, denoted $G_i$, has the same structure as the 1-skeleton of the Tanner graph $T_{(i-1)}$.
Note that $G$ is bounded-degree, and we will call the maximum degree $\Delta = O(\ell^2)$.  We will also define $\Delta_{\mrm{level}}$ to be the maximum degree of any subgraph $G_i$, which is independent of $\ell$.

For a particular choice of decoder, the highest-level syndrome $\sigma_{2k}$ uniquely determines the history of coarse-graining steps under decoding. At each level $i$, we track the supports of the syndrome as well as where the syndrome intermittently moved when we add local corrections $\delta f_{i,v}$ for $v \in T_{(i-1)}(0)$; we then mark the nodes in $G$ consistent with these supports. These induce a subgraph of $G$ that we call $G_{\sigma_{2k}}$.  A formal definition is given below:
\begin{definition}[Subgraph induced by syndrome $\sigma_{2k}$]
\label{def:subgraph}
Given $\sigma_{2k}$ and the associated coarse-graining data, define an induced subgraph $G_{\sigma_{2k}} \subseteq G$.  The edges and vertices of $G_{\sigma_{2k}}$ are determined by the following procedure:
\begin{enumerate}
    \item For each $2 \leq i \leq 2k$:
    \begin{enumerate}
        \item[(1.1)] Include each $v \in G_i$ if $\sigma_i \big|_{Y_v} \neq 0$.
        \item[(1.2)]Include each $v \in G_i$ if $f_{i,v} \neq 0$  or $\delta f_{i,v'} \big|_{Y_v} \neq 0$ for some $v'$ in $T_{(i-1)}(0)$.
        \item[(1.3)] If two nodes $v$ and $v'$ in $G_{\sigma_{2k}}$  are connected by an edge $(v,v')$ in $G_i$, include $(v,v')$ in $G_{\sigma_{2k}}$.
    \end{enumerate}
    \item If any node $v \in G_2$ is included in $G_{\sigma_{2k}}$, include all nodes $v' \in G_1$ where $v \in Y_{(1),v'}$.
    \item If any node $v \in G_1$ is included in $G_{\sigma_{2k}}$, include the node $v^{(0)} \in G_0$.
    \item For each $0 \leq i \leq 2k$: if $v \in G_{i-1}$ and $v' \in G_i$ are included in $G_{\sigma_{2k}}$ and are connected by an edge $(v,v')$ in $G$, include $(v,v')$ in $G_{\sigma_{2k}}$.
\end{enumerate}
\end{definition}
A schematic example of decoding subgraph is marked black in \Cref{fig:subgraph}. An illustration of how to determine a decoding subgraph is shown in Fig.~\ref{fig:level} for the $\cR_X$ iteration. For $\cR_Z$ iterations, the structure of the subgraph $G_{\sigma_{2k}}$ is simpler since corrections are only ever applied in $C_z$ subcomplexes; thus, only $z$-type nodes can be added to $G_{\sigma_{2k}}$ at such iterations.

\begin{figure}
    \centering
\includegraphics[width=0.8\textwidth]{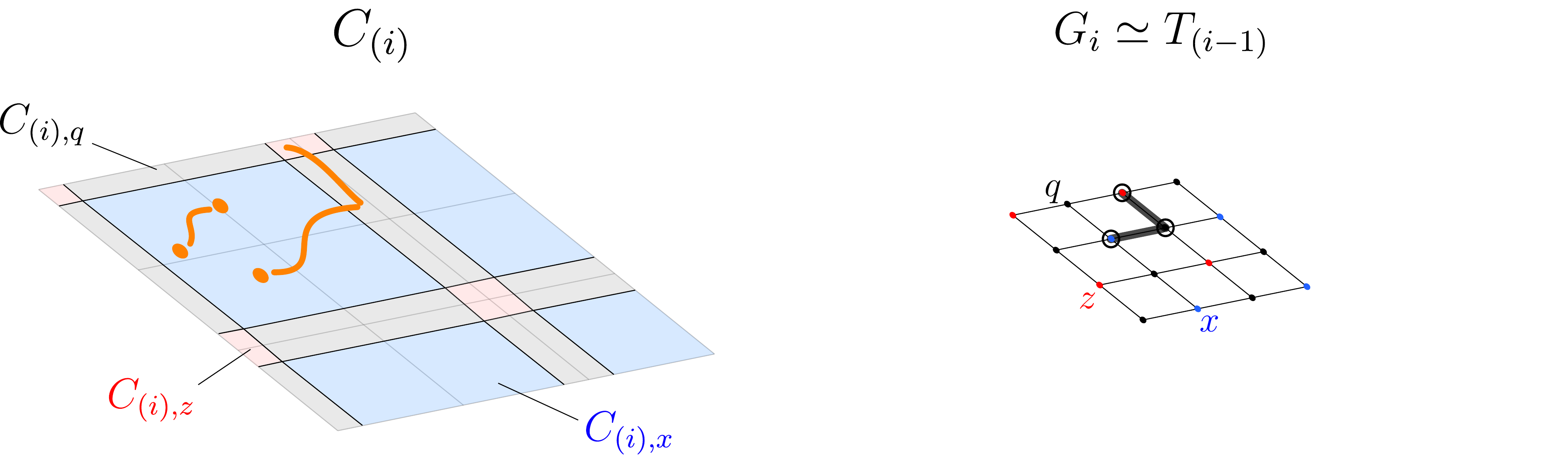}
\caption{Left: the orange points indicate the syndrome at level $i$, while the orange lines indicate the local corrections at the associated coarse-graining step.  Right: the circled vertices and bold links on the right indicate the nodes and edges to be added to the induced subgraph $G_{\sigma_{2k}}$ at level $i$.}
\label{fig:level}
\end{figure}

The induced subgraph $G_{\sigma_{2k}}$ consists of a set of connected components. We will now show that because of the local computability property, the weight of the syndrome in each connected component of $G_{\sigma_{2k}}$ is reduced independently.  Call $K$ a connected component of $G_{\sigma_{2k}}$, and decompose it into $K = \bigcup_i K_i$ where $K_i =  K \cap G_i$.  Define $\sigma_{K_i}$ to be the syndrome supported in $\bigcup_{v \in K_i} Y_{(i), v}$.
\begin{claim} \label{claim:disjoint-clusters}
    Consider two different connected components of $G_{\sigma_{2k}}$, which we denote $K$ and $K'$. The syndrome $\sigma_{K_{i-1}}$ only depends on $\sigma_{K_i}$ and not on $\sigma_{K'_{i-1}}$.
\end{claim}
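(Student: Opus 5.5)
The plan is to show that the coarse-graining map $\sigma_i \mapsto (\sigma_{i-1}, f_i)$ acts cluster by cluster. It then suffices to check that the contribution of $\sigma_{K'_i}$ never reaches any local complex indexed by a node of $K_{i-1}$. I would first decompose the level-$i$ syndrome as $\sigma_i = \sum_{K} \sigma_{K_i}$ over the connected components of $G_{\sigma_{2k}}$. This decomposition is well defined because each $Y_{(i),v}$ with $\sigma_i|_{Y_v}\neq 0$ is a node of exactly one component, by (1.1) of Definition~\ref{def:subgraph}. I would then use local computability (Definition~\ref{def:locality}) together with the explicit constructions of Theorems~\ref{thm:coarse-graining-RZ} and~\ref{thm:coarse-graining-RX}. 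Each local correction $f_{i,v}$ is a function of $\sigma_i$ restricted to complexes $Y_{v'}$ with $\mathrm{dist}_i(v',v)\le 1$. The key observation is that whenever $f_{i,v}\neq 0$, rule (1.2) puts $v$ in $G_{\sigma_{2k}}$. Rule (1.2) also puts in every node where $\delta f_{i,v}$ is supported, and rule (1.3) adds the edges of $G_i$ between included nodes. Hence every syndrome node that $f_{i,v}$ can depend on, and every node it touches, lies in the same component as $v$.

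Concretely, I would argue by cases. In the $\cR_Z$ case, $f_{i,z}$ depends only on $\sigma_i|_{Y_z}$ and $\sigma_{i-1}|_z$ is the parity of $\sigma_i|_{Y_z}$. So $\sigma_{i-1,z}$ is a function of the single node $z\in K_i$, if that node is in the subgraph at all. In the $\cR_X$ case, $f_{i,x}$ depends only on $\sigma_i|_{Y_x}$. The correction $f_{i,q}$ depends on $\sigma_1|_{Y_q}$, which equals $\sigma_i|_{Y_q}+\sum_{x\sim q} g_{xq}(f_{i,x})$. Each nonzero summand comes from a node $x$ adjacent to $q$ in $G_i$ with $\delta f_{i,x}|_{Y_q}\neq 0$, so $x$ and $q$ are both included and joined by an included edge. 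Similarly, $\sigma_{i-1,z}$ is read off from $\sigma_2|_{Y_z}=\sigma_i|_{Y_z}+\sum_{q\sim z} g_{qz}(f_{i,q})$. Again every contributing node $q$ is adjacent to $z$ and included. By linearity of these restrictions, $\sigma_{i-1,z}$ is a sum of terms each coming from the component containing $z$. The $f_x$ and $f_q$ are not linear in $\sigma$, since they involve pairing choices and choices of $h$. However, each one is a function of the inputs on a set of nodes connected to $x$ or $q$ within $G_{\sigma_{2k}}$. Therefore the data of other components never enters.

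Finally I would identify the level-$(i-1)$ location $z$ with a node of $K_{i-1}$. This uses the inter-level edges of rule 4: $z\in G_{i-1}$ is connected to the nodes of $Y_{(i-1),z}$ in $G_i$, which are exactly the nodes feeding $\sigma_{i-1,z}$. So if $\sigma_{i-1,z}\neq 0$, then $z$ is connected to the level-$i$ nodes that produced it and lies in the same component $K$. Conversely, any node of $K'_{i-1}$ is disconnected from $K_i$. This yields that $\sigma_{K_{i-1}}$ is a function of $\sigma_{K_i}$ alone.

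The main obstacle I expect is the bookkeeping for the $\cR_X$ step. There the correction $f_q$ depends on the intermediate syndrome $\sigma_1$, not on $\sigma_i$ directly. One must check that rule (1.2) records both the support of $f_{i,x}$ and the nodes where $g_{xq}(f_{i,x})$ lands, so that the two-step dependency chain $x\to q\to z$ stays inside one component. A related subtlety is the correction $f_0$ inside $Y_x$ and the global all-ones flip used in choosing $h$ for $C_q$. These are determined by the whole local complex, but only through syndrome in that single $Y_x$ or $Y_q$, which is a single node, so they do not break the argument.
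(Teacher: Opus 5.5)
Your argument is correct and is essentially the paper's. Both rely on local computability together with rules (1.1)--(1.3) of Definition~\ref{def:subgraph}, which keep every input of a nonzero local correction, and every node its coboundary touches, inside one component, and on the inter-level edges of rule 4 to pass down to level $i-1$. The paper phrases this abstractly, as a contradiction via $\cF^{-1}(\sigma_{i-1})|_{U_u}=(\sigma_i+\sum_{v'}\delta^1_i f_{i,v'})|_{U_u}$; you argue directly and unpack the $\cR_X$ step as the chain $x\to q\to z$ through $\sigma_1$.

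Your last step needs an indexing fix. The location $z\in V_{Z,(i-1)}$ is a node of $G_i$, not $G_{i-1}$, and the relevant level-$(i-1)$ node is the $u\in G_{i-1}$ with $z\in Y_{(i-1),u}(0)$. Rule 4 joins $u$ to $z$ because $\sigma_{i-1,z}\neq 0$ forces $\sigma_i|_{Y_{(i),z}}\neq 0$ or $\delta^1_i f_{i,v'}|_{Y_{(i),z}}\neq 0$ for some $v'$, so $z$ is included. The nodes feeding $\sigma_{i-1,z}$ are then linked to $z$ inside $G_i$; they are not ``the nodes of $Y_{(i-1),z}$'' as you wrote.
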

\begin{proof}
    Consider any $v \in K_{i }\subseteq K$ labeling local complex $Y_{(i), v}$. Suppose (for the sake of contradiction) that there exists $u \in K'_{i-1} \subseteq K'$ corresponding to $Y_{(i-1), u}$ such that the syndrome restricted to $Y_{(i-1), u}$, denoted $\sigma_{i-1, u} = \sigma_{i-1}\big|_{Y_{(i-1), u}}$, depends on the syndrome restricted to $Y_{(i), v}$, denoted $\sigma_{i, v}$.  We can compute the functional dependence of $\sigma_{i-1, u}$ on the syndrome in $\sigma_i$ via
    \begin{align}
    \cF^{-1}(\sigma_{i-1}\big|_{Y_{(i-1),u}}) = \cF^{-1}(\sigma_{i-1}) \big|_{U_u} = \big( \sigma_i + \sum_{v' \in T_{(i-1)}(0)} \delta^1_i f_{i,v'}\big) \big|_{U_u}
    \end{align}
    where $U_u = \{Y_{(i),w}:  w \in Y_{(i-1),u}\}$.

    We now have two cases to consider.
    First, if $\sigma_i \big|_{U_u}$ depends on $\sigma_{i, v}$, then $Y_{(i), v}$ must be contained in $U_u$. Note that both $v$ and $u$ must be in $G_{\sigma_{2k}}$. By Def.~\ref{def:subgraph}, $v$ must be connected to $u$, contradicting the assumption that $K$ and $K'$ are different connected components.

    Second, suppose that for some $v' \in T_{(i-1)}(0)$, $(\delta^1_i f_{i,v'} )\big|_{U_u}$ depends on $\sigma_{i, v}$.
    By local computability, $v$ must be within directed distance $1$ away from $v'$. Because $f_{i,v'}$ is nonzero, $v'$ must be in $G_{\sigma_{2k}}$. By Def.~\ref{def:subgraph}, the neighboring $v'$ and $v$ must be in the same connected component.

    For $(\delta^1_i f_{i,v'}) \big|_{U_u}$ to be nonzero,  $\delta^1_i f_{i,v'}$ must have support in $Y_{(i), w} \in U_u$ for some $w$. Thus, $v'$ and $w$ are both in $G_{\sigma_{2k}}$ and are in the same connected component; however, by definition, $w$ is in the same connected component as $u$.
    From Def.~\ref{def:subgraph}, this implies that $u$ and $v$ must be in the same connected component,  contradicting the assumption.

\end{proof}

The main result of this subsection is below, which shows that we can analyze the action of the decoder independently on each connected component.

\begin{lemma} \label{lemma:locality-decoder}
 Consider a connected component $K$ of $G_{\sigma_{2k}}$. Denote $\sigma'_i$ the part of syndrome  $\sigma_i$ contained in local regions associated with $K$, and denote $\sigma'  = \cup_i \sigma'_i$.  Define $f_{i}'$, to be the sum of local corrections in $f_i$ that depend on $\sigma'_i$.
    \begin{enumerate}
        \item[(1)] $\sigma'_i + \delta_i^1 f'_i$ is in the image of $\cF_{i-1}$, and $\cF^{-1}_{i-1}(\sigma'_{i-1}) = \sigma'_i + \delta_i^1 f'_i.$
        \item[(2)] For the $\cR_Z$ iteration,
        \begin{equation}
            |\sigma'_{i-1}| \leq |\sigma'_i|.
        \end{equation}
    and for the $\cR_X$ iteration,
        \begin{equation}
            |\sigma'_{i-1}| \leq \frac{1}{2} |\sigma_{i}'|.
        \end{equation}
    \end{enumerate}
\end{lemma}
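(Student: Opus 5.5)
The plan is to reduce the statement to the global coarse-graining identities of Theorems~\ref{thm:coarse-graining-RZ} and~\ref{thm:coarse-graining-RX}, using Claim~\ref{claim:disjoint-clusters} to split them across connected components. Decompose the level-$i$ syndrome as $\sigma_i = \sum_{K} \sigma'_{i,K}$ over the connected components $K$ of $G_{\sigma_{2k}}$. By Definition~\ref{def:subgraph}, every node $v\in G_i$ with $\sigma_i|_{Y_v}\neq 0$ lies in $G_{\sigma_{2k}}$, so this decomposition is exhaustive. Similarly, write $f_i = \sum_K f'_{i,K}$, where $f'_{i,K}$ collects the local components $f_{i,v}$ with $v\in K_i$. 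By local computability (Definition~\ref{def:locality}), each $f_{i,v}$ depends only on syndrome at nodes within directed distance $1$. Those nodes, together with $v$ itself and every node in the support of $\delta^1_i f_{i,v}$, are marked and joined by edges by rule (1.3). Hence each local correction is attributed to exactly one component, and its coboundary stays inside that component's regions.

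For statement (1), restrict the global identity $\cF_{i-1}(\sigma_{i-1}) = \sigma_i + \delta^1_i f_i$ to the union $U_K = \bigcup\{Y_{(i),w} : w\in Y_{(i-1),u},\ u\in K_{i-1}\}$. Claim~\ref{claim:disjoint-clusters} together with the support argument above shows two things. First, the terms $\sigma'_{i,K'}$ and $\delta^1_i f'_{i,K'}$ with $K'\neq K$ vanish on $U_K$. Second, $\sigma'_{i,K}+\delta^1_i f'_{i,K}$ vanishes outside $U_K$, because any surviving point lies in the image of $\cF_{i-1}$, so its level-$(i-1)$ preimage node is marked and joined to $K$ by rule 4. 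So $\sigma'_{i,K}+\delta^1_i f'_{i,K} = \cF_{i-1}(\sigma_{i-1})|_{U_K} = \cF_{i-1}(\sigma'_{i-1,K})$. Here we use that $\cF_{i-1}$ on $C^2$ is block-diagonal over the nodes $u\in T_{(i-1)}(0)$: it is $\Delta$ for $\cR_X$ and the single-point inclusion for $\cR_Z$, which follows from Lemmas~\ref{lemma:HC-structure} and~\ref{lemma:HC-structure2}.

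For statement (2), rerun the weight counting from the proofs of the two theorems with every sum restricted to $K$. For $\cR_Z$, the bound $|\sigma'|_z|\le|\sigma|_{Y_z}|$ is local per $z$, so summing over $z\in K$ gives the claim. For $\cR_X$, Corollaries~\ref{cor:CX} and~\ref{cor:CQ} are sums of per-$x$ and per-$q$ inequalities from Lemmas~\ref{lemma:cleaning-syndrome-Yx} and~\ref{lemma:cleaning-syndrome-CQ}. Restricting these sums to $x,q\in K$ gives $|\sigma'_{2,K}|\le|\sigma'_{i,K}|$, and the fact that $\Delta$ doubles weight gives the factor $\tfrac12$.

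The main obstacle is the bookkeeping in statement (1): showing that the intermediate cleaned syndromes $\sigma_1,\sigma_2$, and the cross terms $g^1_{xq}(f_x)$ and $g^1_{qz}(f_q)$, never leak into nodes that are unmarked or belong to another component. I would handle this by arguing directly from rule (1.2), which marks every node where $\delta f_{i,v'}$ is supported, so every leak lands on a marked node adjacent to $K$.
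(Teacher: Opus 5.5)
Your proposal is correct and follows the paper's route. The paper's proof simply cites local computability, Claim~\ref{claim:disjoint-clusters}, and Theorems~\ref{thm:coarse-graining-RZ} and~\ref{thm:coarse-graining-RX}; your argument expands exactly those ingredients, rightly drawing part (2) from the per-node inequalities inside those proofs together with the block-diagonal form of $\cF_{i-1}$ on $C^2$. One small imprecision: only the directed-distance-$1$ inputs that carry nonzero syndrome are marked, not all of them, and that weaker fact is all your argument needs.
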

\begin{proof}
Follows from local computability of the coarse-graining map, Claim~\ref{claim:disjoint-clusters}, as well as \Cref{thm:coarse-graining-RZ} and \Cref{thm:coarse-graining-RX}.
\end{proof}

\subsection{Witness}

The subgraph $G_{\sigma_{2k}}$ induced by syndrome $\sigma_{2k}$  can be split into disjoint connected components. If it contains a component that contains the node $v^{(0)}$, we call such a component a \emph{witness subgraph}. Because of how the subgraph $G_{\sigma_{2k}}$ is defined, if a witness subgraph is present, the syndrome configuration $\sigma_{2k}$ must be unstable. We formalize this below and use this definition to upper bound the number of such subgraphs and lower bound their minimum size.

\begin{definition}[Witness subgraph and witness syndrome]
Consider a syndrome configuration $\sigma_{2k}$ and construct the associated subgraph $G_{\sigma_{2k}}$ according to Def.~\ref{def:subgraph}.  We call the connected component of the subgraph containing $v^{(0)}$ the witness subgraph of $\sigma_{2k}$ (which could be empty), which we denote $G_\tau$.

Define $\tau_i$ to be the syndrome in $\sigma_i$  restricted to the regions used to define the witness subgraph at level $i$.
Define the witness syndrome $\tau$ of $\sigma_{2k}$ to be the union $\tau = \bigcup_i \tau_i$. We call $\tau_i$ the witness syndrome at level $i$.
\end{definition}
We note that different witness syndrome configurations (we will call these ``witnesses'' moving forward) can correspond to the same witness subgraph.  We will account for this later on in the proof.
\begin{lemma}\label{lem:bndfoot}
The level-$2k$ witness syndrome of $\sigma_{2k}$, denoted $\tau_{2k}$, is a coboundary of some 1-cochain $h^1 \in C^1_{(2k)}$.
\end{lemma}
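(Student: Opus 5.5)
We want to show that the restriction of $\sigma_{2k}$ to the regions indexed by the witness component $K:=G_\tau$ at level $2k$ is itself a coboundary. The plan is to run the coarse-graining recursion on $K$ alone and build a 1-cochain $h^1$ explicitly. That recursion is justified by Claim~\ref{claim:disjoint-clusters} and Lemma~\ref{lemma:locality-decoder}.

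\textbf{Step 1: reduce to the tail of the recursion.} By Lemma~\ref{lemma:locality-decoder}(1), for every $i\ge 3$ we have
\[
\cF_{i-1}(\tau_{i-1})=\tau_i+\delta^1_i f'_i ,
\]
where $f'_i$ is the part of the level-$i$ correction that depends on $\tau_i$. By Def.~\ref{def:subgraph}, $f'_i$ is a sum of local pieces $f_{i,v}$ with $v\in K_i$. Rearranging gives
\[
\tau_i=\delta^1_i f'_i+\cF_{i-1}(\tau_{i-1}).
\]
Unrolling from $i=2k$ down to $i=3$ and using that each $\cF_j$ is a cochain map, we get
\[
\tau_{2k}=\delta^1_{2k}\Big(\sum_{i=3}^{2k}\cF_{2k-1}\circ\cdots\circ\cF_{i}(f'_i)\Big)+\cF_{2k-1}\circ\cdots\circ\cF_2(\tau_2).
\]
So the problem reduces to showing that $\tau_2\in B^2(C_{(2)})$. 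Once that holds, write $\tau_2=\delta^1_2 h_2$ and set
\[
h^1=\sum_{i\ge 3}\cF_{2k-1}\circ\cdots\circ\cF_i(f'_i)+\cF_{2k-1}\circ\cdots\circ\cF_2(h_2).
\]

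\textbf{Step 2: show $\tau_2$ is a coboundary.} I would induct downward, carrying the statement "$\tau_i\in B^2(C_{(i)})$". The natural route is to decompose the true error. Write $\sigma_i=\delta^1_i e_i$. Since $\sigma_i$ splits as $\tau_i$ plus the syndromes of the other components, the parts of the syndrome over distinct components can be disentangled. The cleanest version of this: by Claim~\ref{claim:disjoint-clusters} the decoder acts on $\tau_{2k}$ exactly as it would if $\tau_{2k}$ were the entire input. If $\tau_{2k}$ is a genuine syndrome, the coarse-graining theorems (\Cref{thm:coarse-graining-RZ,thm:coarse-graining-RX}) then certify that every $\tau_i$ lies in $B^2$.

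\textbf{Main obstacle and fallback.} The difficulty is circular: proving $\tau_{2k}\in B^2$ is exactly the goal. So I would instead argue it directly via an obstruction. A 2-cochain $\tau$ supported in a region is a coboundary if it pairs to zero with every element of $\ker(\delta^1)^T$, the 2-cycles of the dual complex, i.e. with the $Z$-stabilizer relations (metachecks). The task becomes showing that every metacheck evaluates to zero on $\tau_{2k}$. This follows from $\sigma_{2k}\in B^2$, provided each metacheck's support, restricted to the witness regions, splits compatibly with the component decomposition. Concretely, the relations $H^2(C_X,\delta_X)$ and $H^2(C_Q,\delta_Q)$ are local to single $Y_v$, so each lies entirely inside one component. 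The delicate relations are the global ones, inherited from lower levels via $\cF$. For these I would again induct: because $K$ is a full connected component that includes its descendants down the decoding graph, every inherited relation either meets $K$ in a union of full local blocks or misses it. Making this support-compatibility precise is the step I expect to be hardest. If it fails, the fallback is to instead prove $\tau_2\in B^2(C_{(2)})$ directly by parity counting on the finite code $C_{(2)}$.
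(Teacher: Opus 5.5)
Your Step 1 is the paper's reduction: unroll $\cF_{i-1}(\tau_{i-1})=\tau_i+\delta^1_i f'_i$ (Lemma~\ref{lemma:locality-decoder}), use that each $\cF_j$ is a cochain map, and everything then hinges on $\tau_2\in B^2(C_{(2)})$. Your range $i\ge 3$ is correct, since the modified level-2 decoder produces no $\sigma_1$.

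The gap is Step 2: none of your routes establishes $\tau_2\in B^2(C_{(2)})$.
\begin{itemize}
\item The ``decoder sees $\tau_{2k}$ as the whole input'' route is circular, as you note yourself.
\item The metacheck route rests on an unproved claim about how inherited $Z$-stabilizer relations meet the witness regions. Those relations are spread over many level-$2k$ blocks, and whether those blocks share a component of $G_{\sigma_{2k}}$ is decided at lower levels. Moreover, a relation $r$ that meets the witness in a union of full blocks does not by itself make $\langle r,\tau_{2k}\rangle$ vanish; that needs $r$ to lie entirely inside the witness regions.
\item The ``parity counting on $C_{(2)}$'' fallback presupposes that $\tau_2$ satisfies the metachecks of $C_{(2)}$, which is exactly what is in question.
\end{itemize}

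The missing observation is that nothing is discarded at level $2$: $\tau_2=\sigma_2$. This is built into Def.~\ref{def:subgraph}. Suppose $v\in G_2$ has $\sigma_2|_{Y_v}\neq 0$.
\begin{itemize}
\item Item 2 includes the nodes $v'\in G_1$ with $v\in Y_{(1),v'}$, and there is at least one.
\item Item 3 includes $v^{(0)}$.
\item Item 4 adds the decoding-graph edges $(v',v)$ and $(v^{(0)},v')$.
\end{itemize}
Hence every level-2 node carrying syndrome lies in the component of $v^{(0)}$, i.e.\ in the witness; this is why the root is joined to all of $G_1$.

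It follows that $\tau_2=\sigma_2\in B^2(C_{(2)})$, because $\sigma_{2k}\in B^2(C_{(2k)})$ and every coarse-graining step outputs a coboundary by \Cref{thm:coarse-graining-RZ} and \Cref{thm:coarse-graining-RX}. Take $h_2=f_2$, the level-2 correction with $\delta^1_2 f_2=\sigma_2$. Your formula for $h^1$ then finishes the proof, and this is the paper's argument.
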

\begin{proof}
We provide a construction of $h^1$.  Recall that the vertices in the witness subgraph of $\sigma_{2k}$, denoted $G_{\tau}$, form a connected component.
By Lemma~\ref{lemma:locality-decoder},
$\tau_i + \delta^1_i\dec_{i}(\tau_{i}) = \cF(\tau_{i-1})$.
Define
\begin{equation}
 p = \sum_{i \geq 2} \cF_{2k - 1} \circ \cF_{2k - 2}  \circ \dots \circ \cF_{i}  \left( f_{i}(\tau_{i})\right) \in C^1_{(2k)}
\end{equation}
Applying Lemma~\ref{lemma:locality-decoder} iteratively, we obtain that $\cF_{2k-1} \circ \dots \circ \cF_{2} (\tau_2) = \tau_{2k} + \delta^1_{2k} p$.  By definition of the witness subgraph and witness syndrome, $\tau_2 = \sigma_2 \in B^2(C_{(2)})$. Using that $\cF_i$ is a cochain map, it follows $\cF_{2k-1} \circ \dots \circ \cF_{2} (\tau_2)$ is also a coboundary of some 1-cochain, which we call $e \in C^1_{(2k)}$.

Finally, we have $\tau_{2k} = \delta^1_{2k} e + \delta^1_{2k}  p$. Identifying $h = e + p$ shows the statement of the Lemma.
\end{proof}
We will now bound the number of all possible subgraphs corresponding to valid witnesses.  We then argue that the size of the witness subgraph can be related to the weight of the level-$2k$ witness $|\tau_{2k}|$.
\begin{lemma}\label{lem:memlifetimelem3}
The number of witness subgraphs with $m$ nodes is $\leq |V_2| (\Delta+1)^{2m}$.
\end{lemma}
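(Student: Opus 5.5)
The plan is to reduce the count to the standard bound on connected subgraphs of a bounded-degree graph containing a fixed vertex. A witness subgraph $G_\tau$ is by definition a connected subgraph of $G$ that contains $v^{(0)}$. I will not use $v^{(0)}$ as the root, since its degree $|V_1|$ grows with $k$. Instead I anchor at level $2$, using \Cref{def:subgraph}: $v^{(0)}$ and the level-$1$ nodes are included only if some node of $G_2$ is included. Hence every nonempty witness subgraph contains at least one node $u\in G_2$. Moreover, the level-$0$ and level-$1$ nodes are determined by the nodes at level $2$ via rules 2 and 3. So it suffices to count the part of $G_\tau$ at levels $\ge 2$, together with a choice of root $u\in V_2$. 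This choice of root produces the prefactor $|V_2|$.

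Next I would verify that the part of $G_\tau$ at levels $\ge 2$ (possibly together with a bounded number of lower-level nodes) is connected. Paths in $G_\tau$ between level-$\geq 2$ nodes may pass through $G_1$ or $v^{(0)}$. This is the main obstacle. My first attempt is to avoid it altogether by counting with the lower levels included, but in the modified graph $G'$ in which $v^{(0)}$ and the nodes of $G_1$ are contracted into the $G_2$ nodes they are attached to. Here $G_2$ and $G_1$ are both of constant size ($C_{(1)},C_{(2)}$ have size $O(\ell^4)$), and the connections between them have bounded degree $\Delta$. Under this contraction, $G_\tau$ maps to a connected subgraph of a graph of maximum degree at most $\Delta$ (after adjusting constants). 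Its node count is at most $m$, and it contains the root $u$. The preimage is recovered uniquely using rules 2–3.

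For the counting step, I use the standard spanning-tree/DFS encoding. A connected subgraph on at most $m$ nodes containing a fixed root has a spanning tree, and a depth-first traversal of that tree is a closed walk of length at most $2(m-1)$ starting at the root. At each step the walk either moves to one of at most $\Delta$ neighbors or stops/backtracks, which gives at most $\Delta+1$ choices per step. Therefore the number of node sets is at most $(\Delta+1)^{2m}$. The edge set of the witness subgraph is determined by its node set, because of rules 1.3 and 4: $G_\tau$ is an induced subgraph. So no extra factor is needed for the edges.

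Finally I multiply by the $|V_2|$ choices of root, which gives $|V_2|(\Delta+1)^{2m}$. The delicate point to check is that the bounded-degree claim still holds near $v^{(0)}$ after the contraction. If the contraction approach turns out awkward, the fallback is to absorb the constant $|V_1|\le O(\ell^4)$ into the prefactor, since $|V_2|$ already dominates it.
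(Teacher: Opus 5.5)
Your core argument is exactly the paper's proof: root at a node of $V_2$, paying the factor $|V_2|$; encode a depth-first traversal as a word of length at most $2m$ over $\{0,1,\dots,\Delta\}$; and use that the node set determines the witness subgraph. The contraction detour is unnecessary, because its premise is false: $G_1\cong T_{(0)}$ is the constant-size seed complex, so $\deg v^{(0)}=|V_1|$ is a constant and is already bounded by $\Delta$, which is the maximum degree of all of $G$. You can therefore traverse the whole connected component $G_\tau$, levels $0$ and $1$ included, directly in $G$ as the paper does (rooting at $v^{(0)}$ would even remove the factor $|V_2|$). As stated, the contraction is also not well defined, since each $G_1$ node is attached to many $G_2$ nodes.
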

\begin{proof}
For a connected component of $G_{\sigma_{2k}}$ to be a witness subgraph, it is sufficient for it to contain any $v^{(2)} \in G_2$. Thus, we will need to bound the number of connected subgraphs of size $m$ containing $v^{(2)}$ and multiply by $|V_2|$ for an upper bound.

To prove this, perform a depth-first traversal of the graph $G_{\tau}$ starting at a given node $v^{(2)}$.  More specifically, for a given node with $\Delta$ incident edges, we assign distinct integers $1,2,\cdots, \Delta$ to each edge.  A traversal of $G_{\tau}$ can then be labeled by a list of integers $\{n_1, n_2, n_3, \cdots\}$, with $n_i \in [0,\Delta]$.  This list can be interpreted the following way: if one is at node $u$ and the element at the front of the list is $n_i$, we move along the $n_i$-th edge if $n_i \in [1,\Delta]$, and backtrack along the edge that brought us to $u$ if $n_i = 0$.  We then pop $n_i$ from the list and repeat until the list is empty.  It follows that for a depth-first traversal of a graph with $m$ nodes, the number of elements in the list is at most $2m$.  The number of possible lists is then at most $\leq (\Delta+ 1)^{2m}$.

Finally, since $G_{\tau}$ is an induced subgraph, different witness subgraphs will have different traversal lists, implying the thesis.
\end{proof}
Now we provide a lower bound on the number of nodes in the witness subgraph corresponding to an unstable configuration.  This is the key lemma in which the results of the previous section are used.
\begin{lemma}\label{lem:memlifetimelem4}
An unstable configuration of the level-$2k$ construction has a witness subgraph with $\geq 2^{k-1}$ nodes.
\end{lemma}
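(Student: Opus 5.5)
By the stability lemma above, an unstable configuration must have $\sigma_2 \neq 0$. The plan is to follow the witness syndrome upward from level $2$ to level $2k$, using the per-component syndrome reduction of Lemma~\ref{lemma:locality-decoder}. We then convert the resulting syndrome weight into a node count for $G_\tau$.

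\textbf{Step 1 (non-empty witness).} Since $\sigma_2 \neq 0$, some node of $G_2$ lies in $G_{\sigma_{2k}}$ by rule (1.1) of Def.~\ref{def:subgraph}. Rules 2–4 then connect it through $G_1$ to $v^{(0)}$, so the witness subgraph $G_\tau$ is non-empty. Moreover, every component of $G_{\sigma_{2k}}$ that meets $G_2$ is joined to $v^{(0)}$, because every node of $G_1$ is adjacent to the root. Hence $\tau_2 = \sigma_2 \neq 0$, so $|\tau_2| \geq 1$.

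\textbf{Step 2 (syndrome growth going up).} Apply Lemma~\ref{lemma:locality-decoder} to the component $K = G_\tau$. It gives $|\tau_{i-1}| \leq |\tau_i|$ at $\cR_Z$ levels and $|\tau_{i-1}| \leq \tfrac12 |\tau_i|$ at $\cR_X$ levels. Between level $2$ and level $2k$ there are $2k-2$ coarse-graining steps, alternating between $\cR_X$ and $\cR_Z$, so exactly $k-1$ of them are halving steps. Chaining the inequalities gives $|\tau_{2k}| \geq 2^{k-1}|\tau_2| \geq 2^{k-1}$.

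\textbf{Step 3 (weight to nodes).} By rule (1.1), every node $v \in G_{2k}$ with $\tau_{2k}|_{Y_v} \neq 0$ belongs to $G_\tau$. The next point is the main obstacle. A single local complex $Y_v$ has $O(\ell^2)$ syndrome bits, so each node may carry many syndrome bits. A direct count would then only give $|V(G_\tau)| \gtrsim 2^{k-1}/\ell^2$, which is off by a constant factor.

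To get the stated $2^{k-1}$ without constants, I would not count at level $2k$. Instead I would count nodes across all levels, or use the $\cR_Z$ structure. At an $\cR_Z$ step, the coarse-grained syndrome $\tau_{i-1}|_z$ is a single bit per node $z$. So $|\tau_{i-1}|$ equals the number of $z$-nodes of $G_{i}$ in $G_\tau$ that carry odd parity, and this is at most the number of nodes of $G_\tau \cap G_i$.

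I would therefore apply Step 2 only up to the last $\cR_Z$ level, that is, to the syndrome $\tau_{2k-1}$ produced by the top $\cR_Z$ coarse-graining. The chain from level $2$ to level $2k-1$ contains $k-1$ halving $\cR_X$ steps, so $|\tau_{2k-1}| \geq 2^{k-1}$. Each bit of $\tau_{2k-1}$ is the parity of $\tau_{2k}|_{Y_z}$ for a distinct node $z \in G_{2k}$ with $\tau_{2k}|_{Y_z} \neq 0$, and every such node lies in $G_\tau$. This gives at least $2^{k-1}$ distinct nodes in $G_\tau \cap G_{2k}$.

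If the level parity does not line up in this way, the fallback is to absorb an $O(\ell^2)$ factor, which is harmless for the Peierls argument that follows. The crux is confirming that Lemma~\ref{lemma:locality-decoder} applies to the witness component, namely that no syndrome from other components feeds into it. This is exactly what Claim~\ref{claim:disjoint-clusters} provides.
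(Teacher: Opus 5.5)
Your argument is correct and is essentially the paper's. The paper likewise uses that each bit of $\tau_{i-1}$, indexed by some $z \in V_{Z,(i-1)}$, forces the node $z \in G_i$ into the witness subgraph, so $|V_{\tau,i}| \ge |\tau_{i-1}|$, and combines this with the halving chain starting from $|\tau_2| \ge 1$. The paper sums this over all levels, whereas you use only the top level $i = 2k$. That level is always an $\cR_Z$ step, so your fallback is never needed, and it alone already gives $2^{k-1}$.
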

\begin{proof}
Call $V_{\tau}$ its vertex set of the witness subgraph $G_\tau$.  We can decompose $V_{\tau} = \bigcup_i V_{\tau, i}$ where $V_{\tau, i} = V_{\tau} \cap V_i$ denotes the vertices of the witness subgraph at level $i$.  Consider the witness $\tau_{i}$ at level $i$.
Each syndrome point in $\tau_{i}$ accounts for at least one node in the witness subgraph in $V_{\tau,i+1}$, namely the one labeled by $v \in V_{Z,(i)}$. This is because, if there is a syndrome point in $\tau_i$, some subcomplex at a lower level $Y_{(i+1),v}$ must contain either $\tau_{i+1}$, or the correction at level $i+1$, or the coboundary of the correction at level $i+1$.

Thus, the total number of nodes in the witness subgraph obeys
\begin{equation}
|V_{\tau}| = \sum_{i=2}^{2k} |V_{\tau, i}| \geq \sum_{i=3}^{2k} |\tau_{i-1}|.
\end{equation}
By the cluster syndrome reduction property in Lemma~\ref{lemma:locality-decoder}, we have $|\tau_{i-1}| \leq |\tau_i|/r$ where $r = 1$ for the $\cR_Z$ iteration and $r = 2$ for the $\cR_X$ iteration.  Since an unstable configuration must have at least one syndrome at level 2, summing the geometric series gives $|V_{\tau}| \geq 2^{k-1}$.
\end{proof}

\begin{lemma}\label{lem:memlifetimelem6}
The witness $\tau$ of syndrome $\sigma_{2k}$ satisfies $\sum_{i=0}^{2k} |\tau_i| \leq 4 |\tau_{2k}|$.
\end{lemma}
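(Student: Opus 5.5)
The plan is a geometric-series argument built on the per-cluster syndrome reduction of Lemma~\ref{lemma:locality-decoder}. The witness subgraph $G_\tau$ is a connected component of $G_{\sigma_{2k}}$. So Lemma~\ref{lemma:locality-decoder} applies to it directly, with $\sigma'_i=\tau_i$ at each level $2\le i\le 2k$.

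\textbf{Step 1: one-step bounds.} For every coarse-graining step $i\mapsto i-1$, Lemma~\ref{lemma:locality-decoder}(2) gives $|\tau_{i-1}|\le|\tau_i|$ when $i$ is even (an $\cR_Z$ level) and $|\tau_{i-1}|\le\frac12|\tau_i|$ when $i$ is odd (an $\cR_X$ level). Levels $1$ and $0$ carry no syndrome: by the modification of the decoder in Sec.~\ref{sec:decoder-modification}, $\sigma_1=\sigma_0=0$, so $\tau_1=\tau_0=0$ and they contribute nothing to the sum.

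\textbf{Step 2: compose over pairs.} Composing two consecutive steps, one non-increasing and one halving, gives for every $j\ge1$
\begin{equation*}
|\tau_{2k-2j}|\le\tfrac12|\tau_{2k-2j+2}| \quad\text{and}\quad |\tau_{2k-2j+1}|\le|\tau_{2k-2j+2}|.
\end{equation*}
By induction, $|\tau_{2k-2j}|\le 2^{-j}|\tau_{2k}|$ and $|\tau_{2k-2j-1}|\le|\tau_{2k-2j}|\le 2^{-j}|\tau_{2k}|$.

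\textbf{Step 3: sum.} Summing over even and odd levels separately gives
\begin{equation*}
\sum_{i=0}^{2k}|\tau_i|\le \sum_{j\ge0}2^{-j}|\tau_{2k}|+\sum_{j\ge0}2^{-j}|\tau_{2k}| = 4|\tau_{2k}|.
\end{equation*}

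\textbf{Main obstacle: the parity bookkeeping.} This is the step to check carefully. We need to confirm that the step from the top level $2k$ down to $2k-1$ is an $\cR_Z$ coarse-graining, since $C_{(2k)}=\cR_Z(C_{(2k-1)})$, so that this step is only non-increasing rather than halving. If it were instead a halving step, the bound would improve to roughly $3|\tau_{2k}|$, so $4$ is safe in either case.

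\textbf{A secondary point: $\tau_i$ must equal the syndrome $\sigma'_i$ of Lemma~\ref{lemma:locality-decoder}.} At intermediate levels, parts of $\sigma_i$ could lie in other components. Claim~\ref{claim:disjoint-clusters} ensures that $\tau_{i-1}$ is determined by $\tau_i$ alone, so the per-component reduction applies verbatim to the witness component.
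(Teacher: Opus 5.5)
Your proof is correct and is essentially the paper's argument: both apply the per-component syndrome reduction of Lemma~\ref{lemma:locality-decoder}, with factors $1$ and $\frac12$ alternating, and sum the resulting geometric series, with the factor $2$ coming from grouping the levels in pairs. Your version is somewhat more careful than the paper's. You pin down the parity ($\cR_Z$ at even $i$, so the top step is only non-increasing), and you explicitly dispose of levels $0$ and $1$ via the decoder modification, both of which the paper leaves implicit.
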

\begin{proof}
We first note that $\tau_{i}$ is a union of connected components; we label each $\tau_i^c$. From Lemma~\ref{lemma:locality-decoder}, after applying a coarse-graining step, the syndrome supported in the same units that $\tau_{i}$ is in, denoted $\tau_i'$, satisfies
\begin{equation}
|\tau_i'| \leq \sum_c |\tau_i'^c| \leq \sum_c |\tau_i^c| \leq |\tau_i|
\end{equation}
By Lemma \ref{lemma:locality-decoder} $|\tau_{i-1}| \leq  |\tau_i|/r$ where $r$ equals $1$ or $2$ depending on the iteration.
Therefore,
\begin{equation}
\sum_{i=0}^{2k} |\tau_i| \leq 2 |\tau_{2k}| \sum_{i=0}^{\infty} \left(\frac{1}{2}\right)^i \leq 4 |\tau_{2k}|.
\end{equation}
\end{proof}

\begin{lemma}~\label{lem:memlifetimelem5}
Suppose the witness subgraph of $G_{\sigma_{2k}}$ has $m$ nodes.  Then, the total number of nodes in the witness subgraph whose associated local subcomplex contains at least one syndrome point is $\geq m/(1+\Delta_{\mrm{level}}^4)$.
\end{lemma}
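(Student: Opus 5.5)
The plan is a charging argument: every node of the witness subgraph $G_\tau$ should be assigned to a nearby ``syndrome node'', meaning a node $v\in G_i$ with $\tau_i|_{Y_v}\neq 0$. If each syndrome node receives at most $1+\Delta_{\mrm{level}}^4$ charges, the bound $m/(1+\Delta_{\mrm{level}}^4)$ follows at once. Definition~\ref{def:subgraph} lists the reasons a node can be in $G_\tau$, and we treat them one at a time.

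\emph{Nodes included by (1.1).} These are syndrome nodes by definition, and each one charges itself.

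\emph{Nodes included by (1.2) at level $i\ge 2$.} Such a node $v$ has $f_{i,v}\neq 0$, or has $\delta f_{i,v'}|_{Y_v}\neq 0$ for some $v'$. First take the case $f_{i,v}\neq 0$. By local computability (Definition~\ref{def:locality}), $f_{i,v}$ is a function of $\xi_{i,v}(1)$, and it vanishes when that input is zero. This holds because the corrections built in Theorems~\ref{thm:coarse-graining-RZ} and~\ref{thm:coarse-graining-RX} and Lemmas~\ref{lemma:cleaning-syndrome-Yx} and~\ref{lemma:cleaning-syndrome-CQ} can be chosen to be zero on zero input. Hence some $u$ with $\mrm{dist}_i(u,v)\le 1$ carries nonzero $\sigma_i$. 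I also need $\sigma_i$ restricted to $Y_u$, which is only guaranteed to be nonzero somewhere inside $\xi_{i,v}$, to actually belong to $\tau_i$. This holds because $u$ is adjacent in $G_i$ to $v$, so (1.3) places $u$ in the same component, namely the witness component. Next take the case where $v$ is included only because $\delta f_{i,v'}$ touches $Y_v$. Then $v'$ has $f_{i,v'}\neq0$, and $v$ lies within one step of $v'$ in $G_i$, since $\delta=\delta_\loc+g$ and $g$ only connects adjacent local complexes. Combining the two cases, every such $v$ is within $G_i$-distance $2$ of a syndrome node $u$ at the same level, with directed steps composing.

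\emph{The extra nodes of steps 2 and 3.} The nodes at levels $1$ and $0$ added in steps 2 and 3 are charged to level-$2$ nodes they cover. They number at most $|V(G_1)|+1$, a constant. Alternatively, the statement can be read with levels $0,1$ absorbed into the constant, and I would add a short remark to that effect.

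\emph{The counting step.} This is where I would be careful. Distances are measured inside a single $G_i$, whose maximum degree is $\Delta_{\mrm{level}}$. The number of nodes within distance $2$ of a fixed $u$ is at most $1+\Delta_{\mrm{level}}+\Delta_{\mrm{level}}^2$. The exponent $4$ in the statement suggests the intended radius is larger. The (1.2) chain is syndrome node $u$, then $v'$ (distance $\le 1$), then $v$ (distance $\le1$ via $g$). The remark after Definition~\ref{def:locality} allows the syndrome dependence radius to be $2$ for $\sigma_{i-1}$, and an $\cR_X$ correction $f_q$ depends on $x$'s two steps back. So I would bound the radius conservatively by $4$. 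A ball of radius $4$ has at most $\sum_{j=0}^4\Delta_{\mrm{level}}^j\le 1+\Delta_{\mrm{level}}^4$ nodes only loosely; the real claim needed is that each syndrome node receives at most $1+\Delta_{\mrm{level}}^4$ charges. I expect the main obstacle to be pinning down this dependency radius precisely from the explicit $\cR_X$ decoder: $f_x$ depends on $\sigma|_{Y_x}$, and $f_q$ depends on $\sigma_1|_{Y_q}$, which involves $g_{xq}(f_x)$ for $x\sim q$. I would then verify that the resulting ball-size bound is at most $\Delta_{\mrm{level}}^4$ plus the self-charge, adjusting the constant if needed, which does not affect the later Peierls argument.
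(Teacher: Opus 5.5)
Your argument is essentially the paper's. Using that local corrections vanish on zero input, local computability, and $\delta=\delta_{\loc}+g$, you show every witness node at level $i$ lies within $G_i$-distance $2$ of a same-level syndrome node in the same component, and a ball count finishes. The paper phrases that count as a lower bound on a distance-$4$ domination number of $G_{\tau,i}$.

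The counting step you leave open closes immediately and needs no radius $4$. Your own case analysis already gives radius $2$, and the $\cR_X$ decoder does not enlarge it. Indeed, $f_q\neq 0$ forces $\sigma_1|_{Y_q}=\sigma|_{Y_q}+\sum_{x\sim q}g^1_{xq}(f_x)\neq 0$, so $q$ or a neighbour $x$ carries syndrome, and $\delta f_q$ only reaches $Y_q$ and $Y_z$ with $z\sim q$. The radius-$2$ dependence of $\sigma_{i-1}$ is irrelevant, since the lemma compares level-$i$ nodes with level-$i$ syndrome. Writing $D=\Delta_{\mrm{level}}$, each syndrome node $u$ therefore receives at most $|B_{G_i}(u,2)|\le 1+D+D(D-1)=1+D^2\le 1+D^4$ charges. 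The inequality $\sum_{j=0}^4 D^j\le 1+D^4$ you wrote is false. The paper's radius-$4$ version is justified instead by the non-backtracking count $|B_{G_i}(u,4)|\le 1+D\sum_{j=0}^{3}(D-1)^j=1+D^2(D^2-2D+2)\le 1+D^4$.

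You are right, and more careful than the paper, that the base levels need separate treatment. The paper's proof sums the per-level bound over all $i$, but after the modification in Sec.~\ref{sec:decoder-modification} one has $\sigma_1=\sigma_0=0$, so $G_{\tau,0}$ and $G_{\tau,1}$ contain no syndrome nodes.

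The same problem occurs at level $2$, which your (1.2) paragraph, like the paper, treats as locally computable. There $f_2$ comes from an arbitrary (e.g.\ lookup-table) decoder with $\delta^1_2 f_2=\sigma_2$, not from Theorems~\ref{thm:coarse-graining-RZ} and~\ref{thm:coarse-graining-RX}. So a node with $f_{2,v}\neq 0$ may be far from every syndrome node. Such nodes are bounded only by $|V(G_2)|=O(\ell^2)$, which is independent of $k$ but not controlled by $D$, so the constant $1+D^4$ is not justified for small $m$. What the argument actually proves is
\[
\#\{v\in G_\tau:\ \text{$Y_v$ carries syndrome}\}\;\ge\;\frac{m-c_\ell}{1+D^2},\qquad c_\ell=|V(G_0)|+|V(G_1)|+|V(G_2)|.
\]
This suffices for Theorem~\ref{thm:mainresult}: there $m\ge 2^{k-1}$, and the loss only multiplies the Peierls sum by a $k$-independent factor.
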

\begin{proof}
Call $G_{\sigma_{2k},i} = G_{\sigma_{2k}} \cap G_i$ and $G_{\tau,i} = G_{\tau} \cap G_i$.  Recall that a node is added to $G_{\sigma_{2k}}$ if either the subcomplex labeled by this node contains syndrome at that level or if it contains the correction 1-cochain (or the coboundary of the correction) at this level. By construction of $G_{\sigma_{2k}}$, if $v \in G_{\sigma_{2k}, i}$ but the associated subcomplex $Y_{(i),v
}$ does not have syndrome points, then the correction $f_{i,v}$ must be nontrivial.  By local computability of the correction, there must be a node $u$ in the same connected component a graph distance $1$ from $v$ such that $Y_{(i),u
}$ has at least one syndrome point associated with $\sigma_i$. The correction $f_{i,v}$ can deposit syndrome in $Y_{(i),w
}$ where $w$ is distance 1 from $v$ and in the same connected component; thus, $u$ and $w$ are at most graph distance 2 from each other.  Thus, any node $v \in G_{\sigma_{2k}, i}$ must be within graph-distance 2 away from some node $v' \in G_{\sigma_{2k}, i}$ in the same connected component whose associated subcomplex $Y_{(i),v
}$ contains at least one syndrome point in $\sigma_i$ (if the distance is 0, then $v' = v$).
Because $G_\tau \subseteq G_{\sigma_{2k}}$ and forms a single connected component, this statement is also true for $G_\tau$.

Suppose there are $m_i$ nodes in $G_{\tau,i}$, and thus, $\sum_i m_i = m$.  The number of nodes responsible for at least one syndrome point is lower-bounded by the distance-$4$ domination number of $G_{\tau,i}$, denoted $\gamma_4(G_{\tau,i})$.  Then, the total number of such nodes is
\begin{equation}
\geq \sum_i \gamma_4(G_{\tau,i}) \geq \sum_i \frac{m_i}{\Delta_{\mrm{level}}^4 + 1} = \frac{m}{\Delta_{\mrm{level}}^4 + 1}.
\end{equation}
\end{proof}

\subsection{Main result}

Finally, we use the above results to argue that the probability of an unstable configuration is very small, by combining the bound on the number of possible distinct witnesses with the minimum energy of such a witness.  This uses a standard Peierls argument.
\begin{theorem}\label{thm:mainresult}
The probability $\mathbb{P}_{\sigma_{2k}\sim \Sigma_{2k}(\beta)}(\sigma_2 \neq 0)$ is $\leq 2\cdot2^{-2^{k-1}}$ for $\beta > \beta_c$ with $\beta_c$ finite.
\end{theorem}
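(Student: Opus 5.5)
We bound the probability by a Peierls-type union bound over witnesses. If $\sigma_2\neq 0$, then $\sigma_{2k}$ has a nonempty witness subgraph $G_\tau$ containing $v^{(0)}$. By Lemma~\ref{lem:memlifetimelem4}, this subgraph has $m\ge 2^{k-1}$ nodes. So we write
\begin{equation*}
\mathbb{P}(\sigma_2\neq 0)\le \sum_{m\ge 2^{k-1}}\ \sum_{G_\tau:\,|G_\tau|=m}\ \sum_{\tau_{2k}\text{ compatible with }G_\tau} \mathbb{P}(\tau_{2k}\subseteq\sigma_{2k}).
\end{equation*}

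The energy factor comes from a flipping map. For fixed $\tau_{2k}$, Lemma~\ref{lem:bndfoot} says $\tau_{2k}=\delta^1_{2k}h$ is a coboundary. Hence $\sigma_{2k}\mapsto\sigma_{2k}+\tau_{2k}$ is an injection from configurations whose level-$2k$ witness is $\tau_{2k}$ into $B^2(C_{(2k)})$. Since $\tau_{2k}$ is exactly the restriction of $\sigma_{2k}$ to the witness regions, this map lowers the weight by $|\tau_{2k}|$. Therefore $\mathbb{P}(\text{witness}=\tau_{2k})\le e^{-\beta|\tau_{2k}|}$.

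Next we need a lower bound on $|\tau_{2k}|$ in terms of $m$. Lemma~\ref{lem:memlifetimelem5} gives at least $m/(1+\Delta_{\mrm{level}}^4)$ nodes carrying a syndrome point. Each such node carries a point of some $\tau_i$, so $\sum_i|\tau_i|\ge m/(1+\Delta_{\mrm{level}}^4)$. Lemma~\ref{lem:memlifetimelem6} then yields $|\tau_{2k}|\ge m/(4(1+\Delta_{\mrm{level}}^4))=:cm$.

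The entropy factor has two parts. Lemma~\ref{lem:memlifetimelem3} counts the witness subgraphs with $m$ nodes as at most $|V_2|(\Delta+1)^{2m}$, where $|V_2|=O(\ell^4)$ is a constant. For fixed $G_\tau$, the level-$2k$ witness $\tau_{2k}$ is supported in $\bigcup_{v\in G_{\tau,2k}}Y_{(2k),v}$. Each $Y_v$ has $O(\ell^2)$ cells, so there are at most $2^{O(\ell^2)m}=:D^m$ choices of $\tau_{2k}$.

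Combining these,
\begin{equation*}
\mathbb{P}(\sigma_2\neq0)\le |V_2|\sum_{m\ge 2^{k-1}}\big((\Delta+1)^2 D\, e^{-\beta c}\big)^m.
\end{equation*}
We choose $\beta_c$ so that $(\Delta+1)^2De^{-\beta_c c}\le 1/2$, and if needed enlarge it further so the ratio is at most $1/(2|V_2|^{1/2^{k-1}})$, or simply absorb $|V_2|$. The geometric series then gives $\le 2\cdot 2^{-2^{k-1}}$ (with constants absorbed into $\beta_c$).

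The main obstacle is making the flipping map rigorous. Flipping $\tau_{2k}$ must change neither the decoder history nor the witness of the remaining syndrome, so that distinct witnesses paired with a common remainder stay distinct. I would handle this with Claim~\ref{claim:disjoint-clusters} and Lemma~\ref{lemma:locality-decoder}. Because the components of $G_{\sigma_{2k}}$ evolve independently, removing the witness component leaves the other components unchanged. This gives the injection and hence the factor $e^{-\beta|\tau_{2k}|}$. A secondary point is checking that the lower bound $|\tau_{2k}|\ge cm$ uses only the witness component, which holds because Lemmas~\ref{lem:memlifetimelem5} and \ref{lem:memlifetimelem6} are stated per component.
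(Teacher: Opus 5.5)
Your proof is correct and shares the paper's skeleton: a union bound over witness subgraphs, the factor $e^{-\beta|\tau_{2k}|}$ from Lemma~\ref{lem:bndfoot}, the counting of Lemma~\ref{lem:memlifetimelem3}, and a geometric series. Where you genuinely differ is in how you balance energy against entropy in the inner sum over $\tau_{2k}$ consistent with a fixed subgraph $K$.

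\begin{itemize}
\item \textbf{Your route.} You collapse Lemmas~\ref{lem:memlifetimelem5} and~\ref{lem:memlifetimelem6} into a uniform bound $|\tau_{2k}|\ge cm$ with $c=1/(4(1+\Delta_{\mrm{level}}^4))$, and count the $\tau_{2k}$ crudely by $2^{M_\ell m}$.
\item \textbf{The paper's route.} It keeps the exponent $\frac{\beta}{4}\sum_i|\tau_i|$ and sums over the set $\widetilde V_K$ of syndrome-carrying nodes. It then factorizes into per-node weights $W_v\le(1+e^{-\beta/4})^{M_\ell}-1\approx M_\ell e^{-\beta/4}$. Lemma~\ref{lem:memlifetimelem5} is used only to guarantee $|\widetilde V_K|\ge m/(1+\Delta_{\mrm{level}}^4)$.
\end{itemize}

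Both give a finite, $k$-independent $\beta_c$, so the statement follows either way, and your argument is more elementary. The price is quantitative. Your condition $(\Delta+1)^2\, 2^{M_\ell}e^{-\beta c}\le 1/2$ forces $\beta_c\asymp\Delta_{\mrm{level}}^4\,\ell^2$. The paper's per-node weighting gives $\beta_c\asymp\Delta_{\mrm{level}}^4\log\ell$, which is what supports the claim $T_c\gtrsim 1/\log\ell$; this is a real difference when $\ell\sim 10^{24}$. You can recover the logarithmic dependence within your own framework by stratifying by weight: $\#\{\tau_{2k}:|\tau_{2k}|=w\}\le\binom{M_\ell m}{w}\le(eM_\ell/c)^w$ for $w\ge cm$.

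\textbf{The obstacle you flag is not one.} In the union bound each $\tau_{2k}$ is handled separately. The map $\sigma\mapsto\sigma+\tau_{2k}$ is translation by a fixed coboundary, so it is automatically injective on $\{\sigma\in B^2(C_{(2k)}):\tau_{2k}\subseteq\sigma\}$ and lowers weight by $|\tau_{2k}|$. The paper simply bounds $\mathbb{P}(\tau_{2k}\subseteq\sigma_{2k})\le e^{-\beta|\tau_{2k}|}$ this way, with no argument about whether flipping preserves the decoder history of the remainder.

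\textbf{You handle $|V_2|$ correctly.} You keep the factor $|V_2|$ from Lemma~\ref{lem:memlifetimelem3}, which the paper's final display silently drops. Requiring the ratio to be at most $1/(2|V_2|)$ is a $k$-independent condition, and it absorbs this factor to give exactly $2\cdot 2^{-2^{k-1}}$.
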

\begin{proof}
Suppose $\sigma_2$ has at least one syndrome.  Then, $\sigma_{2k}$ must contain a level-$2k$ witness, i.e. $\tau_{2k} \subseteq \sigma_{2k}$. Assuming $\sigma_{2k}\sim \Sigma_{2k}(\beta)$, the probability of $\sigma_{2k}$ containing a fixed $\tau_{2k}$ is
\begin{align}
\mathbb{P}_{\sigma_{2k}\sim \Sigma_{2k}(\beta)}(\tau_{2k} \subseteq \sigma_{2k} ) &= \frac{1}{Z} \sum_{\sigma_{2k}: \tau_{2k} \subseteq \sigma_{2k}} e^{-\beta |\sigma_{2k}|} \\
&\leq e^{-\beta |\tau_{2k}|} \frac{\sum_{\sigma_{2k}: \tau_{2k} \subseteq \sigma_{2k}} e^{-\beta |\tau_{2k} + \sigma_{2k}|}}{Z} \\
&\leq e^{-\beta |\tau_{2k}|}
\end{align}
 In the third line, we use the fact that $\tau_{2k} + \sigma_{2k}$ is a 2-coboundary (which follows from Lemma~\ref{lem:bndfoot} and $\sigma_{2k}$ being a 2-coboundary by definition); thus, all terms in the numerator are also present in the denominator.

Next, we use the Lemmas from the previous subsection.  From Lemma~\ref{lem:memlifetimelem4} the number of nodes in the witness subgraph is $\geq 2^{k-1}$.  Then, we may write
\begin{align}
\mathbb{P}_{\sigma_{2k}\sim \Sigma_{2k}(\beta)}(\sigma_2 \neq 0) &\leq \sum_{\text{valid } \tau_{2k}} \mathbb{P}_{\sigma_{2k}\sim \Sigma_{2k}(\beta)}(\tau_{2k} \subseteq \sigma_{2k}) \\
&\leq \sum_{\substack{\text{subgraphs} \\ K}} \ \sum_{ \substack{\tau_{2k} \,  \text{consistent} \\ \text{ with } K}} \mathbb{P}_{\sigma_{2k}\sim \Sigma_{2k}(\beta)}(\tau_{2k} \subseteq \sigma_{2k}) \\
&\leq \sum_{\substack{\text{subgraphs} \\ K}} \ \sum_{ \substack{\tau_{2k} \,  \text{consistent} \\ \text{ with } K}} e^{-\beta |\tau_{2k}|} \\
&\leq \sum_{m \geq 2^{k-1}} \, \sum_{K: |V_K| = m} \ \sum_{ \substack{\tau_{2k} \,  \text{consistent} \\ \text{ with } K}} e^{-\beta |\tau_{2k}|}
\end{align}
where: $K$ denotes a witness subgraph, the statement ``$\tau_{2k}$ consistent with $K$'' means that $K$ is a validly constructed subgraph given $\tau_{2k}$, and the Lemma is used in the last line.  Now, we focus on the inner sum.
\begin{equation}
\sum_{ \substack{\tau_{2k} \,  \text{consistent} \\ \text{ with } K}} e^{-\beta |\tau_{2k}|} \leq \sum_{ \substack{\tau_{2k} \,  \text{consistent} \\ \text{ with } K}} e^{-\frac{\beta}{4} \sum_{i=1}^{2k}|\tau_{i}|}
\end{equation}
where we used Lemma~\ref{lem:memlifetimelem6}.  Suppose for the fixed choice of $K$ with vertex set $V_K$ satisfying $|V_K| = m$, the set of nodes whose associated subcomplexes contain at least one syndrome point is $\widetilde{V}_{K}$.  Then, we may write
\begin{align}
\sum_{ \substack{\tau_{2k} \,  \text{consistent} \\ \text{ with } K}} e^{-\frac{\beta}{4} \sum_{i=1}^{2k}|\tau_{i}|} &\leq \sum_{\widetilde{V}_K} \sum_{ \substack{\tau_{2k} \,  \text{consistent} \\ \text{ with } K, \wt V_K}}  e^{-\frac{\beta}{4} \sum_{i=1}^{2k}|\tau_{i}|}  \\
&\leq \sum_{\widetilde{V}_K} \prod_{v \in \widetilde{V}_K} W_v
\end{align}
where $W_v$ is the weighted sum over all valid syndrome configurations $\sigma_v$ with $|\sigma_v| \geq 1$ in the subcomplex corresponding to $v$:
\begin{equation}
W_v \leq \sum_{j = 1}^{M_\ell} \binom{M_\ell}{j} e^{-\beta j/4} = (e^{-\beta/4} + 1)^{M_\ell} - 1.
\end{equation}
where $M_\ell = O(\ell^2)$ is the maximum size of the local complex.
This implies
\begin{align}
\sum_{ \substack{\tau_{2k} \,  \text{consistent} \\ \text{ with } K}} e^{-\beta |\tau_{2k}|} &\leq \sum_{\widetilde{V}_K} ((e^{-\beta/4} + 1)^{M_\ell} - 1)^{|\widetilde{V}_K|} \\
&\leq \sum_{\widetilde{V}_K} ((e^{-\beta/4} + 1)^{M_\ell} - 1)^{m/(1+\Delta^4_{\mrm{level}})} \\
&\leq 2^m  ((e^{-\beta/4} + 1)^{M_\ell} - 1)^{m/(1+\Delta^4_{\mrm{level}})}
\end{align}
where in the second line we assume $(e^{-\beta/4} + 1)^{M_\ell} - 1 < 1$ and invoked Lemma~\ref{lem:memlifetimelem5}.  Putting this together,
\begin{align}
\mathbb{P}_{\sigma_{2k}\sim \Sigma_{2k}(\beta)}(\sigma_2 \neq 0) &\leq
\sum_{m \geq 2^{k-1}} \sum_{K: |V_K| = m} 2^m  ((e^{-\beta/4} + 1)^{M_\ell} - 1)^{m/(1+\Delta^4_{\mrm{level}})} \\
&\leq \sum_{m \geq 2^{k-1}} (\Delta + 1)^{2m} 2^m  ((e^{-\beta/4} + 1)^{M_\ell} - 1)^{m/(1+\Delta^4_{\mrm{level}})}.
\end{align}
where in the second line we use Lemma~\ref{lem:memlifetimelem3}.  Finally, if
\begin{align}
2 \big((e^{-\beta/4} + 1)^{M_\ell} - 1 \big)^{1/(1+\Delta^4_{\mrm{level}})}(\Delta + 1)^2 \leq \frac{1}{2}
\end{align}
which upon using $\log(1 + x) \leq x \log 2$ for $x \in [0,1]$ implies
\begin{equation}
e^{-\beta} \leq \frac{1}{\Delta_{\mrm{level}}^2 M_\ell} \frac{\log\left(1 + \left(\frac{1}{4(\Delta + 1)^2}\right)^{\Delta_{\mrm{level}}^4+1}\right)}{ \log 2}
\end{equation}
or $\beta \geq \beta_c \asymp  \Delta_{\mrm{level}}^4 \log( \ell)$, then
\begin{align}
\mathbb{P}_{\sigma_{2k}\sim \Sigma_{2k}(\beta)}(\sigma_2 \neq 0) &\leq
 \sum_{m \geq 2^{k-1}} \left(\frac{1}{2}\right)^m \leq 2\cdot 2^{-2^{k-1}}.
\end{align}
\end{proof}
\begin{corollary}
The memory lifetime of the 3D passive quantum memory is $\gtrsim 2^{2^{k-1}}$ at inverse temperatures $\beta \geq \beta_c$, with $\beta_c$ determined from Thm.~\ref{thm:mainresult}.
\end{corollary}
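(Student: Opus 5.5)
The corollary follows by feeding the bound of Theorem~\ref{thm:mainresult} into Theorem~\ref{thm:memorylifetime}. I would take the $X$ and $Z$ sectors separately and combine them at the end.

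The first step is to build the stable projector. For the $X$-sector decoder $\dec_X$ of Sec.~\ref{sec:decoder}, with the level-$2$ modification of Sec.~\ref{sec:decoder-modification}, I would set
\[
\widetilde{P}_X=\sum_{\sigma_X:\,\sigma_{X,2}=0}\Pi_{\sigma_X},
\]
and define $\widetilde{P}_Z$ analogously from $\dec_Z$ on the dual complex. The dual construction is symmetric because $\cR_Z$ is the transpose of $\cR_X$, so the coarse-graining theorems and the witness lemmas carry over verbatim.

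Take $P_{\mrm{stable}}=\widetilde{P}_X\widetilde{P}_Z$. By the lemma that $\sigma_2=0$ implies stability, combined with Definition~\ref{def:alternative-stable} and Claim~\ref{claim:stableconfig}, the dressed observables $\widetilde{X}_L$ and $\widetilde{Z}_L$ are protected from all single-qubit Paulis on $P_{\mrm{stable}}$. One point needs care here. A single-qubit $Y$ changes both syndromes, but because the code is CSS and the decoder acts on $\sigma_X$ and $\sigma_Z$ independently, the $Y$ case reduces to the two separate sector conditions. Both projectors are diagonal in the syndrome basis, so they commute with $H_{2k}$ and with the dressed operators, as Theorem~\ref{thm:memorylifetime} requires.

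Next I would bound $\Tr(1-P_{\mrm{stable}})\rho_\beta$. The Gibbs state of a CSS stabilizer Hamiltonian factorizes over the two syndrome sectors, and the marginal law of $\sigma_X$ is exactly $\Sigma_{2k}(\beta)$. A union bound gives
\[
\Tr(1-P_{\mrm{stable}})\rho_\beta\le \mathbb{P}(\sigma_{X,2}\neq0)+\mathbb{P}(\sigma_{Z,2}\neq0).
\]
By Corollary~\ref{cor:projbound} and Theorem~\ref{thm:mainresult}, applied to each sector, the right-hand side is at most $4\cdot 2^{-2^{k-1}}$ for $\beta\ge\beta_c$.

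Theorem~\ref{thm:memorylifetime} then gives
\[
\norm{\Phi_{\beta,t}\circ\mrm{Enc}(\tau)-\mrm{Enc}(\tau)}_1\le 32\,\norm{\cL}_1\, t\, 2^{-2^{k-1}}.
\]
The norm $\norm{\cL}_1$ grows only polynomially in $n_{2k}$, since there are $O(n_{2k})$ bounded jump operators. Setting the right-hand side equal to $1/10$ yields $t_{\mrm{mem}}\gtrsim 2^{2^{k-1}}/\poly(n_{2k})$.

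The main obstacle is reconciling this polynomial prefactor with the clean $\gtrsim$ in the statement. Because $n_{2k}=\exp(\Theta(k))$, we have $2^{2^{k-1}}=\exp(\Theta(n_{2k}^{\eta}))$ for some $\eta>0$. The factor $\poly(n_{2k})$ is then negligible and can be absorbed, if necessary by slightly decreasing the exponent, and this also gives statement (3) of Theorem~\ref{thm:main}.
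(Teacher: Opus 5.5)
Your proposal is correct and follows the paper's proof. In each sector you bound the unstable-syndrome probability by $2\cdot 2^{-2^{k-1}}$ via Corollary~\ref{cor:projbound} and Theorem~\ref{thm:mainresult}, combine the two sectors to get $\Tr(1-P_{\mrm{stable}})\rho_\beta\le 4\cdot 2^{-2^{k-1}}$, and insert this into Theorem~\ref{thm:memorylifetime}; your extra remarks (the reduction of single-qubit $Y$ errors to the two sector conditions, commutation of the projectors with $H$, and the $\poly(n_{2k})$ growth of $\norm{\cL}_1$, which the paper silently absorbs into $\gtrsim$) are refinements of the same argument rather than a different route.
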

\begin{proof}
From Cor.~\ref{cor:projbound} and Thm.~\ref{thm:mainresult}, we have $$2\cdot 2^{-2^{k-1}} \geq \Tr(1-P_X)\rho_{\beta},$$ where the subscript indicates we are restricting to $X$-type errors.  Next, we must treat $Z$-type errors.  The analysis is identical apart from the $\cR_X$ and $\cR_Z$ iterations being exchanged, and we get a similar bound $2\cdot 2^{-2^{k-1}} \geq \Tr(1-P_Z)\rho_{\beta}$.  The right hand side of Thm.~\ref{thm:memorylifetime} can be bounded by $$\Tr(1-P_\mrm{stable})\rho_{\beta} \leq 4\cdot 2^{-2^{k-1}}.$$ Thus, from Thm.~\ref{thm:memorylifetime}, the norm $\norm{e^{-i[H,\cdot] t + \cL t} \circ \mathrm{Enc}(\rho) - \mathrm{Enc}(\rho)}_1 \leq \epsilon$ for all $t \lesssim \epsilon \cdot 2^{2^{k-1}}$.
\end{proof}

\section{Construction with explicit embedding}
\label{sec:explicit-embedding}

Up to this point, we have established the existence of a 3D self-correcting quantum memory,
  where the embedding in $\RR^3$ is achieved via a random perturbation method.
This construction can be made deterministic
  using a deterministic algorithm for the Lov\'asz local lemma \cite{chandrasekaran2013deterministic}.
However, it still incurs a large scale factor $\ell \lesssim 10^{24}$.  As a result, the exponent $\eta$ defined via $t_{\mrm{mem}}\sim \exp(n^\eta)$ is only $\sim 10^{-2}$.
While this estimate can be improved with a more careful analysis of the random perturbation,
  it remains desirable to obtain a more explicit 3D construction that might achieve an even larger value for $\eta$.

In this section,
  we present an explicit construction with scale factor $\ell = 8$.
We achieve this through an inductive argument.
Rather than perturbing the code randomly,
  we instead identify a finite number of ``local structures''
  and specify an explicit replacement rule for each structure.  These rules allow us to generate the code at the next step while ensuring locality in $\mathbb{R}^3$.

As a warm-up, we first illustrate the underlying idea in a simpler classical setting
  before turning to the quantum code.
While the construction in Sec.~\ref{sec:construction} is based on the Tanner square complex,
  here we instead use the sheaf codes formalism,
  discussed in Sec.~\ref{prelims:sheaf}.  An equivalent framework is that of topological defect networks~\cite{williamson2023layer,Aasen_2020}.

\subsection{Warm-up: classical code}\label{sec:classicalcode}

The authors of Ref.~\cite{lin2024proposals}
posed the following question: does there exist a 2D classical code with parity-check matrix $H$ such that both the code with parity-check matrix $H$ and the code with parity-check matrix $H^T$ are self-correcting classical memories? We answer this question in the affirmative.

\begin{theorem}\label{thm:classical-memory}
There exists a scaling family of two-dimensional classical LDPC codes with parity-check matrices $H_k$ such that both the code associated with $H_k$ and the code associated with $H_k^T$ are classical memories. Moreover, their memory lifetime satisfies
$t_{\mathrm{mem}} \ge \exp \left ( \Theta(n_k^\eta) \right ) $,
where $n_k \to \infty$ as $k \to \infty$ is the number of physical bits and $\eta > 0$ is a constant.
\end{theorem}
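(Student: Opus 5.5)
Our plan is to build a classical analogue of the quantum construction one dimension lower: a two-term cochain complex $C^0 \xrightarrow{\delta} C^1$ on a 1-dimensional cell complex (a graph $X_{(k)}$ embedded in $\mathbb{R}^2$). We take $H_k=\delta^T$ as the parity-check matrix of one code, so bits are $0$-cells, checks are $1$-cells, and errors are $0$-cochains with syndrome $\delta e$. The transpose code uses $1$-cells as bits, checks $\delta^T$, and syndromes $\delta^T e$. The seed is a constant-size repetition code, i.e.\ a small connected graph with $H^0\cong\ff_2$ for the first code and a single relation/cycle for the dual. We alternate two iterations $\cR_0$ and $\cR_1$. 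In $\cR_0$ each edge of the current graph is refined (scaled and subdivided into $\asymp\ell$ edges) and then doubled: two parallel copies are joined by a transverse rung, mimicking $C\otimes Q$ truncated to degree $\le 1$. This doubles the coboundary of every $0$-cochain class, exactly as $\Delta$ does in Lemma~\ref{lemma:HC-structure}. In $\cR_1$ we apply the dual replacement, taking $C\otimes \wt Q$ truncated to degrees $\ge 0$, which doubles the boundary of $1$-chains as in Lemma~\ref{lemma:HC-structure2}. In sheaf language (\Cref{prelims:sheaf}) the doubling is implemented by local sheaf data on vertices of the form $\{00\cdots,11\cdots\}$ versus its orthogonal complement. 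Poincar\'e duality, \Cref{thm:poincare-duality}, then lets us treat $H_k^T$ by analyzing the dual sheaf, which is the same construction with the roles of $\cR_0$ and $\cR_1$ swapped.

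The key steps mirror \Cref{sec:construction,sec:decoder,sec:memlifetime}. First, we show with the spectral-sequence argument of \Cref{subsec:spectral-sequence} that each iteration gives a cochain map $\cF: C_{(i)}\to C_{(i+1)}$ inducing isomorphisms on $H^0$ and on $H_1$. This uses the fact that the local complexes are suspensions of connected or discrete sets, so the local cohomology is $\ff_2$ in the relevant degree. Second, we build an RG decoder. For the doubling iteration it cleans syndromes inside each local piece by pairing defects along the subdivided segments, and the crossing-count bookkeeping of Lemma~\ref{lemma:cleaning-syndrome-Yx} gives $|\sigma'|\le\frac12|\sigma|$. For the other iteration it takes parities as in \Cref{thm:coarse-graining-RZ}, giving $|\sigma'|\le|\sigma|$. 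Third, we repeat the decoding-graph and witness argument verbatim (Lemmas~\ref{lem:memlifetimelem3}--\ref{lem:memlifetimelem5} and \Cref{thm:mainresult}). The result is that unstable syndromes have Gibbs probability $\le 2\cdot 2^{-2^{k-1}}$ below a finite temperature. Since $n_k=\exp(\Theta(k))$, this gives $t_{\mathrm{mem}}\ge \exp(\Theta(n_k^\eta))$ for both $H_k$ and $H_k^T$. For a classical memory we only need the classical analogue of \Cref{thm:memorylifetime}, namely a Glauber-type Markov chain and the same stable-subspace bound. That analogue follows from the same Peierls estimate.

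The main obstacle is the embedding in $\mathbb{R}^2$ with bounded density, because the doubling step multiplies the graph by a constant factor each round. Random perturbation via the Lov\'asz local lemma as in Lemma~\ref{lemma:embedding-main} does not directly work in 2D for a 2-complex, but here the complex is only 1-dimensional, which is why two dimensions suffice. Edges crossing in the plane is harmless for a code, since we only need locality and bounded density.

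The plan for this step is as follows.
\begin{enumerate}
\item Scale the embedding by $\lambda$ so that every edge has length $\asymp\lambda$.
\item Subdivide into unit segments.
\item Place the doubled copy at offset $(1,1)$ and insert unit-length rungs.
\end{enumerate}
To keep density bounded, we then need that after scaling by $\lambda$ the number of edge segments meeting a unit ball does not grow. A random perturbation of vertex positions at scale $\lambda$ handles this: for 1-simplices in 2D, the probability of hitting a fixed unit ball is $O(1/\lambda)$. The Lov\'asz local lemma computation of Lemma~\ref{lemma:embedding-main} then goes through with exponent $2-k$ in place of $3-k$. We would first try this probabilistic route, reusing \Cref{sec:explicit-choice-constants}-style constants. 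Only if a cleaner statement is wanted would we fall back on explicit local gadgets, such as axis-parallel edges with finitely many local configurations closed under refinement and doubling, in the spirit of the rest of this section.
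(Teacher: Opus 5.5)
\textbf{The step that fails} is your claim that every iteration induces isomorphisms on both $H^0$ and $H_1$.

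In the quantum construction the local complexes also carry extra cohomology: $H^2(C_X,\delta_X)\oplus H^2(C_Q,\delta_Q)$ in Lemma~\ref{lemma:HC-structure}, and $H^0(C_X,\delta_X)\oplus H^0(C_Q,\delta_Q)$ in Lemma~\ref{lemma:HC-structure2}. There it sits in degree $2$ or $0$ of a three-term complex, where it only records redundancies among stabilizers. A two-term classical complex $C^0\to C^1$ has no spare degree: $H^0$ is the codeword space of $H_k$, and $H^1\cong H_1$ is that of $H_k^T$.

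Both of your iterations enlarge one of these spaces:
\begin{itemize}
\item In your doubling step $\cR_0$, a suspension of $d$ points has $H^1\cong\ff_2^{d-1}\neq 0$, and this lands in the degree that carries the codewords of $H_k^T$. Equivalently, $C\otimes Q$ truncated to degree $\le 1$ has $\dim H^1$ exceeding that of $C$ by $|C^1|$.
\item In your dual step $\cR_1$, $C\otimes\wt Q$ truncated to degrees $\ge 0$ has bits $(a,b,c)\in C^0\oplus C^0\oplus C^1$ and checks $(\delta a+c,\ \delta b+c)$. Its codeword space is $\{(a,\,a+z,\,\delta a): a\in C^0,\ z\in\ker\delta\}$, of dimension $|C^0|+\dim H^0(C)$. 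That is one new constant-weight codeword $(e_v,e_v,\delta e_v)$ per bit $v$.
\end{itemize}

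The paper says exactly this about its own construction: $\cR_{\mrm{Rep}}$ preserves the codewords of $H_k$, while $\cR_{\mrm{Par}}$ adds local ones, and symmetrically for $H_k^T$. An example is the boundary of a ladder cell whose four corners are white. So after $k$ rounds both codes have codeword spaces of dimension $\mathrm{poly}(n_k)$, with codewords at all scales and many of constant weight. The information in those classes is lost in time $O(1)$ at any $T>0$.

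Consequently the witness/Peierls argument cannot be reused verbatim. Definition~\ref{def:alternative-stable} (instability means landing in a nontrivial class), the base-level step asserting that $X_{(2),x}$ contains no logical, and the single-bit thermal encoding all presuppose one-dimensional (co)homology.

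\textbf{The missing idea} is the gauge-codeword (subsystem) encoding that the paper invokes in its sketch. You store only the single large codeword class descending from the seed. You then quotient by the span of all codewords created at later $\cR_{\mrm{Par}}$ steps (later $\cR_{\mrm{Rep}}$ steps for $H_k^T$). One way to do this is to adjoin a term $C_2'$ whose boundaries are exactly those small codewords, as in Appendix~\ref{appB:product-construction}, so that the relevant homology is $\ff_2$. Encoding, decoder success and instability are then all taken modulo this gauge space: the encoding is uniform over the gauge sector and thermal over syndromes. Only after this reformulation do your bounds $|\sigma'|\le\frac12|\sigma|$ and $|\sigma'|\le|\sigma|$ feed into the Peierls count.

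\textbf{On the embedding}, your random-perturbation route is valid but unnecessary: for a $1$-complex in $\RR^2$ the local-lemma count with $\lambda^{2-k}$ does close. The paper instead embeds deterministically on $\zz^2$ with no perturbation at all. After scaling by $\ell$, the original complex, the $(1,1)$-shifted copy and the rungs lie on lattice lines with coordinates $\equiv 0$, $\equiv 1$ and $\equiv 2,\dots,\ell-1 \pmod{\ell}$ respectively. Hence no lattice segment is used twice, which gives a $(4,1)$-embedding directly.

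Finally, the paper takes $\ell=4$ precisely so that each ladder has two rungs, which is what lets interior cleaning avoid increasing the syndrome weight. You should check that your ``single transverse rung'' gadget satisfies the analogous bound.
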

The goal of this section is to construct such a code family.
As in the quantum construction described in Sec.~\ref{sec:construction},
  the construction again consists of two types of iterations,
  denoted by $\cR_{\mrm{Rep}}$ and $\cR_{\mrm{Par}}$.

\subsubsection{Notation and framework}

In this section, we specialize to the case of 1-dimensional sheaf codes,
  which correspond to a classical geometrically local code
  $\cC(X, \cF, I)$ defined by the following data:

\begin{enumerate} [itemsep=-0.05em,topsep=4pt]
  \item A 1-dimensional cell complex $X$,
        with vertex set $X(0)$ and edge set $X(1)$.
  \item Sheaf data $\cF$,
        which assigns to each vertex $v$ of degree $\ne 2$ a color,
          either black or white.
  \item An embedding $I: X \to \R^2$.
\end{enumerate}

We will assume that each edge of $X$ is incident to two distinct vertices. For a vertex $v$, denote
$X_{\ge v}(1)$
to be the set of edges incident to $v$.
If $v$ is black, then the associated local code
  $\cF_v \subseteq \ff_2^{X_{\ge v}(1)}$
  is taken to be the repetition code;
  for example, when $|X_{\ge v}(1)|=3$, we set $\cF_v = \{000,111\}$.
If $v$ is white, then $\cF_v$ is taken to be the parity-check code;
  for example, when $|X_{\ge v}(1)|=3$, we set $\cF_v = \{000,011,101,110\}$.
For vertices of degree $2$ we always set
  $\cF_v=\{00,11\},$
  which is both a repetition code and a parity-check code.
For this reason, we do not explicitly assign a color to degree-$2$ vertices.

As we will see, the embedding $I$ is local and has bounded density. In our construction, each vertex is mapped to a lattice point in $\zz^2$, and each edge is mapped to a unit-length segment joining two adjacent lattice points. Moreover, no two edges of $X$ are mapped to the same edge of the integer lattice.

We again use $\ell$ to denote the scaling factor at each iteration.
While the construction is valid for all $\ell \ge 3$,
  we focus on the case $\ell = 4$.

\subsubsection{Construction}

We begin with a finite-size code $\cC_{(0)}$ and define two procedures, $\cR_{\mrm{Rep}}$ and $\cR_{\mrm{Par}}$. This produces a family of codes
\begin{equation}
  \cC_{(0)},\,
  \cC_{(1)}=\cR_{\mrm{Rep}}(\cC_{(0)}),\,
  \cC_{(2)}=\cR_{\mrm{Par}}(\cC_{(1)}),\,
  \cdots,\,
  \cC_{(2k-1)}=\cR_{\mrm{Rep}}(\cC_{(2k-2)}),\,
  \cC_{(2k)}=\cR_{\mrm{Par}}(\cC_{(2k-1)}),\,
  \cdots
\end{equation}

\paragraph{Base case.}

The initial code $\cC_{(0)}$ is specified by the following data:
\begin{enumerate}[itemsep=-0.05em,topsep=4pt]
  \item The $1$-cell complex $X_{(0)}$ consists of four $0$-cells and four $1$-cells arranged in a loop.
  \item Each vertex $v$ has degree $2$, and hence the sheaf structure at every vertex is fixed to be $\cF_{(0)}(v):=\{00,11\}$.
  \item The four edges are embedded in $\RR^2$ as the boundary of a unit square.
\end{enumerate}

It is clear that the classical code $\cC_{(0)}$
  encodes a single bit.

\paragraph{Iterations $\cR_{\mrm{Rep}}$ and $\cR_{\mrm{Par}}$.}

Each iteration consists of two steps:
\begin{enumerate}[itemsep=-0.05em,topsep=4pt]
  \item Refinement: $X \mapsto X'$.
  \item Doubling: $X' \mapsto Y=(X' \sqcup X'_\mrm{copy} \sqcup X_{\mrm{cyl}})/{\sim}$.
\end{enumerate}

During refinement, the geometry is modified from $X$ to $X'$, while the underlying topological structure remains unchanged.
During the doubling step, the complex $X'$ is replaced by a new complex $Y$ consisting of two copies of $X'$ together with a cylinder $X_{\mrm{cyl}}$ joining them. The equivalence relation $\sim$ identifies the two boundaries of $X_{\mrm{cyl}}$ with the corresponding attachment regions in the two copies of $X'$. These structures will be specified explicitly below.

The two types of iteration, $\cR_{\mrm{Rep}}$ and $\cR_{\mrm{Par}}$, differ only in the sheaf data assigned during the doubling step. We therefore begin by describing the common geometric part of the construction, namely the complex $X$ and its embedding $I$, and then specify the corresponding sheaf data for the $\cR_{\mrm{Rep}}$ and $\cR_{\mrm{Par}}$ iterations.

\paragraph{Refinement.}

The refinement step is straightforward,
  consisting of a scaling and subdivision procedure.
We scale the entire complex by a factor of $\ell$ and subdivide each edge into $\ell$ segments. The resulting complex $X'$ inherits a natural embedding $I'$, obtained by scaling the original embedding $I$.

The sheaf data $\cF'$ is induced directly from $\cF$: each vertex of $X'$ of degree different from $2$ inherits the color of the corresponding vertex of $X$.

\paragraph{Doubling.}

Starting from the subdivided complex $X'$, we form a second copy, denoted $X'_{\mrm{copy}}$, and translate it by the vector $(1,1)$. We then connect $X'$ and $X'_{\mrm{copy}}$ by an auxiliary cylinder-like complex, denoted $X_{\mrm{cyl}}$.

To define $X_{\mrm{cyl}}$, we first specify the attaching regions $X'_0 \subset X'$ and $X'_{\mrm{copy},0} \subset X'_{\mrm{copy}}$.
These are $0$-dimensional subcomplexes. The vertices in $X'_0$ are paired with corresponding vertices in $X'_{\mrm{copy},0}$, and $X_{\mrm{cyl}}$ is obtained by a collection of unit-length edges between each pair.

More concretely, suppose that $(a,b)-(a+1,b)$ is an edge of $X$.
In the subdivided complex $X'$,
this edge is replaced by a path of $\ell$ unit segments joining
$(\ell a,\ell b)$ to $(\ell(a+1),\ell b)$.
In the translated copy $X'_{\mathrm{copy}}$,
the corresponding path joins
$(\ell a+1,\ell b+1)$ to $(\ell(a+1)+1,\ell b+1)$.
We define the corresponding attaching regions to consist of the interior
vertices on these two paths that can be joined vertically:
\begin{align}
  X'_0
  &\supseteq
  \{(\ell a + 2,\ell b),(\ell a + 3,\ell b),\ldots,
  (\ell a + (\ell-1),\ell b)\}, \\
  X'_{\mathrm{copy},0}
  &\supseteq
  \{(\ell a + 2,\ell b + 1),(\ell a + 3,\ell b + 1),\ldots,
  (\ell a + (\ell-1),\ell b + 1)\}.
\end{align}
These vertices pair naturally to form the unit-length segments in
$X_{\mathrm{cyl}}$:
\begin{equation}
  X_{\mathrm{cyl}}
  \supseteq
  \{
    (\ell a + m,\ell b) - (\ell a + m,\ell b + 1)
    : m \in \{2,3,\ldots,\ell-1\}
  \}.
\end{equation}
Thus, for each edge of $X$, the corresponding paths in $X'$ and $X'_{\mrm{copy}}$, together with the added edges in $X_{\mrm{cyl}}$, form a subcomplex shaped like a ladder.

\paragraph{Sheaf data.}

It remains to assign a color to each vertex of $Y$ whose degree is not $2$. There are two types of such vertices.
The first consists of vertices in $X'$ and $X'_{\mrm{copy}}$ that do not lie in the attaching regions.
The second consists of vertices lying in the attaching regions.

Vertices of the first type inherit the color of the corresponding vertex in $X$. Vertices of the second type are colored black in the iteration $\cR_{\mrm{Rep}}$ and white in the iteration $\cR_{\mrm{Par}}$.

This completes the construction of the code $\cC_{(i+1)}$ from $\cC_{(i)}$. In particular, the size of $\cC_{(k)}$ grows exponentially in $k$.

\begin{figure}[ht]
  \centering
  \includegraphics[width=0.65\textwidth]{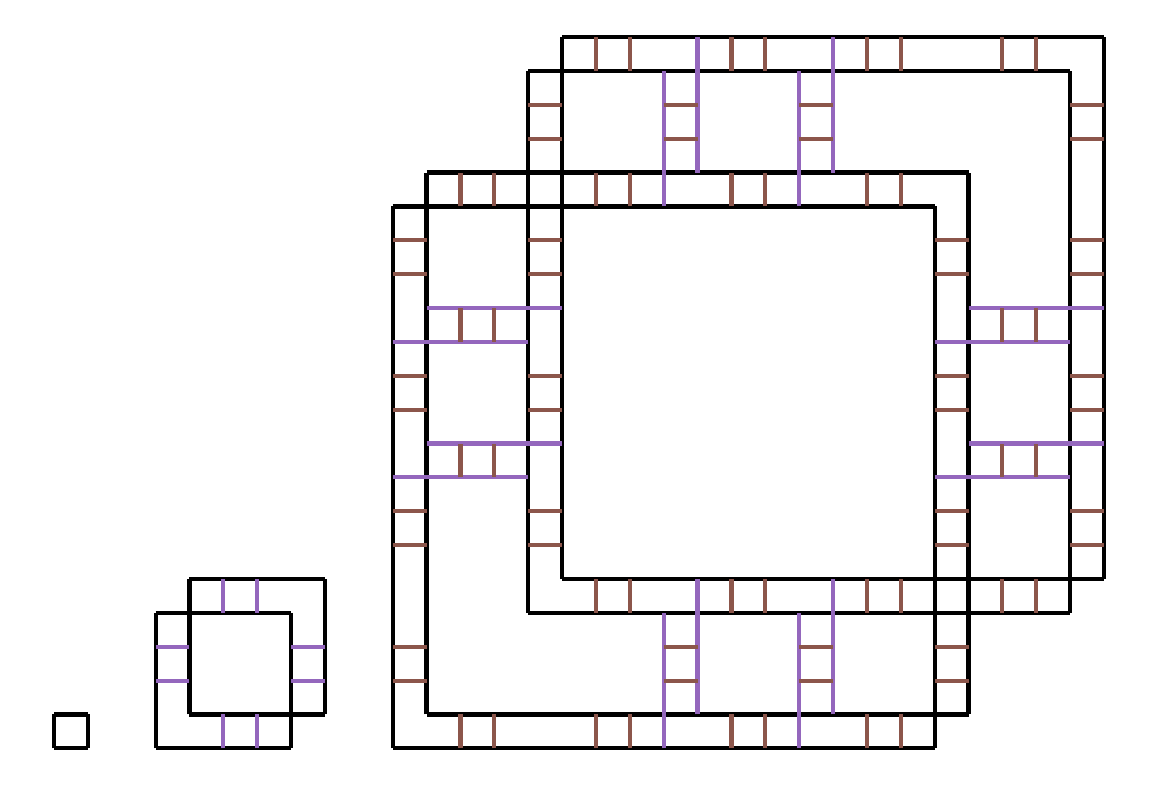}
  \caption{The first two iterations for $\ell=4$, showing $X_{(0)}$, $X_{(1)}$, and $X_{(2)}$.
    Black edges are obtained from scaling and copying $X_{(0)}$.
    Purple edges are the new cylinder edges added in $X_{(1)}$, and brown edges are the new cylinder edges added in $X_{(2)}$. \\
    \hspace*{1em}
    Every lattice point that appears as a degree-$2$ or degree-$3$ vertex in the figure indeed represents a vertex of the complex incident to all adjacent edges shown.
    By contrast, a lattice point that appears to be a degree-$4$ vertex actually represents two distinct degree-$2$ vertices that overlap under the embedding.
    Equivalently, the two lines crossing at such a point are not connected in the complex.
          }
\end{figure}

\begin{figure}[ht]
  \centering
  \includegraphics[width=0.65\textwidth]{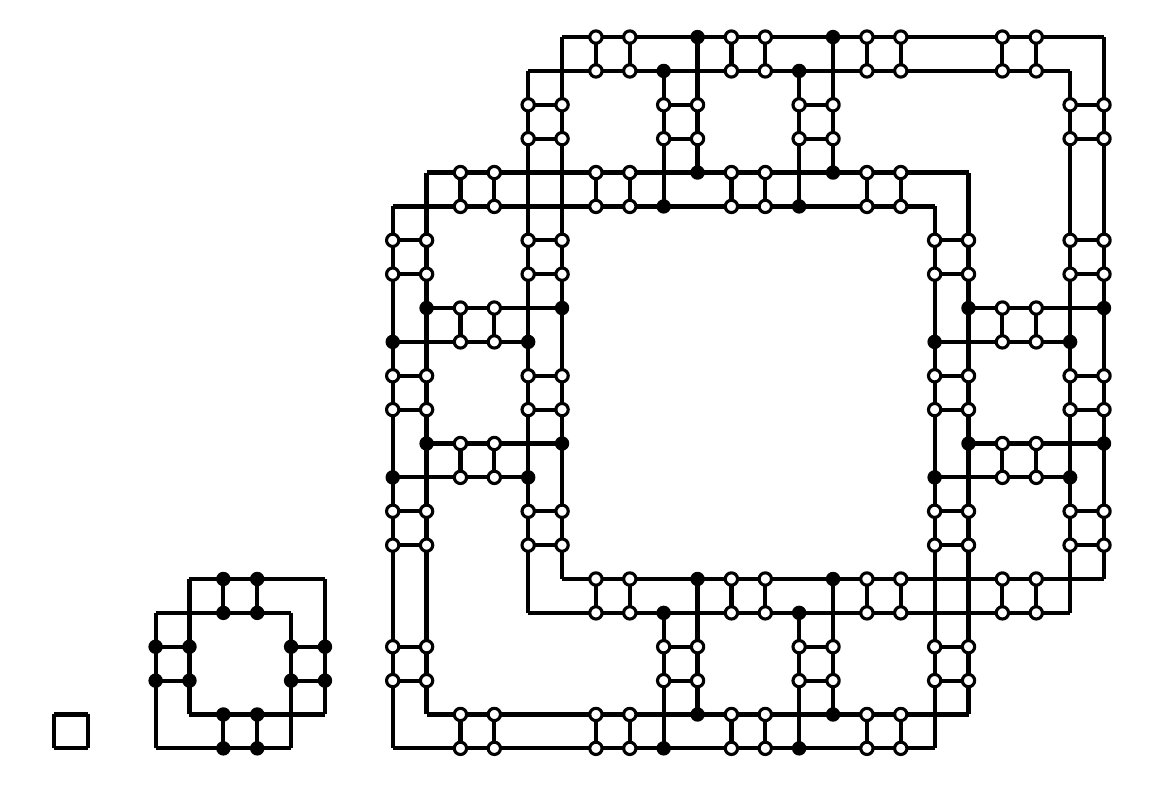}
  \caption{The first two iterations for $\ell = 4$,
            showing $\cC_{(0)}, \cC_{(1)} = \cR_{\mrm{Rep}}(\cC_{(0)}), \cC_{(2)} = \cR_{\mrm{Par}}(\cC_{(1)})$.}
\end{figure}

\clearpage

\subsubsection{Proof of embedding}

By construction, the embedding $I_{(i)}$ at each level $i$ satisfies the following basic property.
\begin{claim}
  Each vertex is mapped to a point of $\zz^2$, and each edge is mapped to a unit-length segment joining two adjacent lattice points.
\end{claim}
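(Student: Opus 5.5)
The claim follows by induction on the level $i$, tracking the two operations (refinement and doubling) that build $X_{(i+1)}$ and $I_{(i+1)}$ from $X_{(i)}$ and $I_{(i)}$. For the base case $i=0$, $X_{(0)}$ is embedded as the boundary of a unit square with vertices at $(0,0),(1,0),(1,1),(0,1)$. Each vertex is then a point of $\zz^2$, and each of the four edges is a unit segment joining adjacent lattice points.

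For the inductive step, assume every vertex of $X_{(i)}$ lies in $\zz^2$ and every edge is a unit segment between adjacent lattice points. We check the three pieces of $Y=(X'\sqcup X'_{\mrm{copy}}\sqcup X_{\mrm{cyl}})/{\sim}$ in turn.

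\textbf{Refinement.} Scaling by the integer $\ell$ sends a lattice point $(a,b)$ to $(\ell a,\ell b)\in\zz^2$. An edge $(a,b)-(a+1,b)$ (or its vertical analogue) becomes a segment of length $\ell$ along a coordinate line. Subdividing it into $\ell$ pieces places the new vertices at $(\ell a+m,\ell b)$ for $m=0,\dots,\ell$, all lattice points. Each piece is therefore a unit segment between adjacent lattice points, so $X'$ satisfies the claim.

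\textbf{The copy.} $X'_{\mrm{copy}}$ is $X'$ translated by the integer vector $(1,1)$. Translation by an integer vector preserves both lattice points and unit axis-parallel segments, so $X'_{\mrm{copy}}$ satisfies the claim as well.

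\textbf{The cylinder.} By the explicit description of $X_{\mrm{cyl}}$, each edge of $X$ contributes cylinder edges of the form $(\ell a+m,\ell b)-(\ell a+m,\ell b+1)$ for $m\in\{2,\dots,\ell-1\}$ in the horizontal case. In the vertical case the roles of the coordinates are swapped, giving edges $(\ell a,\ell b+m)-(\ell a+1,\ell b+m)$.

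The only step that needs a little care is confirming that both endpoints of each cylinder edge are genuine vertices of $X'$ and $X'_{\mrm{copy}}$, and that the edge is a unit segment. Take the horizontal case. The point $(\ell a+m,\ell b)$ lies on the subdivided path in $X'$. The point $(\ell a+m,\ell b+1)=(\ell a+(m-1)+1,\ \ell b+1)$ is the image of $(\ell a+m-1,\ell b)$ under the translation. Since $1\le m-1\le \ell-2$, that point is an interior subdivision vertex of the same path, so its translate is a vertex of $X'_{\mrm{copy}}$. The two endpoints differ by the unit vector $(0,1)$, so the cylinder edge is a unit segment joining adjacent lattice points. The vertical case is symmetric, with the unit vector $(1,0)$.

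Since $Y$ is the union of these three pieces with no other edges introduced, every vertex of $X_{(i+1)}$ lies in $\zz^2$ and every edge is a unit segment between adjacent lattice points, which completes the induction.
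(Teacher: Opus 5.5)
Your proof is correct and matches the paper: the paper asserts this property simply ``by construction,'' and your induction over the three pieces $X'$, $X'_{\mathrm{copy}}$, $X_{\mathrm{cyl}}$ spells out that assertion, using the same decomposition the paper uses in the following no-overlap lemma. Your additional check that both endpoints of each cylinder rung are interior subdivision vertices, so the gluing $\sim$ is compatible with the embedding, is a detail the paper leaves implicit.
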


We use the following lemma to control possible overlaps in the embedding.
It will be used later to prove locality and bounded density.
\begin{lemma}
  No two distinct edges of $X_{(i)}$ are mapped by $I_{(i)}$ to the same edge of the integer lattice.
\end{lemma}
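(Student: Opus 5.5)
We argue by induction on $i$. We prove a slightly stronger invariant: the edges of $X_{(i)}$ are pairwise distinct as lattice edges under $I_{(i)}$, and each edge is either horizontal or vertical. We also record the residues of the coordinates of vertices modulo $\ell$, because these will control where new edges can land. The base case $X_{(0)}$ is the boundary of a unit square, and its four edges are clearly distinct lattice edges. For the inductive step we follow the two steps of an iteration separately.

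\emph{Refinement.} Scaling by $\ell$ and subdividing sends each lattice edge $(a,b)-(a+1,b)$ (or its vertical analogue) of $X_{(i)}$ to the $\ell$ unit segments on the line from $(\ell a,\ell b)$ to $(\ell(a+1),\ell b)$. Two distinct lattice edges of $X_{(i)}$ give segments lying in disjoint open intervals of the scaled lattice lines; they share at most an endpoint at a multiple of $\ell$. So no two edges of $X'$ coincide.

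Every edge of $X'$ lies on a line $y\equiv 0 \pmod \ell$ (horizontal) or $x\equiv 0\pmod \ell$ (vertical). The copy $X'_{\mathrm{copy}}$ is translated by $(1,1)$, so its edges lie on lines $y\equiv 1$ or $x\equiv 1 \pmod\ell$. Since $\ell\ge 3$, a horizontal edge of $X'$ and one of $X'_{\mathrm{copy}}$ sit on different horizontal lines and cannot coincide. The same holds for vertical edges. A horizontal edge never coincides with a vertical one. Within each copy, distinctness is inherited from the refinement step.

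\emph{Cylinder edges.} These are the main point to check. A cylinder edge attached to a horizontal edge of $X$ is the vertical segment $(\ell a+m,\ell b)-(\ell a+m,\ell b+1)$ with $2\le m\le \ell-1$. Its $x$-coordinate is $\not\equiv 0,1 \pmod\ell$, so it is not a vertical edge of $X'$ or of $X'_{\mathrm{copy}}$. Being vertical, it is not a horizontal edge either. Symmetrically, a cylinder edge attached to a vertical edge of $X$ is horizontal with $y\not\equiv 0,1$ and is distinct from all copy edges.

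Two cylinder edges of the same orientation are also distinct. Take two vertical ones, determined by $(a,b,m)$ and $(a',b',m')$. Their $y$-span $[\ell b,\ell b+1]$ determines $b$. Their $x$-coordinate $\ell a+m$ with $2\le m\le\ell-1$ determines $a$ and $m$ uniquely. Distinct horizontal edges $(a,b)-(a+1,b)$ of $X$ therefore give disjoint sets of cylinder edges.

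Here we need the hypothesis that edges of $X_{(i)}$ are distinct lattice edges. Without it, two coincident edges of $X$ would produce identical ladders. Combining the three cases gives the statement for $X_{(i+1)}$.

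The obstacle I expect is bookkeeping rather than depth. One must make sure that the attaching regions are exactly the interior points $m\in\{2,\dots,\ell-1\}$, which excludes $m=0,1$ where collisions with the translated copy could occur. One must also confirm that $\ell\ge 3$ keeps the residues $0$ and $1$ distinct from each other and from the cylinder coordinates.
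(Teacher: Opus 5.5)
Your proof is correct and takes the same approach as the paper. Both argue by induction and separate $X'$, $X'_{\mathrm{copy}}$, and $X_{\mathrm{cyl}}$ by the residue mod $\ell$ of each segment's constant coordinate ($0$, $1$, and $2,\dots,\ell-1$ respectively), inherit distinctness within each copy from the induction hypothesis, and separate cylinder edges according to the edge of $X_{(i)}$ they come from. The only cosmetic difference is that the paper places cylinder edges in disjoint rectangles $(1,\ell)\times(0,1)+(\ell a,\ell b)$, whereas you recover $(a,b,m)$ directly from the segment's position; the two amount to the same argument.
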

\begin{proof}
  We say that a lattice segment has \emph{coordinate} $a$
    if it is contained in a lattice line of the form $x=a$ or $y=a$.
  For example, the segment $(a,b)-(a,b+1)$ has coordinate $a$.

  We prove the statement by induction on $i$.
  The base case $i=0$ holds by construction.
  Assume that the statement holds for $X_{(i)}$.
  We show it also holds for
  \begin{equation}
    X_{(i+1)}
    =
    \bigl(
      X'_{(i)}
      \sqcup X'_{(i),\mathrm{copy}}
      \sqcup X_{(i),\mathrm{cyl}}
    \bigr)/{\sim}.
  \end{equation}

  By construction,
    the three pieces occupy disjoint coordinate classes:
  \begin{itemize}
    \item edges of $X'_{(i)}$ are mapped to lattice segments
          with coordinate congruent to $0 \pmod{\ell}$,
    \item edges of $X'_{(i),\mrm{copy}}$ are mapped to lattice segments
          with coordinate congruent to $1 \pmod{\ell}$,
    \item edges of $X_{(i),\mrm{cyl}}$ are mapped to lattice segments
          with coordinate congruent to one of $2,3,\ldots,\ell-1 \pmod{\ell}$.
  \end{itemize}
  Thus, edges belonging to different pieces cannot be mapped to the same lattice segment.

  It remains to rule out overlaps within each of the three pieces.
  For $X'_{(i)}$,
    this follows from the induction hypothesis:
    scaling and subdivision replace each edge of $X_{(i)}$
    by a chain of distinct lattice segments.
  The same argument applies to $X'_{(i),\mrm{copy}}$,
    since translation preserves this property.

  It remains to consider $X_{(i),\mrm{cyl}}$.
  Each edge of $X_{(i),\mathrm{cyl}}$ is associated with an edge of $X_{(i)}$.
  For example,
    if an edge of $X_{(i)}$ is mapped to the lattice segment
    $(a,b)-(a+1,b)$,
    then the corresponding edges of $X_{(i),\mathrm{cyl}}$ are
    \[
      \{
        (\ell a + m,\ell b) - (\ell a + m,\ell b + 1)
        : m \in \{2,3,\ldots,\ell-1\}
      \}.
    \]
  These lattice segments are disjoint.

  Moreover,
    edges of $X_{(i),\mathrm{cyl}}$ associated with different edges of $X_{(i)}$
    are supported in disjoint rectangular regions.
  Indeed,
    the edges associated with a horizontal segment $(a,b)-(a+1,b)$
    are contained in
    \[
      (1,\ell) \times (0,1) + (\ell a,\ell b),
    \]
    and the edges associated with a vertical segment $(a,b)-(a,b+1)$
    are contained in
    \[
      (0,1) \times (1,\ell) + (\ell a,\ell b).
    \]
  These regions are disjoint over $(a,b) \in \zz^2$.
  Thus, no two distinct edges of $X_{(i),\mathrm{cyl}}$
    map to the same lattice segment.
  This completes the induction.
\end{proof}

\begin{corollary}
  $I_{(i)}$ is a $(4,1)$-embedding of $X_{(i)}$ in $\RR^2$.
  In particular,
    every open disk of diameter $1$ in $\RR^2$ intersects at most $4$ edges of the image of $X_{(i)}$
    and adjacent $0$-cells are mapped to points with distance at most $1$.
\end{corollary}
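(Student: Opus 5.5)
The corollary follows from the preceding Claim and Lemma by a direct counting argument, so the plan is to verify its two conditions separately.

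\emph{Geometric locality.} By the Claim, every edge of $X_{(i)}$ is mapped to a unit-length segment joining two adjacent lattice points of $\zz^2$. Hence neighboring $0$-cells are mapped to points at distance exactly $1$, which gives the second condition with $\ell=1$.

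\emph{Bounded density.} Fix an open disk $D$ of diameter $1$. The key step is to count lattice unit segments meeting $D$. By the Lemma, distinct edges of $X_{(i)}$ have distinct images, so the number of edges of $X_{(i)}$ intersecting $D$ is at most the number of lattice segments of $\zz^2$ intersecting $D$. I would bound this purely geometrically and split into two cases.

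\begin{itemize}
\item \textbf{Disk containing a lattice point.} An open disk of diameter $1$ contains at most one lattice point, since distinct lattice points are at distance $\ge 1$. If it contains the lattice point $p$, it is contained in the open unit-radius disk around $p$. The only lattice segments meeting that region are the four segments incident to $p$: any other segment lies at distance $\ge 1$ from $p$ except at its endpoints, which are lattice points at distance $\ge 1$.
\item \textbf{Disk containing no lattice point.} Here I would project onto the coordinate axes. An open interval of length $1$ contains at most one integer, so $D$ meets at most one vertical line $x=a$ and at most one horizontal line $y=b$. Along the line $x=a$, the chord of $D$ is an open interval of length $\le 1$ that avoids lattice points. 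Such an interval lies inside a single segment $(a,b')-(a,b'+1)$, so $D$ meets at most one vertical segment, and likewise at most one horizontal segment. This gives at most $2\le 4$ segments.
\end{itemize}

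The main (mild) obstacle is handling the boundary cases carefully, namely disks tangent to lattice lines or passing near lattice points. Openness of $D$ takes care of tangencies. The count is then at most $4$ in every case. This is consistent with the caption's remark that apparent degree-$4$ crossing points are two overlapping degree-$2$ vertices: each has only two incident edges, so at most $4$ edges arise there. Combining the two conditions yields that $I_{(i)}$ is a $(4,1)$-embedding in the diameter-$1$ variant of the definition.
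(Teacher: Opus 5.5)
Your proposal is correct and follows the same route as the paper: locality comes from the Claim that every edge maps to a unit lattice segment, and bounded density comes from the Lemma that distinct edges have distinct image segments. The paper simply asserts that an open disk of diameter $1$ meets at most $4$ lattice segments, whereas your two-case argument (disk containing a lattice point versus not) proves that geometric count explicitly.
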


\begin{proof}
  Locality follows since each edge of $X_{(i)}$ is mapped to a unit-length segment.  Since each lattice edge is the image of at most one edge of $X_{(i)}$,
    it follows that every open disk of diameter $1$ intersects at most $4$ edges.
\end{proof}

\subsubsection{Properties of the code}

In the previous subsubsection,
  we showed that $\cC(X_{(i)},\cF_{(i)},I_{(i)})$ is a geometrically local classical code in $\RR^2$.
In this subsubsection,
  we discuss its code properties
  including the codewords and the code associated with $H^T_i$.

One property of $\cC(X,\cF,I)$ is that it has codewords that appear at different scales.  In particular, iterations $\cC_{(i)} = \cR_{\mrm{Rep}}(\cC_{(i)})$ do not change the number of codewords.  However, iterations $\cC_{(i)} = \cR_{\mrm{Par}}(\cC_{(i)})$ add new local codewords.  Thus, after many iterations, $\cC_{(2k)}$ will have codewords at many scales that were introduced by parity iterations.  None of these smaller codewords should have a long memory lifetime, but larger codewords should.  However, to define the encoded bit for the larger codewords, one must treat the smaller codewords as ``gauge codewords'' of a subsystem code, which is discussed further in App.~\ref{appB:product-construction}.

Now, we determine the properties of the transpose code.  Let $H$ be the parity-check matrix of the code $\cC(X,\cF,I)$.
As discussed in \Cref{prelims:sheaf}
  the code associated with the parity-check matrix $H^T$
  shares all of its properties with $\cC(X,\cF^\perp,I)$
  up to constant factors,
  where $\cF^\perp$ is the sheaf obtained from $\cF$
  by setting $\cF^\perp_v = (\cF_v)^\perp$ for each vertex $v$.
In our construction,
  this operation corresponds to exchanging the black and white coloring,
  while leaving all other data unchanged.

\subsubsection{Discussion of self-correction}

We briefly sketch the argument for the exponential scaling of the memory lifetime.  We do not provide a full proof; however, it is very similar to the analysis for the quantum code with random embedding.  As in \Cref{sec:decoder}, we construct two decoders,
  one for $\cR_{\mrm{Rep}}$ and one for $\cR_{\mrm{Par}}$.
Each decoder is built from coarse-graining maps
  that take a syndrome $\sigma_i$ at level $i$
  to a syndrome $\sigma_{i-1}$ at level $i-1$, together with a correction $f_i$.
These coarse-graining maps again satisfy
  syndrome reduction and local computability.

For the code $\cC(X, \cF, I)$, we have:
\begin{itemize}
  \item for $\cR_{\mrm{Par}}$, $|\sigma_{i-1}| \le |\sigma_{i}|$;
  \item for $\cR_{\mrm{Rep}}$, $|\sigma_{i-1}| \le \frac{1}{2} |\sigma_{i}|$.
\end{itemize}
For the code $\cC(X, \cF^\perp, I)$, we have:
\begin{itemize}
  \item for $\cR_{\mrm{Par}}$, $|\sigma_{i-1}| \le \frac{1}{2} |\sigma_{i}|$;
  \item for $\cR_{\mrm{Rep}}$, $|\sigma_{i-1}| \le |\sigma_{i}|$.
\end{itemize}
This guarantees that after an $\cR_{\mrm{Rep}}$ iteration followed by an $\cR_{\mrm{Par}}$ iteration,
  the syndrome weight contracts by a constant factor:
\begin{equation}
  |\sigma_{i-2}| \le \frac{1}{2} |\sigma_{i}|.
\end{equation}
These coarse-graining maps are then incorporated into
  the decoding graph to define witnesses,
  and the remainder of the analysis (i.e. constructing the decoding graph, witness subgraph, and the Peierls argument) proceeds analogously to that of \Cref{sec:memlifetime}.

There are two main differences. One of them lies in the explicit description of the coarse-graining maps.
Since the present construction is formulated in the sheaf-theoretic setting,
  a fully rigorous treatment would require translating the earlier framework into this language.
This translation is largely cosmetic and notational, and we do not pursue it here.
Starting from $\sigma_i$,
  in contrast to the construction in \Cref{sec:coarse-graining-RZ} and \Cref{sec:coarse-graining-RX},
  we clean the syndrome within a ladder-like region.
This is the reason for choosing $\ell = 4$, which corresponds to a ladder with two rungs.  If there are two (or more) rungs, one can clean the syndrome from the interior ladder region without increasing the syndrome weight.
After this cleaning, the syndrome is supported on the boundaries of the ladder,
  i.e. precisely on the vertices corresponding to those in $X_{i-1}$,
  which allows us to apply the inverse map $\mathcal{F}^{-1}$
  to obtain $\sigma_{i-1}$.

The other difference is that, while the quantum code encodes a single logical qubit, the classical code has a hierarchy of codewords at different scales. The $\cR_{\mrm{Rep}}$ iterations are responsible for increasing the memory lifetime of large-scale codewords, whereas the $\cR_{\mrm{Par}}$ iterations introduce new local codewords at the smallest scale and do not increase the memory lifetime.  To define an encoded bit we must treat the small codewords as ``gauge codewords'', borrowing terminology from subsystem codes (see also App.~\ref{appB:product-construction}).

\subsection{Quantum code}

We now describe the quantum code construction with an explicit embedding.
Intuitively, it can be viewed as a one-dimension-higher generalization of the classical construction, but with a more involved refinement step.
As before, the code is obtained by alternating between $\cR_X$ and $\cR_Z$ iterations.

\subsubsection{Notation and framework}

In this section, we specialize to the case of 2-dimensional sheaf codes, where a geometrically local code
  $\cQ(X, \cF, I)$ is defined by the following data:

\begin{enumerate}[itemsep=-0.05em,topsep=4pt]
  \item A 2-dimensional cell complex $X$,
        with vertex set $X(0)$, edge set $X(1)$, and face set $X(2)$.
  \item Sheaf data $\cF$,
        which assigns to each edge $e$ a color, either blue or red,
          whenever the face degree of $e$,
          i.e. the number of faces incident to $e$, is $\neq 2$
  \item An embedding $I: X \to \RR^3$.
\end{enumerate}

For an edge $e$, denote
  $X_{\ge e}(2)$
  to be the set of faces that contain $e$.
If $e$ is blue, then the associated local code
  $\cF_e \subseteq \ff_2^{X_{\ge e}(2)}$
  is taken to be a repetition code;
  for example, when $|X_{\ge e}(2)|=3$, we set $\cF_e = \{000,111\}$.
If $e$ is red, then $\cF_e$ is taken to be a parity-check code;
  for example, when $|X_{\ge e}(2)|=3$, we set $\cF_e = \{000,011,101,110\}$.
Edges of face degree $2$ are treated separately:
  in that case, we set $\cF_e=\{00,11\}$,
  which is simultaneously a repetition code and a parity-check code.
For this reason, we do not explicitly assign a color to edges with face degree $2$.

As we will see, the embedding $I$ is local and has bounded density.
In our construction, each vertex is mapped to a point of $\zz^3$.
Most faces are squares and are mapped to lattice unit squares,
  with occasional exceptions given by triangular or pentagonal faces.
The triangular faces are mapped to degenerate regions of zero area,
  while the pentagonal faces are mapped to lattice unit squares.
Moreover, at most two faces of $X$ can occupy the same lattice square upon embedding;
most lattice unit squares are occupied by exactly one face of $X$.

We will use $\ell$ to denote the scaling factor at each iteration. While the construction is valid for all $\ell \ge 7$, we focus on the case $\ell=8$.

We use orthogonal projection for all 3D drawings in this section; interactive 3D renderings will be provided in a future version of the paper.
The triad in the lower-left corner of each figure indicates the orientation of the axes and the unit length in each direction.

The visual conventions are as follows. Translucent planes represent faces of the embedded square complex. Edges are drawn with three line thicknesses: thick edges have face degree not equal to $2$ and therefore carry nontrivial sheaf data (with the exception of the boundary edges, which can have a smaller face incidence and still carry sheaf data); medium edges indicate potential attaching regions (used for doubling) present in the subdivided complex $X'$ but not in $X$; thin lines are only visual guides marking creases between planes and do not correspond to edges of the complex.

A black dot marks the origin of the local coordinate system in each figure. Unless explicitly stated in the caption, colors are used only to distinguish different planes and structures, not to indicate sheaf data. However, edge colors are chosen consistently with the local code, so that edges of the same color carry the same local code. In particular, after the steps within one iteration, edges inherit the local code of the corresponding same-colored edges from the previous stage. See \Cref{fig:visual_conventions_example} for an illustration.

\begin{figure}[H]
  \centering
  \includegraphics[width=0.4\textwidth]{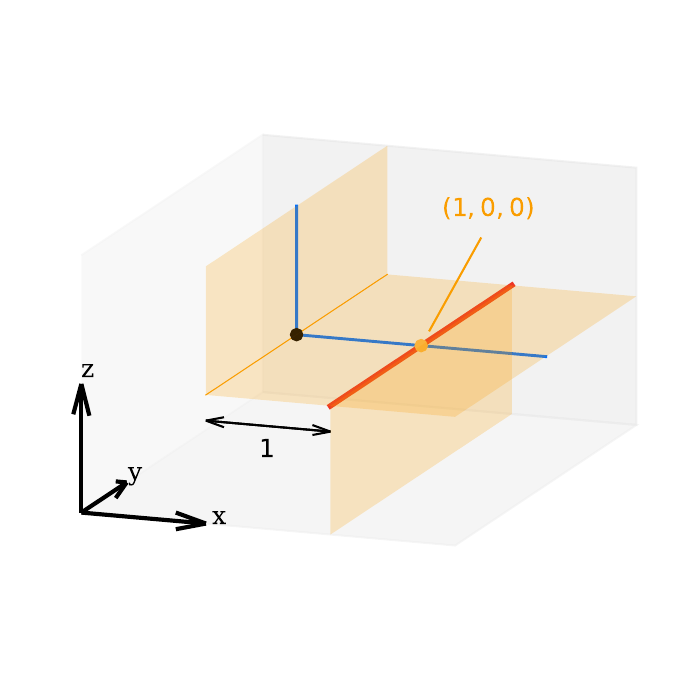}
    \caption{Illustration of the pictorial conventions used in the figures, in particular the three different edge thicknesses.}
  \label{fig:visual_conventions_example}
\end{figure}

\subsubsection{Construction overview}

We begin with a finite-size code $\cQ_{(0)}$
  and define two iterative procedures, $\cR_X$ and $\cR_Z$.
Analogously to the classical case, this produces a family of codes
\begin{equation}
  \cQ_{(0)}, \,
  \cQ_{(1)} = \cR_X(\cQ_{(0)}),\,
  \cQ_{(2)} = \cR_Z(\cQ_{(1)}),\, \cdots, \,
  \cQ_{(2k-1)} = \cR_X(\cQ_{(2k-2)}), \,
  \cQ_{(2k)} = \cR_Z(\cQ_{(2k-1)}), \, \cdots
\end{equation}

\paragraph{Base case:}
The initial code $\cQ_{(0)}$ is specified by the following data:
\begin{enumerate}[itemsep=-0.05em,topsep=4pt]
  \item The 2-cell complex $X_{(0)}$ is a single square.
  \item The two horizontal edges are colored blue,
        and the two vertical edges are colored red.
  \item The square is embedded in $\RR^3$ as a unit square.
\end{enumerate}

It is clear that the corresponding code is nontrivial and encodes a single logical qubit.

\paragraph{Iterations $\cR_X$ and $\cR_Z$:}
Each iteration consists of two steps:
\begin{itemize}[itemsep=-0.05em,topsep=4pt]
  \item Refinement: $X_{(i)} \mapsto X'_{(i)}$
  \item Doubling: $X'_{(i)} \mapsto X_{(i+1)} = (X'_{(i)} \sqcup X'_{(i), \mrm{copy}} \sqcup X_{(i), \mrm{cyl}}) / \sim$
\end{itemize}

During the refinement step, the geometry is modified from $X_{(i)}$ to $X'_{(i)}$, while the underlying topological structure remains unchanged.
This step consists of scaling, subdivision,
  and an additional (deterministic) perturbation when necessary.

During the doubling step, the complex $X'_{(i)}$ is replaced by a new complex $X_{(i+1)}$ consisting of two copies of $X'_{(i)}$, together with a cylinder-like structure $X_{(i), \mrm{cyl}}$ joining them.
One copy remains in place
  while the other, denoted $X'_{(i), \mrm{copy}}$,
  is obtained by translating $X'_{(i)}$ by the vector $(1,1,1)$.
The cylinder $X_{(i), \mrm{cyl}}$ is formed by
  connecting $X'_{(i)}$ and $X'_{(i), \mrm{copy}}$
  with a collection of unit-width bands.
The two boundaries of $X_{(i), \mrm{cyl}}$ are then identified
  with the corresponding attaching regions in the two copies of $X'_{(i)}$.

The two types of iteration, $\cR_X$ and $\cR_Z$, differ only in the sheaf data assigned during the doubling step. We therefore begin by describing the common geometric part of the construction, namely the geometry $X_{(i)}$ and the embedding $I_{(i)}$, and then specify the corresponding sheaf data assigned during the $\cR_X$ and $\cR_Z$ iterations.

Before going into further technical detail,
  we begin with three examples.
The purpose of these examples is to illustrate the basic mechanism
  before discussing the general construction.

There are several possible ways in which a unit-square face of $X_{(i)}$
  can change during one iteration.
We begin with the simplest case where no perturbation occurs in the refinement step.
We focus primarily on the interior structure of the construction.
The boundary regions will have a more complicated geometry due to their interaction with neighboring regions,
  as we will discuss later.

\begin{example}
  We direct reader to
    \Cref{fig:quantum_doubling_example} for this example.
  The panels are ordered as follows:
    (1) top left, (2) top center, (3) top right,
    (4) bottom center, and (5) bottom right.

  In panel (1), we start with a unit square in $X_{(i)}$.
  In panel (2),
    after scaling and subdivision in the refinement step,
    this square becomes an $8 \times 8$ grid of unit squares in $X'_{(i)}$.
  The candidate attaching regions, later called \emph{candidate ribs},
    are the three vertical and three horizontal segments
    running through the middle of the grid.
  In the doubling step,
    \emph{some} of these candidate ribs are promoted to actual attaching regions.
  They lie in the planes $x=3,4,5$ and $y=3,4,5$.

  In panel (3), we enter the doubling step.
  We form a copy of the subdivided complex and translate it by $(1,1,1)$.
  The figure now shows a portion of $X'_{(i)}$ in orange
    and a portion of $X'_{(i), \mrm{copy}}$ in brown.
  The candidate ribs of $X'_{(i), \mrm{copy}}$ are translated accordingly,
    so they lie in the planes $x=4,5,6$ and $y=4,5,6$.
  Thus, the candidate ribs of $X'_{(i)}$ and $X'_{(i),\mathrm{copy}}$ lying on $x=4,5$ and $y=4,5$
    are separated by distance $1$
    and are promoted to actual attaching regions.

  In panel (4),
    we connect the attaching regions at coordinate $4$ and $5$
    by the complex $X_{(i), \mrm{cyl}}$, shown in pink.
  By construction,
    this complex is contained entirely in
    planes $x=4,5$ and $y=4,5$.
  Geometrically, $X_{(i),\mrm{cyl}}$ is a hashtag-shaped subcomplex
    extruded by a unit segment in the $z$-direction.
  Note that the horizontal and vertical bands are identified at their intersection,
    rather than passing through one another.
  In particular, the complex $X_{(i),\mrm{cyl}}$ is connected.

  Finally, in panel (5), we assign colors to the newly formed edges with face degree $\ne 2$.
  These consist of the attaching regions,
    i.e. the two hashtag-shaped subcomplexes,
    together with the four segments contained in $X_{(i), \mrm{cyl}}$.
  The color is determined by the type of iteration:
    blue for $\cR_X$ and red for $\cR_Z$.

  \begin{figure}[ht]
    \centering
    \vspace*{-3em}
    \includegraphics[width=0.9\textwidth]{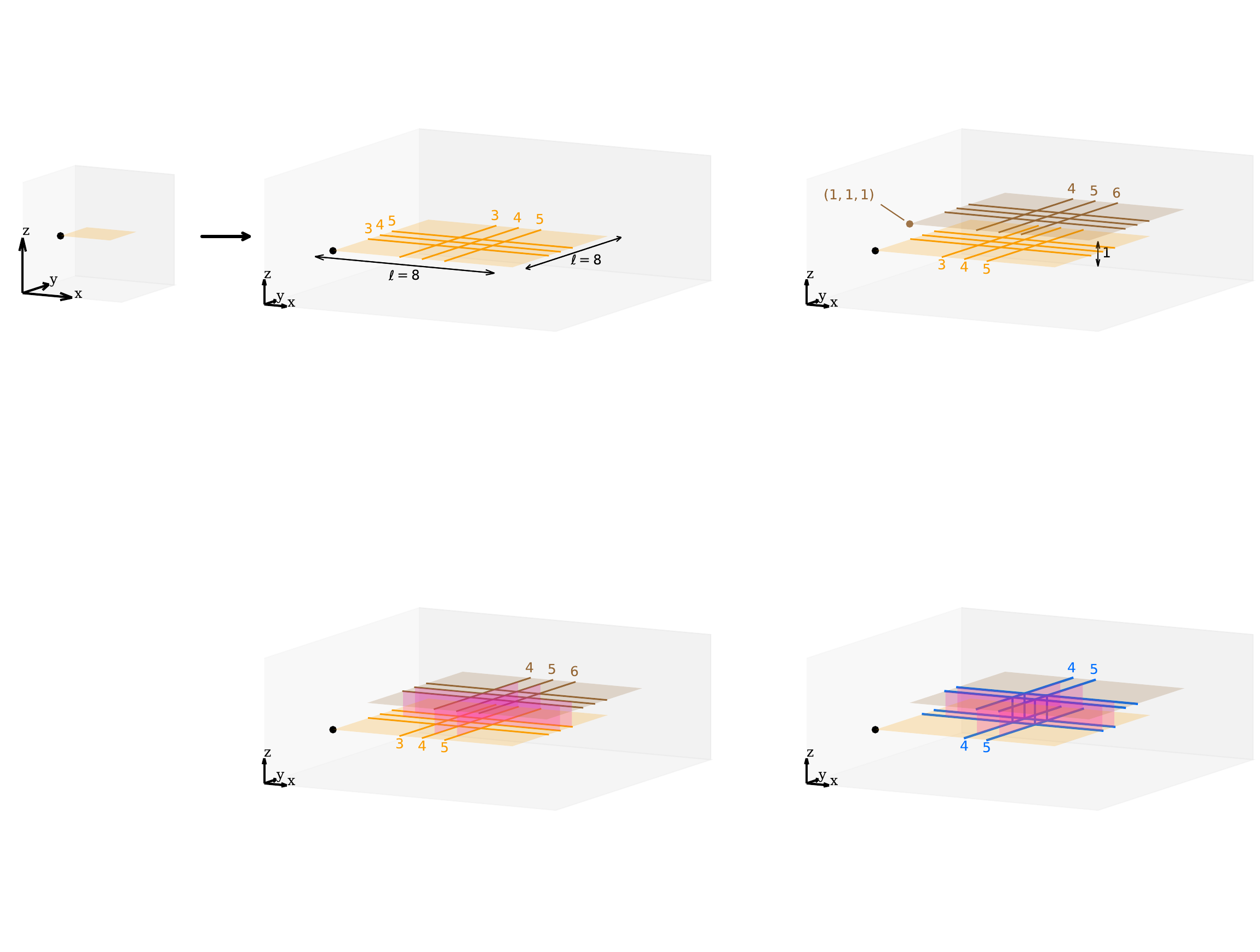}
    \vspace*{-3em}
    \caption{The doubling step for a unit square in $X_{(i)}$ in the simple case.}
    \label{fig:quantum_doubling_example}
  \end{figure}
\end{example}

In the example above,
  the portion after refinement shown in panel (2)
  is obtained solely by scaling and subdivision,
  and is therefore perfectly flat.
In general, however,
  the refinement step may include a perturbation,
  so this portion need not remain flat.
Nevertheless, these perturbations occur only near the boundary of the scaled square,
  so the candidate ribs can still be defined in the middle.
We illustrate this point with another example.

\begin{example}
  We direct the reader to  \Cref{fig:quantum_doubling_example-2} for this example.
  The panels are ordered as follows:
  (1) top left, (2) top right,
  (3) bottom center, and (4) bottom right.

  In panel (1), we begin with a perturbed portion of the subdivided complex $X'_{(i)}$,
    occurring near the boundary of the scaled square.
  The candidate ribs still lie in planes $x=3,4,5$ and $y=3,4,5$.

  In panel (2), we form a copy of the subdivided complex and translate it by $(1,1,1)$.
  The figure now shows a portion of $X'_{(i)}$ in orange
    and a portion of $X'_{(i), \mrm{copy}}$ in brown.
  The candidate ribs of $X'_{(i), \mrm{copy}}$ are translated accordingly,
    so they lie in the planes $x=4,5,6$ and $y=4,5,6$.
  Thus, the candidate ribs of $X'_{(i)}$ and $X'_{(i),\mathrm{copy}}$ lying on $x=4,5$ and $y=4,5$
    are separated by distance $1$
    and are promoted to actual attaching regions.

  In panel (3), we connect the attaching regions by the complex $X_{(i), \mrm{cyl}}$ in pink.
  By construction,
    this portion of the complex is contained entirely in planes $x=4,5$ and $y=4,5$.
  Near the center, $X_{(i), \mrm{cyl}}$ retains the form of
    a hashtag-shaped subcomplex extruded by a unit segment in the $z$-direction,
    although there is now an additional perturbation near the boundary.

  Finally, in panel (4), we assign colors, as before,
    to the newly formed edges with face degree $\ne 2$.
  The color is determined by the type of iteration:
    blue for $\cR_X$ and red for $\cR_Z$.

  \begin{figure}[ht]
    \centering
    \vspace*{-3em}
    \includegraphics[width=0.8\textwidth]{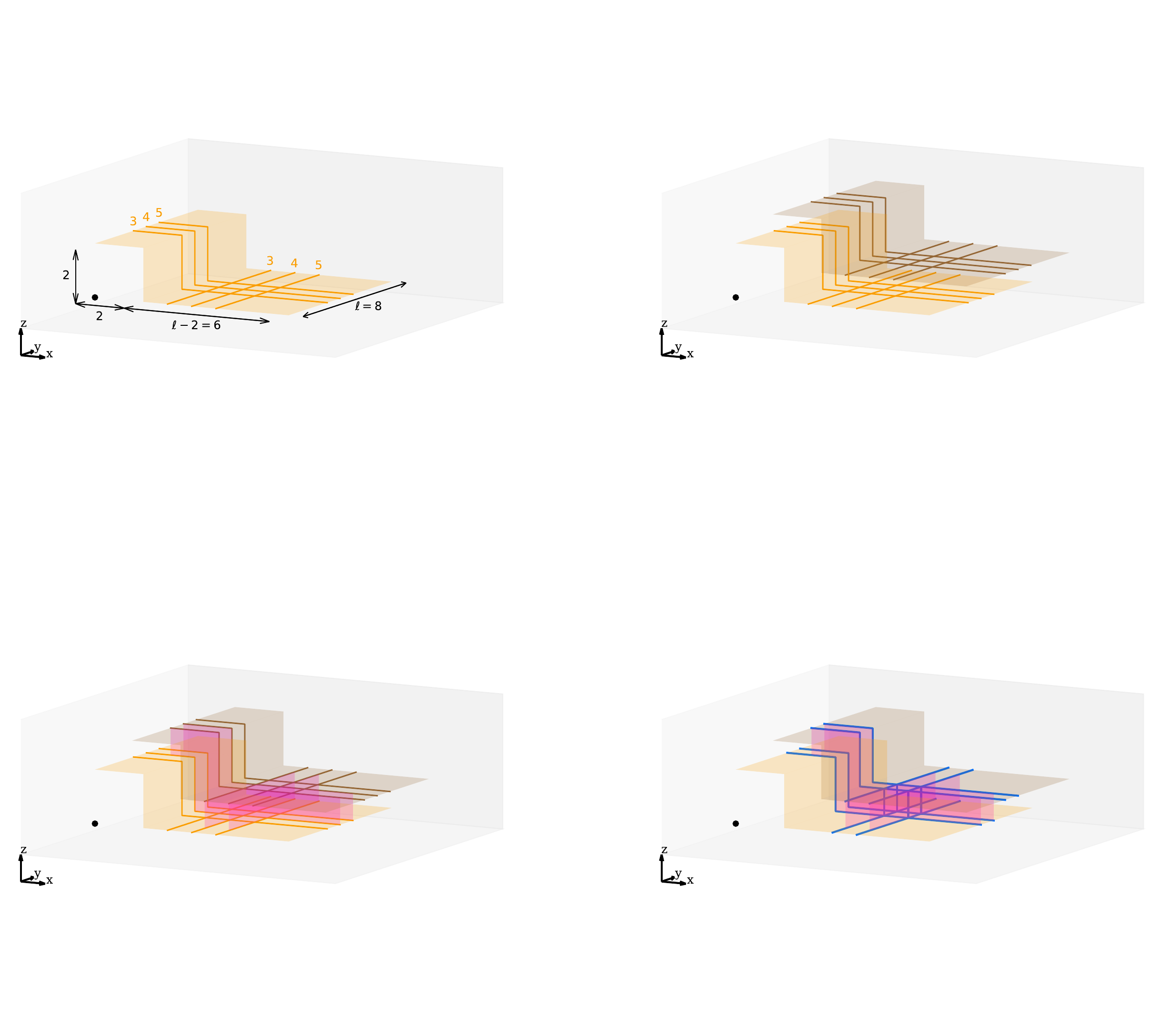}
    \vspace*{-3em}
    \caption{The doubling step with a perturbed scaled square.}
    \label{fig:quantum_doubling_example-2}
  \end{figure}
\end{example}

In the example above,
  we see that because we choose the candidate ribs to lie in the middle of the grid,
  perturbation near the boundary of the scaled square
  does not affect the general behavior of the doubling step.
As we will see later,
  the perturbation applied to such a square in $X'_{(i)}$
    occupies planes with $x\equiv -2,0,2 \pmod{8}$ and $y\equiv -2,0,2 \pmod{8}$.
  This is why we choose the candidate ribs to lie in the middle planes $x=3,4,5$ and $y=3,4,5$.

We now illustrate another type of perturbation,
  which explains the choice of the distance-$2$ shifts.
When a unit square is occupied by two faces of $X_{(i)}$,
  scaling and subdivision alone would cause
  the candidate ribs of the two faces to coincide.
During doubling,
  the corresponding portions of the complex $X_{(i), \mrm{cyl}}$
  may create additional overlapping faces\footnote{
    Technically, the additional overlap arises near the boundary
      and is not visible in the local region illustrated here.
    In particular,
      this issue appears in the doubled T-junction configuration,
      as we will see later.}.
This is undesirable:
  as the iteration proceeds,
  the number of overlapping faces would grow without bound,
  so the resulting embedding would fail to have bounded density.

The solution is to perform a perturbation:
  we shift one of the scaled squares
  by distance $2$ in the $z$-direction.
This shift is large enough to ensure that the
  corresponding portions of the complex $X_{(i), \mrm{cyl}}$
  are completely disjoint,
  as illustrated below.
\begin{example}
  We direct the reader to \Cref{fig:quantum_doubling_example-3} for this example.
  The panels are ordered as follows:
  (1) top left, (2) top center, (3) top right,
  (4) bottom center, and (5) bottom right.

  In panel (1), we begin with two copies of a unit square in $X_{(i)}$,
    stacked on top of one another.
  In panel (2), during the perturbation step,
    we shift one of the scaled squares by distance $2$
    in the $z$-direction so that the two scaled squares no longer overlap.
  In particular,
    the two families of candidate ribs become completely disjoint.

  In panel (3), we form a copy of the subdivided complex
    and translate it by $(1,1,1)$.
  Note that the four scale squares in $X'_{(i)}$ and $X'_{(i),\mrm{copy}}$
    are completely disjoint from one another.

  In panel (4), we connect the attaching regions
    by the complex $X_{(i), \mrm{cyl}}$, shown in pink.
  Since the shift between $X'_{(i)}$ and $X'_{(i),\mrm{copy}}$ is $1$,
    whereas the two scaled squares in $X'_{(i)}$ are separated by distance $2$,
    the two corresponding portions of $X_{(i), \mrm{cyl}}$
    do not intersect one another.
  In particular,
    in this portion of $X_{(i+1)}$, each unit square is occupied only once.

  Finally, in panel (5), we assign colors to the newly formed edges with face degree $\ne 2$, as in the previous examples.

  \begin{figure}[ht]
    \centering
    \vspace*{-3em}
    \includegraphics[width=0.9\textwidth]{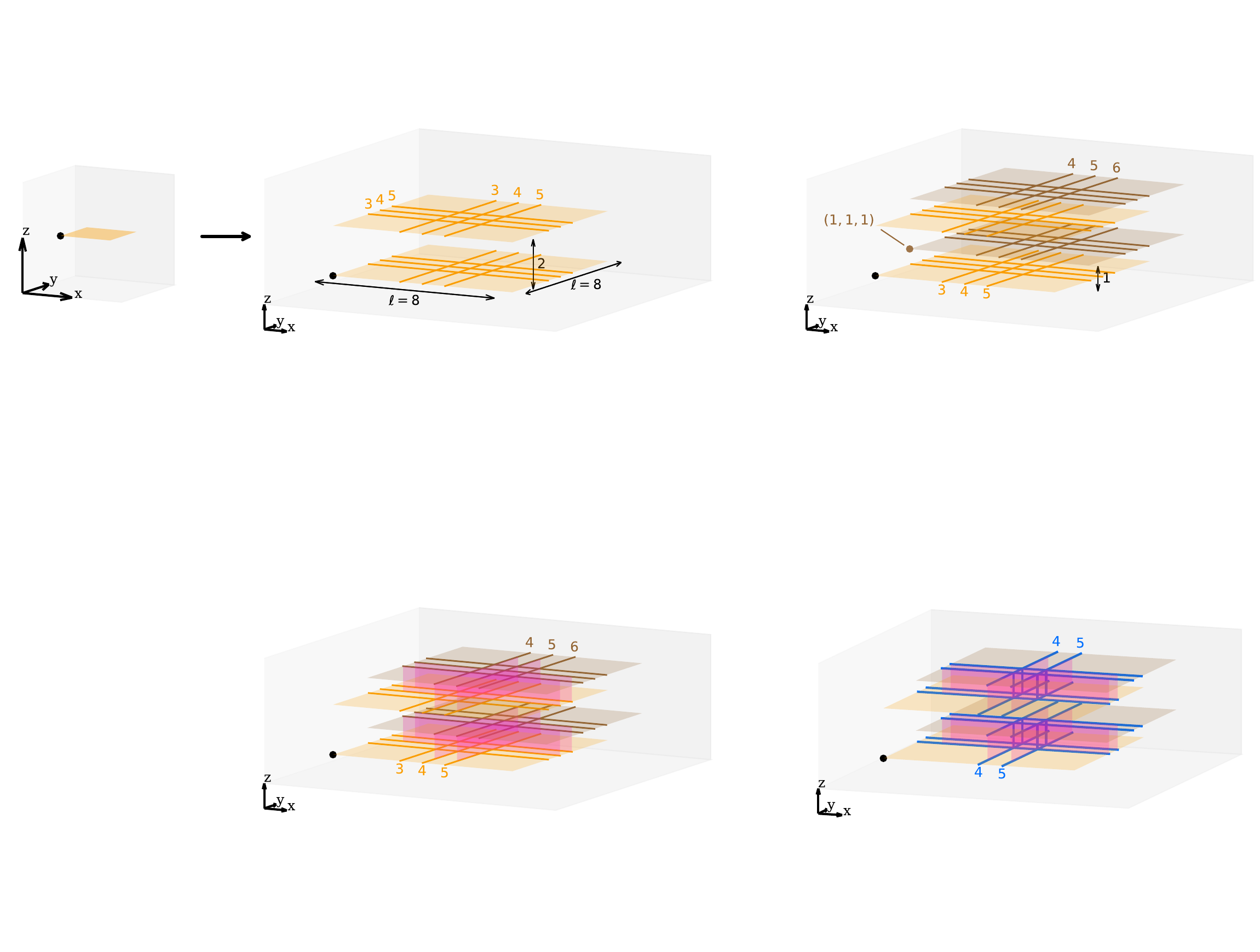}
    \vspace*{-3em}
    \caption{The doubling step for two overlapping squares.}
    \label{fig:quantum_doubling_example-3}
  \end{figure}
\end{example}

We summarize the key lessons from the examples above.
We occasionally use the expression \emph{planes with coordinates}
  to mean planes whose relevant coordinate takes the specified values.
For example, planes with coordinates $3,4,5$
  refer to planes of the form $x=3,4,5$, $y=3,4,5$, or $z=3,4,5$.
\begin{itemize}
  \item The doubling step is performed by first forming a copy of $X'_{(i)}$,
        denoted $X'_{(i),\mrm{copy}}$,
        and shifting it by $(1,1,1)$.
  \item The candidate ribs of the two copies in places with coordinates $4,5$
        are separated by distance $1$,
        which allows us to connect them by unit-width bands
        forming $X_{(i), \mrm{cyl}}$.
  \item The choice of planes with coordinates $3,4,5$ for the candidate ribs in $X'_{(i)}$
        keeps them away from perturbations near the boundary
        of the scaled square.
  \item The perturbation shift the scaled structure by distance $2$,
        so that no unit-squares in $X'_{(i)}$ and $X'_{(i),\mrm{copy}}$ are stacked on top of one another after the shift by $(1,1,1)$.
\end{itemize}

We have so far taken a bottom-up view by studying simple examples.
We now take a top-down view by listing the relevant local structures.

The following local structure in $X_{(i)}$ have the property that,
  after scaling,
  they are already in a form suitable for the doubling step,
  with no perturbation required.
We refer to them as \emph{simple local structures} in $X_{(i)}$,
  as illustrated in \Cref{fig:simple-local-structures}:
\begin{figure}[ht]
  \centering
  \includegraphics[width=0.9\textwidth]{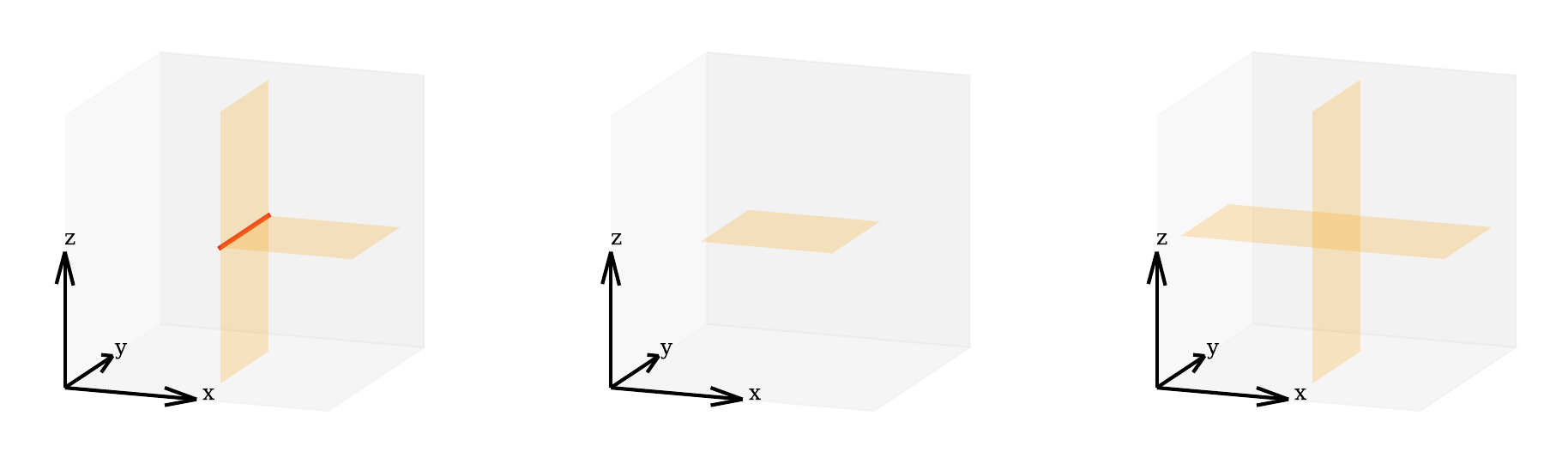}
  \caption{The three types of simple local structures in $X_{(i)}$.}
  \label{fig:simple-local-structures}
\end{figure}
\begin{itemize}
  \item a lattice unit-length segment that is the image of a single edge of $X_{(i)}$
          and is incident to at most three faces of $X_{(i)}$;
  \item a lattice unit square that is the image of a single face of $X_{(i)}$;
  \item a lattice unit-length segment that is the common image of two edges of $X_{(i)}$,
          each incident to at most two faces of $X_{(i)}$,
          such that the corresponding faces form locally flat sheets
          meeting along the segment.
\end{itemize}

After scaling,
  these simple local structures respectively produce the following local configurations
  near the attaching regions in $X'_{(i)}$,
  as illustrated in \Cref{fig:local-configurations}:
\begin{figure}[ht]
  \centering
  \includegraphics[width=0.9\textwidth]{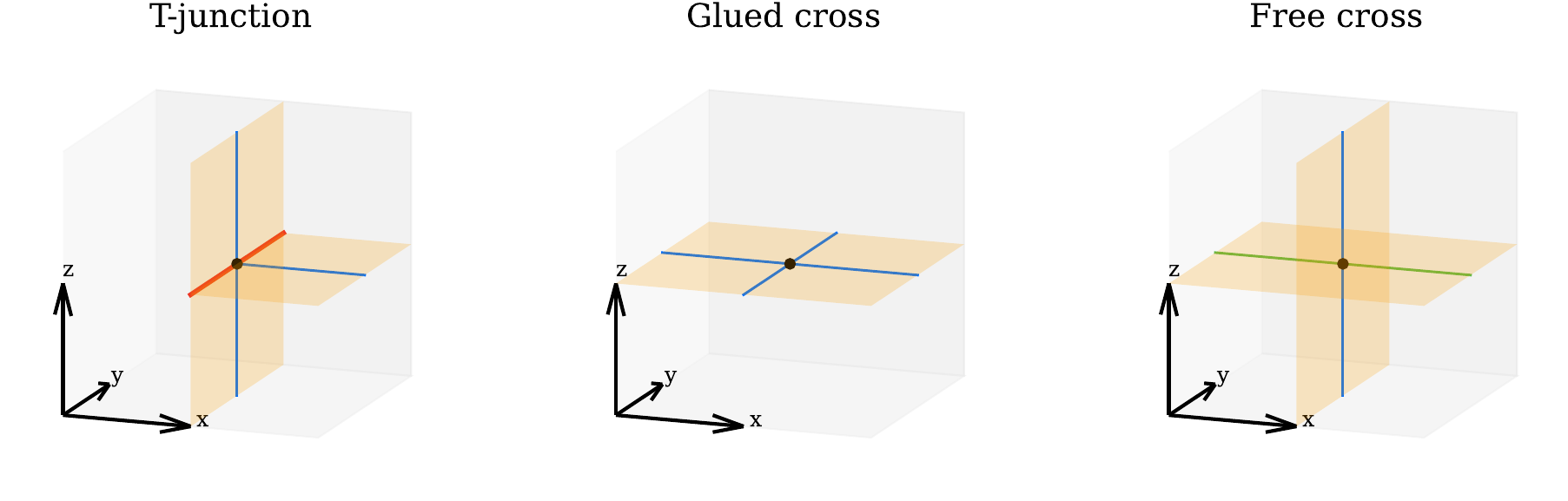}
  \caption{The three types of local configurations near the attaching regions: T-junctions, glued crosses, and free crosses.}
  \label{fig:local-configurations}
\end{figure}
\begin{enumerate}
  \item [(a)] T-junctions, where an edge is incident to three faces (shown in red), with ribs running perpendicular to the edge (shown in blue);
  \item [(b)] glued crosses, where two ribs belonging to the same face intersect at a vertex;
  \item [(c)] free crosses, where two ribs intersect at a vertex in the embedding map, but belong to two distinct faces of $X'$.
\end{enumerate}

There are also more complicated local structures in $X_{(i)}$
  beyond the simple types listed above.
These require perturbations during the refinement step.
We mark such structures by enclosing them in boxed regions;
  as described in the next subsubsection,
  there are four types of boxed local structures that require perturbation.

Finally, \Cref{fig:refinement-gadget-schematic}
  provides a schematic diagram showing how the local structures transform
  under refinement and doubling.
\begin{figure}[ht]
  \centering
  \begin{tikzpicture}[
    >=Latex,
    thick,
    type/.style={draw, circle,
                minimum size=9mm,
                inner sep=0pt},
    type2/.style={draw, rectangle,
                rounded corners=2pt,
                minimum width=27mm,
                minimum height=10mm,
                inner sep=2pt,
                align=center}
  ]

  \definecolor{mypink}{HTML}{FF00D4}

  \node[type2] (T) at (-5,5) {T-junctions};
  \node[type2] (GC) at (-5,2.5) {Glued crosses};
  \node[type2] (FC) at (-5,0) {Free crosses};

  \node[type2, minimum height=18mm] (all) at (6,2.5) {T-junctions\\Glued crosses\\Free crosses};

  \node[type] (one) at (0,5) {1};
  \node[type] (two) at (3,5) {2};
  \node[type] (three) at (0,2.5) {3};
  \node[type] (four) at (0,0) {4};
  \node[type, minimum size=15mm] (simple) at (0,-2.5) {Simple};


  \draw[<-,gray,dashed] (T.west) -- ++(-1.5,0);
  \draw[<-,gray,dashed] (GC.west) -- ++(-1.5,0);
  \draw[<-,gray,dashed] (FC.west) -- ++(-1.5,0);
  \draw[->,gray,dashed] (all.east) -- ++(1.5,0);

  \draw[->,orange] (T) -- (one);
  \draw[->,orange] (GC) -- (three);
  \draw[->,orange] (FC) -- (four);

  \draw[->,orange] (T.east) -- (simple);
  \draw[->,orange] (GC.east) -- (simple);
  \draw[->,orange] (FC.east) -- (simple);

  \draw[->,mypink] (one) -- (two);
  \draw[->,mypink] (two) to[out=120,in=60,looseness=8] (two);

  \draw[->,mypink] (one) -- (all);
  \draw[->,mypink] (two) -- (all);
  \draw[->,mypink] (three) -- (all);
  \draw[->,mypink] (four) -- (all);
  \draw[->,mypink] (simple) -- (all);


  \end{tikzpicture}
  \caption{A schematic diagram showing how
            local structures transforms under refinement (in pink) and doubling (in orange).
            The arrows indicate the transitions between different types of local structures under the corresponding gadgets.
            The circles labeled $1, 2, 3, 4$ represent the four types of local structures that require special treatment during refinement.
            The circle Simple represents the local structure that is ready for doubling without perturbation.
            The rectangles represent the local structures near the candidate ribs.}
  \label{fig:refinement-gadget-schematic}
\end{figure}

\subsubsection{Refinement step: $X_{(i)} \mapsto X'_{(i)}$}\label{sec:labelinggadgets}

Motivated by the examples in the previous subsubsection, we provide a formal definition of the refinement step.  Recall that in the classical construction,
  the refinement step consists simply of scaling followed by subdivision.
In the quantum construction, however, additional perturbations are required.
These perturbations are performed inside the boxes
  prescribed by the doubling step in the previous iteration.

There are four types of local structures that require special treatment,
  each contained in a corresponding box:
\begin{enumerate}
  \item [(1)] doubled T-junctions;
  \item [(2)] recursive T-junction-derived structures,
              generated by doubled T-junctions and
              recursively by structures of this same type;
  \item [(3)] doubled glued crosses;
  \item [(4)] doubled free crosses;
\end{enumerate}

For each of these four types,
  we specify a local perturbation gadget.
The gadget modifies the portion of the embedded complex inside the corresponding box,
  while leaving the complex unchanged outside the box.
It also specifies the candidate ribs
  that will be used in the next doubling step.
The perturbation gadgets are designed to satisfy the following requirements:
\begin{enumerate}
  \item They must only be supported in the corresponding box,
        so that the perturbations in different boxes do not affect one another;
  \item They produce only three types of local structures near candidate ribs: T-junctions, glued crosses, and free crosses
          (allowing also for subcomplexes of these types).  This ensures that the next doubling step can be performed correctly;
  \item They ensure that the perturbed structure is supported only on planes satisfying
        \[
          x \equiv 0,2,\ell-2 \pmod{\ell},
          \quad
          y \equiv 0,2,\ell-2 \pmod{\ell},
          \quad
          z \equiv 0,2,\ell-2 \pmod{\ell},
        \]
        so that the copy $X'_{(i),\mrm{copy}}$, shifted by $(1,1,1)$,
        does not overlap with the original copy $X'_{(i)}$.
\end{enumerate}

Before introducing the gadgets that describe the refinement step for each type of local structure,
  we make a few general comments.

For types (1) and (2),
  applying the corresponding gadget result in regions
  where there are still complicated local structures.
These regions are again contained in boxes, but
  importantly they are precisely of type (2).
Thus, they will be perturbed again in the next iteration.

For types (3) and (4),
  applying the corresponding gadget results in a simple local structure,
  with no complicated regions remaining.
See \Cref{fig:refinement-gadget-schematic} for a schematic illustration of
  how these local structures transform under refinement.

A few general remarks apply to all of these gadgets. In each figure, the leftmost panel shows the original local structure in $X_{(i)}$, while the remaining panels show the corresponding subdivided and perturbed structure in $X'_{(i)}$. Only the region inside the enlarged box, indicated by dashed wirelines, is perturbed; outside the box, the structure is simply rescaled.

In some cases, the local structure is too complicated to display clearly in a single panel, so we split it across several panels. We use a consistent color scheme for surfaces and edges to indicate the correspondence between $X_{(i)}$ and $X'_{(i)}$. Although the geometry is perturbed, the subdivided and perturbed complex $X'_{(i)}$ inherits the local code (i.e. sheaf data) from $X_{(i)}$.

In most cases, the complex $X_{(i)}$ can be read off directly from the embedding because the faces lie on lattice unit squares.
There are, however, some exceptions in the local structures arising from T-junctions.
In these cases, some faces are triangles or pentagons; we discuss an example below.
\begin{example} \label{example:non-square-faces}
  Consider the two gadgets shown in \Cref{fig:quantum_example}.
  In each case, we list the faces in the displayed portion of incident
    to path $O - P - Q$, which overlaps with another set of segments.
  We group the faces according to their position relative to the path $O-P-Q$.

  For the panel on the left, corresponding to the doubled T-junction,
  the faces are as follows:
  \begin{itemize}
    \item Above $O - P - Q$: $O - R - Q - P$ and $R - A_0 - A_1 - Q$.
    \item To the right of $O - P - Q$: $O - P - D_0$ and $P - Q - C_1 - C_0 - D_0$.
    \item Below $O - P - Q$: $O - P - D_1 - B_1 - B_0$ and $P - Q - D_1$.
  \end{itemize}

  For the gadget on the right, corresponding to type (1),
    where we zoomed in to the box near the origin
    the faces are as follows:
  \begin{itemize}
    \item Above $O - P - Q$:
          $O - A_0 - A_1 - P$
          and $P - A_1 - R - Q$.
    \item To the right of $O - P - Q$:
          $O - P - C_0$,
          $P - Q - C_1 - C_0$,
          and $C_1 - Q - R - C_3 - C_2$.
    \item Below $O - P - Q$:
          $O - P - B_1 - B_0$,
          $P - Q - B_2 - B_1$,
          $Q - B_4 - B_3 - B_2$,
          and $Q - R - B_5 - B_4$.
  \end{itemize}

  There are a few other cases in which non-square faces appear,
    such as the other box in type (1) and the box in type (2).
  The faces in these cases can be determined by an analogous procedure to that described above.
  In all of these cases, the non-square faces are either triangles or pentagons.
  Under the embedding map,
  some of these faces collapse to segments,
  while the others are mapped to lattice unit squares.

  \begin{figure}[h]
    \centering
    \includegraphics[width=0.99\textwidth]{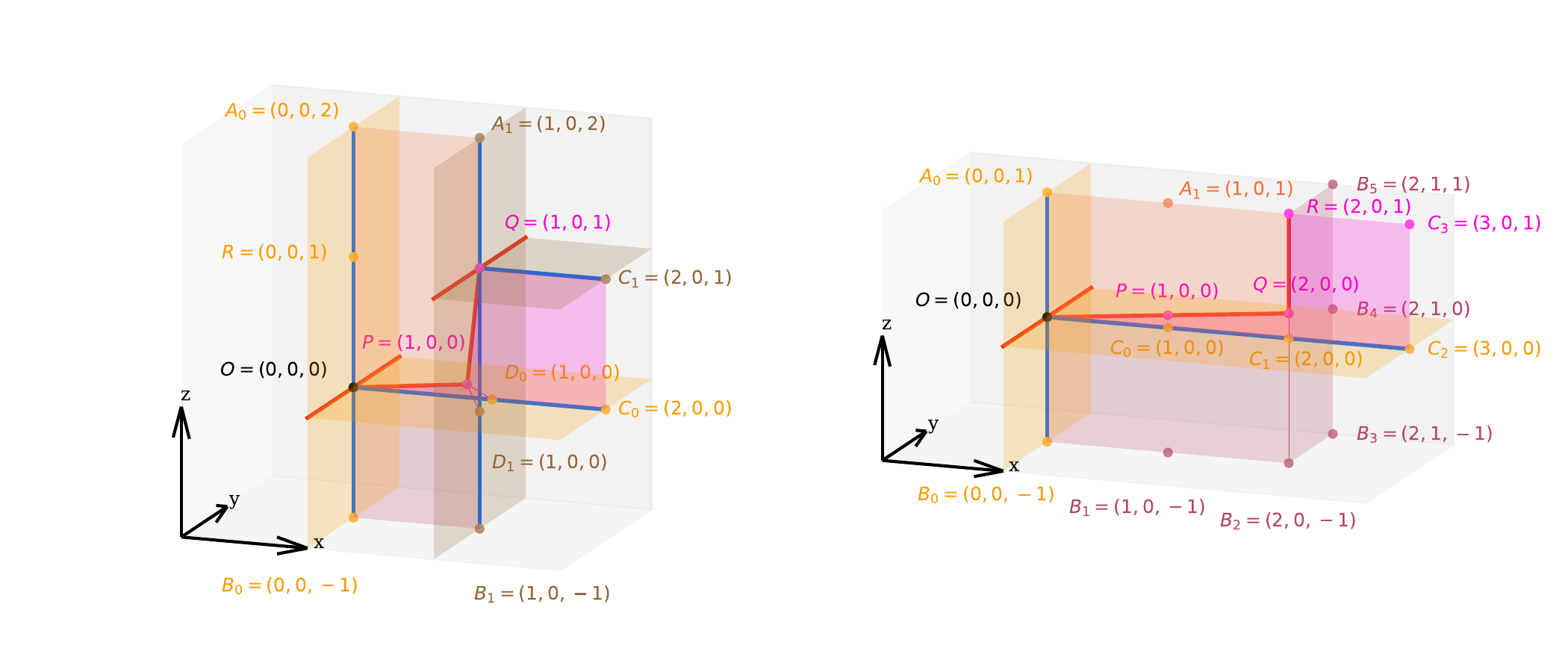}
    \caption{Examples of the non-square faces appearing in the gadgets associated with T-junctions. We slightly distort the position of some vertices to make overlapping vertices visible.}
    \label{fig:quantum_example}
  \end{figure}
\end{example}

Below we display figures showing all of the gadgets.
The accompanying captions describe what local structures they are responsible for
  and the sizes of boxes that surround them after subdivision and perturbation.

\begin{figure}[H]
  \centering
  \includegraphics[width=0.99\textwidth]{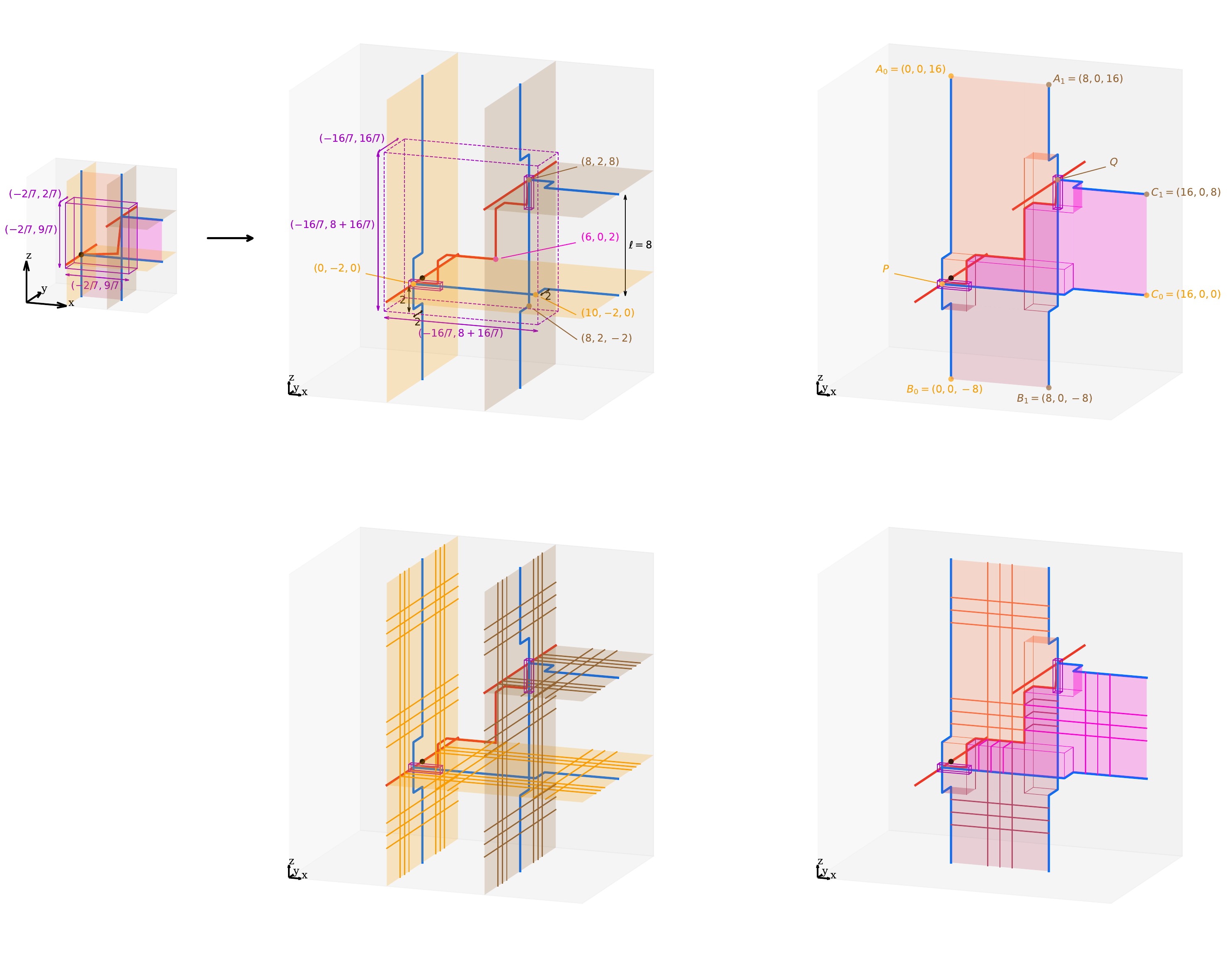}
  \caption{This gadget depicts how a doubled T-junction is perturbed in the first step of one iteration.
            On the right,
            the left panels depict the planes corresponding to faces in $X'_{(i-1)}$ (orange) and $X'_{(i-1),\mrm{copy}}$ (brown),
              while the right panels depict the planes corresponding to faces in $X_{(i-1), \mrm{cyl}}$ (pink).
              Recall that $X_{(i)} = (X'_{(i-1)} \sqcup X'_{(i-1),  \mrm{copy}} \sqcup X_{(i-1), \mrm{cyl}}) / \sim$, as obtained from the previous iteration.
              We use three slightly different hues of pink to visually distinguish
                the three surfaces near this local configuration,
                spanned by the paths
                $A_0-P-Q-A_1$,
                $B_0-P-Q-B_1$,
                and $C_0-P-Q-C_1$.
            The bottom panels further show the candidate ribs. \\
          \hspace*{1em} The key role of the gadget is to separate the double edges (overlapping blue and red edges) and the degenerate faces that appear in $X_{(i)}$.
            Two sets of double edges remain in $X'_{(i)}$.
            These are boxed and will be subdivided and perturbed in the next iteration using the gadget in \Cref{fig:quantum_refinement_box,fig:quantum_refinement_box-2}. \\
          \hspace*{1em} The bottom-left box is
            $(-\frac{2}{\ell-1}, 2+\frac{2}{\ell-1}) \times (-\frac{2}{\ell-1}, \frac{2}{\ell-1}) \times (-\frac{2}{\ell-1}, \frac{2}{\ell-1})$
            shifted by $(0,-2,0)$.
            The upper-right box is
            $(-\frac{2}{\ell-1}, \frac{2}{\ell-1}) \times (-\frac{2}{\ell-1}, \frac{2}{\ell-1}) \times (-2-\frac{2}{\ell-1}, \frac{2}{\ell-1})$
            shifted by $(\ell, 2, \ell)$.
          Since $2 + \frac{2}{\ell-1} \le \frac{2}{\ell-1} \ell = 16/7$,
            the perturbation and the newly induced boxes are contained in the scaled box
            $(-\frac{2}{\ell-1} \ell, (1+\frac{2}{\ell-1}) \ell) \times (-\frac{2}{\ell-1} \ell, \frac{2}{\ell-1} \ell) \times (-\frac{2}{\ell-1} \ell, (1+\frac{2}{\ell-1}) \ell)$.
          }
  \label{fig:quantum_refinement_T_junction}
\end{figure}
\clearpage
\begin{figure}[H]
  \centering
  \vspace*{-2.5em}
  \includegraphics[width=0.8\textwidth]{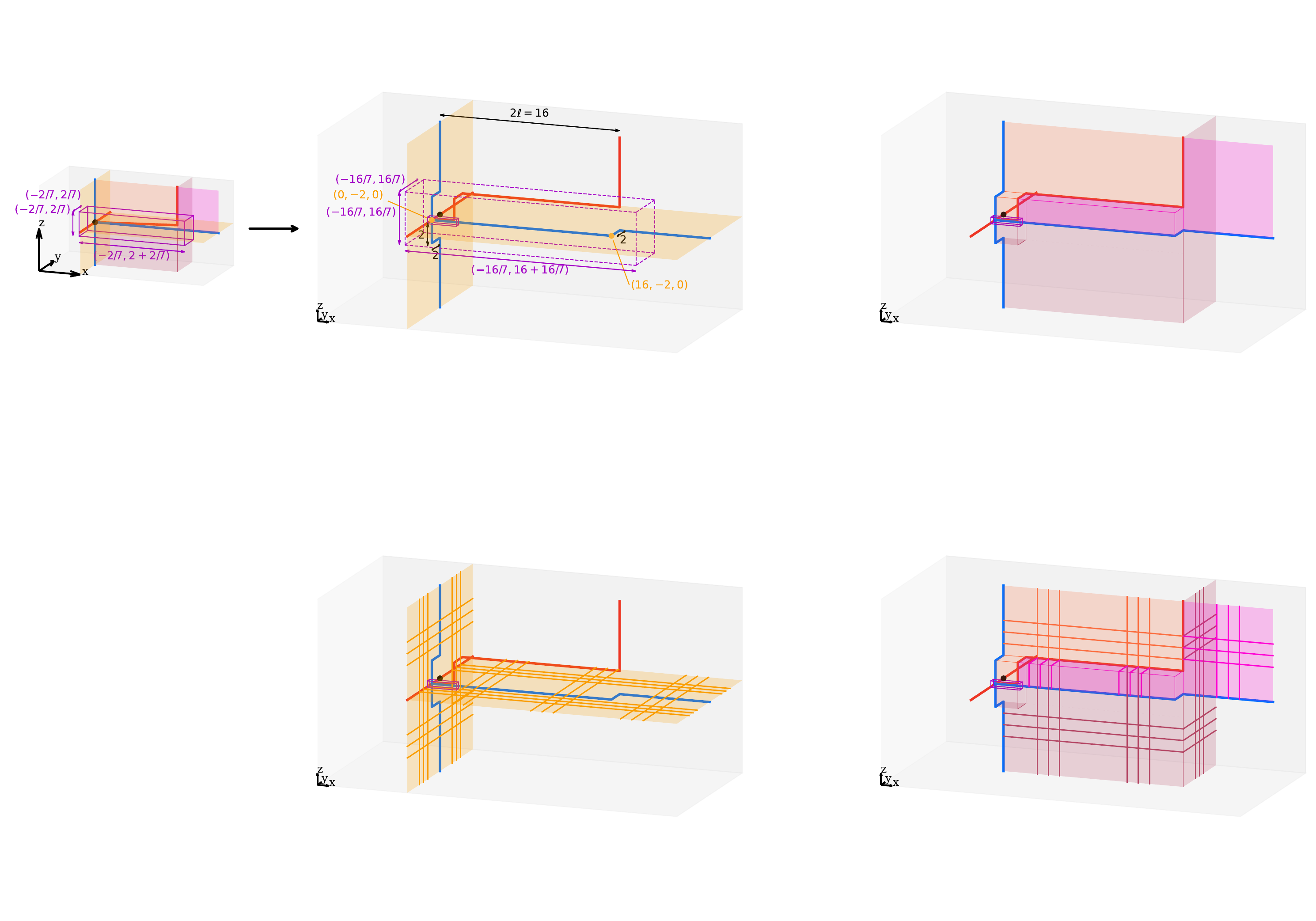}
  \vspace*{-2.5em}
  \caption{This gadget depicts how a doubled T-junction is perturbed in the first step of the iteration procedure, specifically with respect to the lower-left box in \Cref{fig:quantum_refinement_T_junction}.
            On the right,
            the left panels depict the planes induced from $X'_{(i-1)}$ (orange) from the previous iteration,
              while the right panels depict the planes induced from $X_{(i-1), \mrm{cyl}}$ (pink) from the previous iteration.
              We use three slightly different hues of pink to visually distinguish the three surfaces near this local configuration.
            The bottom panels further show the candidate ribs. \\
          \hspace*{1em} The key role of the gadget is to separate the double edges (overlapping blue and red edges) and the degenerate faces that appear in $X_{(i)}$.
          One set of double edges remains in $X'_{(i)}$.
          It is boxed and will be subdivided and perturbed in the next iteration using the same gadget. \\
          \hspace*{1em} The new box is
            $(-\frac{2}{\ell-1}, 2+\frac{2}{\ell-1}) \times (-\frac{2}{\ell-1}, \frac{2}{\ell-1}) \times (-\frac{2}{\ell-1}, \frac{2}{\ell-1})$
            shifted by $(0,-2,0)$.
          Since $2 + \frac{2}{\ell-1} \le \frac{2}{\ell-1} \ell = 16/7$,
            the perturbation and the newly induced boxes are contained in the scaled box
            $(-\frac{2}{\ell-1} \ell, (2+\frac{2}{\ell-1}) \ell) \times (-\frac{2}{\ell-1} \ell, \frac{2}{\ell-1} \ell) \times (-\frac{2}{\ell-1} \ell, \frac{2}{\ell-1} \ell)$.
          }
  \label{fig:quantum_refinement_box}
\end{figure}
\begin{figure}[H]
  \centering
  \vspace*{-2.5em}
  \includegraphics[width=0.8\textwidth]{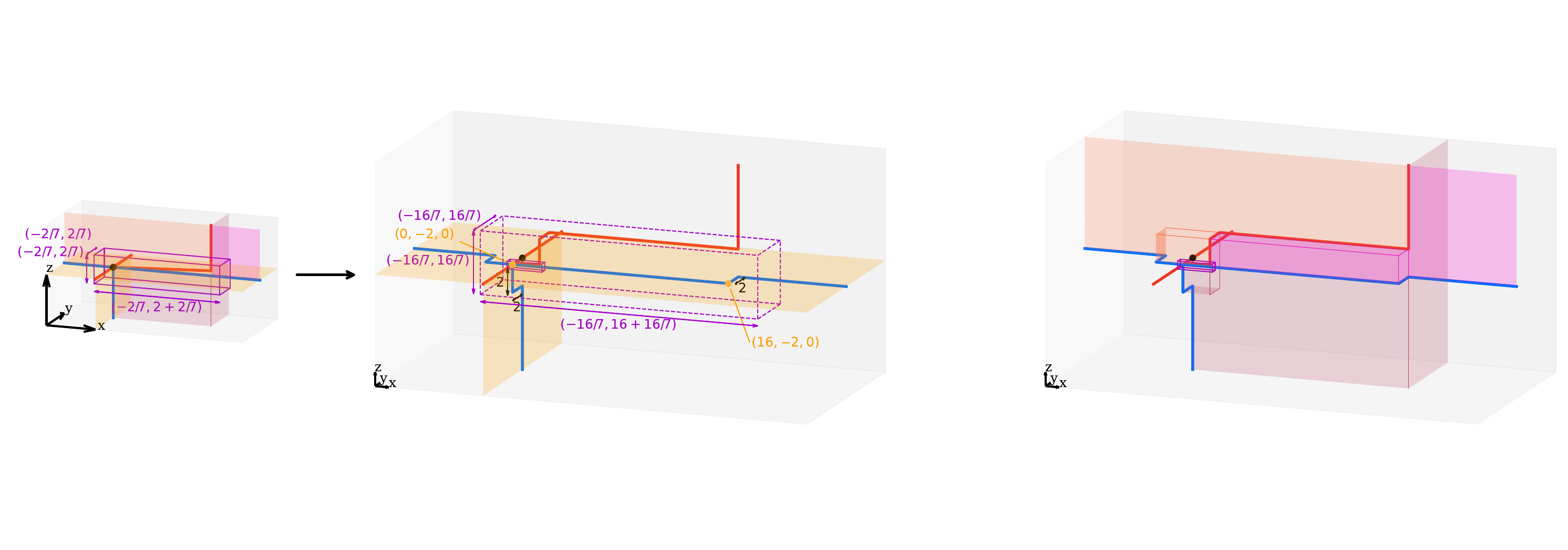}
  \vspace*{-2.5em}
  \caption{This gadget depicts how a doubled T-junction is perturbed in the first step of one iteration, specifically with respect to the upper right box in \Cref{fig:quantum_refinement_T_junction}.
          The structure is rotated so that it appears nearly identical to the previous gadget.
          Candidate ribs are not drawn,
            but they are the same as in the previous figure. \\
          \hspace*{1em} As before, the key role of the gadget is to separate the double edges (overlapping blue and red edges) and the degenerate faces that appear in $X_{(i)}$.
          One set of double edges remains in $X'_{(i)}$.
          It is boxed and will be subdivided and perturbed in the next iteration using the same gadget. \\
          \hspace*{1em} Similarly, the perturbation and the newly induced boxes are contained in the scaled box.
          }
  \label{fig:quantum_refinement_box-2}
\end{figure}
\clearpage
\begin{figure}[H]
  \centering
  \vspace*{-1.5em}
  \includegraphics[width=0.9\textwidth]{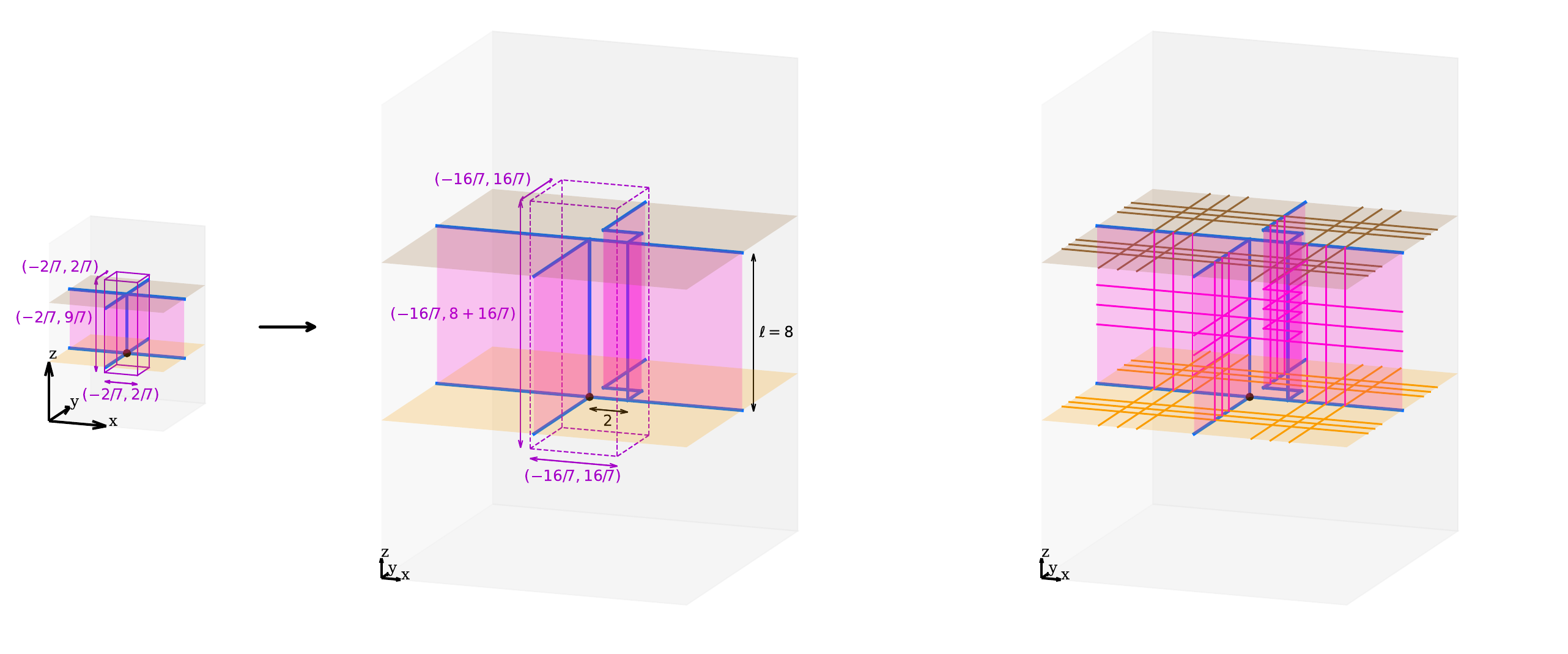}
  \vspace*{-1.5em}
  \caption{This gadget depicts how a doubled glued cross is perturbed in the first step of the iteration procedure.
            The panels on the right depict the structure with and without the candidate ribs.
            The key role of the gadget is to separate the edge incident to four faces into two edges, each incident to three faces.
            Because $2 < \frac{2}{\ell-1} \ell = 16/7$, the perturbation is contained in the scaled box.
          }
  \label{fig:quantum_refinement_glued_cross}
\end{figure}

\begin{figure}[H]
  \centering
  \vspace*{-1.5em}
  \includegraphics[width=0.9\textwidth]{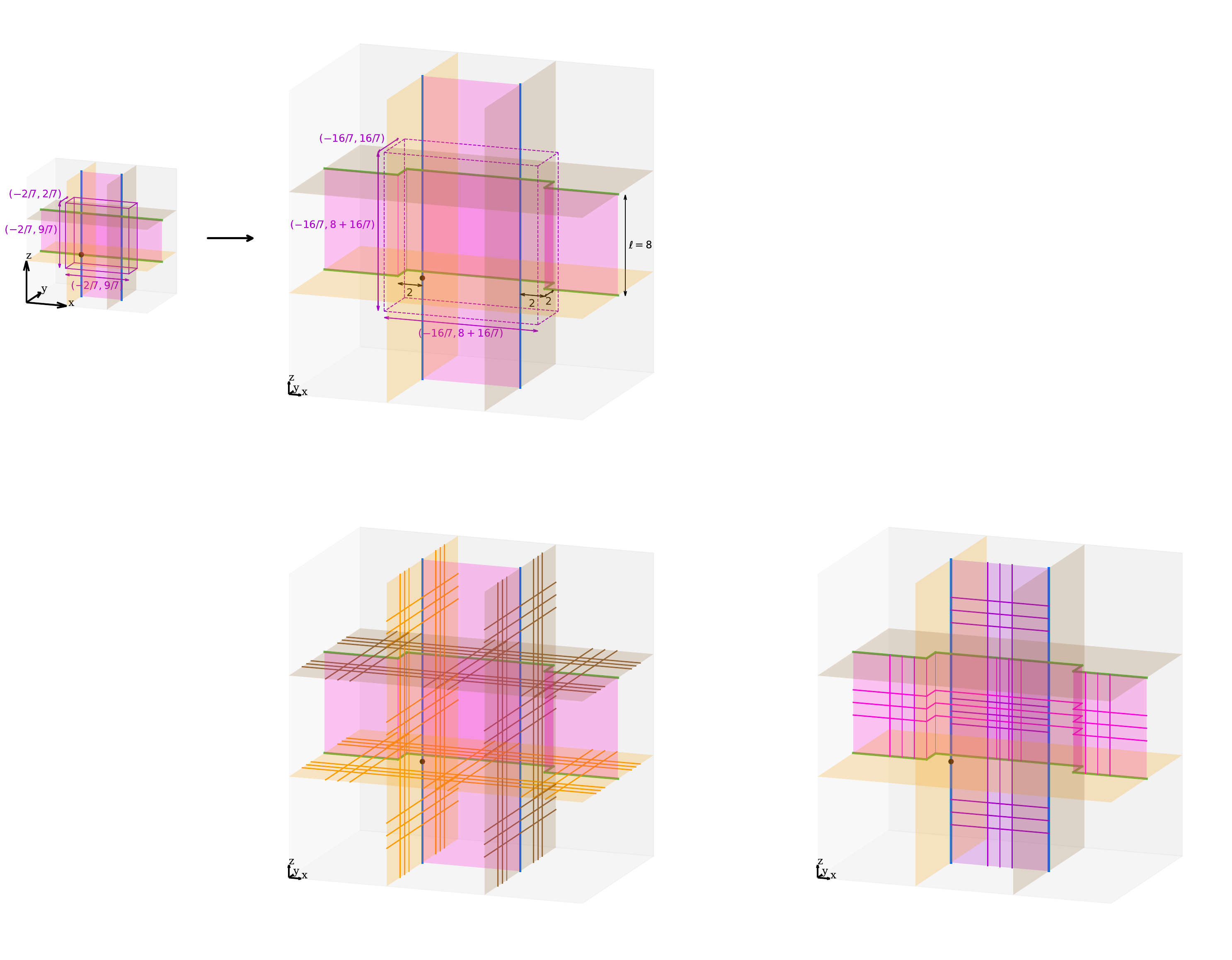}
  \vspace*{-1.5em}
  \caption{This gadget depicts how a doubled free cross is perturbed in the first step of the iteration procedure.
            The panels on the right depict the structure with and without the candidate ribs.
            The candidate ribs are shown in two separate panels to better illustrate the structure.
            The key role of the gadget is to separate double faces that appear in $X_{(i)}$.
            Because $2 < \frac{2}{\ell-1} \ell = 16/7$, the perturbation is contained in the scaled box.
          }
  \label{fig:quantum_refinement_free_cross}
\end{figure}
\clearpage
\subsubsection{Doubling step: $X'_{(i)} \mapsto X_{(i+1)} = (X'_{(i)} \sqcup X'_{(i), \mrm{copy}} \sqcup X_{(i), \mrm{cyl}}) / \sim$}

Starting from the refined complex $X'_{(i)}$,
  we form a copy, denoted $X'_{(i),\mrm{copy}}$,
  by translating $X'_{(i)}$ by the vector $(1,1,1)$.
We then connect the two complexes by the complex $X_{(i),\mrm{cyl}}$,
  which consists locally of unit-width bands running between the two copies.
In general,
  the local configurations can be more intricate,
  and the attaching regions need not be simple straight segments.
These more intricate local configurations are the focus of this subsubsection.

In $X'_{(i)}$ the candidate ribs lie in the planes
\begin{equation*}
  x=3,4,..., \ell-3 \pmod{\ell},
  \quad
  y=3,4,..., \ell-3 \pmod{\ell},
  \quad
  z=3,4,..., \ell-3 \pmod{\ell}.
\end{equation*}
In the translated copy $X'_{(i),\mrm{copy}}$,
  the candidate ribs lie in the planes
\begin{equation*}
  x=4,5,..., \ell-2 \pmod{\ell},
  \quad
  y=4,5,..., \ell-2 \pmod{\ell},
  \quad
  z=4,5,..., \ell-2 \pmod{\ell}.
\end{equation*}
The doubling step promotes compatible candidate ribs to \emph{ribs},
  which serve as the actual attaching regions,
  and connects them using $X_{(i),\mrm{cyl}}$.
By construction,
  $X_{(i),\mrm{cyl}}$ is entirely contained in the planes
\begin{equation*}
  x=4,5,..., \ell-3 \pmod{\ell},
  \quad
  y=4,5,..., \ell-3 \pmod{\ell},
  \quad
  z=4,5,..., \ell-3 \pmod{\ell}.
\end{equation*}

There are three types of local structures near the ribs in $X'_{(i)}$
  whose doubling requires special attention:
\begin{enumerate}
  \item [(a)] T-junctions, located at edges of face degree $3$;
  \item [(b)] glued crosses, located near the center of the scaled square, as seen in the examples in the overview;
  \item [(c)] free crosses, located near the boundary of the scaled square,
      where the routing of the perturbation gadget
      produces two ribs lying in distinct orthogonal planes
      whose images intersect.
\end{enumerate}

For each of these three types,
  we specify a local doubling gadget.
The gadget describes the local structure of $X_{(i), \mrm{cyl}}$
  near the corresponding local structure in $X'_{(i)}$.
It also specifies the boxed region indicating where perturbations
  will occur in the next refinement step.
The doubling gadgets are designed to satisfy the following requirements:

\begin{enumerate}
  \item If the rib is in the plane $x=x_0$
then the support of $X_{(i),\mathrm{cyl}}$ in the gadget is also in same plane $x=x_0$.  The possible planes are
        \begin{equation*}
          x=4,5,..., \ell-3 \pmod{\ell},
          \quad
          y=4,5,..., \ell-3 \pmod{\ell},
          \quad
          z=4,5,..., \ell-3 \pmod{\ell}.
        \end{equation*}
  \item They result in boxes marking any local structures beyond
        T-junctions, glued crosses, and free crosses
        that would arise under simple rescaling without perturbation.  The perturbations in the next refinement step
        are contained within scaled versions of these boxes;
  \item They ensure that the boxes are disjoint,
        so that the perturbations in the next refinement step do not interfere with one another.
\end{enumerate}

Below we display figures showing all of the gadgets.
The accompanying captions describe the local structures handled by each gadget
  and the sizes of the boxes placed around the degenerate features
  that appear after doubling.
\clearpage
\begin{figure}[H]
  \centering
  \vspace*{-2em}
  \includegraphics[width=0.8\textwidth]{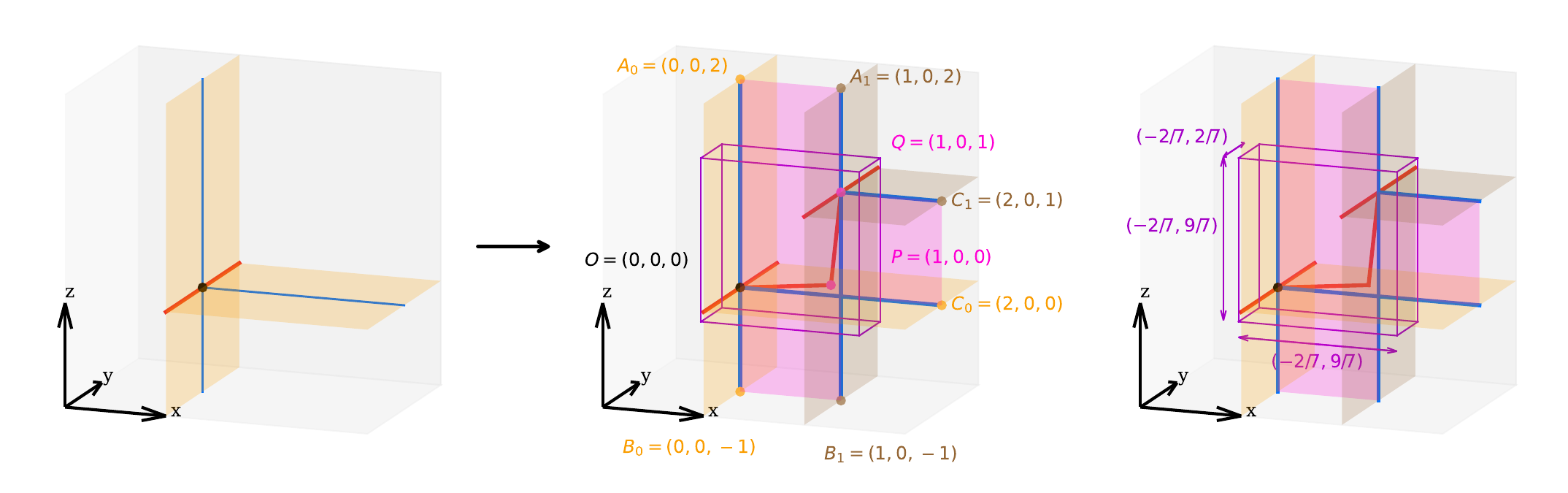}
  \vspace*{-1.5em}
  \caption{The gadgets describing the doubling structure near a T-junction.
            The blue medium-thick segments are the attaching regions for $X_{(i), \mrm{cyl}}$.
            The complexes $X'_{(i)}$, $X'_{(i), \mrm{copy}}$, and $X_{(i), \mrm{cyl}}$
              are shown in orange, brown, and pink, respectively.
            In the figure, we slightly distort the position of $P$ to make the two overlapping edges visible.
            In this neighborhood,
              $X_{(i), \mrm{cyl}}$ consists of three surfaces
              spanned by the vertex chains
              $A_0-O-P-Q-A_1$,
              $B_0-O-P-Q-B_1$,
              and $C_0-O-P-Q-C_1$,
              meeting along the red edge from $O$ to $P$ to $Q$. \\
            \hspace*{1em} After embedding,
              the segments from $(0,0,0)$ to $(0,0,1)$
              and from $(0,0,1)$ to $(1,0,1)$
              each contain two overlapping edges of $X_{(i+1)}$.
            We therefore place a box around these segments:
              $(-\frac{2}{\ell-1}, 1+\frac{2}{\ell-1}) \times (-\frac{2}{\ell-1}, \frac{2}{\ell-1}) \times (-\frac{2}{\ell-1}, 1+\frac{2}{\ell-1})$.
            }
  \label{fig:quantum_doubling_T_junction}
\end{figure}

\begin{figure}[H]
  \centering
  \vspace*{-2em}
  \includegraphics[width=0.8\textwidth]{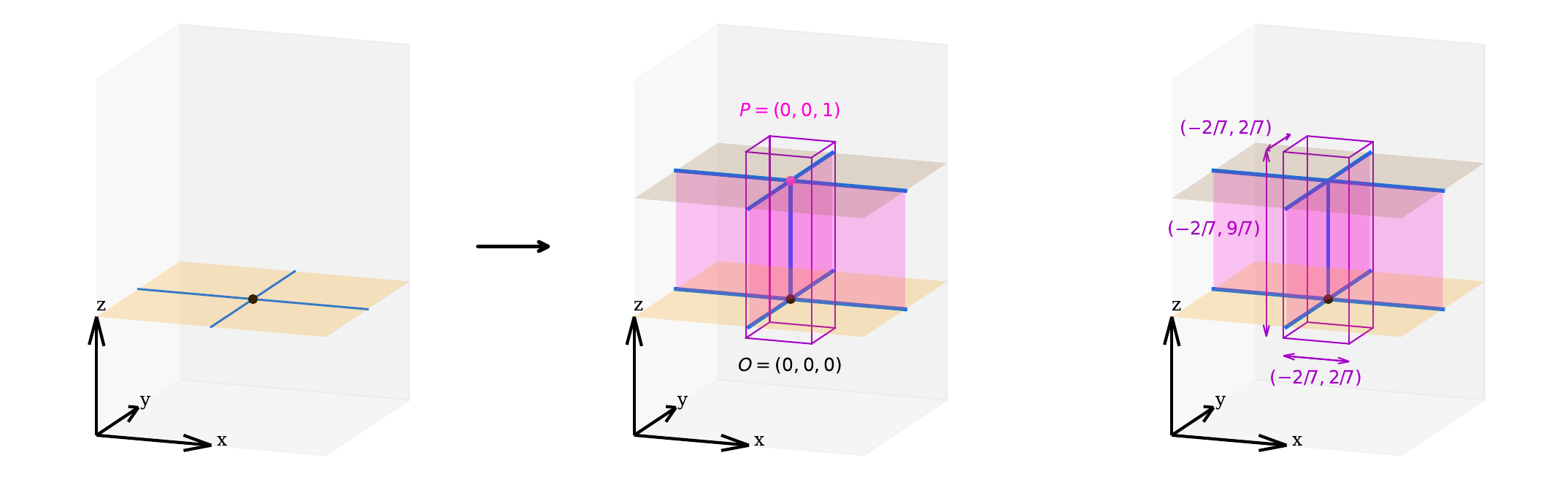}
  \vspace*{-1.5em}
  \caption{The gadgets describing the doubling structure near a glued cross.
            The blue medium-thick segments are the attaching regions for $X_{(i), \mrm{cyl}}$.
            The complexes $X'_{(i)}$, $X'_{(i), \mrm{copy}}$, and $X_{(i), \mrm{cyl}}$
              are shown in orange, brown, and pink, respectively.
            In this neighborhood, $X_{(i), \mrm{cyl}}$ consists of four surfaces meeting along the blue edge from $O$ to $P$. \\
            \hspace*{1em} To ensure that each edge is incident to at most three faces,
              we place a box around the edge $OP$:
              $(-\frac{2}{\ell-1}, \frac{2}{\ell-1}) \times (-\frac{2}{\ell-1}, \frac{2}{\ell-1}) \times (-\frac{2}{\ell-1}, 1+\frac{2}{\ell-1})$.
            }
  \label{fig:quantum_doubling_glued_cross}
\end{figure}

\begin{figure}[H]
  \centering
  \vspace*{-2em}
  \includegraphics[width=0.8\textwidth]{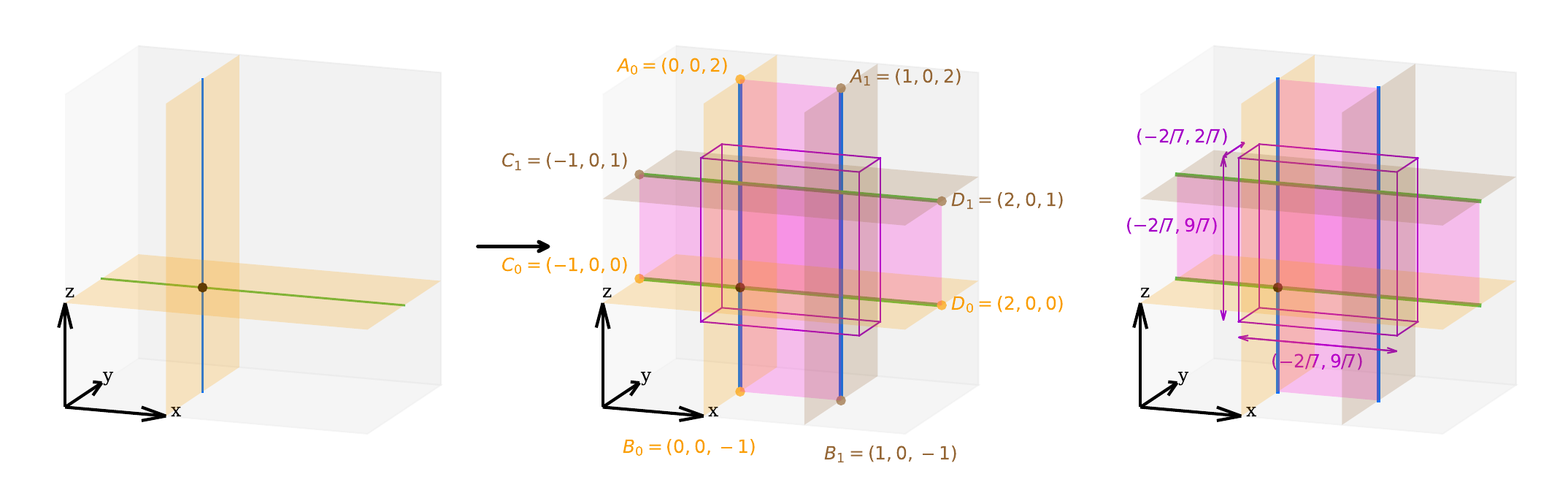}
  \vspace*{-1.5em}
  \caption{The gadgets describing the doubling structure near a free cross.
            The blue and green medium-thick segments are the attaching regions for $X_{(i), \mrm{cyl}}$.
            The complexes $X'_{(i)}$, $X'_{(i), \mrm{copy}}$, and $X_{(i), \mrm{cyl}}$
              are shown in orange, brown, and pink, respectively.
            In this neighborhood, $X_{(i), \mrm{cyl}}$ consists of two disjoint surfaces spanned by the vertex chains
              $A_0-B_0-B_1-A_1$ and
              $C_0-D_0-D_1-C_1$. \\
            \hspace*{1em} After embedding,
              the two surfaces overlap along a unit square.
            We therefore place a box around this square:
              $(-\frac{2}{\ell-1}, 1+\frac{2}{\ell-1}) \times (-\frac{2}{\ell-1}, \frac{2}{\ell-1}) \times (-\frac{2}{\ell-1}, 1+\frac{2}{\ell-1})$.
          }
  \label{fig:quantum_doubling_free_cross}
\end{figure}

Some configurations may appear as subcomplexes of a T-junction,
  such as the examples shown in \Cref{fig:quantum_doubling_plane,fig:quantum_doubling_corner}.
In this case, the same doubling procedure is applied to the relevant subcomplex.

\begin{figure}[H]
  \centering
  \includegraphics[width=0.6\textwidth]{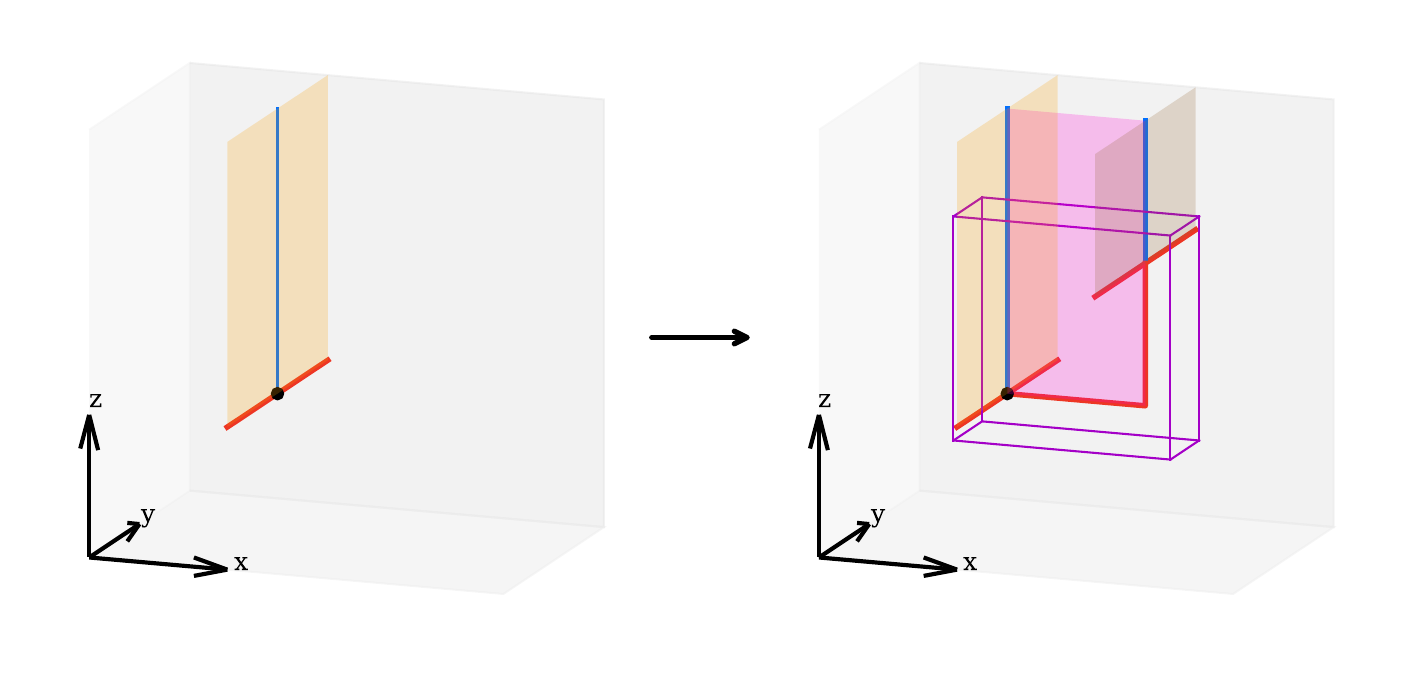}
  \caption{A subcomplex of a T-junction.
            Compared with \Cref{fig:quantum_doubling_T_junction},
              only one of the three surfaces in $X_{(i), \mrm{cyl}}$ is retained,
              namely the surface spanned by the vertex chain
              $A_0-O-P-Q-A_1$.
            The box is technically not necessary since no degeneracy occurs.
              Nevertheless, for uniformity of the proof, and to avoid introducing an additional gadget,
              we still treat it as a subcomplex of the T-junction.
          }
  \label{fig:quantum_doubling_plane}
\end{figure}
\begin{figure}[H]
  \centering
  \includegraphics[width=0.6\textwidth]{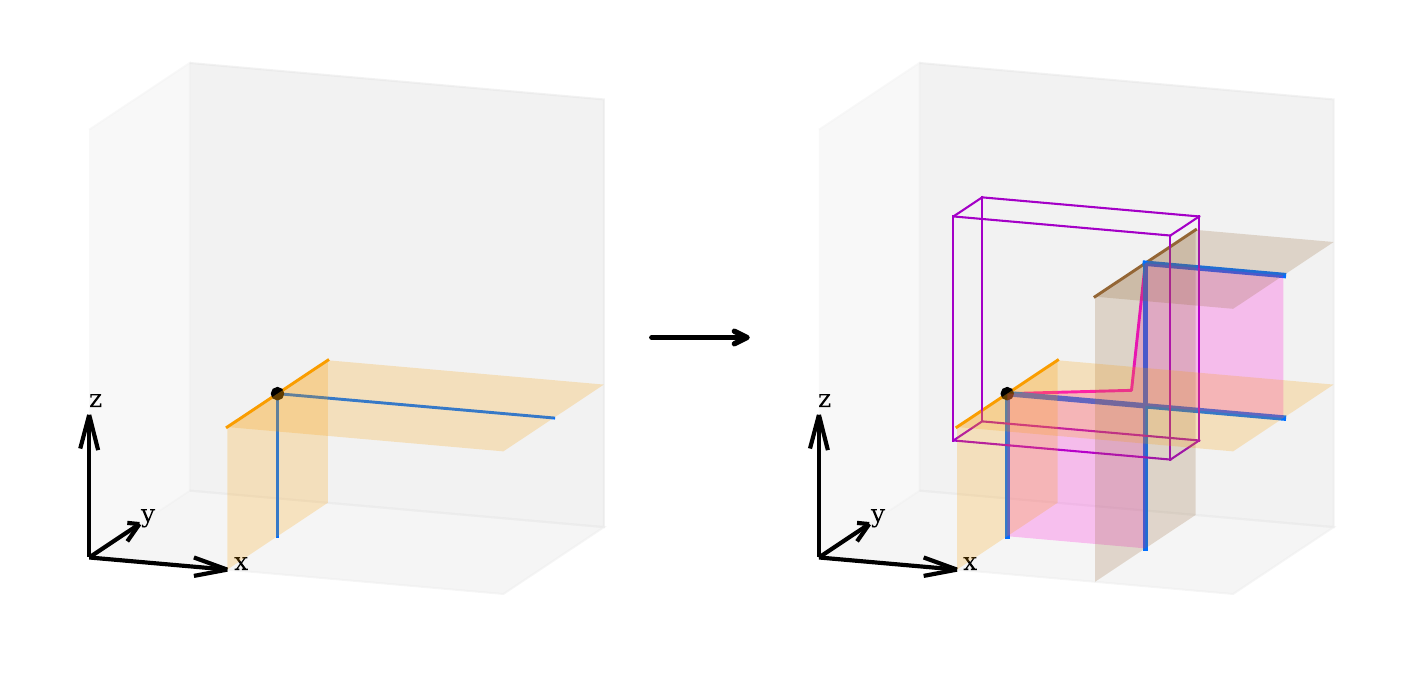}
  \caption{Another subcomplex of a T-junction.
            Compared with \Cref{fig:quantum_doubling_T_junction},
              only two of the three surfaces in $X_{(i), \mrm{cyl}}$ are retained,
              namely those spanned by the vertex chains
              $B_0-O-P-Q-B_1$ and
              $C_0-O-P-Q-C_1$.
            Unlike in  \Cref{fig:quantum_doubling_T_junction},
           the corresponding edge here is not colored red because it is now incident to only two faces.
            In this case,
              the sheaf data is defined implicitly as $\{00, 11\}$.
          }
  \label{fig:quantum_doubling_corner}
\end{figure}

\paragraph{Sheaf data:}

It remains to assign a color to each edge of $X_{(i+1)}$
  whose face degree differs from $2$.
There are two types of such edges.
The first consists of edges in $X'_{(i)}$ and $X'_{(i), \mrm{copy}}$
  that do not lie in the attaching regions.
The second consists of edges in $X_{(i), \mrm{cyl}}$
  introduced during the doubling step.
These include edges in the attaching regions,
  as well as interior edges of $X_{(i),\mrm{cyl}}$,
  which arise in the doubling of T-junctions and glued crosses.

Edges of the first type inherit the color from the corresponding edge of $X_{(i)}$.
Edges of the second type are,
  in most cases,
  colored blue in the iteration $\cR_X$ and red in the iteration $\cR_Z$.
The only exception occurs in the doubling of a T-junction,
  where the edge along $O-P-Q$ inherits the color of the degree-3 edge
  of the original T-junction.

This uniquely specifies the construction of the code $\cQ_{(i+1)}$ from $\cQ_{(i)}$.

\subsubsection{Correctness of the geometric construction and the proof of embedding}

The correctness of the construction for $X_{(i)}, I_{(i)}$
  and the proof of locality and bounded density of the embedding are closely related,
  so we will verify them together.
The main goal of the proof is to ensure that
\begin{itemize}
  \item $X'_{(i)}$ is supported on planes with coordinates $-2, 0, 2 \pmod{\ell}$,
  \item $X'_{(i), \mrm{copy}}$ is supported on planes with coordinates $-1, 1, 3 \pmod{\ell}$,
  \item $X_{(i), \mrm{cyl}}$ is supported on planes with coordinates $4,..., \ell-3 \pmod{\ell}$.
\end{itemize}
These congruence classes keep the three sets of regions separated
  and allow us to rule out unwanted overlaps.

More systematically,
  we decompose $\RR^3$ into four types of regions,
  which will be analyzed separately:
\begin{itemize}
  \item $W_0 = \Big([-2.5,3.5] \times [-2.5,3.5] \times [-2.5,3.5]\Big) + (\ell \zz)^3$,
  \item $W_1 =
    \left(
      \bigcup_{\mrm{perm}}
      [-2.5,3.5] \times [-2.5,3.5] \times [3.5,\ell-2.5]
    \right)
    + (\ell \zz)^3$,
  \item $W_2 =
    \left(
      \bigcup_{\mrm{perm}}
      [-2.5,3.5] \times [3.5,\ell-2.5] \times [3.5,\ell-2.5]
    \right)
    + (\ell\mathbb{Z})^3$,
  \item $W_3 = \Big([3.5,\ell-2.5] \times [3.5,\ell-2.5] \times [3.5,\ell-2.5]\Big) + (\ell \zz)^3$.
\end{itemize}
Here $\bigcup_{\mrm{perm}}$ denotes the union over all permutations
of the three coordinate intervals.
Intuitively, the regions $W_0$ and $W_1$ contain the perturbations performed near the boundaries of the scaled squares,
  while the regions $W_2$ captures the interior regions of the scaled squares,
  which we are already familiar with from the examples in the overview.

After the subdivision and perturbation step,
  inspection of the gadgets in \Cref{fig:quantum_refinement_T_junction,fig:quantum_refinement_box,fig:quantum_refinement_box-2,fig:quantum_refinement_glued_cross,fig:quantum_refinement_free_cross}
  shows that the local structures in $X'_{(i)}$ are as follows (see the beginning of Subsubsec.~\ref{sec:labelinggadgets} for the labeling (1)-(4) of gadgets):
\begin{itemize}
  \item In $W_0$, there are no ribs,
          but there may be boxes arising from gadgets (1) and (2).
  \item In $W_1$, there are no boxes,
          but there may be isolated T-junctions and free crosses.
  \item In $W_2$, there are no boxes,
          but there may be isolated glued crosses.
  \item In $W_3$, there are no local structures.
\end{itemize}

Each of the gadgets (1)-(4) is equipped with a box: we will say that these boxes are of type (1)-(4). After the doubling step,
  inspection of the gadgets in \Cref{fig:quantum_doubling_T_junction,fig:quantum_doubling_glued_cross,fig:quantum_doubling_free_cross}
  shows the possible regions that boxes in $X_{(i+1)}$ are located in:
\begin{itemize}
  \item In $W_0$, there are boxes inherited from $X'_{(i)}$
        and from its translated copy $X'_{(i), \mrm{copy}}$.  These boxes are of type (2).  No additional boxes are induced by $X_{(i), \mrm{cyl}}$.
  \item In $W_1$, boxes are induced by the doubling of T-junctions
        and free crosses, giving boxes of types (1) and (4), respectively.
  \item In $W_2$, boxes are induced by the doubling of glued crosses,
        giving boxes of type (3).
  \item In $W_3$, there are no boxes.
\end{itemize}

\begin{table}[ht]
  \centering
  \begin{tabular}{c|cc|c}
    Region & Boxes & Ribs & New boxes after doubling \\
    \hline
    $W_0$ & From gadgets (1) and (2) & $\emptyset$ & Need gadget (2) \\
    $W_1$ & $\emptyset$ & T-junctions and free crosses & Need gadgets (1) and (4) \\
    $W_2$ & $\emptyset$ & Glued crosses & Need gadget (3) \\
    $W_3$ & $\emptyset$ & $\emptyset$ &  $\emptyset$ \\
  \end{tabular}
  \caption{Summary of the behavior before and after doubling.}
  \label{tab:region-behavior}
\end{table}

We summarize the behavior in the regions $W_0, W_1, W_2, W_3$
  before and after doubling in \Cref{tab:region-behavior}.  We now establish the key lemma for the induction.
The proof proceeds by analyzing the four regions separately.
\begin{lemma} \label{lemma:quantum_embedding_well_defined}
  Suppose that $X_{(i)}$ and $I_{(i)}$, together with the boxed regions,
    satisfy the following properties:
  \begin{itemize}
    \item the boxes are disjoint;
    \item the structure outside the boxes consists only of simple local structures.
  \end{itemize}
  Then the construction describe above is well defined.
  Moreover, $X_{(i+1)}$ and $I_{(i+1)}$, together with the new boxed regions,
    satisfy the same two properties.
\end{lemma}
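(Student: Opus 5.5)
The plan is an induction over one iteration, with the analysis split according to the decomposition of $\mathbb{R}^3$ into $W_0,W_1,W_2,W_3$. The boxes of $X_{(i)}$ are disjoint by hypothesis, so refinement can be run box by box. Each box is scaled by $\ell$, and the matching gadget of type (1)--(4) is applied inside the scaled box. The captions of \Cref{fig:quantum_refinement_T_junction,fig:quantum_refinement_box,fig:quantum_refinement_box-2,fig:quantum_refinement_glued_cross,fig:quantum_refinement_free_cross} show that each gadget, together with any box it induces, stays inside the scaled box; this uses $2+\frac{2}{\ell-1}\le \frac{2\ell}{\ell-1}$ at $\ell=8$. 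Scaled disjoint boxes remain disjoint, so the gadgets do not interact and $X'_{(i)}$ is well defined. Outside the boxes, the simple local structures only rescale into T-junctions, glued crosses and free crosses.

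\textbf{Step 1: refinement.} I would first prove that $X'_{(i)}$ lies on planes with coordinates $0,\pm2 \pmod \ell$, with ribs only in coordinates $3,\dots,\ell-3$. For rescaled material this is immediate, since it sits at coordinate $0$ and its ribs are in the middle planes. For gadget material it is read off from the gadget figures, which confine perturbations to coordinates $0,\pm 2$. I would also record where each kind of structure sits, as in \Cref{tab:region-behavior}: boxes of types (1),(2) in $W_0$, T-junctions and free crosses in $W_1$, glued crosses in $W_2$, nothing in $W_3$.

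\textbf{Step 2: doubling.} The copy translated by $(1,1,1)$ occupies coordinates $\pm1,3$, and $X_{(i),\mrm{cyl}}$ occupies coordinates $4,\dots,\ell-3$. These residue classes are pairwise disjoint for $\ell\ge 7$, so the three pieces meet only along the attaching ribs. This makes the gluing $\sim$ well defined. Each rib configuration is one of (a)--(c) or a subcomplex of (a), so a doubling gadget applies to it. The sheaf coloring rule then applies edge by edge.

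\textbf{Step 3: the new boxes.} This is the main obstacle: I must show that the boxes of $X_{(i+1)}$ are disjoint and that everything outside them is simple. I would go region by region.
\begin{itemize}
\item In $W_0$, the boxes are the type-(2) boxes coming from $X'_{(i)}$ and its copy. They differ by a shift of $(1,1,1)$, and each has thickness about $\frac{2}{\ell-1}$ around coordinates $0,\pm2$. Because the coordinate classes differ, the two families do not overlap. This needs checking against the exact box dimensions in the captions.
\item In $W_1$ and $W_2$, the boxes are induced by the doubling gadgets. Distinct ribs lie on distinct planes, so distinct T-junctions, free crosses and glued crosses are separated by at least one lattice unit. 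The boxes have margin $\frac{2}{\ell-1}<\frac12$, so they are disjoint. Since the regions $W_j$ are separated by half-integer boundaries, boxes in different regions are also disjoint.
\end{itemize}

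To finish, I would check that the complement of the boxes contains only simple structures. Away from the gadget degeneracies, each remaining lattice square is the image of one face. The edges are of face degree at most $3$ or are flat overlapping pairs, and the separation of coordinate classes rules out anything else. The delicate part is the gadget-by-gadget inspection, together with the claim that the overlaps created by doubling (doubled edges at T-junctions, doubled squares at free crosses, four-face edges at glued crosses) are exactly those captured by the listed boxes. I would organize this as a finite case table over the gadgets.
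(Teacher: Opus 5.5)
Your proposal is correct and follows essentially the same route as the paper's proof. Disjointness of the boxes makes the refinement gadgets independent. Rib neighborhoods are shown to be T-junctions, glued crosses or free crosses, by rescaling outside the boxes and by gadget inspection inside. Disjointness of the new boxes and simplicity of their complement are then checked region by region over $W_0,\dots,W_3$, using the coordinate classes $\{0,\pm2\}$, $\{\pm1,3\}$ and $\{4,\dots,\ell-3\}$ modulo $\ell$.

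One sentence in Step 2 should be corrected. Disjoint residue classes only prevent parallel faces from coinciding, so the images of the three pieces do meet away from the ribs. For example, faces of $X'_{(i)}$ and $X'_{(i),\mrm{copy}}$ in perpendicular planes cross along a lattice segment, which the paper's $W_0$ case classifies as the flat-sheet simple type. The gluing $\sim$ is well defined combinatorially regardless. What actually needs checking is that each rib neighborhood is of type (a)--(c), which you do, and your Step 3 overlap analysis already accounts for these crossings.
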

\begin{proof}
  By the induction hypothesis, the boxes are disjoint.
  Thus, the perturbations in the refinement step can be performed independently,
    so the complex $X'_{(i)}$ is well defined.

  For the doubling step to be well defined,
    we must verify that local structures near the attaching regions of $X'_{(i)}$
    are T-junctions, glued crosses, or free crosses.

  We first check this outside the scaled boxes coming from $X_{(i)}$.
  By the induction hypothesis,
    the structure outside the boxes in $X_{(i)}$
    consists only of simple local structures.
  After scaling,
    these simple local structures produce only
    T-junctions, glued crosses, or free crosses near the attaching regions.

  We now check this inside the scaled boxes.
  By inspecting the perturbation gadgets in \Cref{fig:quantum_refinement_T_junction,fig:quantum_refinement_box,fig:quantum_refinement_box-2,fig:quantum_refinement_glued_cross,fig:quantum_refinement_free_cross},
    we verify that the local structures near the attaching regions
    are again T-junctions, glued crosses, or free crosses.

  Therefore,
    the doubling step can be defined by specifying the doubling gadgets
    for these three types of local structures,
    and hence $X_{(i+1)}$ is well defined.

  We verify the two properties of $X_{(i+1)}$ one by one.
  \begin{claim} \label{claim:quantum_embedding_disjoint_boxes}
    The boxes in $X_{(i+1)}$ are disjoint.
  \end{claim}
  \begin{proof}
    We verify that the boxes are disjoint in all four regions
      $W_0, W_1, W_2, W_3$.
    We use the shorthand $(\{-2,0\}, 0, 0)$
      to denote the set of vectors $(-2,0,0)$ and $(0,0,0)$,
      and similarly for expressions such as
      $(\{-2,0,2\}, \{-2,0,2\}, z)$.
    \begin{itemize}
      \item For each component of $W_0$,
            $\big([-2.5,3.5] \times [-2.5,3.5] \times [-2.5,3.5]\big) + (\ell x', \ell y', \ell z')$,
            where $x', y', z' \in \zz$,
            there may be a box of the form
            \[
              \left(-\frac{2}{\ell-1}, 2+\frac{2}{\ell-1}\right)
              \times
              \left(-\frac{2}{\ell-1}, \frac{2}{\ell-1}\right)
              \times
              \left(-\frac{2}{\ell-1}, \frac{2}{\ell-1}\right)
              + (\{-2,0\}, 0, 0)
            \]
            up to permutation of the coordinates.
            The corresponding box in the translated copy $X'_{(i), \mrm{copy}}$ is obtained by shifting the box by $(1,1,1)$.
            These boxes are disjoint.
      \item In the component
            $[-2.5,3.5] \times [-2.5,3.5] \times [3.5,\ell-2.5] + (\ell \zz)^3$
            of $W_1$,
            the boxes induced by the doubling of T-junction and free crosses have the form
            \[
              \left(-\frac{2}{\ell-1}, 1+\frac{2}{\ell-1}\right)
              \times
              \left(-\frac{2}{\ell-1}, 1+\frac{2}{\ell-1}\right)
              \times
              \left(-\frac{2}{\ell-1}, \frac{2}{\ell-1}\right)
              + (\{-2,0,2\}, \{-2,0,2\}, z)
              + (\ell \zz)^3,
            \]
            where $z \in \{4, 5, ..., \ell-3\}$.
            The other components of $W_1$ are treated in the same way,
              up to permutation of the coordinates.
            These boxes are disjoint.
      \item In the component
            $[-2.5,3.5] \times [3.5,\ell-2.5] \times [3.5,\ell-2.5] + (\ell \zz)^3$
            of $W_2$,
            the boxes induced by the doubling of glued crosses have the form
            \[
              \left(-\frac{2}{\ell-1}, 1+\frac{2}{\ell-1}\right)
              \times
              \left(-\frac{2}{\ell-1}, \frac{2}{\ell-1}\right)
              \times
              \left(-\frac{2}{\ell-1}, \frac{2}{\ell-1}\right)
              + (\{-2,0,2\}, y, z)
              + (\ell \zz)^3,
            \]
            where $y, z \in \{4, 5, ..., \ell-3\}$.
            The other components of $W_2$ are treated in the same way,
              up to permutation of the coordinates.
            These boxes are disjoint.
      \item In $W_3$, there are no boxes.
    \end{itemize}
  \end{proof}

  \begin{claim} \label{claim:quantum_embedding_outside_boxes}
    The structure outside of the boxes in $X_{(i+1)}$
      consists only of simple local structures.
  \end{claim}
  Recall that the simple local structures consist of one square type and two for segment types:
  \begin{itemize}
    \item a lattice unit square that is the image of a single face of $X_{(i+1)}$;
    \item a lattice unit-length segment that is the image of a single edge of $X_{(i+1)}$
            and is incident to at most three faces of $X_{(i+1)}$;
    \item a lattice unit-length segment that is the common image of two edges of $X_{(i+1)}$,
            each incident to at most two faces of $X_{(i+1)}$,
            such that the corresponding faces form locally flat sheets
            meeting along the segment.
  \end{itemize}

  \begin{proof}
    We verify the claim in all four regions
      $W_0, W_1, W_2, W_3$.
    \begin{itemize}
      \item In $W_0$, the structure consists of the portions of
              $X'_{(i)}$ and $X'_{(i),\mrm{copy}}$.
            First consider $X'_{(i)}$ by itself, ignoring the translated copy.
            By inspecting the gadgets in \Cref{fig:quantum_refinement_T_junction,fig:quantum_refinement_box,fig:quantum_refinement_box-2,fig:quantum_refinement_glued_cross,fig:quantum_refinement_free_cross},
              we verify that, outside of the boxed regions,
              each lattice unit square is occupied at most once,
              and that each lattice unit-length segment is one of the two permissible types.
            The same statement holds for the translated copy $X'_{(i),\mrm{copy}}$
              considered by itself.

            \hspace*{1em} It remains to rule out unwanted overlaps between the two copies:
              namely that a lattice unit square is occupied once by $X'_{(i)}$
                and once by $X'_{(i),\mrm{copy}}$,
              or that a lattice unit-length segment is occupied once by each copy
                in a way not covered by the two types.

            \hspace*{1em} The first possibility cannot occur,
              since the unit squares in $X'_{(i)}$ have coordinates in $\{-2,0,2\}$,
              whereas the unit squares in $X'_{(i),\mrm{copy}}$ have coordinates in $\{-1,1,3\}$.

            \hspace*{1em} The only possible overlaps occur along lattice unit-length segments,
              and in this case the corresponding faces form locally flat sheets meeting along the segment.
            Hence these overlaps are of the last type listed above.
      \item In $W_1$, the structure is formed by doubling of T-junction and free crosses.
            By inspecting the gadgets in \Cref{fig:quantum_doubling_T_junction,fig:quantum_doubling_free_cross},
              we verify that, outside of the boxed regions,
              each lattice unit square is occupied at most once,
              and that each lattice unit-length segment is one of the two permissible types.
      \item In $W_2$, the structure is formed by doubling of glued crosses.
            By inspecting the gadget in \Cref{fig:quantum_doubling_glued_cross},
              we verify that, outside of the boxed regions,
              each lattice unit square is occupied at most once,
              and that each lattice unit-length segment is one of the two permissible types.
      \item In $W_3$, there are no faces or edges.
    \end{itemize}
  \end{proof}
This proves the Lemma.
\end{proof}

\begin{corollary}
  The code family we constructed above is well defined.

  Furthermore,
    $I_{(i)}$ is a $(30,1)$-embedding of $X_{(i)}$ in $\RR^3$.
  In particular,
    every open ball of diameter $1$ in $\RR^3$ intersects at most $30$ faces of the image of $X_{(i)}$
    and adjacent $0$-cells are mapped to points with distance at most $1$.
\end{corollary}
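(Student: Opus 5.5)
The plan is an induction on $i$ driven by Lemma~\ref{lemma:quantum_embedding_well_defined}, followed by a counting argument for density. For the base case $X_{(0)}$ is a single unit square with no boxes, so both hypotheses of the lemma hold trivially. Applying the lemma repeatedly then shows that at every level the refinement and doubling gadgets apply, so the code family is well defined. It also shows that every $X_{(i)}$ has disjoint boxes, and that outside them only simple local structures appear.

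For locality, I would read off from the gadgets that every vertex of $X_{(i)}$ lies in $\zz^3$ and every edge is a lattice unit segment. This holds for refinement (scaling plus subdivision into unit steps, with gadget shifts by integer amounts), for the translated copy (shift by $(1,1,1)$), and for the cylinder bands (unit width). Hence adjacent $0$-cells are at distance at most $1$. The only subtlety is the non-square faces of Example~\ref{example:non-square-faces}; I would check that their vertices are still lattice points joined by unit segments or collapsed to coincident points.

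For bounded density, the plan is to bound, for each lattice unit square, how many faces of $X_{(i)}$ map onto it, and separately how many degenerate faces map into each lattice segment. Outside the boxes, the simple local structures give at most one face per lattice square. At a lattice segment, the segment types allow at most three faces for a single edge and at most $2+2$ faces for a doubled edge. Inside a box, I would inspect the finitely many gadget configurations; the paper states at most two faces occupy a lattice square, and the triangular faces collapse onto segments. An open ball of diameter $1$ meets only lattice squares and segments near one lattice point, namely at most the $12$ squares sharing a vertex, or fewer when the ball sits near a face or edge. Multiplying by the per-square multiplicity and adding the degenerate faces should give a constant. I would then tune the count to reach $30$, using that a diameter-$1$ ball can meet at most the squares around a single lattice vertex or a single lattice edge.

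I expect the main obstacle to be making this count uniform inside the boxes, particularly the recursive type-(2) boxes from T-junctions. These boxes persist at every level, and there triangular and pentagonal faces and doubled edges coexist. I would first use the disjointness of boxes (Claim~\ref{claim:quantum_embedding_disjoint_boxes}) and their separation of order $2/(\ell-1)$ to argue that a diameter-$1$ ball meets at most one box's interior configuration. I would then bound the faces in a single gadget around any lattice point by direct enumeration of the figures, noting that type (2) reproduces itself, so a single finite check suffices for all levels.
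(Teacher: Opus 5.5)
Your plan is the paper's proof. It runs by induction via Lemma~\ref{lemma:quantum_embedding_well_defined} from the single-square base case. Locality holds because every edge is a lattice unit segment or collapses to a point. Density comes from at most two nondegenerate faces per lattice unit square plus at most one degenerate face per lattice unit segment.

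No ``tuning'' is needed at the end. An open interval of length $1$ contains at most one integer, so an open ball of diameter $1$ meets at most one lattice plane in each coordinate direction. It therefore meets at most $4\cdot 3=12$ lattice squares. It also meets at most one lattice line per direction, and at most two unit segments on each such line, so at most $6$ segments. This gives exactly $2\cdot 12+6=30$.

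The one step you should discard is the claim in your last paragraph that a diameter-$1$ ball meets at most one box. It is false, because boxes are not separated by a full unit. For instance, the T-junction boxes at rib levels $z=4$ and $z=5$ are only $1-\tfrac{4}{\ell-1}=\tfrac{3}{7}$ apart, so a ball centred at $z=4.5$ meets both.

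It is also unnecessary. Disjointness of the boxes is used only so that the multiplicity of each individual lattice square or segment is governed by a single gadget. Once those pointwise multiplicities are bounded, the count above does not depend on how many boxes a ball touches.
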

\begin{proof}
  It is straightforward to check that the initial code $\cQ_{(0)}$
    contains only simple local structures,
    so the base case of the induction is satisfied.
  The induction step is verified by \Cref{lemma:quantum_embedding_well_defined}.
  Hence the code family is well defined.

  Since the construction is well-defined,
    geometric locality follows immediately:
    each edge of $X_{(i)}$ is mapped to a unit-length segment or collapsed to a point,
    so adjacent $0$-cells are mapped to points at distance at most $1$.

  It remains to verify bounded density.
  Outside the boxed regions,
    \Cref{claim:quantum_embedding_outside_boxes},
    implies that each lattice square is the image of at most one face of $X_{(i)}$.
  Inside the boxed regions,
    the disjointness of the boxes, established in
    \Cref{claim:quantum_embedding_disjoint_boxes},
    allows us to analyze each box independently.
  The gadgets show that each lattice unit square inside a box
    is the image of at most two nondegenerate faces of $X_{(i)}$.
  In addition,
    there is at most one degenerate face supported on each lattice unit-length segment.

  Since every open ball of diameter $1$ intersects at most $12$ lattice unit squares and at most $6$ lattice unit-length segments,
    it follows that every such ball intersects at most $2 \times 12 + 6 = 30$
    faces of the image of $X_{(i)}$.
\end{proof}

\subsubsection{Discussion of self-correction}

We briefly sketch the argument for the exponential scaling of the memory lifetime,
  which follows closely with proof in the construction with random embedding.
As in \Cref{sec:decoder}, we construct two decoders,
  one for $\cR_{X}$ and one for $\cR_{Z}$.
Each decoder is built from coarse-graining maps
  that take a syndrome $\sigma_i$ at level $i$
  to a syndrome $\sigma_{i-1}$ at level $i-1$, together with a correction $f_i$.
These coarse-graining maps again satisfy
  syndrome reduction and local computability.

In particular, for the code $\cQ(X, \cF, I)$, we have:
\begin{itemize}
  \item for $\cR_{Z}$, $|\sigma_{i-1}| \le |\sigma_{i}|$;
  \item for $\cR_{X}$, $|\sigma_{i-1}| \le \frac{1}{2} |\sigma_{i}|$.
\end{itemize}
For the code $\cQ(X, \cF^\perp, I)$, we have:
\begin{itemize}
  \item for $\cR_{Z}$, $|\sigma_{i-1}| \le \frac{1}{2} |\sigma_{i}|$;
  \item for $\cR_{X}$, $|\sigma_{i-1}| \le |\sigma_{i}|$.
\end{itemize}

These coarse-graining maps are then incorporated into
  the decoding graph to define witnesses,
  and the remainder of the analysis (i.e. constructing the decoding graph, witness subgraph, and the Peierls argument) proceeds analogously to that of \Cref{sec:memlifetime}.

There are two main differences.
  The first lies in the explicit description of the coarse-graining maps.
Starting from a syndrome $\sigma_i$,
  we use a different cleaning strategy from those in
  \Cref{sec:coarse-graining-RZ} and \Cref{sec:coarse-graining-RX}.
Here the basic unit is the doubling of a square illustrated in \Cref{fig:quantum_doubling_example}.
This is the reason for choosing $\ell = 8$:
  the resulting unit contains two parallel horizontal ribs
  and two parallel vertical ribs.
Having two parallel ribs allows us to clean the syndrome
  from the interior of the unit without increasing its weight,
  whereas with only one rib in each direction
  such a cleanup could increase the syndrome weight.

After this cleaning, the syndrome is supported on the boundaries of unit,
  precisely on the edges corresponding to those in $X_{i-1}$.
We can then apply the inverse map $\cF^{-1}$
  to obtain the coarse-grained syndrome $\sigma_{i-1}$.

The second difference concerns the symmetry between $X$ and $Z$ errors.
In the earlier discrete description, this symmetry is explicit.
In the sheaf-code formulation,
  we instead use Poincar\'e duality,
  as stated in \Cref{thm:poincare-duality}.
In particular,
  if we show that $C(X,\cF)$ is self-correcting against $X$ errors,
then the same argument shows that
  $C(X,\cF^{\perp})$ is self-correcting against $X$ errors.
By \Cref{thm:poincare-duality},
  this implies that $C(X,\cF)$ is self-correcting against $Z$ errors.

\section{Acknowledgments}
We thank Elia Portnoy for helpful discussions about embedding.  ChatGPT 5.5 was helpful in improving the clarity of our manuscript.   S.B. and M.D. were supported by the Walter Burke Institute for Theoretical Physics at Caltech and by the Institute for Quantum Information and Matter, an NSF Physics Frontiers Center (NSF Grant PHY-2317110).  TCL was supported in part by funds provided by the U.S. Department of Energy (D.O.E.) under the cooperative research agreement DE-SC0009919 and by the Simons Collaboration on Ultra-Quantum Matter, which is a grant from the Simons Foundation (652264 JM).

\appendix
\section*{Appendix}
\addcontentsline{toc}{section}{Appendix}

\section{Explicit verification of the homological perturbation lemma}
\label{app:homological-pert-lemma}

In the homological perturbation lemma, we start with the following data:
\begin{itemize}
  \item a chain complex $(C, d)$,
  \item a smaller chain complex $(H, d_H)$,
  \item chain maps $i: H \to C$, $p: C \to H$, and $h: C \to C[-1]$.
\end{itemize}

These maps are required to satisfy
\begin{enumerate}
  \item $p i = \id_H$,
  \item $i p = \id_C + h d + d h$,
  \item $h i = 0$,
  \item $p h = 0$,
  \item $h^2 = 0$.
\end{enumerate}
Conditions~(1) and~(2) say that
  $(H, d_H)$ is a deformation retract of $(C, d)$,
  while conditions~(3)--(5) are the annihilation conditions on $h$.
Although the annihilation conditions are not always assumed,
  the initial data can always be altered to satisfy them.
  If~(3) and~(4) fail, replace $h$ by $(dh + hd)\, h\, (dh + hd)$;
  if~(5) then still fails, replace $h$ by $h d h$.

We now perturb the differential on $C$ by $\delta$,
  so the new differential is
\begin{equation}
  d' = d + \delta, \qquad (d')^2 = 0.
\end{equation}

The homological perturbation lemma asserts that,
  under a suitable smallness condition on $\delta$,
  this perturbation transfers to $H$.
In other words, there exist new maps
\begin{equation}
  d_H', \quad i', \quad p', \quad h'
\end{equation}
such that $(H, d_H')$ is again a deformation retract of $(C, d')$.

The key intermediate quantity is the geometric-series sum
\begin{equation}
  \Sigma = \sum_{k=0}^{\infty} (\delta h)^k \delta
         = \delta + \delta h \delta + \delta h \delta h \delta + \cdots,
\end{equation}
and the new data are given explicitly by
\begin{equation}
  d_H' = d_H + p \Sigma i, \qquad
  i'   = i   + h \Sigma i, \qquad
  p'   = p   + p \Sigma h, \qquad
  h'   = h   + h \Sigma h.
\end{equation}
We will verify that the new data satisfy the 5 conditions above.

To verify condition~(1),
  note that the annihilation conditions $hi = 0$, $ph = 0$, and $h^2 = 0$ kill three of the four cross terms:
\begin{align}
  p' i'
  &= (p + p \Sigma h)(i + h \Sigma i) \\
  &= p i + p \Sigma h i + p h \Sigma i + p \Sigma h h \Sigma i \\
  &= \id_H
\end{align}

For condition~(2),
  we must show $i'p' = \id_C + d'h' + h'd'$.
We compute each side and show they agree.
For the left-hand side:
\begin{align}
  i' p'
  &= (i + h \Sigma i)(p + p \Sigma h) \\
  &= i p + h \Sigma i p + i p \Sigma h + h \Sigma i p \Sigma h \\
  &= (\id_C + h d + d h) + h' \delta (\id_C + h d + d h) + (\id_C + h d + d h) \delta h' + h' \delta (\id_C + h d + d h) \delta h' \\
  &= (\id_C + h d + d h)
  + h' \delta h d + h' \delta + d h \delta h' + \delta h' \\
  &\quad + h' \delta d h + h d \delta h' + h' \delta \delta h' + h' \delta h d \delta h' + h' \delta d h \delta h',
\end{align}
where we used $ip = \id_C + hd + dh$ and $h\Sigma = h'\delta$, $\Sigma h = \delta h'$.

For the right-hand side:
\begin{align}
  \id_C + h' d' + d' h'
  &= \id_C + (h + h \Sigma h)(d + \delta) + (d + \delta)(h + h \Sigma h) \\
  &= \id_C + h d + h \Sigma h d + h' \delta + d h + d h \Sigma h + \delta h' \\
  &= (\id_C + h d + d h)
  + h' \delta h d + h' \delta + d h \delta h' + \delta h'.
\end{align}

Subtracting, we must show the remainder vanishes:
\begin{align}
  &h' \delta d h + h d \delta h' + h' \delta \delta h' + h' \delta h d \delta h' + h' \delta d h \delta h' \\
  &= (h' \delta d h + h' \delta d h \delta h') + (h d \delta h' + h' \delta h d \delta h') + h' \delta \delta h' \\
  &= h' \delta d h' + h' d \delta h' + h' \delta \delta h' \\
  &= h' (d')^2 h' \\
  &= 0.
\end{align}
In the second line we used $d^2 = 0$ and $d' = d + \delta$.

Conditions~(3)--(5) follow immediately from the annihilation conditions
$hi = 0$, $ph = 0$, $h^2 = 0$,
\begin{align}
  h'i' &= (h + h\Sigma h)(i + h\Sigma i)
        = hi + hh\Sigma i + h\Sigma hi + h\Sigma hh\Sigma i = 0, \\
  p'h' &= (p + p\Sigma h)(h + h\Sigma h)
        = ph + ph\Sigma h + p\Sigma hh + p\Sigma hh\Sigma h = 0, \\
  h'^2 &= (h + h\Sigma h)^2
        = hh + hh\Sigma h + h\Sigma hh + h\Sigma hh\Sigma h = 0.
\end{align}

It remains to verify that $(d_H')^2 = 0$:
\begin{align}
  d_H' d_H'
  &= (d_H + p \Sigma i)(d_H + p \Sigma i) \\
  &= d_H^2 + d_H p \Sigma i + p \Sigma i d_H + p \Sigma i p \Sigma i \\
  &= 0 + p d \Sigma i + p \Sigma d i + p \Sigma i p \Sigma i \\
  &= p d \Sigma i + p \Sigma d i + p \Sigma (\id_C + h d + d h) \Sigma i \\
  &= p' d \Sigma i + p \Sigma d i' + p \Sigma \Sigma i \\
  &= p' d \delta i' + p' \delta d i' + p' \delta \delta i' \\
  &= p' (d')^2 i' \\
  &= 0.
\end{align}

\section{Deferred embedding proofs}\label{app:extraemb}

Here, we will provide a proof of the following claim, which is used to show that there exists a perturbation inducing a $(t',\ell)$-embedding in $\mathbb{R}^3$.  We refer the reader to the description of step 1 in the random construction for the relevant notation.

\begin{claim}
Given a fixed unit ball $B'(1) \subset B(\alpha \lambda)$, the probability that a 2-simplex in $B(\alpha \lambda)$ intersects $B'(1)$ after perturbation is $\leq \Gamma/\lambda$ for some constant $\Gamma > 0$.
\end{claim}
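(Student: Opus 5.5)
We need the probability that a perturbed 2-simplex meets a fixed unit ball $B'(1)$ to be $O(1/\lambda)$. The randomness is as follows: the three vertices $r_1,r_2,r_3$ (pairwise at distance $\le\lambda$ after scaling) are each moved by independent uniform vectors $p_j$ drawn from three distinct caps of arc length $\alpha_1\lambda$ on the sphere of radius $\alpha\lambda$.

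The plan is to condition on $p_2,p_3$ and use only the randomness of $p_1$. Once $p_2,p_3$ are fixed, the opposite side $L=[r_2+p_2,\,r_3+p_3]$ is a fixed segment. By Claim~\ref{claim:dist}, its length is at most $(2\alpha+1)\lambda$. The perturbed simplex is the union of segments from the apex $a=r_1+p_1$ to points of $L$. It meets $B'(1)$ only if some ray from a point $y\in L$ toward $a$ passes within distance $1$ of the center $c$ of $B'(1)$.

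The intended bound is that the set of apex positions $a$ producing an intersection is a thin region around a plane. To set this up, let $\Pi$ be the plane spanned by $L$ and $c$ (if $c$ lies on the line of $L$, any plane through it). Any simplex meeting $B'(1)$ has a point within distance $1$ of $c$. By Claim~\ref{claim:simplex-thickness}, the simplex has width $\ge\gamma\lambda$, so its angle at $L$ is controlled, and the point of the simplex near $c$ is a convex combination $\epsilon a+(1-\epsilon)y$. Here $\epsilon$ is bounded below by roughly $\mathrm{dist}(c,L)/((2\alpha+1)\lambda)$, or we handle separately the case where $c$ is near $L$. From $\mathrm{dist}(\epsilon a+(1-\epsilon)y,\Pi)\le 1$ and $y\in\Pi$ we get $\mathrm{dist}(a,\Pi)\le 1/\epsilon$.

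I expect this step to be the main obstacle. When $\epsilon$ is tiny, the point near $c$ is essentially on $L$, and the bound $1/\epsilon$ degenerates. I would split into two cases. In the first, $\mathrm{dist}(c,L)\ge \gamma\lambda/2$; then $\epsilon\gtrsim\gamma/(2(2\alpha+1))$, so $a$ lies in a slab of width $O(\alpha/\gamma)$ about $\Pi$. In the second, $c$ is within $\gamma\lambda/2$ of $L$. There I would instead condition on $p_1,p_2$ and use the randomness of $p_3$ (or rotate roles). By the width bound, $c$ cannot be within $\gamma\lambda/2$ of all three sides unless the ball is far inside, and then a slab argument relative to a side far from $c$ applies. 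This is where the factor $18$ in $\Gamma$ likely comes from: summing over three choices of apex and the associated constants.

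Finally, bound the probability that a uniform point on a spherical cap of arc length $\alpha_1\lambda$, on a sphere of radius $\alpha\lambda$, lies in a slab of width $w=O(\alpha/\gamma)$. The slab meets the sphere in a band of area at most $2\pi(\alpha\lambda)w$ (Archimedes). The cap has area $2\pi(\alpha\lambda)^2(1-\cos\frac{\alpha_1}{2\alpha})$. The ratio is therefore $\frac{w}{\alpha\lambda(1-\cos\frac{\alpha_1}{2\alpha})}$, and substituting $w\approx 6(2\alpha+1)/\gamma$ (three cases, each with slab width about $2(2\alpha+1)/\gamma$) gives
\[
\Gamma\le\frac{18(2\alpha+1)}{\alpha\gamma\left(1-\cos\frac{\alpha_1}{2\alpha}\right)},
\]
matching Subsec.~\ref{sec:explicit-choice-constants}. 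Averaging over the conditioned vertices preserves the bound.
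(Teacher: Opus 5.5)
Your setup is sound: conditioning on the other two vertices, using the plane $\Pi$ through $L$ and $c$, the affine estimate $\mathrm{dist}(a,\Pi)\le 1/\epsilon$, and the Archimedes band bound. The genuine gap is your Case 2.

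\textbf{The gap.} The assertion that, by the width bound, $c$ cannot be within $\gamma\lambda/2$ of all three sides does not follow. A triangle whose minimum altitude is exactly $\gamma\lambda$ (e.g.\ equilateral) has its incenter at distance $\gamma\lambda/3$ from every side.

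Nor can this case be dismissed as rare. The event $\mathrm{dist}(c,L)<\gamma\lambda/2$ has probability that in general does not decay with $\lambda$. On that event, the randomness of $p_1$ alone gives no $O(1/\lambda)$ bound: if $\mathrm{dist}(c,L)\le 1$, the simplex meets $B'(1)$ for every $p_1$.

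``Rotating roles'' is also only legitimate once you write the event as a union over the three vertices. Each selection condition must be measurable with respect to the two vertices you condition on. As written, the choice of which side is far from $c$ depends on all three.

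\textbf{The missing idea.} The paper selects the apex by barycentric coordinates rather than by distances. Take any $z\in\sigma\cap B'(1)$, write $z=\theta_1a_1+\theta_2a_2+\theta_3a_3$, and let the apex be a vertex with $\theta_l\ge 1/3$. Your own affine estimate then gives $\mathrm{dist}(a_l,\Pi_l)\le 1/\theta_l\le 3$. Here $\Pi_l$, the plane through the opposite side and $c$, depends only on the other two perturbations. Hence
\[
\{\sigma\cap B'(1)\ne\emptyset\}\subseteq\bigcup_{l}\{a_l\in S_l\},
\]
where $S_l$ is a slab of width $6$ fixed by the other two vertices. Your band estimate then gives $\Gamma\le 18/\bigl(\alpha(1-\cos\frac{\alpha_1}{2\alpha})\bigr)$, with no case split and no use of the width bound.

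The paper instead converts $\theta_a\ge 1/3$ into $d\ge\gamma\lambda/3$ via the width bound and uses similar triangles. This gives slab half-width $3(2\alpha+1)/\gamma$ per label, which loses the factor $(2\alpha+1)/\gamma$. Your plane through $L$ and $c$ is also cleaner than the paper's plane with normal $\hat n$, which itself depends on $p_a$.

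If you prefer to keep your distance-based split, use threshold $\gamma\lambda/3-1$ instead of $\gamma\lambda/2$. This is justified because $\mathrm{dist}(z,\text{line }L_l)=\theta_l h_l\ge\gamma\lambda/3$ for the maximizing $l$.

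\textbf{Constants.} Your bookkeeping does not produce the displayed $\Gamma$. Inserting $w=6(2\alpha+1)/\gamma$ into $w/\bigl(\alpha\lambda(1-\cos\frac{\alpha_1}{2\alpha})\bigr)$ gives $6(2\alpha+1)/\gamma$ in the numerator, not $18(2\alpha+1)/\gamma$. Also, your Case 1 slab has full width about $4(2\alpha+1)/\gamma$, not $2(2\alpha+1)/\gamma$. The paper's $18$ comes from three labels, each with full slab width $6(2\alpha+1)/\gamma$.
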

\begin{proof}

To prove this (including the constants), suppose under perturbation, a 2-simplex $\sigma$ with the normal vector $\hat{n}$ intersects $B'(1)$.  Next, consider all triples of points $a,e,f$ such that: $a$ is a vertex of the 2-simplex, $f$ is a point on the opposite edge, $e$ is the closest point in ${B}'(1)$ to $f$ such that $a,e$ and $f$ are collinear.
Define $d$ as the maximum length of line segment $ef$ over all valid triples.  For the vertex $a$ that maximizes $d$, call $C_a$ the corresponding cap and $p_a$ the perturbation vector on $C_a$.  Construct a plane $P$ passing through $p_a$ with normal vector $\hat{n}$.  If we moved $a$ by $d' \hat{n}$ or $-d' \hat{n}$, by similar triangles, $e$ must be moved by $\geq \frac{d}{(2\alpha+1)\lambda} d'$ (in the direction of $\hat{n}$) to be collinear with $a$ and $f$.  We would like to determine how much $a$ would need to be moved such that the 2-simplex no longer intersects $B'(1)$. This is achieved when $a$ is moved a distance $d'$ satisfying $\frac{d}{(2 \alpha + 1)\lambda} d' \geq 1$, or $d' \geq \frac{(2 \alpha + 1)\lambda}{d}$.  The equation describing $P$ is $\hat{n} \cdot(x-p_a) = 0$.  Define the thickened plane $P_t$ by $|\hat{n} \cdot(x-p_a)| \leq \frac{(2 \alpha + 1)\lambda}{d}$. If $p_a$ is replaced with a vector lying outside of $P_t \cap C_a$ (but inside $C_a$), the simplex would avoid $B'(1)$.  The area of $P_t\cap C_a$ is at most $4 \pi \alpha \lambda \frac{(2 \alpha + 1)\lambda}{d}$.

If $p_a$, $p_{b}$ and $p_{c}$ are vectors chosen in caps $C_a$, $C_b$, $C_c$, we want to bound the probability of the corresponding simplex intersecting with $B'(1)$, over a product measure $\mu = \mrm{Unif}(C_a) \times \mrm{Unif}(C_b) \times \mrm{Unif}(C_c)$.  Define $\mathbbm{1}(p_a, p_b, p_c) = 1$ if the corresponding simplex intersects with $B'(1)$ and $\mathbbm{1}(p_a, p_b, p_c) = 0$ otherwise.  Further define $\mrm{lab}(p_a, p_b, p_c) \in \{a,b,c\}$ to be the label of the vertex whose position is varied.  Then
\begin{equation}
\mathbb{P}[\sigma \cap B'(1) \neq \varnothing] = \mathbb{E}_{p_a, p_b, p_c \sim \mu}[\mathbbm{1}(p_a, p_b, p_c)] \leq \frac{1}{Q^3} \sum_{l = a,b,c} \,\, \sum_{\substack{p_a,p_b,p_c: \\\mrm{lab}(p_a, p_b, p_c) = l}} \mathbbm{1}(p_a, p_b, p_c)
\end{equation}
where $Q$ is the cap area.  To compute the inner sum, suppose without loss of generality that $l = a$.  Then, sum over all pairs $(p_b, p_c)$ such that $\exists p_a: \mrm{lab}(p_a, p_b, p_c) = a$.  For such a pair $(p_b, p_c)$, the total measure of $p_a$ satisfying $\mathbbm{1}(p_a, p_b, p_c) = 1$ is $\leq 4 \pi \alpha \lambda \frac{(2 \alpha + 1)\lambda}{d}$ by the above argument.  Thus,
\begin{equation}
\sum_{\substack{p_a,p_b,p_c: \\\mrm{lab}(p_a, p_b, p_c) = l}}\mathbbm{1}(p_a, p_b, p_c) \leq 4 \pi \alpha \lambda Q^2  \frac{(2 \alpha + 1)\lambda}{d}.
\end{equation}
and $\mathbb{P}[\sigma \cap B'(1) \neq \varnothing] \leq \frac{12 \pi \alpha \lambda}{Q} \frac{(2 \alpha + 1)\lambda}{d}$.  The final step is to argue that this quantity scales like $1/\lambda$.  The cap area is $Q = 2 \pi \alpha^2\lambda^2(1-\cos\frac{\alpha_1}{2\alpha})$ where $\alpha_1 \lambda$ is the arc length of the cap.

Next, we need a lower bound on $d$. Any point $y$ inside $\sigma$ can be parametrized as $y = \theta_a a + \theta_b b + \theta_c c$ with $\theta_a + \theta_b + \theta_c = 1$ and $\theta_a, \theta_b, \theta_c \geq 0$. Consider such a point that is also in $B'(1)$, i.e. $ y \in B'(1) \cap \sigma$. Denoting $q$ to be the point where the line through $a$ and $y$  intersects $bc$, we can write $y = \theta_a a+ (1-\theta_a) q$. Then, we have $\norm{y-q} = \theta_a \norm{a- q} \geq \theta_ a \gamma \lambda$ by  the bound on the simplex width in Claim~\ref{claim:simplex-thickness}. By definition of $d$, and denoting $z$ the location of the center of the ball $B'(1)$, we have
\begin{equation}
    d  = \max_{a,z}\min _{y \in B'(1) \cap \sigma, q \in bc} \norm{y-q}.
\end{equation}
Because of the first maximum, we have $\theta_a \geq 1/3$, and $d \geq \gamma \lambda/3$.  Therefore, there indeed exists a constant $$\Gamma(\alpha, \alpha_2, \alpha_1, \delta) \leq \frac{18 (2\alpha+1)}{\alpha \gamma \left(1-\cos\frac{\alpha_1}{2\alpha}\right)} $$ such that $\mathbb{P}[\sigma \cap B'(1) \neq \varnothing] \leq \Gamma/\lambda$.
\end{proof}

\section{A tensor product construction} \label{appB:product-construction}

In this appendix, we briefly discuss an alternative construction of a 3D self-correcting quantum memory which is based on the tensor, or homological, product of two chain complexes associated with two classical codes.  We keep the discussion somewhat informal, but all the details can be made rigorous.

Consider first the classical code introduced in Subsec.~\ref{sec:classicalcode}
\begin{equation}
    \cC_{(2k)} = (\cR^T\circ \cR)^k(\cC_{(0)})
\end{equation}
and call $\cD_{(2k)}:=\cC_{(2k)}^T$.  Let $\cQ_{(2k)}$ be the
quantum CSS code associated with the homological product
\begin{equation}
    \cQ_{(2k)} := \cC_{(2k)}\otimes \cD_{(2k)} = \cC_{(2k)}\otimes \cC_{(2k)}^T.
\end{equation}
We claim that $\cQ_{(2k)}$ has self-correction properties after one specifies an appropriate ``subsystem'' encoding scheme.
The need for this comes from the fact that the construction of $\cC_{(2k)}$ creates codewords at many length scales.  In particular, each $\cR^T$ step introduces many $O(1)$-distance codewords, while each $\cR$ step doubles the distance of existing codewords.  For $\cC_{(2k)}^T$, the roles of $\cR$ and $\cR^T$ are interchanged.  Hence, the quantum code $\cQ_{(2k)}$ inherits small logical operators, and the logical qubits associated with these operators cannot themselves be self-correcting. We therefore define a protected subsystem by treating the small logical operators as ``gauge operators'', borrowing terminology from subsystem codes.  The physical Hamiltonian is still the stabilizer Hamiltonian associated with $\cQ_{(2k)}$, but the encoding and decoding are defined only modulo the small logical operators.  Our claim is that using this encoding scheme yields protected qubits whose $X$- and $Z$-type logical distances are both large and whose memory lifetime is exponential (in a power of the system size) at sufficiently low temperature.

More formally, we express $\cC_{(2k)}$ as a two-term chain complex $C_1 \rightarrow C_0$, where bits label the basis of the degree-1 vector space, and the checks label the basis of the degree-0 vector space. The number of codewords of this code scales as $H_1(\cC_{(2k)}) \cong\ff_2^{\mrm{poly}(n)}$.
We define an auxiliary code $\cC'_{(2k)}$, which is a \emph{3-term} chain complex:
$$C_2' \longrightarrow C_1 \longrightarrow C_0,$$
where $B_1(C'_2)$ captures the space of all the small-distance codewords, which are now treated as gauge degrees of freedom. Thus, we have $H_1(\cC'_{(2k)}) = \ff_2$. The generator of this homology class defines the ``large'' codeword of $\cC_{(2k)}$.
One can then lower-bound the memory lifetime for $\cC_{(2k)}$ with this encoding. In particular, one can show that the modified isoperimetric inequality
\begin{equation}\label{eq:gaugeiso}
  |\partial c| \ge \sigma \min_{b \in B_1(\cC'_{(2k)})} |c + b|^\eta
\end{equation}
for all $c \in A_1$ and constants $\sigma, \eta > 0$ is satisfied.  Since $\cD_{(2k)}=\cC_{(2k)}^T$ has analogous properties, we can define an analogous auxiliary complex $\cD'_{(2k)}$ and show an analogous isoperimetric inequality for $\cD_{(2k)}$.

We now describe the protected subspace of the product code. The $Z$-type logical operators of $\cQ_{(2k)}$ are captured by $H_1(\cQ_{(2k)})$. By the K\"unneth formula:
\begin{equation}
    H_1(\cQ_{(2k)}) \cong H_1(\cC_{(2k)}) \otimes H_0(\cD_{(2k)}) \, \oplus \, H_0(\cC_{(2k)}) \otimes H_1(\cD_{(2k)})
\end{equation}
Consider first the first summand, $H_1(\cC_{(2k)}) \otimes H_0 (\cD_{(2k)})$.  In the protected subsystem, we keep only the large codeword of $\cC_{(2k)}$, replacing $H_1(\cC_{(2k)})$ by $H_1(\cC'_{(2k)})\cong \ff_2$.  We must also isolate the corresponding large class in $H_0(\cD_{(2k)})$.

$H_0(\cD_{(2k)})$ is associated with the relations between the checks of the code $\cD_{(2k)}$; it turns out that this code has a large $\mrm{poly}(n)$ number of relations. To remove the ``small'' relations from the protected subsystem, we extend $\cD_{(2k)}'$ to a \emph{4-term} chain complex via:
$$ \cD''_{(2k)}:  D_2' \rightarrow D_1 \rightarrow D_0 \rightarrow D_{-1}'$$
where the part $D_0 \rightarrow D_{-1}'$ removes all the small zeroth homology classes, leaving $H_0(\cD''_{(2k)}) \cong \ff_2$. We define $\cC''_{(2k)}$ analogously. We also note that $H_1( \cC'_{(2k)}) \cong H_1(\cC''_{(2k)}) \cong \ff_2$, so we can use the 4-level chain complexes in the context of subsystem encoding from now on.

Collecting everything together, the protected $Z$-type logical sectors are
\begin{equation}
    H_1(\cC''_{(2k)})\otimes H_0(\cD''_{(2k)}) \cong \ff_2,
    \qquad
    H_0(\cC''_{(2k)})\otimes H_1(\cD''_{(2k)}) \cong \ff_2.
\end{equation}
The corresponding $X$-type protected logical sectors are defined via the cohomology
\begin{equation}
    H^1(\cC''_{(2k)})\otimes H^0(\cD''_{(2k)}) \cong \ff_2,
    \qquad
    H^0(\cC''_{(2k)})\otimes H^1(\cD''_{(2k)}) \cong \ff_2.
\end{equation}
Using the homology-cohomology pairing, together with the relation $\cD_{(2k)}=\cC_{(2k)}^T$, gives a nondegenerate pairing between these protected $Z$- and $X$-type sectors.  Thus, the protected subsystem contains two logical qubits.  The remaining logical operators of the stabilizer code $\cQ_{(2k)}$ are treated as gauge operators for the purposes of encoding and decoding.

Having defined the subsystem encoding, we can bound the memory lifetime in the $X$ and $Z$ logical sectors separately, as in Sec.~\ref{sec:memlifetime}.  The main difference is that, although the Hamiltonian is the local stabilizer Hamiltonian of $\cQ_{(2k)}$, the decoder treats errors modulo the small logical operators.  Equivalently, the decoder corrects only the protected subsystem.  Using the Peierls-type argument analogous to that in Sec.~\ref{sec:memlifetime} gives self-correction for the protected logical qubits.  Though we were able to formally prove this, the proof is quite lengthy, and thus we do not provide it here.

We also claim that the random embedding technique from the main text applies to this construction.  If $\ell$ is chosen sufficiently large, the code can be embedded in $\RR^3$.  To see this, one can express the construction of $\cQ_{(2k)}$ as a four-step construction geometrically similar to that in Subsec.~\ref{subsec:4-step}.  In particular, Subsec.~\ref{subsec:4-step} explains the four-step version of the main construction, which consists of perturbation, subdivision, thickening, and degree reduction.  For $\cQ_{(2k)}$ the perturbation and subdivision steps remain unchanged; thickening and degree reduction steps are modified. Given the code $\cQ_{(i)}$, we obtain a new code $\cQ_{(i)}'$ after perturbation and subdivision; then, we perform the next steps.
\begin{quote}
\textbf{3. Thickening:} The local patches of the surface code in $\cQ_{(i)}'$ are replaced by thickness-$1$ slabs of the $(1,3)$-type 4D surface code, with the additional boundary surfaces chosen to be ``volume-operator'' condensing.  In the $\cR_X$ iteration, the volume-like operators are $X$-type; in the $\cR_Z$ iteration, they are $Z$-type.  Thus, the thickening step increases the energy cost of either $X$-type or $Z$-type errors, depending on the iteration.

In the $\cR_X$ iteration, this step corresponds to taking a tensor product $\cQ'_{(i)}\otimes K$, where
\begin{equation}
    K:\quad
    K^0 \longrightarrow K^1 \longrightarrow K^2
\end{equation}
is the cochain complex of a single square face, with four edges and four corner vertices, equipped with the natural incidence maps.  We then keep only the two-skeleton of the product complex, i.e. the terms of total degree at most $2$.

In the $\cR_Z$ iteration, we instead take a tensor product $\cQ'_{(i)}\otimes \wt{K}$, where
\begin{equation}
    \wt{K}:\quad
    \wt{K}^{-2} \longrightarrow \wt{K}^{-1}
    \longrightarrow \wt{K}^{0}
\end{equation}
is the dual cochain complex of $K$.  We then keep only the terms of total degree at least $0$.
\end{quote}
The details of the degree reduction step also change, but the rough idea is the same. Recall that the role of the degree reduction step is to restore the maximum edge and vertex degrees to universal constants $z_e$ and $z_v$; this is done by removing some of the redundant checks and some of the bits paired with checks in a way that preserves the total number of logical qubits in the code. In fact, this removal can be simply chosen to remove some of the elements in $\cQ'_{(i)}$ such that this construction identically reproduces the product construction.

Finally, the parameters in the perturbation step can be chosen so that the construction is equipped with a local and bounded-density embedding map $I_{(k)}:\cQ_{(2k)}\longrightarrow \RR^3$.  Within the 4-step framework, the analysis of the perturbation step is nearly identical to the one from the main text, and would imply a $(t, \ell)$-embedding map for the product code for constants $t$ and $\ell$.

As a remark, because this construction contains many small logical operators, its perturbative stability is a subtle question that we leave to future work.  We hope that the protected subsystem is perturbatively stable, in which case, from the physics standpoint, the code Hamiltonian could correspond to an unusual gapless phase of matter. In particular, there are different encoding subspaces at each level $k$, and a Hamiltonian perturbation can split the degeneracy associated with the subspace, opening a gap of size $\exp(-O(k))$ at level $k$.

\bibliographystyle{unsrturl}
\bibliography{ref}

\begin{thebibliography}{10}

\bibitem{dennis2002topological}
Eric Dennis, Alexei Kitaev, Andrew Landahl, and John Preskill.
\newblock Topological quantum memory.
\newblock {\em Journal of Mathematical Physics}, 43(9):4452--4505, 2002.

\bibitem{brown2016quantum}
Benjamin~J Brown, Daniel Loss, Jiannis~K Pachos, Chris~N Self, and James~R Wootton.
\newblock Quantum memories at finite temperature.
\newblock {\em Reviews of Modern Physics}, 88(4):045005, 2016.

\bibitem{alicki2010thermal}
Robert Alicki, Michal Horodecki, Pawel Horodecki, and Ryszard Horodecki.
\newblock {On thermal stability of topological qubit in Kitaev's 4D model}.
\newblock {\em Open Systems \& Information Dynamics}, 17(01):1--20, 2010.

\bibitem{alicki2009thermalization}
Robert Alicki, Mark Fannes, and Michal Horodecki.
\newblock {On thermalization in Kitaev's 2D model}.
\newblock {\em Journal of Physics A: Mathematical and Theoretical}, 42(6):065303, 2009.

\bibitem{bravyi2009no}
Sergey Bravyi and Barbara Terhal.
\newblock A no-go theorem for a two-dimensional self-correcting quantum memory based on stabilizer codes.
\newblock {\em New Journal of Physics}, 11(4):043029, 2009.

\bibitem{yoshida2011feasibility}
Beni Yoshida.
\newblock Feasibility of self-correcting quantum memory and thermal stability of topological order.
\newblock {\em Annals of Physics}, 326(10):2566--2633, 2011.

\bibitem{haah2011local}
Jeongwan Haah.
\newblock Local stabilizer codes in three dimensions without string logical operators.
\newblock {\em Physical Review A—Atomic, Molecular, and Optical Physics}, 83(4):042330, 2011.

\bibitem{bravyi2013quantum}
Sergey Bravyi and Jeongwan Haah.
\newblock Quantum self-correction in the 3d cubic code model.
\newblock {\em Physical review letters}, 111(20):200501, 2013.

\bibitem{siva2017topological}
Karthik Siva and Beni Yoshida.
\newblock Topological order and memory time in marginally-self-correcting quantum memory.
\newblock {\em Physical Review A}, 95(3):032324, 2017.

\bibitem{haah2013commuting}
Jeongwan Haah.
\newblock {Commuting Pauli Hamiltonians as maps between free modules}.
\newblock {\em Communications in Mathematical Physics}, 324(2):351--399, 2013.

\bibitem{michnicki20143d}
Kamil~P Michnicki.
\newblock 3d topological quantum memory with a power-law energy barrier.
\newblock {\em Physical review letters}, 113(13):130501, 2014.

\bibitem{portnoy2023local}
Elia Portnoy.
\newblock Local quantum codes from subdivided manifolds, 2023.
\newblock URL: \url{https://arxiv.org/abs/2303.06755}, \href {https://arxiv.org/abs/2303.06755} {\path{arXiv:2303.06755}}.

\bibitem{lin2023geometrically}
Ting-Chun Lin, Adam Wills, and Min-Hsiu Hsieh.
\newblock Geometrically local quantum and classical codes from subdivision.
\newblock {\em arXiv preprint arXiv:2309.16104}, 2023.

\bibitem{williamson2023layer}
Dominic~J Williamson and Nou{\'e}dyn Baspin.
\newblock Layer codes.
\newblock {\em arXiv preprint arXiv:2309.16503}, 2023.

\bibitem{gu2025layer}
Shouzhen Gu, Libor Caha, Shin~Ho Choe, Zhiyang He, Aleksander Kubica, and Eugene Tang.
\newblock Layer codes as partially self-correcting quantum memories.
\newblock {\em arXiv preprint arXiv:2510.06659}, 2025.

\bibitem{williamson2025partial}
Dominic~J Williamson.
\newblock Partial self-correction in layer codes.
\newblock {\em arXiv preprint arXiv:2510.09218}, 2025.

\bibitem{baspin2025free}
Nou{\'e}dyn Baspin.
\newblock {The Free Energy Barrier: An Eyring-Polanyi bound for stabilizer Hamiltonians, with applications to quantum error correction}.
\newblock {\em arXiv preprint arXiv:2509.17356}, 2025.

\bibitem{brell2016proposal}
Courtney~G Brell.
\newblock A proposal for self-correcting stabilizer quantum memories in 3 dimensions (or slightly less).
\newblock {\em New Journal of Physics}, 18(1):013050, 2016.

\bibitem{vezzani2003spontaneous}
Alessandro Vezzani.
\newblock {Spontaneous magnetization of the Ising model on the Sierpinski carpet fractal, a rigorous result}.
\newblock {\em Journal of Physics A: Mathematical and General}, 36(6):1593, 2003.

\bibitem{lin2024proposals}
Ting-Chun Lin, Hsin-Po Wang, and Min-Hsiu Hsieh.
\newblock Proposals for 3d self-correcting quantum memory.
\newblock {\em arXiv preprint arXiv:2411.03115}, 2024.

\bibitem{gromov2012generalizations}
Misha Gromov and Larry Guth.
\newblock {Generalizations of the Kolmogorov--Barzdin embedding estimates}.
\newblock {\em Duke Mathematical Journal}, 161(13):2549--2603, 2012.
\newblock URL: \url{https://doi.org/10.1215/00127094-1812840}.

\bibitem{chesi2010thermodynamic}
Stefano Chesi, Daniel Loss, Sergey Bravyi, and Barbara~M Terhal.
\newblock Thermodynamic stability criteria for a quantum memory based on stabilizer and subsystem codes.
\newblock {\em New Journal of Physics}, 12(2):025013, 2010.
\newblock URL: \url{https://arxiv.org/abs/0907.2807}.

\bibitem{Bravyi_2010}
Sergey Bravyi, Matthew~B. Hastings, and Spyridon Michalakis.
\newblock Topological quantum order: Stability under local perturbations.
\newblock {\em Journal of Mathematical Physics}, 51(9), 2010.
\newblock URL: \url{http://dx.doi.org/10.1063/1.3490195}, \href {https://doi.org/10.1063/1.3490195} {\path{doi:10.1063/1.3490195}}.

\bibitem{li2024domain}
Yabo Li, Zijian Song, Aleksander Kubica, and Isaac~H Kim.
\newblock {Domain walls from SPT-sewing}.
\newblock {\em arXiv preprint arXiv:2411.11967}, 2024.

\bibitem{bergamaschi2026rapid}
Thiago Bergamaschi, Reza Gheissari, and Yunchao Liu.
\newblock {Rapid mixing for Gibbs states within a logical sector: a dynamical view of self-correcting quantum memories}.
\newblock In {\em Proceedings of the 2026 Annual ACM-SIAM Symposium on Discrete Algorithms (SODA)}, pages 3407--3422. SIAM, 2026.

\bibitem{Anshu_2024}
Anurag Anshu, Nikolas~P. Breuckmann, and Quynh~T. Nguyen.
\newblock Circuit-to-hamiltonian from tensor networks and fault tolerance.
\newblock In {\em Proceedings of the 56th Annual ACM Symposium on Theory of Computing}, STOC ’24, page 585–595. ACM, 2024.
\newblock URL: \url{http://dx.doi.org/10.1145/3618260.3649690}, \href {https://doi.org/10.1145/3618260.3649690} {\path{doi:10.1145/3618260.3649690}}.

\bibitem{li2024transform}
Xingjian Li, Ting-Chun Lin, and Min-Hsiu Hsieh.
\newblock Transform arbitrary good quantum ldpc codes into good geometrically local codes in any dimension.
\newblock {\em arXiv preprint arXiv:2408.01769}, 2024.

\bibitem{yuan2026unified}
Andrew~C Yuan.
\newblock Unified framework for quantum code embedding.
\newblock {\em Physical Review A}, 113(2):022438, 2026.

\bibitem{nlab:homological_perturbation_theory}
{nLab authors}.
\newblock homological perturbation theory.
\newblock \url{https://ncatlab.org/nlab/show/homological+perturbation+theory}, April 2026.
\newblock \href{https://ncatlab.org/nlab/revision/homological+perturbation+theory/13}{Revision 13}.

\bibitem{Aasen_2020}
David Aasen, Daniel Bulmash, Abhinav Prem, Kevin Slagle, and Dominic~J. Williamson.
\newblock Topological defect networks for fractons of all types.
\newblock {\em Physical Review Research}, 2(4), October 2020.
\newblock URL: \url{http://dx.doi.org/10.1103/PhysRevResearch.2.043165}, \href {https://doi.org/10.1103/physrevresearch.2.043165} {\path{doi:10.1103/physrevresearch.2.043165}}.

\bibitem{first2024good2querylocallytestable}
Uriya~A. First and Tali Kaufman.
\newblock On good $2$-query locally testable codes from sheaves on high dimensional expanders, 2024.
\newblock URL: \url{https://arxiv.org/abs/2208.01778}, \href {https://arxiv.org/abs/2208.01778} {\path{arXiv:2208.01778}}.

\bibitem{panteleev2024maximallyextendablesheafcodes}
Pavel Panteleev and Gleb Kalachev.
\newblock Maximally extendable sheaf codes, 2024.
\newblock URL: \url{https://arxiv.org/abs/2403.03651}, \href {https://arxiv.org/abs/2403.03651} {\path{arXiv:2403.03651}}.

\bibitem{lin2024transversalnoncliffordgatesquantum}
Ting-Chun Lin.
\newblock Transversal non-clifford gates for quantum ldpc codes on sheaves, 2024.
\newblock URL: \url{https://arxiv.org/abs/2410.14631}, \href {https://arxiv.org/abs/2410.14631} {\path{arXiv:2410.14631}}.

\bibitem{li2025poincar}
Yiming Li, Zimu Li, Zi-Wen Liu, and Quynh~T Nguyen.
\newblock Poincar$\backslash$'e duality and multiplicative structures on quantum codes.
\newblock {\em arXiv preprint arXiv:2512.21922}, 2025.

\bibitem{moser2010constructive}
Robin~A Moser and G{\'a}bor Tardos.
\newblock {A constructive proof of the general Lov{\'a}sz local lemma}.
\newblock {\em Journal of the ACM (JACM)}, 57(2):1--15, 2010.
\newblock URL: \url{https://dl.acm.org/doi/abs/10.1145/1667053.1667060}.

\bibitem{moser2009constructive}
Robin~A Moser.
\newblock {A constructive proof of the Lov{\'a}sz local lemma}.
\newblock In {\em Proceedings of the forty-first annual ACM symposium on Theory of computing}, pages 343--350, 2009.
\newblock URL: \url{https://dl.acm.org/doi/abs/10.1145/1536414.1536462}.

\bibitem{chandrasekaran2013deterministic}
Karthekeyan Chandrasekaran, Navin Goyal, and Bernhard Haeupler.
\newblock Deterministic algorithms for the lov{\'a}sz local lemma.
\newblock {\em SIAM Journal on Computing}, 42(6):2132--2155, 2013.

\end{thebibliography}
\end{document}